\documentclass[11pt,letterpaper]{article}
\usepackage[margin=1in]{geometry}
\usepackage{amsmath,mathtools,mathrsfs}
\usepackage{amssymb}
\usepackage{amsthm}
\usepackage{microtype}
\usepackage{array,booktabs,tabularx,caption}
\usepackage{enumitem}
\usepackage{xcolor}

\makeatletter
\providecommand{\papersection}{}
\renewcommand{\papersection}[2]{%
  \@ifundefined{r@#1}{\S\,#2}{\hyperref[#1]{\S\,\ref*{#1}}}}
\makeatother
\usepackage{braket}
\usepackage{hyperref}
\hypersetup{colorlinks=true,linkcolor=blue!55!black,citecolor=blue!55!black,urlcolor=blue!55!black}

\newtheorem{theorem}{Theorem}[section]
\newtheorem{proposition}[theorem]{Proposition}
\newtheorem{lemma}[theorem]{Lemma}
\newtheorem{corollary}[theorem]{Corollary}

\theoremstyle{definition}
\newtheorem{definition}[theorem]{Definition}
\newtheorem{remark}[theorem]{Remark}

\newcommand{\mcP}{\mathcal P}

\renewcommand{\P}{\mathbb P}
\newcommand{\cF}{\mathcal F}

\newcommand{\E}{\mathbb E}

\newcommand{\Tr}{\operatorname{Tr}}
\newcommand{\supp}{\operatorname{supp}}

\newcommand{\Stab}{\mathsf{stab}}
\newcommand{\PStab}{\mathsf{pstab}}
\newcommand{\Prod}{\mathsf{prod}}
\newcommand{\BlockProd}{\mathsf{blockprod}}
\newcommand{\Cliff}{\mathsf{cliff}}
\newcommand{\CliffT}{\mathsf{cliffT}}
\newcommand{\RevMag}{\mathsf{revmag}}

\newcommand{\op}{\mathrm{op}}

\newcommand{\eps}{\varepsilon}

\newcommand{\hatM}{\widehat{M}}
\newcommand{\barQ}{\overline{Q}}

\newcommand{\oag}[1]{\textcolor{purple}{\textbf{OMAR NOTE:} #1}}
\newcommand{\bk}[1]{\textcolor{red}{\textbf{Bobak NOTE:} #1}}
\newcommand{\dg}[1]{\textcolor{blue}{\textbf{DAVID NOTE:} #1}}

\renewcommand{\oag}[1]{}
\renewcommand{\bk}[1]{}
\renewcommand{\dg}[1]{}
\hypersetup{pdftitle={Beyond Light Cones: State Preparation Complexity in Quantum Spin Glasses},pdfauthor={Omar Al-Ghattas, David Gamarnik, Bobak T. Kiani}}
\title{Beyond light cones: State preparation complexity\\in quantum spin glasses}
\author{Omar Al-Ghattas\thanks{Department of Mathematics, Massachusetts Institute of Technology, Cambridge, MA, USA. \texttt{oag@mit.edu}.}
\and David Gamarnik\thanks{Sloan School of Management, Operations Research Center, and Institute for Data, Systems, and Society, Massachusetts Institute of Technology, Cambridge, MA, USA. \texttt{gamarnik@mit.edu}.}
\and Bobak T. Kiani\thanks{Department of Computer Science, Bowdoin College, Brunswick, ME, USA. \texttt{b.kiani@bowdoin.edu}.}}
\date{}
\begin{document}

\hypersetup{pageanchor=false}
\begin{titlepage}
\maketitle
\begin{abstract}
\normalsize

We introduce a general technique for studying state preparation
complexity in the quantum $p$-spin model on $n$ qubits, a disordered
quantum spin system with random interactions on every set of $p$
qubits. We reduce the task of proving state preparation lower
bounds to bounding the \textit{local profile complexity} of the
class, a quantity derived from the metric entropy of the set of
vectors recording each state's expectation values against all
weight-$p$ Pauli strings. We show that, for sufficiently large fixed $p$, any class
with local profile complexity $O(n^2)$ remains separated
from the ground-state energy by $\Omega(\sqrt n)$, a gap
on the leading energy scale. When the profile complexity
is $o(n^2)$, we prove the stronger conclusion that the
class cannot outperform a benchmark class of simpler
states at leading order. Product states always provide such a benchmark, while additional structure can permit a smaller
benchmark and a sharper comparison.

Our entropy-based framework applies to a broad range of state
classes and, beyond this generality, strengthens known bounds
and yields bounds in settings not covered by previous methods.
These include examples where circuit lightcones extend across
the entire system, preventing a direct application of standard
lightcone arguments. We obtain an $\Omega(n^2/\log n)$ gate lower bound for preparing
near-ground states and depth--width tradeoffs, allowing arbitrary
one- and two-qubit gates and arbitrarily many discardable ancillas.
Further applications include lower bounds on entanglement depth
and matrix product state bond dimension, and energy obstructions
for forward and reverse magic at every fixed level of Parham's
magic hierarchy \cite{parham2026quantum}.
At the first level, our bounds allow arbitrarily many ancillas:
discardable for forward magic and clean for reverse magic.
For Clifford circuits doped with $T$-gates, we obtain sharper
energy comparisons using product stabilizer states as a more
restrictive benchmark.
The proof combines an adaptation of a nonsymmetric quantum
de Finetti theorem \cite{berta2022semidefinite} with metric-entropy bounds for the Gaussian process indexed by
the local profiles of states in the class.
\end{abstract}
\thispagestyle{empty}
\end{titlepage}
\hypersetup{pageanchor=true}

\pagenumbering{roman}
\tableofcontents
\clearpage
\pagenumbering{arabic}
\pagestyle{plain}

\section{Introduction}
State preparation is perhaps the most fundamental primitive
of quantum computation. Quantum algorithms rely on suitable
input states that can be prepared reliably across repeated
runs \cite{fomichev2024initial}. The resources needed to do so
are quantified by their \emph{state preparation complexity}.
Lower bounds on this complexity establish unavoidable costs
for any algorithm that requires these states, regardless
of how efficiently its subsequent steps can be implemented.

In this work, we study how closely the energies of states prepared by
a class $\mathcal C$ of circuits with bounded resources
can approach the ground-state energy of a quantum Hamiltonian $H$. Our focus is the dense quantum $p$-spin Hamiltonian
\cite{bray1980replica,erdHos2014phase,baldwin2020quenched,anschuetz2025bounds,zlokapa2025average,al2026circuit},
an important model of disordered quantum many-body systems
that extends the classical $p$-spin model
\cite{TalagrandBook,MezardMontanariBook}
to a noncommutative setting.

More broadly, understanding the complexity of quantum state
preparation is an important problem across computational
complexity, quantum information, many-body physics, quantum
field theory, and quantum gravity
\cite{freedman2014quantum,anshu2022nlts,Brown2018second,baiguera2026quantum,chapman2022quantum,chen2025incompressibility}.
Although these fields employ different notions of complexity,
they share a common question: \emph{which physically relevant
states are difficult to prepare?}

Answering this question is challenging. Although Haar-random quantum states typically
require exponentially many local gates to prepare \cite{knill1995approximation}, proving that near-ground states are
difficult to prepare requires understanding how the Hamiltonian
interacts with the structure of states produced by restricted circuits.

Many existing proofs of quantum circuit lower bounds are based on bounded light cones or short-depth perturbations of stabilizer states.  For code Hamiltonians,
for example, circuit lower bounds are proven by showing that the lightcones of low-depth preparation circuits are too small to reproduce the local indistinguishability and error-correcting structure of low-energy states
\cite{bravyi2006lieb,eldar2017local,anshu2020circuit,anshu2022nlts}.  
Recent work of the first two authors on the quantum
$p$-spin model also uses circuit light cones, adapting a
residual argument to the causal geometry of shallow circuits
to obtain lower bounds with an arbitrary number of ancillary
qubits \cite{al2026circuit}. Circuit lightcones are essential to these proofs and typically lead to logarithmic lower bounds on circuit depth. They do not directly apply
to state classes in which local observables spread globally. Lower bounds for the magic hierarchy use a different kind of
class-specific structure, namely the algebraic rigidity of stabilizer codes
\cite{parham2026quantum,wei2026longrangenonstabilizernessquantumcodes}.
At the first level of forward magic, \cite{parham2026quantum} exploits the fact that a shallow
circuit acts on an underlying stabilizer state in a controlled way, allowing
stabilizer-code structure to be used to derive circuit lower bounds. 

A notable exception arises for strongly interacting fermionic models such as the Sachdev-Ye-Kitaev (SYK) model.  There, strong anticommutation of the Hamiltonian terms makes the
disorder variance of the energy uniformly small for every fixed state, as
captured, for example, by the commutation index
\cite{hastings2022perturbation,anschuetz2025strongly}.  This concentration, which also holds for the sparse nonlocal
Pauli model studied in \cite{chen2024sparse}, allows a simple
union bound to control large classes of candidate states,
although the resulting circuit bounds do not cover arbitrary
ancillary workspace. For more general Hamiltonians, the corresponding variance
need not vanish uniformly over states, so the same union-bound
argument is too weak to control comparably large state classes.

We develop a general framework for proving energy lower bounds
across qualitatively different restricted state classes in
the bounded-locality setting of the dense quantum $p$-spin model.
The energy depends on a state only through its \emph{weight-$p$ Pauli profile},
the vector of its expectation values against weight-$p$
Pauli operators. We use the metric entropy of these profiles
to control the energy attainable by an entire state class.

The proof combines a conditioning argument adapted from
the nonsymmetric quantum de Finetti theorem of
\cite{berta2022semidefinite} with bounds on Gaussian-process
suprema. This yields a quadratic threshold for profile
complexity. Classes of at most quadratic complexity exhibit
an $\Omega(\sqrt n)$ ground-state gap for sufficiently large
fixed $p$: the optimal energy attainable within the class
remains separated from the ground-state energy on its
natural $\sqrt n$ scale. For classes of subquadratic
complexity, the energy advantage over a suitable benchmark
is at most $o(\sqrt n)$. Product states always provide such
a benchmark, and additional structure can permit a sharper
comparison.

To apply this principle, we bound the metric entropy of
the local Pauli profiles associated with each state class
and identify the parameter regimes in which the resulting
complexity is quadratic or subquadratic. We carry this out
for circuits with local gates, states with restricted
entanglement, the magic hierarchy of
\cite{parham2026quantum}, and Clifford+$T$ circuits.
These applications are summarized in
Table~\ref{tab:energy-summary} and described in Section~\ref{ssec:applications}.
The two result columns give sufficient conditions for
a ground-state gap and a vanishing leading-order advantage
over the benchmark, respectively.

\begin{table}[!ht]
\centering
\begingroup
\small
\setlength{\tabcolsep}{6pt}
\renewcommand{\arraystretch}{1.05}
\newcommand{\regime}[1]{{\footnotesize$#1$}}
\newcommand{\classentry}[3]{%
  \textbf{#1}~{\footnotesize (#2)}%
  \if\relax\detokenize{#3}\relax
  \else\newline{\footnotesize #3}\fi}
\begin{tabularx}{\linewidth}{@{}
  >{\raggedright\arraybackslash}p{.35\linewidth}
  >{\centering\arraybackslash}p{.10\linewidth}
  >{\hsize=.92\hsize\linewidth=\hsize\raggedright\arraybackslash}X
  >{\hsize=1.08\hsize\linewidth=\hsize\raggedright\arraybackslash}X@{}}
\toprule
\textbf{Class}
  & \textbf{Width $M$}
  & \textbf{Ground-state gap}
  & \textbf{Benchmark comparison} \\
\midrule
\classentry{Circuit size $G$}
  {\papersection{sec:gate-count}{3.1}}
  {}
  & {\footnotesize Arbitrary}
  & \regime{G=O(n^2/\log n)}
  & \regime{G=o(n^2/\log n)} \\
\addlinespace[4pt]
\classentry{Circuit depth $D$}
  {\papersection{subsec:plain-shallow}{3.2}}
  {$\widehat M=\min\{M,n2^D\}$}
  & {\footnotesize Arbitrary}
  & \regime{D\widehat M=O(n^2/\log n)}
  & \regime{D\widehat M=o(n^2/\log n)} \\
\addlinespace[4pt]
\classentry{Block-entangled}
  {\papersection{subsec:block-product}{3.3}}
  {$b$: maximum block size}
  & {\footnotesize Arbitrary}
  & \regime{b\le\log_2 n}
  & \regime{b\le c\log_2 n} \\
\addlinespace[4pt]
\classentry{Matrix Product States}
  {\papersection{subsec:tensor-networks}{3.4}}
  {$\chi$: bond dimension}
  & ---
  & \regime{\chi=O(\sqrt{n/\log n})}
  & \regime{\chi=o(\sqrt{n/\log n})} \\
\addlinespace[4pt]
\classentry{Magic, fixed $k$ rounds}
  {\papersection{sec:higher-magic-levels}{4.2}}
  {Both orientations}
  & $O(n)$
  & \regime{D=O(n/\log n)}
  & --- \\
\addlinespace[4pt]
\classentry{Forward magic $QC$}
  {\papersection{subsec:forward-profile-complexity}{4.3}}
  {}
  & $O(n)$
  & \regime{D=O(n/\log n)}
  & \regime{D\le c\log_2 n} \\
\addlinespace[2pt]

  & {\footnotesize Arbitrary}
  & \regime{D\le c\log_2 n}
  & \regime{D\le c\log_2 n} \\
\addlinespace[4pt]
\classentry{Reverse magic $CQ$}
  {\papersection{sec:revmag-clean}{4.4}}
  {Clean ancillas}
  & {\footnotesize Arbitrary}
  & \regime{D=O(1)}
  & --- \\
\addlinespace[4pt]
\classentry{Clifford+$T$}
  {\papersection{subsec:sublinear-t-count}{5.4}}
  {$t$: total number of $T$-gates}
  & {\footnotesize Arbitrary}
  & \regime{t=O(n)}
  & \regime{t=o(n)\,{}^{\dagger}} \\
\bottomrule
\end{tabularx}

\endgroup
\caption{Summary of applications of
Theorem~\ref{thm:local-profile-complexity-principle}. Section references in the first column indicate where
the corresponding results are proved. Entries give sufficient conditions for $\Omega(\sqrt n)$
ground-state separation or $o(\sqrt n)$ benchmark advantage,
as in parts~\textup{(i)} and~\textup{(ii)}, respectively.
The benchmark is product states, except for $\dagger$, which
uses product stabilizer states.
Here $n$ counts retained outputs, $M$ is total width, and $c>0$
is a sufficiently small fixed constant. In magic circuits,
$D$ bounds each shallow block.
Ancillas are discardable unless marked clean and are confined
to individual blocks for block-entangled states.
A dash means not applicable in the width column and no assertion
in the comparison column.}
\label{tab:energy-summary}
\end{table}

\subsection{The quantum $p$-spin model}
We introduce the dense quantum $p$-spin Hamiltonian and fix
the notation for quantum states and operators used throughout
the paper. Let
\[
    X=
    \begin{pmatrix}
        0&1\\
        1&0
    \end{pmatrix},
    \qquad
    Y=
    \begin{pmatrix}
        0&-i\\
        i&0
    \end{pmatrix},
    \qquad
    Z=
    \begin{pmatrix}
        1&0\\
        0&-1
    \end{pmatrix}
\]
denote the non-identity single-qubit Pauli matrices.  Let $[n] = \{1,\dots, n\}$ and assume $n \ge p \ge 2$ throughout.  For a $p$-element subset $I=\{i_1,\ldots,i_p\}\subseteq[n]$, with $i_1<\cdots<i_p$, and
$a=(a_1,\ldots,a_p)\in\{X,Y,Z\}^p$, let $P_I^a$ denote the
$n$-qubit operator that acts as $a_j$ on qubit $i_j$ and as the
identity on every qubit outside $I$.  In particular, $X_i,Y_i,Z_i$
denote the corresponding single-site Pauli operators.  We write
\[
    \mathcal P_{n,p}
    :=
    \left\{
        P_I^a:
        I\in\binom{[n]}p,\;
        a\in\{X,Y,Z\}^p
    \right\}
\]
for the set of weight-$p$ Pauli strings.  The support of an operator $A$, denoted $\operatorname{supp}(A)$ is the set of qubits it acts nontrivially on. In particular, $\operatorname{supp}(P_I^a)=I$.

The dense quantum $p$-spin Hamiltonian is
\[
    H_{n,p}
    :=
    \frac{1}{\sqrt{\binom np}}
    \sum_{P\in\mathcal P_{n,p}}
    g_P P,
\]
where the coefficients $g_P$ are independent standard Gaussian random
variables.  

We write $\Tr$ for the trace and $A^\dagger$ for the adjoint of a
matrix $A$.  The Euclidean vector norm is denoted by $\|\cdot\|_2$,
and the trace and operator norms are
$\|A\|_1:=\Tr\sqrt{A^\dagger A}$ and
$\|A\|_{\mathrm{op}}:=\sup_{\|v\|_2=1}\|Av\|_2$, respectively.

An $n$-qubit state is represented by a positive semidefinite operator $\rho$ on
$(\mathbb C^2)^{\otimes n}$ satisfying $\Tr(\rho)=1$.
For a state $\rho$ on a set $A$ of qubits, $\Tr_R$ denotes the
partial trace over $R\subseteq A$, and
$\rho_R:=\Tr_{A\setminus R}\rho$ denotes its reduced state on $R$.
We use the same marginal convention for other states, with a single
site $i$ standing for $\{i\}$. Throughout this work, all state classes are nonempty,
deterministic, and fixed independently of the Gaussian disorder.
The state selected from such a class may depend on the
realized Hamiltonian.

\subsection{Local-profile complexity}
For a class $\mathcal C_n$ of $n$-qubit states, the energy of
$\rho\in\mathcal C_n$ is
\[
    \Tr(H_{n,p}\rho)
    =
    \binom np^{-1/2}
    \sum_{P\in\mathcal P_{n,p}} g_P\Tr(P\rho).
\]
This depends on $\rho$ only through its \emph{weight-$p$ Pauli profile}
\[
    f_\rho(P):=\Tr(P\rho),
    \qquad P\in\mathcal P_{n,p}.
\]
The random energies can therefore be studied as a centered Gaussian process
indexed by the profile class $\mathcal F_{n,p}(\mathcal C_n):=\{f_\rho:\rho\in\mathcal C_n\}$. The \emph{canonical metric} of this process is a metric on the profile class, defined by
\[
    d_{n,p}(f_\rho,f_\sigma)^2
    :=
    \operatorname{Var}\left(
        \Tr(H_{n,p}\rho)-\Tr(H_{n,p}\sigma)
    \right)
    =
    \frac{1}{\binom np}
    \sum_{P\in\mathcal P_{n,p}}
    \bigl(f_\rho(P)-f_\sigma(P)\bigr)^2,
\]
and we write $\|f\|_{n,p}^2:=\binom np^{-1}\sum_{P\in\mathcal P_{n,p}}f(P)^2$ for the corresponding norm. Notably, two states with identical profiles have the same energy under $H_{n,p}$. 

Although our motivating question concerns the ground-state energy, we work
with the maximal energy for convenience.  Since the Gaussian disorder is
distributionally symmetric, the problems of maximizing and minimizing the
energy are equivalent in distribution.

The maximal energy attainable by states in $\mathcal C_n$ is
\[
    \mathcal{E}_{n,p}(\mathcal C_n)
    :=
    \sup_{\rho\in\mathcal C_n}
    \Tr(H_{n,p}\rho).
\]
 We write $\lambda_{\max}(H_{n,p})
    :=
    \sup_{\rho}\Tr(H_{n,p}\rho)$ for the unrestricted maximal energy, equivalently the largest eigenvalue of
$H_{n,p}$, and let
\[
    \Prod_n
    :=
    \left\{
        \bigotimes_{i=1}^n |\psi_i\rangle\langle\psi_i|
        :
        |\psi_i\rangle\in\mathbb C^2,\;
        \|\psi_i\|_2=1
    \right\}
\]
denote the class of pure product states.  For a reference class $\mathcal B_n$, define

\begin{align}
    \Delta^{\mathrm{opt}}_{n,p}(\mathcal C_n)
    &:=
    \lambda_{\max}(H_{n,p})
    -
    \mathcal{E}_{n,p}(\mathcal C_n),
    \label{eq:intro-optimality-gap}
    \\
    \Delta^{\mathrm{ref}}_{n,p}(\mathcal C_n;\mathcal B_n)
    &:=
    \left[
        \mathcal{E}_{n,p}(\mathcal C_n)
        -
        \mathcal{E}_{n,p}(\mathcal B_n)
    \right]_+,
    \label{eq:intro-benchmark-advantage}
\end{align}
where $[x]_+:=\max\{x,0\}$. Thus $\Delta^{\mathrm{opt}}_{n,p}(\mathcal C_n)$
measures how far the best energy attainable in $\mathcal C_n$ falls
below the unrestricted maximum, and $\Delta^{\mathrm{ref}}_{n,p}(\mathcal C_n;\mathcal B_n)$ measures
how much this energy exceeds that attainable in the benchmark.

Let $\operatorname{conv}(\mathcal B_n)$ denote the set of convex mixtures
of states in $\mathcal B_n$. Our theory requires that these mixtures
approximate the local Pauli profiles of $\mathcal C_n$ on a suitably
chosen subset of qubits, in the sense of the following definition.

\begin{definition}[Conditioning benchmark]
\label{def:conditioning-benchmark}
$(\mathcal B_n)_{n\ge1}$ is a conditioning benchmark for
$(\mathcal C_n)_{n\ge1}$ if the following approximation property holds
for every fixed $p\ge2$ and $\alpha\in(0,1)$.
Each state $\rho\in\mathcal C_n$ admits a set $J\subseteq[n]$,
with $|J|\ge(1-\alpha) n$, and a state
$\sigma\in\operatorname{conv}(\mathcal B_n)$ satisfying
\begin{equation}
\label{eq:conditioning-benchmark-asymptotic}
    \frac1{\binom np}
    \sum_{\substack{P\in\mathcal P_{n,p}\\
                    \operatorname{supp}(P)\subseteq J}}
    |f_\rho(P)-f_\sigma(P)|^2
    =
    O_{p,\alpha}(n^{-1})
\end{equation}
as $n\to\infty$, uniformly over $\rho\in\mathcal C_n$.
\end{definition}
This definition is motivated by the de Finetti construction in
Section~\ref{subsec:revmag-definetti}, which shows that $\Prod_n$
is a conditioning benchmark for every state class. Allowing convex mixtures does not change the benchmark energy,
since $\mathcal{E}_{n,p}(\operatorname{conv}(\mathcal B_n))
=\mathcal{E}_{n,p}(\mathcal B_n)$ by linearity.

Finally, we will relate the attainable energy of $\mathcal C_n$ to the metric entropy of its profile class. Recall that for a subset $\mathcal A$ of a metric space $(\mathcal X,d)$, $N(\varepsilon,\mathcal A,d)$ denotes the covering number, namely, the smallest number of closed $d$-balls of radius $\varepsilon$, whose union contains $\mathcal A$. The centers of the covering balls may lie anywhere in the ambient space $\mathcal X$. The metric entropy of
$\mathcal A$ at scale $\varepsilon$ is $\log N(\varepsilon,\mathcal A,d)$.

We now state our main result.

\begin{theorem}[Local-profile complexity principle]
\label{thm:local-profile-complexity-principle}
Let $a_{n,p},b_{n,p}\ge0$ and $B_{n,p}\ge1$ satisfy,
for integers $n\ge p\ge2$ and $0<\varepsilon<1$,
\[
    \log N\left(
        \varepsilon,
        \mathcal F_{n,p}(\mathcal C_n),
        \|\cdot\|_{n,p}
    \right)
    \le
    a_{n,p}
    +
    b_{n,p}\log\left(\frac{B_{n,p}}{\varepsilon}\right).
\]
Let $\Gamma_{n,p}
    :=
    a_{n,p}
    +
    b_{n,p}\log(eB_{n,p}n)$ denote the effective profile complexity.

\begin{enumerate}[label=\textup{(\roman*)}]
    \item
    \textbf{Quadratic ground-state separation.}
    For every $L \ge 0$ and every sufficiently large fixed
    $p$ depending only on $L$, the condition $\limsup_{n\to\infty}\Gamma_{n,p}/n^2\le L$ implies, for some $c_{p,L}>0$,
    \[
        \mathbb P\left(
            \Delta^{\mathrm{opt}}_{n,p}(\mathcal C_n)
            \ge c_{p,L}\sqrt n
        \right)
        \longrightarrow1.
    \]

    \item
    \textbf{Subquadratic benchmark comparison.}
    Let $\mathcal B_n$ be a conditioning benchmark for $\mathcal C_n$
    in the sense of Definition~\ref{def:conditioning-benchmark}.
    For every fixed $p\ge2$, if $\Gamma_{n,p}=o(n^2)$, then, in $L^1$ and in probability as $n\to\infty$,
    \[
        \frac{
            \Delta^{\mathrm{ref}}_{n,p}(\mathcal C_n;\mathcal B_n)
        }{
            \sqrt n
        }
        \longrightarrow0.
    \]
\end{enumerate}
\end{theorem}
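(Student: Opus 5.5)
The argument treats $\rho\mapsto\Tr(H_{n,p}\rho)$ as a centered Gaussian process indexed by $\mathcal F_{n,p}(\mathcal C_n)$. Each point has variance $\|f_\rho\|_{n,p}^2\le 3^p$, and the canonical metric is $d_{n,p}$. The basic quantitative input is a Dudley-type chaining bound truncated at a fine scale. For any $\delta>0$ we have
\[
\E\sup_{\rho\in\mathcal C_n}\Tr(H\rho)\le \E\sup_{\text{net}}+\E\sup_{d(f,g)\le\delta}(X_f-X_g).
\]
For the second term we do not use chaining. We use the crude bound $|X_f-X_g|\le \|g\|_2\,\binom np^{1/2}\,d(f,g)$, where $g=(g_P)_P$, and $\|g\|_2\approx\sqrt{3^p\binom np}$. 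This contributes $O(\delta\, 3^{p/2}\binom np^{1/2})$. Taking $\delta\approx n^{-p/2}$, or more conveniently $\varepsilon=1/n$ in a chaining integral starting at scale $1/n$, makes the metric entropy at that scale at most $\Gamma_{n,p}$ up to constants. The normalization $\log(eB_{n,p}n)$ is chosen exactly so that $\int_{1/n}^{3^{p/2}}\sqrt{\log N(\varepsilon)}\,d\varepsilon\lesssim_p \sqrt{\Gamma_{n,p}}$. The leftover fine-scale term is handled by the Cauchy--Schwarz bound above with scale $1/n$, giving $O_p(\sqrt{\binom np}/n)$. This is not small, so instead I will chain all the way down and use $\log(B/\varepsilon)$ integrability, which yields $\E\,\mathcal E_{n,p}(\mathcal C_n)-\text{(benchmark)}\lesssim_p \sqrt{\Gamma_{n,p}}\cdot n^{-1/2}$. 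Here the factor $n^{-1/2}$ comes from the localized variance after conditioning, explained next. Concentration is by Borell--TIS: the supremum has fluctuations $O(3^{p/2})=O_p(1)$, so expectation statements transfer to probability and $L^1$ statements.

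The key step is a conditioning or localization that reduces the effective variance from $O(1)$ to $O(1/n)$. For each $\rho$, the conditioning benchmark (for product states, via the de Finetti construction of Section~\ref{subsec:revmag-definetti}) supplies $J$ with $|J|\ge(1-\alpha)n$ and $\sigma\in\operatorname{conv}(\mathcal B_n)$. I decompose $\Tr(H\rho)=\Tr(H_J\rho)+\Tr(H_{J^c\text{-touching}}\rho)$. First, $\Tr(H_J\rho)-\Tr(H_J\sigma)$ is a Gaussian process whose increments have variance $O_{p,\alpha}(1/n)$ by \eqref{eq:conditioning-benchmark-asymptotic}. Its supremum over $\mathcal C_n$ is bounded via chaining by $O(\sqrt{\Gamma_{n,p}/n})$. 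The choice of $J$ and $\sigma$ depends on $\rho$, so I need the union over these choices: $J$ ranges over at most $2^n$ sets, which is absorbed since $n=o(n^2)$, and the $\sigma$-part is bounded by $\mathcal E(\mathcal B_n)$. Second, $\Tr(H_J\sigma)\le\lambda$-type bounds hold for the benchmark restricted to $J$, and the remaining terms, those touching $J^c$, carry a fraction $1-(1-\alpha)^p\approx p\alpha$ of the variance, so their supremum is $O(\sqrt{p\alpha}\,\sqrt n)$ by a crude bound ($\lambda_{\max}$ of a $p$-spin Hamiltonian with variance profile scaled by $p\alpha$). For part (ii), if $\Gamma=o(n^2)$ then $\sqrt{\Gamma/n}=o(\sqrt n)$. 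Letting $\alpha\to0$ after $n\to\infty$ gives $\Delta^{\mathrm{ref}}/\sqrt n\to0$. A technicality is that the benchmark sup over $J$-restricted Hamiltonians must be compared to the full benchmark energy; this is again controlled by the $p\alpha$ term.

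For part (i), I will use the same decomposition with product states as the benchmark. Then
\[
\mathcal E(\mathcal C_n)\le \mathcal E(\Prod_n)+C_p\sqrt{L}\sqrt n+o(\sqrt n).
\]
I will then invoke known asymptotics. The product-state value is $\mathcal E(\Prod_n)\approx \sqrt n\cdot$ (a classical $p$-spin-type constant growing like $\sqrt{\log 3}$-ish in $p$, bounded). By contrast, $\lambda_{\max}(H_{n,p})/\sqrt n$ grows like $\sqrt{p}$-ish times a constant, since the count of Pauli directions gives a second-moment or free-probability lower bound. For large $p$ this gap exceeds $C_p\sqrt L$; making this comparison precise and uniform in $p$ is the main obstacle I anticipate. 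Both the constant $C_p$ from the localization step, whose dependence through $3^{p}$ and $\binom np$ normalization must be tracked, and the lower bound on $\lambda_{\max}$ for large $p$ must be quantified carefully.
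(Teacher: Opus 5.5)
\paragraph{Gap 1: the residual entropy.} Your localization matches the paper's decomposition: benchmark term, residual on $J$, and excluded interactions. The step ``its supremum over $\mathcal C_n$ is bounded via chaining by $O(\sqrt{\Gamma_{n,p}/n})$'' does not follow from the hypotheses.

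Dudley's inequality needs the metric entropy of the residual class $\{h_\rho\}$, where $h_\rho=\mathsf M_{J_\rho}(f_\rho-f_{\sigma_\rho})$. The hypothesis only controls the entropy of $\mathcal F_{n,p}(\mathcal C_n)$. The approximant $\sigma_\rho$ given by the conditioning benchmark (for instance the de Finetti construction, which depends on the whole state and on the measured set) has no continuity in $f_\rho$. Two states with nearly equal profiles can therefore have far-apart residuals, so a cover of the profiles gives no cover of the residuals.

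Counting the $2^n$ sets $J$ does not fix this. Your remark that ``the $\sigma$-part is bounded by $\mathcal E_{n,p}(\mathcal B_n)$'' controls only the benchmark energy $\Tr(H_J\sigma)$, not the residual entropy. Covering the pairs $(f_\rho,f_{\sigma_\rho})$ instead would need entropy bounds for $\operatorname{conv}(\mathcal B_n)$-profiles down to scale zero, which nothing provides.

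The missing idea in the paper is to re-select the approximant. For each fixed $J$, replace $\mathsf M_Jf_\sigma$ by the metric projection $\Pi_J(\mathsf M_Jf_\rho)$ onto the compact convex set $\{\mathsf M_Jf_\sigma:\sigma\in\overline{\operatorname{conv}}(\mathcal B_n)\}$. This has three effects:
\begin{itemize}
\item it does not increase the error $r$;
\item it is still realized by a state in the closed convex hull, so the benchmark bound survives by linearity;
\item it makes $f\mapsto(I-\Pi_J)\mathsf M_Jf$ nonexpansive.
\end{itemize}
Hence the residual class has $\varepsilon$-covering number at most $1+2^nN(\varepsilon,\mathcal F_{n,p}(\mathcal C_n),\|\cdot\|_{n,p})$, and Dudley over $[0,2r]$ gives $Cr\sqrt{n+\Gamma_{n,p}}$. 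Freezing $J$ and $\sigma$ on an $r$-net of the profile class would also work.

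Relatedly, since $J$ depends on $\rho$, the excluded-interaction bound must hold for $\max_{|J^c|\le\alpha n}\|H_{\mathrm{hit}(J^c)}\|_{\mathrm{op}}$. The paper gets $C3^{p/2}\sqrt{p\alpha n}$ from a block-diagonal matrix Gaussian series. Once these are supplied, your part (ii) argument ($n\to\infty$, then $\alpha\downarrow0$) goes through as in the paper.

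\paragraph{Gap 2: part (i).} Part (i) is not established, and your heuristics point the wrong way. The display $\mathcal E_{n,p}(\mathcal C_n)\le\mathcal E_{n,p}(\Prod_n)+C_p\sqrt L\sqrt n+o(\sqrt n)$ drops the $\alpha$-tradeoff. At quadratic complexity the residual term is of order $2^{p/2}p^{3/2}\sqrt{L/\alpha}\,\sqrt n$ and the excluded-interaction term is of order $3^{p/2}\sqrt{p\alpha}\,\sqrt n$. Neither is $o(\sqrt n)$, and $\alpha$ cannot be sent to zero.

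Balancing with $\alpha_p=\sqrt L\,p(2/3)^{p/2}$ gives an excess of order $L^{1/4}p\,6^{p/4}\sqrt n$. Separation then follows from three facts together:
\begin{itemize}
\item $\lambda_{\max}(H_{n,p})\ge c_03^{p/2}\sqrt n/p$ with high probability, a known bound rescaled to this normalization;
\item $\E\mathcal E_{n,p}(\Prod_n)\le C\sqrt{n\log(ep)}$;
\item $p^2(2/3)^{p/4}\to0$.
\end{itemize}

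In this normalization the ground-state scale grows exponentially in $p$, not like $\sqrt p$. With $\sqrt p$ growth the comparison would fail against the $6^{p/4}$ error.

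It is also essential that the conditioning error carries $2^{p/2}$ rather than $3^{p/2}$. The $2^{p/2}$ comes from applying Parseval to the $p$-qubit marginal differences; $3^{p/2}$ is what you get by bounding each Pauli expectation separately by the trace distance. With $3^{p/2}$, the error after optimizing $\alpha$ is at least of order $3^{p/2}pL^{1/4}$, which swamps $c_03^{p/2}/p$.
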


Part~\textup{(i)} is our main ground-state separation result:
for sufficiently large fixed $p$, quadratic profile complexity
precludes approaching the unrestricted optimum. Applying it requires
only a metric-entropy bound, with no further class-specific energy
analysis.

Part~\textup{(ii)} shows that subquadratic profile complexity prevents
$\mathcal C_n$ from outperforming any of its conditioning benchmarks
at the leading $\sqrt n$ energy scale. Since $\Prod_n$ is always
such a benchmark, additional quantum resources give no leading-order
advantage over product states. Further structure can yield a sharper benchmark, as we demonstrate
for stabilizer states and Clifford+$T$ circuits with sublinear
$T$-count in Section~\ref{subsec:intro-stabilizer-t-count}.

In what follows, we demonstrate Theorem~\ref{thm:local-profile-complexity-principle} by applying it to a variety of state classes.

\subsection{Applications}\label{ssec:applications}

We write $\ket{0^M}:=\ket0^{\otimes M}$. In our circuit applications,
$M_n\ge n$ denotes the total number of qubits, initialized in
$\ket{0^{M_n}}$, with $n$ designated output qubits. The remaining
$M_n-n$ qubits may be traced out at the end and are called
\emph{discardable ancillas}. Unless otherwise stated, circuit gates act on one or two qubits, and we
place no a priori restriction on the total circuit width $M_n$.

\subsubsection{Circuit size and depth}
Our most general bound is in terms of the total gate count $G_n$, with no restriction on circuit depth, geometry, or ancillary
workspace.
\begin{corollary}[Circuit size]
\label{cor:intro-circuit-size}
For every $L<\infty$, the conclusion of
Theorem~\ref{thm:local-profile-complexity-principle}\textup{(i)} holds
whenever $G_n\le Ln^2/\log n$ for all sufficiently large $n$.  The
conclusion of
Theorem~\ref{thm:local-profile-complexity-principle}\textup{(ii)}, with benchmark $\mathcal B_n=\Prod_n$, holds
whenever $G_n=o(n^2/\log n)$.
\end{corollary}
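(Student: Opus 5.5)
The plan is to verify the metric-entropy hypothesis of Theorem~\ref{thm:local-profile-complexity-principle} with $a_{n,p}=0$ (or lower order), $b_{n,p}=O(G_n)$ and $B_{n,p}=\mathrm{poly}(n)$. Then the effective complexity is $\Gamma_{n,p}=O(G_n\log n)$. The hypothesis $G_n\le Ln^2/\log n$ gives $\limsup\Gamma_{n,p}/n^2\le C L$, so part~(i) applies. The hypothesis $G_n=o(n^2/\log n)$ gives $\Gamma_{n,p}=o(n^2)$, so part~(ii) applies. For part~(ii) we use that $\Prod_n$ is a conditioning benchmark for every class, which follows from the de Finetti construction referred to after Definition~\ref{def:conditioning-benchmark}. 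In part~(i), the ``sufficiently large $p$'' then depends on $CL$, which is harmless.

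\textbf{Step 1: reduce the width.} A circuit with $G$ one- and two-qubit gates touches at most $2G$ qubits. Every other ancilla stays in $\ket0$ and can be dropped, and output qubits that are never touched are simply in $\ket0$. So we may take $M\le n+2G$. The architecture of the circuit is discrete data: the ordered list of gate supports together with the choice of output qubits. The number of such architectures is at most $(M^2)^{G}\binom{M}{n}$. Its logarithm is $O(G\log(n+G))+O(M\log M)$. The term $M\log M$ must be handled with care.

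\textbf{Step 2: control the ancillas without paying for them.} Untouched qubits cost nothing, so the only relevant qubits are the $\le 2G$ touched ones plus the $n$ outputs. A relabeling of these qubits does not change the profile's value set, but the positions of the outputs among them must be recorded. This costs at most $\log\binom{n+2G}{n}\le n\log(e(n+2G)/n)$, which is $O(n\log n)+O(G)$. For $G\ge n$ this is absorbed into $O(G\log n)$. For $G<n$ the whole state is a product of $\ge n-2G$ fresh $\ket0$'s with a small part, and the count is still $O(n\log n)=o(n^2)$. Either way the discrete count is $O((G+n)\log n)$. Any extra $n\log n$ term is $o(n^2)$, so it does not affect the threshold in either part.

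\textbf{Step 3: continuous parameters.} Fix an architecture. Each gate lies in $U(4)$, and we cover it with an $\eta$-net in operator norm; this has $(C/\eta)^{16}$ points. Replacing every gate by its nearest net point changes the final pure state by at most $G\eta$ in trace norm, by telescoping. Tracing out ancillas does not increase trace distance, and $|f_\rho(P)-f_\sigma(P)|\le\|\rho-\sigma\|_1$. Hence $\|f_\rho-f_\sigma\|_{n,p}\le \sqrt{3^p}\,G\eta$, since $|\mathcal P_{n,p}|=3^p\binom np$. Taking $\eta=\varepsilon/(3^{p/2}G)$ gives
\[
\log N\le O((G+n)\log n)+16G\log\bigl(C3^{p/2}G/\varepsilon\bigr).
\]
This is of the required form with $B_{n,p}=C3^{p/2}G$. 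We may assume $G\le n^2$, since otherwise the hypotheses fail, and so $\log(eB_{n,p}n)=O_p(\log n)$.

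The main obstacle I expect is Step~2, namely ensuring the entropy bound is independent of the arbitrary width $M$. The plan is to pass to the touched qubits and count their embeddings relative to the outputs, as above. The factor $3^{p}$ is a constant for fixed $p$, so it affects only constants.
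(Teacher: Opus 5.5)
Your proposal is correct and is essentially the paper's proof. Both arguments reduce to an ancilla-independent width, count gate-support sequences, place operator-norm nets on the gates and telescope, transfer via $\|f_\rho-f_\sigma\|_{n,p}\le 3^{p/2}\|\rho-\sigma\|_1$, and apply Theorem~\ref{thm:local-profile-complexity-principle} with $\Gamma_{n,p}=O(G_n\log n)$. Part~(i) needs that constant to be independent of $p$, which holds because $p$ enters only additively inside $\log(eB_{n,p}n)$; your ``$O_p(\log n)$'' phrasing obscures this.

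The differences are cosmetic:
\begin{itemize}
\item The paper prunes to the backward lightcone of the outputs (width $n+G$), whereas you keep all touched qubits (width $n+2G$).
\item Your $\binom{n+2G}{n}$ output-placement factor is unnecessary, since the outputs are the designated first $n$ qubits.
\item The pure-state trace distance is at most $2G\eta$ rather than $G\eta$.
\item $G_n$ should be replaced by $\max\{G_n,1\}$ to keep $B_{n,p}\ge1$.
\end{itemize}
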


Thus any leading-order advantage over product states requires nearly
quadratically many local gates, and even a fixed-constant multiple of
$n^2/\log n$ gates remains insufficient to approach the ground-state
energy for sufficiently large fixed $p$, even with arbitrary discardable
ancillas. The $n^2/\log n$ threshold for outperforming product states
is sharp up to the logarithmic factor. In Proposition~\ref{prop:beatingProds}, we construct a class of states,
each preparable using at most $\binom n2$ two-qubit gates and no
ancillas, whose maximal energy exceeds the product-state optimum
by a positive multiple of $\sqrt n$ with high probability,
for sufficiently large fixed $p$.

Specializing to depth-bounded circuits gives substantially stronger depth
bounds than previous results in the literature
\cite{anschuetz2025bounds,al2026circuit}.  Suppose the circuit has depth
$D_n$, meaning that it consists of $D_n$ layers of disjoint gates.  We denote the effective workspace by ${\widehat M}_n
    :=
    \min\{M_n,n2^{D_n}\}$, where the second term bounds the number of qubits contained in the union of
the backward lightcones of the $n$ retained output qubits. The following result quantifies a continuous tradeoff between circuit depth and ancillary workspace.

\begin{corollary}[Depth--width tradeoff]
\label{cor:intro-shallow-depth}
For every $L<\infty$, the conclusion of
Theorem~\ref{thm:local-profile-complexity-principle}\textup{(i)} holds
whenever $D_n{\widehat M}_n\le Ln^2/\log n$ for all sufficiently
large $n$.  The conclusion of
Theorem~\ref{thm:local-profile-complexity-principle}\textup{(ii)}, with benchmark $\mathcal B_n=\Prod_n$, holds
whenever $D_n{\widehat M}_n=o(n^2/\log n)$.
\end{corollary}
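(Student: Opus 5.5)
The plan is to reduce Corollary~\ref{cor:intro-shallow-depth} to Theorem~\ref{thm:local-profile-complexity-principle} by bounding the metric entropy of the profile class of depth-$D_n$ circuits with workspace $\widehat M_n$. The target bound is
\[
\log N\left(\varepsilon,\mathcal F_{n,p}(\mathcal C_n),\|\cdot\|_{n,p}\right)\le C\,D_n\widehat M_n\log\left(\frac{C'D_n\widehat M_n}{\varepsilon}\right),
\]
which gives $\Gamma_{n,p}=O(D_n\widehat M_n\log n)$ whenever $D_n\widehat M_n$ is polynomial in $n$. This holds in both regimes of the corollary, since there $D_n\widehat M_n\le Ln^2/\log n$. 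Given this, $D_n\widehat M_n\le Ln^2/\log n$ yields $\limsup\Gamma_{n,p}/n^2\le C L$, and part~(i) follows after adjusting $p$ to the constant $CL$. Likewise, $D_n\widehat M_n=o(n^2/\log n)$ yields $\Gamma_{n,p}=o(n^2)$, and part~(ii) applies with $\Prod_n$, which is a conditioning benchmark for every class by the de Finetti construction.

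\textbf{Step 1: restricting to the lightcone.} The output reduced state depends only on gates inside the union of backward lightcones of the $n$ retained qubits. Delete every gate outside this union and discard the untouched qubits, since they start in $\ket 0$ and are traced out. This leaves a circuit on at most $\widehat M_n=\min\{M_n,n2^{D_n}\}$ qubits, still of depth $D_n$, with the same output state. Hence the circuit has at most $G\le D_n\widehat M_n/2+D_n\widehat M_n$ gates, counting one- and two-qubit gates per layer. A one-qubit gate can be absorbed as a two-qubit gate with identity, so each layer has at most $\widehat M_n$ gates.

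\textbf{Step 2: counting and discretizing.} This step reuses the gate-count argument of Corollary~\ref{cor:intro-circuit-size}; if that is already available as a profile-entropy lemma in Section~\ref{sec:gate-count}, I will simply invoke it with $G=O(D_n\widehat M_n)$. Otherwise I will argue directly. First, the circuit architecture (which pairs of qubits each gate acts on) is chosen from at most $(\widehat M_n^2)^{G}$ options, contributing $O(G\log \widehat M_n)=O(G\log n)$ to the entropy, because $\widehat M_n\le n2^{D_n}$ and, in the relevant regime, $D_n\widehat M_n\le n^2$ forces $\widehat M_n\le n^2$. Second, for a fixed architecture, replace each $4\times4$ unitary by an element of a $\delta$-net of $U(4)$ in operator norm, which has size $(C/\delta)^{32}$. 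By telescoping, the output states change by at most $G\delta$ in trace norm. Then every $|f_\rho(P)-f_\sigma(P)|\le\|\rho-\sigma\|_1\le 2G\delta$, so $\|f_\rho-f_\sigma\|_{n,p}\le 2G\delta$. Choosing $\delta=\varepsilon/(2G)$ gives entropy at most $32G\log(2CG/\varepsilon)+O(G\log n)$. This has the required form with $a=O(G\log n)$, $b=32G$ and $B=2CG$, and therefore $\Gamma=O(G\log n)$, since $\log(eBn)=O(\log n)$ when $G\le n^2$.

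\textbf{Main obstacle.} The delicate point is the bookkeeping in Step 1 together with the logarithmic factor. I must check that pruning to the lightcone preserves depth and the reduced state exactly. I must also check that the factor $\log\widehat M_n$ from choosing qubit pairs stays $O(\log n)$ without any a priori assumption on $M_n$. The minimum with $n2^{D_n}$ guarantees this only together with the hypothesis: if $D_n\widehat M_n\le Ln^2/\log n$, then $\widehat M_n\le Ln^2$, so $\log\widehat M_n\le 2\log n+O(1)$. This closes the argument.
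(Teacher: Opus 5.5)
Your proposal is correct and follows the paper's route. The paper likewise prunes to the backward lightcone to get a circuit on at most $\widehat M_n$ qubits with at most $D_n\widehat M_n$ gates, records this as the inclusion $\mathsf{shallow}^{[D_n]}_{n,M_n}\subseteq\mathsf{Gate}^{[D_n\widehat M_n],\mathrm{anc}}_n$, and then applies the gate-count entropy bound through Proposition~\ref{prop:gate-count-main}.

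There is one small slip in your Step~2. Since $|\mathcal P_{n,p}|=3^p\binom np$ while the norm is normalized by $\binom np^{-1}$, the correct transfer is $\|f_\rho-f_\sigma\|_{n,p}\le 3^{p/2}\|\rho-\sigma\|_1$, so $B$ must carry a factor $3^{p/2}$. This adds only $O(pG)$ to $\Gamma_{n,p}$, which is negligible against $n^2$ for fixed $p$. Hence $\limsup_n\Gamma_{n,p}/n^2\le CL$ stays independent of $p$, and both parts go through.
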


Our result can be compared directly with existing bounds for the
quantum $p$-spin model. The work~\cite{anschuetz2025bounds}
determines the asymptotic maximal product-state energy
$\sqrt{2n\log p}$ as $n\to\infty$ followed by $p\to\infty$.
The subsequent work~\cite{al2026circuit} establishes that, for
sufficiently large fixed $p$, near-ground-state preparation
requires depth $\Omega_p(\log n)$ with high probability,
even with arbitrarily many discardable ancillas.

Corollary~\ref{cor:intro-shallow-depth} strengthens this depth lower
bound to $\Omega(n/\log n)$ for circuits acting on $O(n)$ qubits,
for sufficiently large fixed $p$, while retaining logarithmic-depth
lower bounds when the ancillary workspace is unrestricted.
More generally, if the total circuit width is $O(n^\gamma)$ for
some fixed $1\le\gamma<2$, the threshold becomes
$D_n=O(n^{2-\gamma}/\log n)$. For every fixed $p\ge2$, the
corresponding little-$o$ condition
$D_n=o(n^{2-\gamma}/\log n)$ ensures that such circuits can
improve on the maximal product-state energy by only $o(\sqrt n)$.
At logarithmic depths, the backward-lightcone bound
$\widehat M_n\le n2^{D_n}$ controls the effective workspace
independently of the total circuit width.

We next obtain a lower bound on entanglement depth. We say that a pure
$n$-qubit state is $b_n$-producible if it factors over a
partition of the physical qubits into blocks of size at most
$b_n$, with arbitrary entanglement allowed within each block.
A mixed state is $b_n$-producible if it is a convex mixture
of such pure states, whose block partitions may differ.
The entanglement depth of a state is the smallest integer $b$
for which it is $b$-producible. We allow each block to be
prepared using arbitrarily many discardable ancillas, with
no restrictions on the circuit depth or gate count within
the block.

\begin{corollary}[Entanglement depth]
\label{cor:intro-entanglement-depth}
For every $L<\infty$, the conclusion of
Theorem~\ref{thm:local-profile-complexity-principle}\textup{(i)} holds for
$b_n$-producible states whenever $2^{b_n}/b_n\le Ln/\log n$ for all
sufficiently large $n$.  The conclusion of
Theorem~\ref{thm:local-profile-complexity-principle}\textup{(ii)}, with benchmark $\mathcal B_n=\Prod_n$, holds
whenever $2^{b_n}\log n/b_n=o(n)$.
\end{corollary}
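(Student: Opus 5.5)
The plan is to apply Theorem~\ref{thm:local-profile-complexity-principle} directly. The only work is a metric-entropy bound on the profile class of $b_n$-producible states of the form $a_{n,p}+b_{n,p}\log(B_{n,p}/\varepsilon)$, with effective complexity $\Gamma_{n,p}=O(n\,2^{2b_n}/b_n\cdot\ldots)$. The bound should scale like $n\cdot 4^{b_n}\log n$ up to constants, or more precisely like $(n/b_n)\cdot 4^{b_n}\cdot\log(\cdot)$ plus a partition-counting term, arranged so that the hypotheses translate into $\Gamma_{n,p}=O(n^2)$ or $o(n^2)$. Since the hypothesis is phrased through $2^{b_n}/b_n$ rather than $4^{b_n}$, I expect the right count to use the real dimension of \emph{pure} block states. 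A pure state on a block of size $b$ is a unit vector in $\mathbb C^{2^b}$, with $2^{b+1}$ real parameters. With $n/b_n$ blocks, this gives roughly $(n/b_n)2^{b_n+1}$ real parameters in total. Multiplying by the logarithmic discretization factor gives $\Gamma\asymp n\,2^{b_n}\log n/b_n$, and $2^{b_n}/b_n\le Ln/\log n$ makes this $O(n^2)$.

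First, I reduce to pure states. Mixed $b_n$-producible states are convex mixtures of pure ones, and $\mathcal E_{n,p}$ is linear, so the supremum over the convex hull equals the supremum over the pure class. Both $\Delta^{\mathrm{opt}}$ and $\Delta^{\mathrm{ref}}$ are therefore unchanged, and it suffices to bound the entropy of the pure class (the profile map is affine, so convexification need not be handled through entropy). Discardable ancillas within blocks only produce mixed block states, which are again mixtures of pure block-product states. Second, I count partitions. The number of partitions of $[n]$ into blocks of size at most $b_n$ is at most the number of set partitions, $n^n=e^{n\log n}$, which contributes $a_{n,p}\le n\log n=o(n^2)$.

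Third, I cover each fixed partition. For a fixed partition into blocks $S_1,\dots,S_m$, the profile $f_\rho(P)=\prod_k \Tr(P|_{S_k}\psi_k)$ is a product of block expectations. I take an $\eta$-net in $\ell_2$ of each unit sphere of $\mathbb C^{2^{|S_k|}}$, of size $(3/\eta)^{2^{|S_k|+1}}$. Replacing $\psi_k$ by a net point changes each block factor by at most $2\eta$, and by telescoping the profile changes uniformly by at most $2m\eta$ on every $P$ (each factor is bounded by $1$). Hence $\|f_\rho-f_{\rho'}\|_{n,p}\le \sup_P|\cdot|\le 2m\eta\le 2n\eta$. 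Choosing $\eta=\varepsilon/(2n)$ yields the log-covering bound $\sum_k 2^{|S_k|+1}\log(6n/\varepsilon)$. Since $2^s/s$ is increasing for $s\ge1$ (convexity, up to small $s$), $\sum_k 2^{|S_k|}\le n\max(2,2^{b_n}/b_n)\cdot$const. So $b_{n,p}=O(n2^{b_n}/b_n)$ and $B_{n,p}=6n$. Because the telescoping only involves the at most $p$ blocks meeting $\operatorname{supp}P$, the factor $m$ can be improved to $p$, but this improvement only affects logarithms.

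Finally, $\Gamma_{n,p}=a+b\log(eB n)=O(n\log n)+O\big((n2^{b_n}/b_n)\log n\big)$. Under $2^{b_n}/b_n\le Ln/\log n$, this gives $\limsup\Gamma/n^2\le C L$ for an absolute constant $C$, so part (i) applies with $L$ replaced by $CL$, which is harmless since $p$ may depend on $L$. Under $2^{b_n}\log n/b_n=o(n)$, we get $\Gamma=o(n^2)$, and part (ii) applies with $\Prod_n$ as benchmark, since the paper notes it is always a conditioning benchmark. The main obstacle I expect is bookkeeping with the block-size distribution, i.e.\ ensuring $\sum_k 2^{|S_k|}\lesssim n2^{b_n}/b_n$ uniformly over partitions including tiny blocks. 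I would handle this by the monotonicity of $2^s/s$ together with a crude bound $2^s\le 2s$ for $s\le 2$.
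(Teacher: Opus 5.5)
Your argument is correct and is essentially the paper's proof. Like the paper, you reduce to pure block-product states by linearity, with ancillas handled the same way, bound the number of partitions by $n^n$, net each block's unit sphere at log-cost $O(2^{|B|}\log(\cdot))$, and use monotonicity of $2^s/s$ to get $\sum_B 2^{|B|}\le n2^{b_n}/b_n$, exactly as in Lemma~\ref{lem:block-product-profile-entropy}; the paper telescopes in trace norm rather than directly on the product of block expectations, which is immaterial. One small slip: since $|\mathcal P_{n,p}|=3^p\binom np$, you only have $\|f\|_{n,p}\le 3^{p/2}\sup_P|f(P)|$, so your scale should be $B_{n,p}=6\cdot 3^{p/2}n$. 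This adds only $O_p(1)$ inside the logarithm and leaves $\limsup_n\Gamma_{n,p}/n^2\le CL$ uniform in $p$, so both conclusions still follow.
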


Thus, near-ground states must have entanglement depth
at least of order $\log n$. Under the stronger little-$o$ condition, allowing
entanglement within smaller blocks does not improve upon product states at the
leading $\sqrt n$ scale.

\subsubsection{Matrix product states}
The matrix product states (MPS) are a standard tensor-network ansatz for representing many-body states
\cite{cirac2021matrix,fannes1992finitely}. Fixing an ordering of the
qubits, an open-boundary MPS assigns a matrix $A^{[j]}_{x_j}$ to each site
$j$ and basis value $x_j\in\{0,1\}$, and writes the state as
\begin{align*}
    \ket{\psi}
    =
    \sum_{x_1,\ldots,x_n\in\{0,1\}}
    A^{[1]}_{x_1}A^{[2]}_{x_2}\cdots A^{[n]}_{x_n}
    \ket{x_1\cdots x_n}.
\end{align*}
The matrices $A^{[1]}_{x_1},A^{[2]}_{x_2},\dots A^{[n]}_{x_n}$ have dimensions compatible for matrix multiplication, with the first a row vector and
the last a column vector, so their product is a scalar. The largest intermediate matrix dimension, denoted by $\chi$, is
called the \emph{bond dimension}. The smallest bond dimension needed
to represent a state equals its largest Schmidt rank across the
contiguous cuts of the chosen ordering. Bond dimension one gives
precisely the product states, while bond dimension two already
includes the GHZ state, whose correlations extend across the entire
system. The entanglement entropy across each cut is bounded by
$\log\chi$~\cite{cirac2021matrix}.

MPS thus permit global correlations while their local-profile
complexity remains controlled by the bond dimension, providing
a qualitatively different application of our framework.

\begin{corollary}[MPS bond dimension]
\label{cor:intro-mps}
Consider the $n$-qubit open-boundary MPS of bond dimension at most
$\chi_n$. For every $L<\infty$, the conclusion of
Theorem~\ref{thm:local-profile-complexity-principle}\textup{(i)}
holds whenever $\chi_n^2\log(n\chi_n)\le Ln$ for all sufficiently
large $n$. The conclusion of
Theorem~\ref{thm:local-profile-complexity-principle}\textup{(ii)},
with benchmark $\mathcal B_n=\Prod_n$, holds whenever
$\chi_n^2\log(n\chi_n)=o(n)$.
Both statements remain valid for convex mixtures and when the
ordering of the physical qubits may vary with the state.
\end{corollary}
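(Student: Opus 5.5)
We will deduce the corollary from Theorem~\ref{thm:local-profile-complexity-principle}. It suffices to bound the metric entropy of the profile class $\mathcal F_{n,p}(\mathcal C_n)$ for MPS of bond dimension at most $\chi=\chi_n$ in the form
\[
\log N(\varepsilon,\mathcal F_{n,p}(\mathcal C_n),\|\cdot\|_{n,p})\le a_{n,p}+b_{n,p}\log(B_{n,p}/\varepsilon),
\]
with $b_{n,p}=O(n\chi^2)$, $a_{n,p}=O(n\log n)$ and $B_{n,p}=\mathrm{poly}(n,\chi)$. Then $\Gamma_{n,p}=O(n\chi^2\log(n\chi))$, and the two hypotheses $\chi_n^2\log(n\chi_n)\le Ln$ and $=o(n)$ become exactly $\Gamma=O(n^2)$ and $\Gamma=o(n^2)$. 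Part~(ii) then applies with $\mathcal B_n=\Prod_n$, which is a conditioning benchmark for every class by the de Finetti construction.

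\textbf{Step 1: normal form.} Fix an ordering. Every MPS with bond dimension at most $\chi$ can be put in left-canonical form, padding dimensions up to $\chi$. Each site then carries an isometry $V_j:\mathbb C^\chi\to\mathbb C^2\otimes\mathbb C^\chi$, with the boundary tensors handled similarly. The parameter set is therefore a product of $n$ compact sets of real dimension at most $4\chi^2$, i.e.\ total dimension $d\le 4n\chi^2$.

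\textbf{Step 2: Lipschitz map to profiles.} Perturb each $V_j$ by $\delta$ in operator norm. A telescoping argument shows the state vector moves by at most $n\delta$ in $\ell_2$, since each replaced isometry contributes $\delta$ and the remaining factors are contractions. Hence $\|\rho-\rho'\|_1\le 2n\delta$. Every Pauli satisfies $\|P\|_{\op}=1$, so $|f_\rho(P)-f_{\rho'}(P)|\le 2n\delta$ for each $P$, and $\|f_\rho-f_{\rho'}\|_{n,p}\le 2n\delta\cdot 3^{p/2}$, using $|\mathcal P_{n,p}|=3^p\binom np$. Covering each factor of the parameter space by a $\delta$-net of size $(C/\delta)^{4\chi^2}$, and setting $\delta=\varepsilon/(2\cdot 3^{p/2}n)$, gives
\[
\log N\le 4n\chi^2\log\bigl(C3^{p/2}n/\varepsilon\bigr).
\]
This uses the covering convention that allows centers outside the class. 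To have the net points realize honest states we take the centers' images, at the cost of a factor $2$ in $\varepsilon$.

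\textbf{Step 3: extensions.} To allow the ordering to vary with the state, take the union over $n!$ orderings, which adds $a_{n,p}=\log n!\le n\log n$. This is harmless, since $n\log n=o(n^2)$ and is dominated by $n\chi^2\log(n\chi)$ anyway. For convex mixtures, note that the profile map is affine. The profile class of the convex hull is the convex hull of the original profile set, so it is bounded in $\|\cdot\|_{n,p}$ with the same effective complexity. We can also pass to the convex hull of a finite-dimensional compact set by Carathéodory. A cleaner route is to observe that $\mathcal E_{n,p}$ of the convex hull equals that of the class by linearity, so both $\Delta^{\mathrm{opt}}$ and $\Delta^{\mathrm{ref}}$ are unchanged. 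We apply the theorem to pure MPS and transfer the conclusion.

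\textbf{Main obstacle.} The main care point is Step~2. We must verify that the canonical-form parametrization is genuinely compact and uniformly Lipschitz, with only a polynomial-in-$n$ constant, so that the log factor is $\log(n\chi)$ rather than something larger. The boundary tensors and normalization must be handled without blowup. Left-canonical isometries are the choice I would try first, because contractivity makes the telescoping bound immediate.
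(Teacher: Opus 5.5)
Your proposal is correct and follows essentially the same route as the paper (Lemma~\ref{lem:mps-profile-entropy} and Proposition~\ref{prop:mps-energy}): an isometric canonical form, per-site $\delta$-nets in a real space of dimension at most $4\chi^2$, telescoping to a $2n\delta$ trace-norm error, the $3^{p/2}$ transfer to profiles, an extra $\log n!$ for orderings, and linearity of the energy for convex mixtures.

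There are three minor points to fix. First, padding every site to an isometry $\mathbb C^{\chi}\to\mathbb C^2\otimes\mathbb C^{\chi}$ fails at the end of the chain where the outgoing bond is trivial, since no isometry $\mathbb C^{\chi}\to\mathbb C^2$ exists for $\chi>2$; either keep the true bond dimensions and pay $\log\chi^{n-1}$ to enumerate them as the paper does, or pad with contractions and use centers outside the class. Second, your aside that the convex hull of the profile set has the same effective complexity is false in general, but you do not need it, because the linearity argument suffices. Third, for part~(i) you should record that $\limsup_n\Gamma_{n,p}/n^2\le CL$ with $C$ independent of $p$, which holds because the factor $3^{p/2}$ contributes only $O(p\chi_n^2/n)=o(1)$.
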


In particular, near-ground states require bond dimension $\Omega(\sqrt{n/\log n})$, while under the stronger little-$o$ condition, MPS have no leading-order advantage over product states. To our knowledge, these are the first results showing that low-energy states of the quantum $p$-spin model require asymptotically growing MPS bond dimension. For SYK, \cite{anschuetz2025strongly} show that near-ground states require polynomially growing bond dimension using a commutation-index argument. The work \cite{navascues2018bond} gives examples of constant-bond-dimension MPS classes that fail to reach the ground energy of local Hamiltonians, while \cite{anshu2023concentration} gives concentration bounds for a restricted class of injective MPS of constant bond dimension.

\subsubsection{Magic hierarchy}

We next consider the magic hierarchy introduced in \cite{parham2026quantum}. Recall that a unitary $C$ is \emph{Clifford} if, for every Pauli string $P$,
there exists a Pauli string $P'$ such that $CPC^\dagger=\pm P'$. At its first level, \emph{forward magic} applies an
unrestricted Clifford circuit $C$ followed by a shallow circuit $Q$, producing states of the form $QC\ket{0^{M_n}}$, while \emph{reverse magic} uses
the opposite ordering, producing states of the form $CQ\ket{0^{M_n}}$.  At higher levels there are correspondingly two orientations: one obtained by
continuing the alternation starting from forward magic, and the other by
continuing the alternation starting from reverse magic. We call
the classes containing $k$ shallow blocks the $k$-round forward- and
reverse-magic classes.  The shallow blocks may have different depths, and we denote the maximum depth as $D_n$. Their precise definitions and their relation to the levels of Parham's hierarchy are given in Definition~\ref{def:higher-magic-levels} and
Remark~\ref{rem:higher-parham-notation}. 

We recall from \cite{parham2026quantum} that \emph{clean ancillas}, as opposed to discardable ancillas, enforce the requirement that the $M_n-n$ ancillary qubits begin and end in the state $\ket{0}$.

\begin{corollary}[Magic hierarchy]
\label{cor:intro-magic}
The following statements hold for every fixed $L<\infty$.
\begin{enumerate}[label=\textup{(\roman*)}]
    \item
    For every fixed number $k\ge1$ of shallow blocks with total width $M_n=O(n)$, the
    conclusion of
    Theorem~\ref{thm:local-profile-complexity-principle}\textup{(i)} holds
    for both orientations whenever $D_n\le Ln/\log n$ for all sufficiently
    large $n$.  
    
    \item
    For forward magic with total width $M_n=O(n)$, the conclusion of
    Theorem~\ref{thm:local-profile-complexity-principle}\textup{(i)}
    holds whenever $D_n\le Ln/\log n$.
    With arbitrarily many discardable ancillas, the conclusion of
    part~\textup{(ii)}, with benchmark $\mathcal B_n=\Prod_n$,
    holds whenever $D_n\le c\log_2 n$ for any fixed
    $c\in(0,1/2)$.

    \item
    For reverse magic with total width $M_n=O(n)$, the conclusion of
    Theorem~\ref{thm:local-profile-complexity-principle}\textup{(i)} holds
    whenever $D_n\le Ln/\log n$ for all sufficiently large $n$.
    With an arbitrary number of clean ancillas, the same conclusion holds
    for every fixed depth $D$.
\end{enumerate}
\end{corollary}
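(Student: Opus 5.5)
The plan is to reduce each part of Corollary~\ref{cor:intro-magic} to Theorem~\ref{thm:local-profile-complexity-principle}. This means bounding the metric entropy of the weight-$p$ profile class of each magic class by $a_{n,p}+b_{n,p}\log(B_{n,p}/\varepsilon)$, with effective complexity $\Gamma_{n,p}=O(n^2)$ in the parameter regimes stated, or $o(n^2)$ for the product-benchmark claim.

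The key structural observation concerns the Clifford layers. A Clifford unitary on $M_n$ qubits is determined, up to phases, by a symplectic matrix over $\mathbb F_2$. There are at most $2^{O(M_n^2)}$ such Cliffords, so the Clifford blocks contribute only a discrete factor $a_{n,p}=O(M_n^2)=O(n^2)$ when $M_n=O(n)$. The shallow blocks are handled as in the proof of Corollary~\ref{cor:intro-shallow-depth}. Each depth-$D_n$ block on $M_n$ qubits has $O(D_nM_n)$ two-qubit gates. After fixing the gate architecture, which costs $O(D_nM_n\log M_n)$ bits, the gates are parametrized continuously. The parameter-to-profile map is Lipschitz with constant polynomial in $n$, since each gate perturbation changes the state by a bounded amount in trace norm, and by telescoping this changes each $f_\rho(P)$ by at most the sum. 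A volumetric $\varepsilon$-net then gives $b_{n,p}=O(kD_nM_n)$ and $\log B_{n,p}=O(\log n)$. Discarding ancillas is a partial trace and does not affect this argument.

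Summing the contributions, $\Gamma_{n,p}=O(n^2)+O(kD_nM_n\log n)$. For fixed $k$ and $D_n\le Ln/\log n$, this is $O(n^2)$, which gives parts (i), (ii) first clause, and (iii) first clause via Theorem~\ref{thm:local-profile-complexity-principle}(i). A subtlety is that the threshold $L$ in Theorem~\ref{thm:local-profile-complexity-principle}(i) now absorbs the constant from the Clifford count $\sim M_n^2\log 2$. This is why $p$ must be taken large depending on $L$ and on the width constant; the statement permits exactly this.

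For the forward-magic claim with unbounded ancillas, I would use lightcones after the Clifford layer. The output is $\Tr_{\mathrm{anc}}(QC\ket{0^{M_n}}\!\bra{0^{M_n}}C^\dagger Q^\dagger)$, and only the backward lightcone $S$ of the $n$ outputs under $Q$ matters, with $|S|\le n2^{D_n}\le n^{1+c}$. The reduced state of the stabilizer state $C\ket{0^{M_n}}$ on $S$ is a stabilizer mixed state on $|S|$ qubits. That reduced state is determined by a stabilizer group on $S$, giving entropy $O(|S|^2)=O(n^{2+2c})$. This exceeds $n^2$, so the naive count fails, and this is the main obstacle. What I would try first is to note that the $p$-profile only needs, for each weight-$p$ Pauli on the outputs, the Heisenberg-evolved operator $Q^\dagger PQ$. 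That operator is supported on at most $p2^{D_n}\le pn^c$ qubits, and its expectation in a stabilizer state depends only on the local stabilizer data there. Covering instead the collection of local reduced stabilizer states on all $2^{D_n}$-neighborhoods, with a purification to a stabilizer state on $O(n)$ effective qubits, should give $O(n^{1+2c}\mathrm{polylog})=o(n^2)$ for $c<1/2$. The benchmark-comparison conclusion with $\mathcal B_n=\Prod_n$ then follows from Theorem~\ref{thm:local-profile-complexity-principle}(ii), since $\Prod_n$ is always a conditioning benchmark.

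For reverse magic with clean ancillas and fixed depth $D$, the plan is to postselect on the ancillas being returned to $\ket0$. The clean-ancilla condition forces the $n$-qubit output to be $C'Q'\ket{0^n}$, where $Q'$ has bounded depth $D$ and $C'$ is an effective $n$-qubit Clifford. This uses the rigidity arguments of \cite{parham2026quantum}: conjugating the stabilizers $Z_i$ of the ancillas through $C$ yields Paulis that must fix the output. The resulting class has complexity $O(n^2)+O(Dn\log n)$, and Theorem~\ref{thm:local-profile-complexity-principle}(i) applies.
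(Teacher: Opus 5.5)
Your linear-width argument, covering part~(i) and the first clauses of (ii) and (iii), is correct and is essentially the paper's (Lemma~\ref{lem:higher-magic-entropy}). The beyond-linear-width claims are not established, and the first gap is in forward magic with unboundedly many discardable ancillas.

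You correctly reduce to a depth-$D_n$ circuit acting on a mixed stabilizer input $\tau$ on $\widehat M_n\le n2^{D_n}$ qubits, and you correctly note that counting $\tau$ directly costs $O(\widehat M_n^2)$. The proposed remedy does not work as stated. Covering the reduced states of $\tau$ on the lightcones of the $\binom np$ output supports one at a time naively costs $\binom np\cdot O(p^24^{D_n})\gg n^2$. Any purification of $\tau$ contains its own $\widehat M_n$ qubits (up to $n2^{D_n}$), so counting purifications just reproduces the $O(\widehat M_n^2)$ bound.

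What is missing is a count that uses the \emph{global} constraint that all the local data come from one stabilizer group of dimension at most $\widehat M_n$. The paper supplies this in Lemma~\ref{lem:fixed-depth-mixed-stabilizer-growth}. The classes $\{P:\Tr(P\tau)=\pm1\}$ have VC dimension at most $\widehat M_n$, because shattered Paulis must commute and have independent labels, hence span an isotropic subspace. Sauer's lemma is then applied to the at most $q4^{p2^{D_n}}$ Paulis in the Heisenberg expansions $Q^\dagger(P\otimes I)Q$ of $q$ output Paulis. Combined with Lemma~\ref{lem:growth-entropy}, this gives fixed-circuit entropy $O(\widehat M_n[p2^{D_n}+\log(3^{p/2}/\varepsilon)])=O_p(n4^{D_n})$ plus lower-order terms. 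That is $o(n^2)$ exactly when $4^{D_n}=o(n)$.

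A direct alternative also works: encode a signed basis of the subgroup generated by stabilizers of $\tau$ supported inside a single lightcone. This needs at most $\widehat M_n$ generators at $O(p2^{D_n}+p\log n)$ bits each. Your target exponent $1+2c$ is right, but without such a lemma this step is unproved.

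The second gap is in reverse magic with clean ancillas. You assert that the output equals $C'Q'|0^n\rangle$ with $Q'$ of depth $D$ on $n$ qubits. This is unproven and much stronger than what the paper proves. Conjugating the ancilla stabilizers $Z_{n+j}$ through $C$ gives $r$ independent Pauli stabilizers of the intermediate state $\eta=Q|0^{n+r}\rangle$, not of the output. That yields only $\nu_{\mathrm{st}}(\eta)=\nu_{\mathrm{st}}(\psi)\le n$.

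Since $C$ is global, the $t\le n$ logical qubits need not sit on any small physical set, so bounded nullity alone does not shrink the shallow block. Parham's rigidity arguments concern forward magic, and reverse magic is his Open Problem~4, so that citation does not fill the gap.

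Proposition~\ref{prop:revmag-clean-decomposition} fills it with a localization argument, in four steps:
\begin{enumerate}
\item $\eta$ is the unique common $+1$ eigenvector of the operators $QZ_iQ^\dagger$, each supported on at most $2^D$ qubits.
\item Nondegeneracy of the symplectic form on $S^\perp/S$ then forces the logical Pauli classes to be spanned by classes of Pauli terms in these expansions. Hence they have representatives on a set $E$ with $|E|\le2^{D+1}t$.
\item A Clifford on $E$ recovers the logical state.
\item Pruning the lightcone of $E$ gives a depth-$D$ circuit on $m\le2^{2D+1}t$ qubits, followed by a Clifford and a partial trace.
\end{enumerate}
This is not a depth-$D$ circuit on $n$ qubits. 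The resulting entropy is $C_Dn^2$ with $C_D$ growing in $D$, which is why $p_0$ depends on $D$. Your bound $O(n^2)+O(Dn\log n)$ rests on the unjustified exact-width reduction.
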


In forward magic, a local
observable can be pulled backward through the final shallow circuit while
remaining supported on a controlled lightcone.  In contrast, for reverse magic the final
unrestricted Clifford can map a weight-$p$ Pauli observable to a Pauli string with large support, eliminating this locality.  Our
argument does not require the pulled-back observable to remain local. The corresponding weight-$p$ profile class can still have controlled metric entropy, and this is enough to obtain the ground-state obstruction.  The same
phenomenon occurs at higher levels of the hierarchy, where repeated
alternations of shallow and Clifford blocks can produce globally supported
observables while the profile complexity remains controlled.

Parham's results concern first-level forward magic and give robust
state-preparation lower bounds for certain explicit target states, including
ground states of topologically ordered Hamiltonians and nonstabilizer quantum
codes \cite{parham2026quantum}.  Their general state-preparation framework
allows polynomially many clean ancillas.  In particular, for an
$\ell$-local code of distance $\theta$ whose codespace dimension is not a power of
two, their code-based results give a lower bound of
$\frac12\log_2(\theta/\ell)$ on the depth of the shallow circuit at an appropriate
approximation accuracy, and these bounds extend to an arbitrary number of
clean ancillas.

Our results address a different question. Rather than ruling out
approximation of a particular target state or codespace, we show that entire
circuit classes remain separated from the ground-state energy of the quantum
$p$-spin model.  The two types of results are therefore not directly
comparable in strength.  At the same time, our framework applies to portions
of the magic hierarchy that are not accessible to the forward-magic
lightcone argument. For near-ground-state preparation with linear total circuit width,
the preceding corollary addresses \cite[Open Problem~4]{parham2026quantum}, which asks for lower bounds against
first-level reverse magic or higher levels of the hierarchy.

\subsubsection{Stabilizer states and Clifford$+T$ circuits}
\label{subsec:intro-stabilizer-t-count}

Our final application demonstrates how additional structure in the class of interest can be used to choose a sharper benchmark. Unrestricted Clifford circuits prepare the pure stabilizer states $\Stab_n$ from $\ket{0^n}$. Setting the shallow
circuit to the identity in Corollary~\ref{cor:intro-magic} shows
that stabilizer states cannot outperform product states at leading
order. This benchmark, however, allows arbitrary Bloch-sphere
directions and overlooks the discrete local structure of stabilizer
states.

We sharpen the comparison using the \emph{product stabilizer states},
defined by $\PStab_n:=\Stab_n\cap\Prod_n$. We show that the long-range
entanglement generated by unrestricted Clifford circuits gives no
leading-order energy advantage over this smaller class. Moreover, the same
comparison holds for Clifford+$T$ circuits with a sublinear number of $T$-gates, even with arbitrary discardable ancillas.

Specifically, for $M_n\ge n$ and $s\le t$, the $n$-qubit outputs of these circuits take the
following form:
\begin{align*}
    \rho
    &=
    \operatorname{Tr}_{A}\left(
        U\ket{0^{M_n}}\bra{0^{M_n}}U^\dagger
    \right),
    \qquad
    U
    =C_sT_{i_s}C_{s-1}\cdots C_1T_{i_1}C_0,
\end{align*}
where each $C_j$ is an arbitrary Clifford circuit on the $M_n$ qubits,
each $T_{i_j}$ applies
$T=\operatorname{diag}(1,e^{i\pi/4})$ to a single qubit, and $A$ is a set
of $M_n-n$ discarded output qubits.  We denote the resulting class of
$n$-qubit states by $\CliffT^{\mathrm{anc}}_{n,t}$.  Thus the only
restricted resource is the total number of $T$-gates, since the Clifford
circuits, their depth and geometry, and the total workspace $M_n$ are all
unrestricted, and $T$-gates acting on subsequently discarded qubits are
still included in the count.

\begin{corollary}[Clifford+$T$ circuits]
\label{cor:intro-t-count}
Let $(t_n)_{n\ge1}$ be a sequence of nonnegative integers.
\begin{enumerate}[label=\textup{(\roman*)}]
    \item
    For every $0\le\kappa<\infty$, the conclusion of
    Theorem~\ref{thm:local-profile-complexity-principle}\textup{(i)}
    holds for $\mathcal C_n=\CliffT^{\mathrm{anc}}_{n,t_n}$ whenever
    $t_n\le\kappa n$ for all sufficiently large $n$.

    \item
    If $t_n=o(n)$, then, for every fixed $p\ge2$,
    \[
        \frac{
            \mathcal{E}_{n,p}(\CliffT^{\mathrm{anc}}_{n,t_n})
            -
            \mathcal{E}_{n,p}(\PStab_n)
        }{\sqrt n}
        \longrightarrow0
    \]
    in $L^1$ and in probability.  In this regime, for every
    $\varepsilon>0$, there exists an integer $p_\varepsilon$ such that for all fixed $p\ge p_\varepsilon$,
    \[
        \P\left(
            \left|
                \frac{\mathcal{E}_{n,p}(\CliffT^{\mathrm{anc}}_{n,t_n})}{\sqrt n}
                -
                \sqrt{2\log6}
            \right|
            \le\varepsilon
        \right)
        \longrightarrow1.
    \]
\end{enumerate}
\end{corollary}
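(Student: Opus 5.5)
The statement to prove is Corollary~\ref{cor:intro-t-count}. Part~(i) and the first claim of part~(ii) will follow from Theorem~\ref{thm:local-profile-complexity-principle} once two things are established: a metric-entropy bound for the weight-$p$ profiles of $\CliffT^{\mathrm{anc}}_{n,t}$, and the fact that $\PStab_n$ is a conditioning benchmark for this class when $t=o(n)$.

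\textbf{Entropy bound.} I would gadgetize each $T$-gate. Replace it by a Clifford interaction with a fresh ancilla in the magic state $\ket{T}=T\ket{+}$, followed by a measurement and a Clifford correction. Postselecting on the trivial outcome, or equivalently using the standard identity that $\ket{T}$ injection equals a Clifford acting on $\ket{T}^{\otimes t}\otimes\ket{0^{M}}$ followed by a projection, this writes every state in the class as
\[
\rho\propto \Tr_{A'}\bigl(\Pi\, C(\ket{T}\bra{T}^{\otimes t}\otimes\ket{0}\bra{0}^{\otimes M})C^\dagger\,\Pi\bigr),
\]
with $C$ Clifford and $\Pi$ a stabilizer projector. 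A cleaner formulation avoids postselection. Stabilizer states are closed under Clifford maps, partial traces and Pauli measurements, so $\rho=\Tr_A(V\ketbra{\phi})$ with $V$ Clifford and $\phi=\ket{T}^{\otimes t}\otimes\ket{0^{M}}$. Each profile entry is then $\Tr(P\rho)=\Tr(V^\dagger (P\otimes I)V\,\phi)$, the expectation of a Pauli string $Q$ in $\phi$. This expectation factorizes: it vanishes unless $Q$ restricted to the non-magic register is $Z$-type, and otherwise it equals $\pm\prod_{j\in S}\langle T|Q_j|T\rangle\in\{0,\pm 2^{-|S|/2},\dots\}$ on the magic support $S\subseteq[t]$. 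Hence $f_\rho$ is determined by the $Z$-stabilizer data pulled back through $V$.

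To count profiles I would use stabilizer rank. $\ket{T}^{\otimes t}$ is a sum of $2^{t}$ stabilizer states, but that count is too crude. Instead, the reduced state on the $n$ outputs is described by a stabilizer tableau over $n+t$ effective qubits: the Clifford needs to be specified only on the span of $n$ outputs and $t$ magic qubits, since the remaining $\ket{0}$ qubits are absorbed into the tableau. The number of such classes is $2^{O((n+t)^2)}$, so the covering is exact, giving $\log N\le O((n+t)^2)$ with no $\varepsilon$ dependence. For $t\le\kappa n$ this gives $\Gamma_{n,p}=O(n^2)$, and part~(i) follows directly from Theorem~\ref{thm:local-profile-complexity-principle}(i).

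For part~(ii) this bound is only $O(n^2)$, not $o(n^2)$, so the full-class entropy cannot be used as is. Instead I would exploit that the stabilizer part contributes only a \emph{finite} set of profiles, each determined by $O(n^2)$ bits, and argue via a modified route. Here Theorem~\ref{thm:local-profile-complexity-principle}(ii) is not directly applicable; the plan is to reprove its conditioning step. \textbf{Main obstacle:} showing that the energy advantage of a stabilizer tableau class with $2^{\Theta(n^2)}$ members is still $o(\sqrt n)$. I would attempt this by showing that stabilizer profiles conditioned on $O(1)$ random qubits (the de Finetti step of Section~\ref{subsec:revmag-definetti}) collapse to mixtures of product stabilizer states. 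Conditioning a stabilizer state on Pauli measurement outcomes keeps it stabilizer, and a stabilizer state whose two-body correlations are small on most pairs must be close on most sites to a product of $\pm$ Pauli eigenstates, because stabilizer marginals have Pauli expectations in $\{0,\pm1\}$. The $t=o(n)$ magic qubits perturb only $o(n)$ sites' worth of profile mass, contributing $O(t/n)\to0$ to the left side of \eqref{eq:conditioning-benchmark-asymptotic}. The approximation error in the comparison then goes through as in the proof of part~(ii), with the Gaussian supremum taken only over the conditioned, low-complexity remainder.

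\textbf{Final limit.} The last display follows by combining this comparison with an evaluation of $\mathcal E_{n,p}(\PStab_n)$. $\PStab_n$ has $6^n$ elements, so Gaussian maxima give the upper bound $\sqrt{2n\log 6}(1+o(1))$. Each energy has variance $\binom np^{-1}\#\{P:f(P)\neq0\}\cdot 1$, which equals $1\cdot 3^{-p}\cdot3^p$-type normalization, namely $1$ per state. The matching lower bound, as $p\to\infty$, follows from a second-moment argument, because overlaps $\binom np^{-1}\sum f_\rho f_\sigma\approx q^p$ decay for large $p$, exactly as in the $\sqrt{2\log p}$-type REM analysis of \cite{anschuetz2025bounds}.
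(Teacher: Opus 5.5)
Your part (ii) has a genuine gap. You conclude that the profile entropy of $\CliffT^{\mathrm{anc}}_{n,t_n}$ is $\Theta(n^2)$, abandon Theorem~\ref{thm:local-profile-complexity-principle}\textup{(ii)}, and your replacement never bounds the residual Gaussian supremum. The residuals $\mathsf M_J(f_\rho-f_\sigma)$ are indexed by the same class, and their covering numbers are controlled only through those of $\mathcal F_{n,p}(\mathcal C_n)$. With $\log N\asymp n^2$ at radius $r_n\asymp n^{-1/2}$, Dudley's integral is of order $r_n n\asymp\sqrt n$, not $o(\sqrt n)$. Taking the supremum ``only over the conditioned, low-complexity remainder'' does not change this.

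The missing idea is that profiles are far fewer than tableaux. Lemma~\ref{lem:t-count-fixed-projection} writes every output as $\rho=2^t(I\otimes\langle a|^{\otimes t})\tau(I\otimes|a\rangle^{\otimes t})$ with $\tau\in\mathsf{mixstab}_{n+t}$. Each $f_\rho(Q)$ is then a fixed linear combination of at most $3^t$ values $f_\tau(P)\in\{0,\pm1\}$. The sets $\{P:f_\tau(P)=\pm1\}$ have VC dimension at most $n+t$, since a shattered family consists of commuting Paulis with independent labels. Sauer's lemma with Lemma~\ref{lem:growth-entropy} gives $\log N\le C(n+t)[t+\log(C3^{p/2}/\varepsilon)]=o(n^2)$ for $t_n=o(n)$, so Theorem~\ref{thm:local-profile-complexity-principle}\textup{(ii)} applies directly.

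Your benchmark step also fails as stated. Definition~\ref{def:conditioning-benchmark} requires squared error $O_{p,\alpha}(n^{-1})$, and an $O(t_n/n)$ perturbation is not that when $t_n\to\infty$. The de Finetti step must also measure $\lfloor\alpha n\rfloor$ qubits, not $O(1)$. The paper uses the intermediate benchmark $\mathcal B_{n,t_n}$ (product states with at most $t_n$ non-stabilizer factors). This is an exact conditioning benchmark because Pauli-POVM conditioning does not increase stabilizer nullity, and at most $\nu_{\mathrm{st}}$ one-qubit marginals lie outside $\operatorname{conv}(\PStab_1)$ (Lemma~\ref{lem:nullity-preserving-conditioning}). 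The exceptional factors are then paid for separately: resetting them to $\ket0$ and applying Lemma~\ref{lem:revmag-hit-bound} gives $\E[\mathcal E_{n,p}(\mathcal B_{n,t})-\mathcal E_{n,p}(\PStab_n)]\le C3^{p/2}\sqrt{pt}$.

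Your entropy step for part (i) also contains a false claim. The postselection-free form $\rho=\Tr_A(V\phi V^\dagger)$, with $V$ Clifford and $\phi=\ket{T}^{\otimes t}\otimes\ket{0^M}$, does not hold, because the classically controlled $S$-corrections of gate injection are not Clifford when made coherent. To see this, suppose $\Tr(b_i\rho)\ne0$ for all $b\in\{X,Y,Z\}$. The pulled-back Paulis $Q_b\propto V^\dagger b_iV$ must then have factors in $\{I,X,Y\}$ on the magic register and in $\{I,Z\}$ on the $\ket0$ register. Since $Q_XQ_YQ_Z\propto I$, each site's factors must be $(I,I,I)$ or a permutation of $(c,c,I)$. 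Hence the $Q_b$ commute, contradicting $X_iY_i=-Y_iX_i$. Yet $THT\ket{+}$ has Bloch vector $(1/2,-1/2,1/\sqrt2)$, so it has no such representation.

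Part (i) survives if you count through the postselected representation instead. There $\rho$ is determined by $\tau\in\mathsf{mixstab}_{n+t}$, of which there are $2^{O((n+t)^2)}$. This gives $\Gamma_{n,p}\le C(1+\kappa)^2n^2$ with a constant independent of $p$, a cruder but valid substitute for the paper's bound.

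For the large-$p$ value, your upper bound is the paper's. For the lower bound, the paper uses Sudakov--Fernique on a code separated in $\min\{d_H(u,v),d_H(u,-v)\}$. A second-moment version must likewise handle the fact that $u$ and $-u$ have equal energies for even $p$. You also need Gaussian concentration to pass from expectations to the in-probability statement.
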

Thus, ground-state separation persists with a linear number of
$T$-gates for sufficiently large fixed $p$, while a sublinear
number gives no leading-order energy advantage over product
stabilizer states. Both statements allow arbitrary discardable
ancillas, with the number of $T$-gates measured relative to the
$n$ retained outputs.

The energy comparison in part~\textup{(ii)} follows from
Theorem~\ref{thm:local-profile-complexity-principle}\textup{(ii)}.
When $t_n=0$, we apply it directly with conditioning benchmark
$\PStab_n$. For general $t_n=o(n)$, we use a larger benchmark
consisting of product-stabilizer states with at most $t_n$ non-stabilizer
factors. Lemma~\ref{lem:pstab-exceptional-site-cost} shows that
this benchmark has only an $o(\sqrt n)$ expected energy advantage
over $\PStab_n$, yielding the stated comparison.

The constant $\sqrt{2\log6}$, which we derive in
Proposition~\ref{prop:pstab-value}, is the sharp large-$p$
normalized energy of product stabilizer states. Its form reflects
an equivalent description of $\PStab_n$: each state is a tensor
product whose factors are chosen from the six one-qubit
eigenstates of $X$, $Y$, and $Z$.

Clifford circuits with few non-Clifford gates have been studied
in connection with classical simulation, quantum-state learning,
and lower bounds on magic resources
\cite{bravyi2016improved,leone2024learning,beverland2020lower}.

For explicitly constructed local Hamiltonians,
\cite{coble2023local} established energy obstructions for
stabilizer states, while \cite{coble2023hamiltonians} showed
that sufficiently low-energy states can require $\Omega(n)$
$T$-gates. Their constructions include Hamiltonians obtained
by conjugating the NLTS Hamiltonians of \cite{anshu2022nlts}
by a tensor product of identical single-qubit rotations.
Our results complement these constructions by identifying a
sharp leading-order comparison with product stabilizer states
for the dense random quantum $p$-spin model, allowing a sublinear
number of $T$-gates and arbitrary discardable ancillas.

Finally, the restriction on the total number of $T$-gates controls a
different resource from the magic hierarchy of
\cite{parham2026quantum}. At each fixed level, the hierarchy
allows only a fixed number of shallow arbitrary-gate blocks,
each of which may contain an extensive number of non-Clifford
gates. Here, the total number of $T$-gates is restricted, but
the number of alternations with unrestricted Clifford circuits
may grow with $n$. Parham's approximation lower bounds concern
explicit target states and codes with clean ancillas, whereas
our results concern the energy optimum of a random Hamiltonian
and allow arbitrary discardable ancillas.

\subsection{Overview and key ideas}

We first explain how a local approximation leads to an energy
comparison, then describe the conditioning construction and how
it yields sharper benchmarks. We conclude by outlining how the profile-entropy bounds
are obtained for the different state classes.

\subsubsection{Energy decomposition}
Fix $\rho\in\mathcal C_n$, a set $S_\rho\subseteq[n]$ of
qubits to be excluded, and write $J_\rho=S_\rho^c$.
We decompose $H_{n,p}=H_{J_\rho}+H_{\mathrm{hit}(S_\rho)}$, where $H_{J_\rho}$ contains the Hamiltonian terms whose
$p$-qubit supports lie entirely in $J_\rho$, and
$H_{\mathrm{hit}(S_\rho)}$ contains those whose supports
intersect $S_\rho$.
For any separable state
$\sigma_\rho\in\operatorname{conv}(\Prod_n)$, meaning a convex
mixture of product states, we decompose
\[
\begin{aligned}
    \Tr(H_{n,p}\rho)
    &=
    \underbrace{\Tr(H_{n,p}\sigma_\rho)}_{\text{product benchmark}}
    +
    \underbrace{\Tr\!\left(H_{J_\rho}(\rho-\sigma_\rho)\right)}
               _{\text{approximation error}}
    +
    \underbrace{\Tr\!\left(
        H_{\mathrm{hit}(S_\rho)}(\rho-\sigma_\rho)
    \right)}
               _{\text{excluded interactions}}.
\end{aligned}
\]
By linearity, the energy of $\sigma_\rho$ is an average
of the energies of its constituent product states, and so
the first term is at most $\mathcal{E}_{n,p}(\Prod_n)$.
The last term is at most
$2\|H_{\mathrm{hit}(S_\rho)}\|_{\mathrm{op}}$.
Consequently,
\begin{equation}
\label{eq:proof-idea-energy-comparison}
    \Tr(H_{n,p}\rho)
    \le
    \mathcal{E}_{n,p}(\Prod_n)
    +\Tr\!\left(H_{J_\rho}(\rho-\sigma_\rho)\right)
    +2\|H_{\mathrm{hit}(S_\rho)}\|_{\mathrm{op}}.
\end{equation}
The choice of $S_\rho$ involves a tradeoff. Controlling the
second term requires choosing $\sigma_\rho$ whose weight-$p$
Pauli expectations on $J_\rho$ closely approximate those of
$\rho$. Obtaining a better approximation may require excluding
more qubits, whereas controlling the third term requires keeping
$|S_\rho|$ small, since it accounts for interactions meeting
the excluded set.

The nonsymmetric quantum de Finetti argument of
Berta et al.~\cite[Theorem~2.5]{berta2022semidefinite}
provides the starting point for this approximation.
We adapt their proof to impose an upper bound on the size of $S_\rho$.
Fix $0<\alpha<1/2$ and let $m=\lfloor\alpha n\rfloor$.
For all sufficiently large $n$ and every $\rho\in\mathcal C_n$,
we show that one can choose a set $S_\rho$ with
$|S_\rho|\le m$ and a separable state $\sigma_\rho$ such that
\begin{equation}
\label{eq:proof-idea-profile-approximation}
    \|\mathsf M_{J_\rho}(f_\rho-f_{\sigma_\rho})\|_{n,p}
    \le \frac{A_{p,\alpha}}{\sqrt n},
\end{equation}
where $A_{p,\alpha}$ depends only on $p$ and $\alpha$.
Here $\mathsf M_J$ is the coordinate projection defined by
$(\mathsf M_J f)(P):=\mathbf 1_{\{\operatorname{supp}(P)\subseteq J\}}f(P)$.

\subsubsection{Uniform control by profile entropy}
We now explain how the approximation-error and
excluded-interaction terms in
\eqref{eq:proof-idea-energy-comparison} are controlled
uniformly over $\rho\in\mathcal C_n$.

For each $\rho$, choose $J_\rho$ and $\sigma_\rho$ satisfying
\eqref{eq:proof-idea-profile-approximation}, independently
of the Hamiltonian disorder, and define the residual by
$h_\rho:=\mathsf M_{J_\rho}(f_\rho-f_{\sigma_\rho})$.
The class of residual profiles, with zero adjoined,
\[
    \mathcal R_n:=\{h_\rho:\rho\in\mathcal C_n\}\cup\{0\},
\]
lies in the $\|\cdot\|_{n,p}$-ball of radius $r_n:=A_{p,\alpha}/\sqrt n$. Moreover, the residual energies
$X_{h_\rho}:=\Tr(H_{J_\rho}(\rho-\sigma_\rho))$,
together with $X_0:=0$, form a centered Gaussian process
indexed by $\mathcal R_n$ with the same canonical metric
as the original energy process. Dudley's entropy inequality
\cite[Theorem~8.1.3]{vershynin2018high} therefore gives
\[
    \E\sup_{h\in\mathcal R_n}X_h
    \le
    C\int_0^{2r_n}
        \sqrt{
            \log N\!\left(
                \varepsilon,\mathcal R_n,\|\cdot\|_{n,p}
            \right)
        }
    \,d\varepsilon.
\]
To bound the metric entropy term, we group the
residuals by their unmeasured set:
$\mathcal R_{n,J}:=\{h_\rho:\rho\in\mathcal C_n,\ J_\rho=J\}\cup\{0\}$.
Since $\mathcal R_n=\bigcup_{|S|\le m}\mathcal R_{n,S^c}$,
covering each group separately gives
\[
    \log N\!\left(\varepsilon,\mathcal R_n,\|\cdot\|_{n,p}\right)
    \le n\log2+
    \max_{|S|\le m}
    \log N\!\left(
        \varepsilon,\mathcal R_{n,S^c},\|\cdot\|_{n,p}
    \right).
\]
Within each fixed-$J$ group, we choose the approximating
profile by metric projection onto the closed convex hull
of the benchmark profiles restricted to $J$.
This retains the approximation bound
\eqref{eq:proof-idea-profile-approximation}, and the
corresponding residual map does not increase distances:
$\|h_\rho-h_{\rho'}\|_{n,p}
\le \|f_\rho-f_{\rho'}\|_{n,p}$. A cover of the original profile class therefore yields a cover
of each residual group at the same radius, with one additional
ball centered at zero. The union over possible
excluded sets consequently adds only $O(n)$ to the metric
entropy; see Lemma~\ref{lem:profile-entropy-comparison}. It is important to note that the crucial gain relative to applying Dudley directly to the original class is the integration range $2r_n=O_{p,\alpha}(n^{-1/2})$. For fixed $p$, the original profile class typically has diameter of order one, so applying
Dudley's inequality directly can give bounds well above the $\sqrt n$ energy scale. Working with the residual class reduces this diameter while preserving control of the metric entropy.

Substituting the entropy assumption of
Theorem~\ref{thm:local-profile-complexity-principle} into
Dudley's integral then gives,
for fixed $p,\alpha$ and sufficiently large $n$,
\[
    \E\sup_{h\in\mathcal R_n}X_h
    \le
    C A_{p,\alpha}
    \sqrt{\frac{n+\Gamma_{n,p}}{n}}.
\]

The excluded-interaction term is controlled separately in Lemma~\ref{lem:revmag-hit-bound}, which utilizes a Gaussian
matrix-series estimate \cite[Theorem~4.1]{tropp2012user}, giving
\[
    \E\max_{|S|\le m}
        \|H_{\mathrm{hit}(S)}\|_{\mathrm{op}}
    \le C3^{p/2}\sqrt{p\alpha n}.
\]
Taking the maximum over $S$ covers every admissible
excluded set, including the different sets $S_\rho$
selected for different input states.

Combining these two estimates with
\eqref{eq:proof-idea-energy-comparison} gives
\begin{equation}
\label{eq:proof-idea-uniform-comparison}
    \frac{
        \E\Delta^{\mathrm{ref}}_{n,p}
        (\mathcal C_n;\Prod_n)
    }{\sqrt n}
    \le
    C A_{p,\alpha}
    \frac{\sqrt{n+\Gamma_{n,p}}}{n}
    +
    C3^{p/2}\sqrt{p\alpha}.
\end{equation}
If $\Gamma_{n,p}=o(n^2)$, the first term vanishes as
$n\to\infty$ for every fixed $\alpha$.
Letting $\alpha\downarrow0$ afterward eliminates the
second term, proving the comparison with product states
at the leading $\sqrt n$ energy scale, in $L^1$ and
in probability, as asserted in
Theorem~\ref{thm:local-profile-complexity-principle}\textup{(ii)}.

At quadratic effective profile complexity, the first
term need not vanish. For part~\textup{(i)} of
Theorem~\ref{thm:local-profile-complexity-principle},
we use the explicit dependence of $A_{p,\alpha}$ on
$p$ and $\alpha$ to balance the two terms in
\eqref{eq:proof-idea-uniform-comparison}; see
Corollary~\ref{cor:quadratic-profile-complexity}.  Comparing this upper bound with the ground-state energy lower
bound of \cite{anschuetz2025strongly}, recalled in Lemma~\ref{lem:revmag-ground-benchmark}, shows that
it remains below the unrestricted optimum for sufficiently
large fixed $p$.

\subsubsection{Constructing the approximating state}
The argument so far hinges on the construction of a separable approximation $\sigma_\rho$ in
\eqref{eq:proof-idea-profile-approximation}. We now explain how this can be done. For a given $\rho$, we first apply a fixed one-qubit
positive operator-valued measure (POVM) separately to the
qubits in $S_\rho$. The POVM is chosen to retain enough
information about the state after measurement to control
the approximation error, as quantified by
\eqref{eq:measurement-trace-comparison}.
Write $z$ for the joint measurement outcome,
$q_\rho(z)$ for its probability, and $\rho_{J_\rho\mid z}$
for the normalized conditional density matrix on $J_\rho$.
The measurement and conditional states are defined in
Section~\ref{subsubsec:conditioning-measurements}.

Averaging the conditional density matrices recovers the
original marginal,
\[
    \rho_{J_\rho}
    =\sum_z q_\rho(z)\rho_{J_\rho\mid z}.
\]
For each outcome of positive probability, let $\rho_{i\mid z}
    :=\Tr_{J_\rho\setminus\{i\}}(\rho_{J_\rho\mid z})$ for $i\in J_\rho$, be the one-qubit marginals of $\rho_{J_\rho\mid z}$.
We construct an approximation to $\rho_{J_\rho}$ by
replacing each $\rho_{J_\rho\mid z}$ in the preceding
average with the product of these marginals:
\[
    \eta_{J_\rho}
    :=
    \sum_z q_\rho(z)\bigotimes_{i\in J_\rho}\rho_{i\mid z}.
\]
In particular,
$\eta_{J_\rho}\in\operatorname{conv}(\Prod_{J_\rho})$.
Extending $\eta_{J_\rho}$ with $|0\rangle\langle0|$
on each qubit in $S_\rho$ gives an $n$-qubit separable
state $\sigma_\rho$. The de Finetti argument in
Lemma~\ref{lem:measurement-parametrized-definetti}
guarantees a choice of $S_\rho$, with $|S_\rho|\le m$,
for which this construction satisfies the canonical
approximation bound
\eqref{eq:proof-idea-profile-approximation};
see Corollary~\ref{cor:conditioning-canonical}.
This applies to arbitrary input states $\rho$.
For the entropy argument, we may then replace this
approximation by the choice described above, whose residual
map is distance-nonincreasing for fixed $J$, without
increasing the approximation error.

\subsubsection{Sharper benchmarks}
We can sharpen the benchmark by inspecting the conditional
one-qubit marginals $\rho_{i\mid z}$ and choosing the POVM
to preserve additional structure of the input class.
If each product $\bigotimes_{i\in J_\rho}\rho_{i\mid z}$
belongs to the convex hull of a smaller class
$\mathcal B_{J_\rho}$, then so does $\eta_{J_\rho}$,
with the same approximation error.

For example, for stabilizer inputs, we can choose a POVM
satisfying \eqref{eq:measurement-trace-comparison} for which
conditioning preserves stabilizer structure.
The conditional one-qubit marginals are then mixtures of
one-qubit stabilizer states, yielding $\PStab_n$ as the
benchmark. For Clifford+$T$ inputs, a similar analysis
gives a benchmark between $\PStab_n$ and $\Prod_n$,
depending on the $T$-count.

Whenever the resulting approximation satisfies
$\sigma_\rho\in\operatorname{conv}(\mathcal B_n)$, the same
energy comparison applies with $\mathcal{E}_{n,p}(\mathcal B_n)$
in place of $\mathcal{E}_{n,p}(\Prod_n)$.
The distance-nonincreasing choice of separable approximation can always be made within
$\overline{\operatorname{conv}}(\mathcal B_n)$.
Taking this closure does not change the benchmark energy;
see Lemma~\ref{lem:profile-entropy-comparison}.
Definition~\ref{def:conditioning-benchmark} formalizes
the required approximation property, and
Section~\ref{sec:sharper-conditioning-benchmarks}
develops these examples.

\subsubsection{Bounding the profile entropy}
Once the benchmark is chosen, the only remaining class-specific
input is a bound on the metric entropy of
$\mathcal F_{n,p}(\mathcal C_n)$.
The profile class can admit a much smaller cover than the
underlying state class. We now outline the covering arguments
for a few examples to illustrate how these entropy bounds
are obtained.

Consider first the pure stabilizer states. Write
$m:=|\mathcal P_{n,p}|=3^p\binom np$.
Although a general profile lies in $[-1,1]^m$, a stabilizer
state $\rho$ satisfies $f_\rho(P)\in\{-1,0,1\}$ for any $P\in\mathcal P_{n,p}$. Counting these vectors coordinate by coordinate gives at most
$3^m$ possibilities. The coordinates, however, satisfy strong
algebraic constraints. Specifically, whenever $f_\rho(P)=s\in\{\pm1\}$, the signed Pauli $sP$
stabilizes the state, meaning $(sP)|\psi\rangle=|\psi\rangle$.
Consequently, the subgroup generated by all such signed Paulis
is contained in the stabilizer group of $|\psi\rangle$ and
therefore admits at most $n$ independent generators, chosen
from these signed weight-$p$ Paulis. These generators determine
the entire weight-$p$ profile through membership of $P$ or $-P$
in the subgroup. Each generator has at most $2m$ possible
choices. Padding shorter lists with identities therefore gives at most
$(1+2m)^n$ profiles. Thus, in Theorem~\ref{thm:local-profile-complexity-principle}, we may take $a_{n,p}=n\log(1+2m)=O_p(n\log n)$, $b_{n,p}=0$ and $B_{n,p}=1$. This improves the naive entropy bound $m\log3$ and places stabilizer states in the subquadratic profile-complexity regime.

Even without exploiting additional structure of the profile
class, we can bound its metric entropy using a sufficiently
fine cover of the circuits preparing the underlying states. For example, circuits with $G_n$ arbitrary one- and two-qubit
gates on $M_n$ qubits have at most $M_n^{2G_n}$ possible sequences
of gate locations. Counting these discrete choices and
discretizing the $2\times2$ and $4\times4$ unitary gate matrices
gives an entropy bound of the form in
Theorem~\ref{thm:local-profile-complexity-principle}, with
$a_{n,p}=O(G_n\log M_n)$ and $b_{n,p}=O(G_n)$.
The scale $B_{n,p}$ accounts for the accumulation of gate
approximation errors and their effect on the canonical
profile norm; see Section~\ref{sec:gate-count}. With arbitrarily many discardable ancillas, we first prune
the circuit to the backward lightcones of the retained outputs.
The remaining circuit uses at most $n+G_n$ qubits and produces
the same output state. Applying the circuit-covering argument
to this representation gives a profile-entropy bound independent
of the original number of ancillas.

For reverse magic, a cover obtained from circuit representations
already suffices at controlled width. Although the final Clifford
block can spread local observables across the entire system,
there are only $\exp(O(M_n^2))$ Clifford unitaries on $M_n$
qubits up to global phase. Counting this block contributes
$O(M_n^2)$ to $a_{n,p}$, while the shallow block is covered
by counting gate locations and discretizing its gates as above.
At linear width and fixed shallow depth, this gives quadratic
profile complexity; see Section~\ref{sec:higher-magic-levels}.
With arbitrarily many clean ancillas, we first use stabilizer
constraints to obtain a representation on $O_D(n)$ qubits,
then apply the same counting argument; see
Section~\ref{sec:revmag-clean}.

\subsection{Organization of the paper}
Section~\ref{sec:main-proof} develops the conditioning and
energy-comparison arguments and proves the local-profile
complexity principle.
Section~\ref{sec:structured-classes} applies this framework
to circuits, block-product states, and tensor network states,
while Section~\ref{sec:proofs-magic} treats the magic hierarchy
and its extensions with ancillas.
Section~\ref{sec:sharper-conditioning-benchmarks} develops
the sharper benchmarks for stabilizer and Clifford+$T$ states
and determines the large-$p$ product-stabilizer energy.
Appendices~\ref{app:profile-entropy-growth}
and~\ref{sec:pauli-preliminaries} collect supporting entropy
bounds and Pauli and stabilizer preliminaries.

\section{The local-profile complexity principle}
\label{sec:main-proof}
We now prove Theorem~\ref{thm:local-profile-complexity-principle}.  The
argument is organized around three inputs: a truncated
de Finetti approximation obtained by conditioning, uniform bounds for the benchmark and conditioned-site contributions, and a metric-entropy bound for the residual Gaussian process.  We establish these components in turn and then combine them to prove the theorem.  The ground-state argument uses
the universal product-state benchmark, while the subquadratic argument
allows the more general benchmarks described in the introduction.

For any function $f:\mathcal P_{n,p}\to\mathbb R$, we write $X_f:=\binom np^{-1/2}\sum_{P\in\mathcal P_{n,p}}g_Pf(P)$ for the
associated Gaussian linear functional.  In particular,
$X_{f_\rho}=\Tr(H_{n,p}\rho)$.

\subsection{Conditioning and local approximation}
\label{subsec:revmag-definetti}

Central to the nonsymmetric de Finetti argument of
\cite[Section~2.6]{berta2022semidefinite} is a conditioning procedure,
building on earlier work of Brand{\~a}o and Harrow
\cite{brandao2013quantum,brandao2013product},
in which one measures a controlled set of qubits and, conditional
on the measurement outcomes, approximates the state on the remaining
sites by the product of its conditional one-qubit marginals.
In what follows, we develop this argument in a form that keeps
the conditioning measurement explicit. For any state, we construct a separable approximation by
averaging the conditional product states over the measurement outcomes.
The maximum number \(m_\star\) of measured qubits can be prescribed in
advance, and we bound the squared trace-norm error between the
\(p\)-qubit marginals of the original state and those of the separable
approximation, averaged over all \(p\)-subsets of the remaining unmeasured
qubits.  Keeping the construction explicit allows us to identify more
structured benchmarks in
Section~\ref{sec:sharper-conditioning-benchmarks}.  These measurements
are an analytical device solely used for the proofs here and are not operations assumed to be available to any of the circuit classes preparing states.

\subsubsection{Measurements and conditioning}
\label{subsubsec:conditioning-measurements}

We first provide the necessary background definitions for completeness. For a finite-dimensional Hilbert space \(\mathcal H\), let
\(\mathsf L(\mathcal H)\) denote the space of linear operators on
\(\mathcal H\).  Let \(\mathcal Z\) be a finite set of possible
measurement outcomes. A one-qubit positive operator-valued measure (POVM) with
outcome set $\mathcal Z$ is a collection
$\{M_z\in\mathsf L(\mathbb C^2):z\in\mathcal Z\}$ satisfying
\[
    M_z\succeq0,
    \qquad
    \sum_{z\in\mathcal Z}M_z=I;
\]
see \cite{nielsen2010quantum}. The operators $M_z$ are called the \emph{effects} of the POVM.

For a state \(\omega\) on \(\mathbb C^2\), the numbers
\[
    p_\omega(z):=\Tr(M_z\omega),
    \qquad z\in\mathcal Z,
\]
form the probability distribution of the measurement outcomes.

Let \(\mathcal H_{\mathrm{cl}}\cong\mathbb C^{|\mathcal Z|}\) be an
auxiliary Hilbert space with orthonormal basis
\(\{|z\rangle_{\mathrm{cl}}:z\in\mathcal Z\}\).  The associated
quantum-to-classical measurement channel is the linear map
\[
    \mathcal M:
    \mathsf L(\mathbb C^2)
    \longrightarrow
    \mathsf L(\mathcal H_{\mathrm{cl}}),
    \qquad 
    \mathcal M(X)
    :=
    \sum_{z\in\mathcal Z}\Tr(M_zX)
    |z\rangle\langle z|_{\mathrm{cl}}.
\]
This map is completely positive and trace preserving, and its output
is diagonal in the basis $\{|z\rangle_{\mathrm{cl}}\}$ for every
input operator.  For a state $X=\omega$, the diagonal entries are the
probabilities $p_\omega(z)$, so $\mathcal M(\omega)$ records the
outcome distribution.  For a general Hermitian operator, the diagonal
entries are real but need not form a probability distribution.
In particular, the same formula applies to differences of states.

For a bipartite state $\omega_{RB}$, with $B$ a single qubit,
we apply the one-qubit POVM $\{M_z:z\in\mathcal Z\}$ to $B$.
Each $M_z$ acts on $\mathcal H_B\cong\mathbb C^2$, and the
corresponding POVM element on the joint system $RB$ is
$I_R\otimes M_z$. Here $I_R$ denotes the identity operator on
$R$, while $\operatorname{Id}_R$ denotes the identity map on
its operator space.

The probability of outcome $z$ is
\[
    q(z)
    :=\Tr(M_z\omega_B)
    =\Tr[(I_R\otimes M_z)\omega_{RB}].
\]
Let $Z$ denote the measurement outcome, so that
$\mathbb P(Z=z)=q(z)$. For each $z$ with $q(z)>0$,
define the conditional density matrix
\[
    \omega_{R\mid z}
    :=
    \frac{\Tr_B[(I_R\otimes M_z)\omega_{RB}]}{q(z)}.
\]
We write $\omega_{R\mid Z}
    :=
    \sum_{z:q(z)>0}
    \mathbf 1_{\{Z=z\}}\omega_{R\mid z}$ for the corresponding random conditional density matrix, defined
almost surely. Measuring $B$ and retaining both $R$ and the outcome produces
\[
    (\operatorname{Id}_R\otimes\mathcal M)(\omega_{RB})
    =
    \sum_{z:q(z)>0}q(z)\omega_{R\mid z}
       \otimes|z\rangle\langle z|_{\mathrm{cl}}.
\]
This is a classical--quantum state, that is, it is a single density matrix encoding
the outcome distribution together with the conditional quantum
state on $R$ for each outcome.

We now apply the same measurement to a set of qubits in an $n$-qubit system.
For an $n$-qubit state $\rho$, apply the same measurement separately
to each qubit in $S\subseteq[n]$. For the joint outcome
$z=(z_i)_{i\in S}\in\mathcal Z^S$, define
\[
    M_z^{(S)}
    :=\bigotimes_{i=1}^n B_i(z),
    \qquad
    B_i(z):=
    \begin{cases}
        M_{z_i}, & i\in S,\\
        I, & i\notin S,
    \end{cases}
\]
where $M_{z_i}$ denotes the copy on qubit $i$ of the one-qubit
POVM effect $M_{z_i}$ corresponding to outcome $z_i$. We also set $K_z^{(S)}:=(M_z^{(S)})^{1/2}$, the positive
square root of the product effect $M_z^{(S)}$. The collection $\{M_z^{(S)}:z\in\mathcal Z^S \}$ is a POVM on the full \(n\)-qubit system.

The outcome probability and the
conditional state of the unmeasured qubits are
\begin{equation}
\label{eq:conditioning-state-convention}
    q_S(z):=\Tr(M_z^{(S)}\rho),
    \qquad
    \rho_{S^c\mid z}
    :=\frac{\Tr_S[K_z^{(S)}\rho K_z^{(S)}]}{q_S(z)},
    \qquad q_S(z)>0,
\end{equation}
where $\Tr_S$ denotes the partial trace over the qubits in $S$. For
$R\subseteq S^c$, write $\rho_{R\mid z} := \Tr_{S^c \setminus R}(\rho_{S^c\mid z})$ for the corresponding
marginal.  We omit zero-probability outcomes from all sums involving
conditional states.  When $S=\emptyset$, there is one empty outcome
of probability one and the conditional state is $\rho$.

Cyclicity of the partial trace over the measured qubits gives
$q_S(z)\rho_{S^c\mid z}=\Tr_S[M_z^{(S)}\rho]$.  In particular,
averaging over the outcomes leaves the unmeasured marginal unchanged
\begin{equation}
\label{eq:conditioning-average-marginal}
    \sum_z q_S(z)\rho_{R\mid z}=\rho_R,
    \qquad R\subseteq S^c.
\end{equation} 

A POVM is called \emph{informationally complete} if its outcome
probabilities uniquely determine the input state or equivalently if $\mathcal M$ is
injective on Hermitian operators.  We use the following quantitative
trace-norm comparison, 
\begin{equation}
\label{eq:measurement-trace-comparison}
    \|\xi\|_1
    \le
    \kappa_{\mathcal M}
    \|(\operatorname{Id}_R\otimes\mathcal M)(\xi)\|_1
\end{equation}
for every finite-dimensional system $R$ and every Hermitian operator
$\xi$ on $R\otimes\mathbb C^2$, with a constant
$\kappa_{\mathcal M}<\infty$ independent of $R$; see \cite[Lemma~D.1]{berta2022semidefinite}. Generally, applying a measurement can only decrease trace norm
\[
    \|(\operatorname{Id}_R\otimes\mathcal M)(\xi)\|_1
    \le
    \|\xi\|_1.
\]
The estimate \eqref{eq:measurement-trace-comparison} gives the converse bound, up to the factor
\(\kappa_{\mathcal M}\), for the particular POVM used here.  In
particular, if \(\xi=\omega-\tau\), it says that two joint states cannot
become arbitrarily more difficult to distinguish after measuring the
qubit, provided that one retains both the measurement outcome and the
unmeasured system \(R\).  Equivalently, if the resulting
classical--quantum states are close, then the original joint states were
already close.  The constant \(\kappa_{\mathcal M}\) quantifies how much
distinguishability may be lost in passing from the quantum states to
their measured versions. This is the comparison
needed to transfer the information-theoretic estimate back to the
unmeasured quantum state in the de Finetti argument
\cite[Section~2.4]{berta2022semidefinite}.

\subsubsection{The Pauli POVM}
\label{subsubsec:pauli-povm}

The nonsymmetric de Finetti argument of Berta et al.\ \cite{berta2022semidefinite}
allows any POVM satisfying the distortion estimate
\eqref{eq:measurement-trace-comparison}. We use the Pauli POVM defined
below, which is equivalent to performing classical shadow measurements in a random Pauli basis~\cite{huang2020predicting}. The reason for this choice will become clear in
Section~\ref{sec:sharper-conditioning-benchmarks}.

For each nonidentity Pauli operator \(a\in\{X,Y,Z\}\) and each sign
\(s\in\{+1,-1\}\), let \(\lvert a,s\rangle\) denote a unit vector
satisfying
\[
    a\lvert a,s\rangle=s\lvert a,s\rangle,
    \qquad
    \langle a,s\mid a,s\rangle=1.
\]
Thus \(\lvert a,s\rangle\) is the state associated with eigenvalue
\(s\) when the Pauli observable \(a\) is measured.  The projector onto
this eigenspace is
\[
    \lvert a,s\rangle\langle a,s\rvert
    =
    \frac12(I+sa).
\]
Define
\[
    E_{a,s}
    :=
    \frac13\lvert a,s\rangle\langle a,s\rvert
    =
    \frac16(I+sa).
\]
For each fixed axis \(a\), $E_{a,+1}+E_{a,-1}=(1/3)I$, and summing over the three axes gives $\sum_{a, s}E_{a,s}=I$. The six operators therefore form a POVM whose outcomes are the pairs
\((a,s)\).  Operationally, this POVM first chooses one of the three
Pauli axes uniformly, then measures the corresponding Pauli observable,
and records both the chosen axis \(a\) and the observed eigenvalue \(s\).
For an input state \(\omega\), the probability of outcome
\((a,s)\) is $\Tr(E_{a,s}\omega)
    =
    (1/3)
    \langle a,s\vert\omega\vert a,s\rangle$.
Define the corresponding measurement map
\begin{equation}
\label{eq:pauli-conditioning-measurement}
    \mathcal M_{\mathrm{Pauli}}(X)
    :=\sum_{a,s}\Tr(E_{a,s}X)
       |a,s\rangle\langle a,s|_{\mathrm{cl}},
\end{equation}
where the six classical labels form an orthonormal basis, in contrast
to the six qubit eigenstates $|a,s\rangle$.

We now show that the Pauli POVM satisfies the distortion estimate \eqref{eq:measurement-trace-comparison}.

\begin{lemma}[Trace-norm comparison for the Pauli measurement]
\label{lem:pauli-measurement-comparison}
The measurement $\mathcal M_{\mathrm{Pauli}}$ satisfies
\eqref{eq:measurement-trace-comparison} with
$\kappa_{\mathcal M}=4$.
\end{lemma}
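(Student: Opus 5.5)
The plan is a direct computation in the Pauli basis of the measured qubit. No general informational-completeness argument should be needed, because $\mathcal M_{\mathrm{Pauli}}$ has an explicit linear inverse on its range, and we only need to bound the trace norm of that inverse.

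\textbf{Step 1: expand $\xi$.} Let $\xi$ be Hermitian on $R\otimes\mathbb C^2$. Write
\[
    \xi=\tfrac12\bigl(\xi_I\otimes I+\textstyle\sum_{a\in\{X,Y,Z\}}\xi_a\otimes a\bigr),
    \qquad
    \xi_b:=\Tr_B[(I_R\otimes b)\xi].
\]
Each $\xi_b$ is a Hermitian operator on $R$.

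\textbf{Step 2: compute the measured operator and its norm.} Since $E_{a,s}=\frac16(I+sa)$, and $\Tr(bb')=2\delta_{bb'}$ for Paulis $b,b'\in\{I,X,Y,Z\}$, we get
\[
    (\operatorname{Id}_R\otimes\mathcal M_{\mathrm{Pauli}})(\xi)
    =\sum_{a,s}\tfrac16\,T_{a,s}\otimes|a,s\rangle\langle a,s|_{\mathrm{cl}},
    \qquad T_{a,s}:=\xi_I+s\,\xi_a .
\]
The classical labels are orthonormal, so this operator is block diagonal. Its trace norm is therefore the sum of the block norms:
\[
    N:=\|(\operatorname{Id}_R\otimes\mathcal M_{\mathrm{Pauli}})(\xi)\|_1=\tfrac16\sum_{a,s}\|T_{a,s}\|_1 .
\]

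\textbf{Step 3: reconstruct $\xi$ from the $T_{a,s}$.} For each axis $a$ we have $\xi_a=\tfrac12(T_{a,+}-T_{a,-})$ and $\xi_I=\tfrac12(T_{a,+}+T_{a,-})$. Averaging the second identity over the three axes gives $\xi_I=\tfrac16\sum_{a,s}T_{a,s}$.

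\textbf{Step 4: bound $\|\xi\|_1$.} Apply the triangle inequality to the expansion in Step 1, using $\|A\otimes b\|_1=\|A\|_1\|b\|_1$ and $\|b\|_1=2$ for every one-qubit Pauli $b$. This gives
\[
    \|\xi\|_1\le\|\xi_I\|_1+\sum_a\|\xi_a\|_1 .
\]
Substituting the expressions from Step 3,
\[
    \|\xi\|_1\le\tfrac16\sum_{a,s}\|T_{a,s}\|_1+\tfrac12\sum_{a,s}\|T_{a,s}\|_1
    =\tfrac23\sum_{a,s}\|T_{a,s}\|_1=4N .
\]
This is \eqref{eq:measurement-trace-comparison} with $\kappa_{\mathcal M}=4$, uniformly in $R$.

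As a cross-check one can combine terms first. The identity $\xi=\sum_{a,s}T_{a,s}\otimes\bigl(\tfrac1{12}I+\tfrac s4 a\bigr)$ holds, and $\|\tfrac1{12}I+\tfrac s4a\|_1=\tfrac12$, which would give the sharper constant $3$. Either way the stated bound follows.

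The only step requiring care is Step 2: the outcome-labelled operator must be read as block diagonal in the classical register, so that its trace norm is exactly $\tfrac16\sum_{a,s}\|T_{a,s}\|_1$. The rest is bookkeeping.
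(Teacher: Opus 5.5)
Your proof is correct and is essentially the paper's argument: in the paper's notation your quantities are $\tfrac16 T_{a,s}=Y_{a,s}$ and $\xi_b=2A_b$, and your Steps 3--4 are exactly its inversion $A_0=\tfrac12\sum_{a,s}Y_{a,s}$, $A_a=\tfrac32(Y_{a,+}-Y_{a,-})$ followed by the triangle inequality. Your cross-check is also valid. It expands $\xi$ against the dual operators $\tfrac12 I+\tfrac32 s a=3|a,s\rangle\langle a,s|-I$, each of trace norm $3$, and therefore gives $\kappa_{\mathcal M}=3$; this is slightly better than the constant the paper states and affects only universal constants downstream.
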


\begin{proof}
Let \(Q\) denote the single-qubit subsystem with Hilbert space
\(\mathcal H_Q=\mathbb C^2\), and set
\(\mathcal P:=\{I,X,Y,Z\}\).  The Pauli matrices form an orthogonal basis
of \(\mathsf L(\mathcal H_Q)\), since
\[
    \Tr(ab)=2\delta_{ab},
    \qquad a,b\in\mathcal P.
\]
Hence every Hermitian operator \(\xi\) on \(R\otimes Q\) has a unique
expansion
\[
    \xi
    =
    A_0\otimes I
    +
    \sum_{a\in\{X,Y,Z\}}A_a\otimes a,
\]
where
\[
    A_0=\frac12\Tr_Q(\xi),
    \qquad
    A_a
    =
    \frac12\Tr_Q\left[(I_R\otimes a)\xi\right].
\]
Each coefficient \(A_0,A_X,A_Y,A_Z\) is Hermitian.

For \(a\in\{X,Y,Z\}\) and \(s\in\{\pm1\}\), define the corresponding
output block on \(R\) by $Y_{a,s}
    :=
    \Tr_Q\left[(I_R\otimes E_{a,s})\xi\right]$.
Using
\[
    \Tr(E_{a,s})=\frac13,
    \qquad
    \Tr(E_{a,s}b)=\frac{s}{3}\delta_{ab},
    \qquad b\in\{X,Y,Z\},
\]
we obtain $Y_{a,s}
    =
    \frac13(A_0+sA_a)$.
It follows that
\[
    A_0=\frac12\sum_{a,s}Y_{a,s},
    \qquad
    A_a=\frac32\bigl(Y_{a,+}-Y_{a,-}\bigr).
\]

The measured operator is
\[
    (\operatorname{Id}_R\otimes\mathcal M_{\mathrm{Pauli}})(\xi)
    =
    \sum_{a,s}
    Y_{a,s}\otimes
    |a,s\rangle\langle a,s|_{\mathrm{cl}}.
\]
Since the classical labels are orthonormal, this is the direct sum of
the six operators \(Y_{a,s}\).  Therefore
\[
    \bigl\|
        (\operatorname{Id}_R\otimes
        \mathcal M_{\mathrm{Pauli}})(\xi)
    \bigr\|_1
    =
    \sum_{a,s}\|Y_{a,s}\|_1.
\]

Finally, the trace norm is multiplicative under tensor products and
\(\|P\|_1=2\) for every \(P\in\mathcal P\).  Thus
\[
\begin{aligned}
    \|\xi\|_1
    &\le
    2\|A_0\|_1
    +
    2\sum_{a\in\{X,Y,Z\}}\|A_a\|_1                                      \\
    &\le
    \sum_{a,s}\|Y_{a,s}\|_1
    +
    3\sum_{a\in\{X,Y,Z\}}
        \bigl(\|Y_{a,+}\|_1+\|Y_{a,-}\|_1\bigr)                            \\
    &=4\sum_{a,s}\|Y_{a,s}\|_1                                           \\
    &=4\bigl\|
        (\operatorname{Id}_R\otimes
        \mathcal M_{\mathrm{Pauli}})(\xi)
      \bigr\|_1.
\end{aligned}
\]
This proves \eqref{eq:measurement-trace-comparison} with
\(\kappa_{\mathcal M}=4\).
\end{proof}

We note once more that this estimate holds for arbitrary input states.  We will use the same
measurement for the universal product-state approximation and later
exploit its additional preservation of stabilizer structure in Section~\ref{sec:sharper-conditioning-benchmarks}.

\subsubsection{Conditioning with a measurement budget}
\label{subsubsec:truncated-conditioning-proof}

The following result applies to any finite-outcome one-qubit POVM
satisfying the distortion bound
\eqref{eq:measurement-trace-comparison}.  In particular, by
Lemma~\ref{lem:pauli-measurement-comparison}, it applies to the Pauli
POVM with \(\kappa_{\mathcal M}=4\).  Without assuming any symmetry of
the input state, it constructs a single separable approximation with small
average squared trace-norm error over the \(p\)-qubit marginals supported
on the unmeasured qubits.

\begin{lemma}[Truncated squared conditioning with a prescribed POVM]
\label{lem:measurement-parametrized-definetti}
There is a universal constant \(C<\infty\) such that the following holds.
Fix a finite-outcome one-qubit POVM
\(\{M_u:u\in\mathcal Z\}\) whose measurement map satisfies
\eqref{eq:measurement-trace-comparison} with constant
\(\kappa_{\mathcal M}\).  Let \(\rho\) be an \(n\)-qubit state, let
\(n\ge p\ge2\), and let the integer \(m_\star\) satisfy
\(0\le m_\star\le n-p\).

Then there exists \(J\subseteq[n]\), with
\(|J^c|\le m_\star\), such that applying the prescribed POVM separately
to each qubit in \(J^c\) has the following property.  Denote the joint
outcome by \(z=(z_i)_{i\in J^c}\), and let
\(q(z):=q_{J^c}(z)\) and \(\rho_{J\mid z}\) be defined by
\eqref{eq:conditioning-state-convention}.  

Then
\[
    \eta_J
    :=
    \sum_{z:q(z)>0}q(z)\bigotimes_{i\in J}\rho_{i\mid z},
    \qquad
    \rho_{i\mid z}
    :=
    \Tr_{J\setminus\{i\}}(\rho_{J\mid z}),
\]
is a fully separable state on \(J\) satisfying
\begin{equation}
\label{eq:measurement-conditioned-trace-error}
    \frac1{\binom{|J|}{p}}
    \sum_{\substack{I\subseteq J\\|I|=p}}
    \|\rho_I-(\eta_J)_I\|_1^2
    \le
    C\kappa_{\mathcal M}^2\frac{p^3}{m_\star+1}.
\end{equation}
\end{lemma}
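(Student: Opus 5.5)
We follow the information-theoretic route of Brand\~ao--Harrow and Berta et al.\ \cite[Section~2.6]{berta2022semidefinite}: a random conditioning set, a chain rule for mutual information, and Pinsker's inequality. The first step is to randomize the measurement budget and the measured set. We draw $k$ uniformly from $\{0,1,\dots,m_\star\}$ and then an ordered sequence $j_1,\dots,j_k$ of distinct qubits uniformly at random. We also draw a uniformly random $p$-subset $I$ of the remaining $n-k$ qubits. Set $S=\{j_1,\dots,j_k\}$ and $J=S^c$.

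Measuring $S$ with $\mathcal M$ yields a classical--quantum state on $J$ together with the outcome register $Z_S$. The target is the averaged quantity
\[
  \E_{k,S,I}\,\E_{z}\,\Bigl\|\rho_{I\mid z}-\bigotimes_{i\in I}\rho_{i\mid z}\Bigr\|_1^2 .
\]
Once this is bounded, convexity gives
\[
  \|\rho_I-(\eta_J)_I\|_1^2\le\E_z\Bigl\|\rho_{I\mid z}-\bigotimes_{i\in I}\rho_{i\mid z}\Bigr\|_1^2,
\]
using \eqref{eq:conditioning-average-marginal} for $\rho_I=\sum_z q(z)\rho_{I\mid z}$. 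Since the average is over the random $(k,S)$, some fixed realization $J$ with $|J^c|\le m_\star$ must do at least as well as the average over $I\subseteq J$. This is exactly the existential statement of the lemma.

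The second step bounds the conditional non-product error. Write the $p$ sites of $I$ in some order $i_1,\dots,i_p$. A telescoping argument gives
\[
  \Bigl\|\rho_{I\mid z}-\bigotimes_{i\in I}\rho_{i\mid z}\Bigr\|_1\le\sum_{l=2}^{p}\bigl\|\rho_{i_1\cdots i_l\mid z}-\rho_{i_1\cdots i_{l-1}\mid z}\otimes\rho_{i_l\mid z}\bigr\|_1 .
\]
We then apply \eqref{eq:measurement-trace-comparison} with $R=\{i_1,\dots,i_{l-1}\}$ together with the unmeasured rest, to the Hermitian difference on $R\otimes i_l$. This costs a factor $\kappa_{\mathcal M}$ and converts each summand into a distance between classical--quantum states, in which $i_l$ is replaced by its outcome $Z_{i_l}$. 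Averaging over $z$ and applying Pinsker, that distance is controlled by $\sqrt{2\,I(i_1\cdots i_{l-1}:Z_{i_l}\mid Z_S)}$. By Cauchy--Schwarz over the at most $p$ terms and Jensen over $z$, the squared error is at most
\[
  2p\,\kappa_{\mathcal M}^2\sum_{l}I\bigl(i_1\cdots i_{l-1}:Z_{i_l}\mid Z_S\bigr).
\]
To make the information chain-rule-friendly, we also want each term bounded by a purely classical information $I(Z_{i_1}\cdots Z_{i_{l-1}}:Z_{i_l}\mid Z_S)$. For this we use \eqref{eq:measurement-trace-comparison} iteratively on the other sites as well, at the price of extra powers of $\kappa_{\mathcal M}$. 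To keep the dependence at $\kappa_{\mathcal M}^2$, we would instead keep the quantum register $i_1\cdots i_{l-1}$ and only measure $i_l$. We then bound $I(R:Z_{i_l}\mid Z_S)\le I(R':Z_{i_l}\mid Z_S)$ for a fixed reference set $R'$ and use the chain rule in the variable $Z$. The chain rule applies here because all measured variables are classical.

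The third step is the chain rule averaged over $k$, which supplies the $1/(m_\star+1)$ factor. By exchangeability of the random sequence, each conditional mutual information $I(A:Z_{j}\mid Z_{j_1},\dots,Z_{j_k})$, where $A$ is a fixed-size quantum register or outcome block, is an increment of $I(A:Z_{j_1}\cdots Z_{j_{k+1}})$. Summing over $k=0,\dots,m_\star$ telescopes to at most $I(A:\text{everything})\le\log\dim A\le p$ (in bits for $p$ qubits). Averaging over $k$ therefore gives $O(p/(m_\star+1))$ per term. There are at most $p$ terms, and the Cauchy--Schwarz step contributes another factor $p$, so the total is $C\kappa_{\mathcal M}^2p^3/(m_\star+1)$.

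The main obstacle is the bookkeeping in step two: the register whose information we telescope must have dimension at most $2^{p}$ and must be the same register across $k$. Otherwise the telescoping in step three fails, or extra factors of $\kappa_{\mathcal M}$ appear. We would handle this as Berta et al.\ do. We fix the random $p$-set $I$ first, drawn jointly with the measurement sequence from a common random permutation. We then telescope $I(Q_I:Z_{j_1}\cdots Z_{j_{k}})$ for the quantum register $Q_I$, with only the single measured site $i_l$ converted to classical through \eqref{eq:measurement-trace-comparison}.
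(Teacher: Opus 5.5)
Your proposal is correct and is essentially the paper's argument. It uses a common uniformly random permutation, the single-site comparison \eqref{eq:measurement-trace-comparison} plus Pinsker on each telescoping increment, and an exchangeability swap of the measured retained site $i_l$ with a conditioning site, so that the averaged conditional informations telescope to at most $\log\dim$ of a register of at most $p$ qubits; Cauchy--Schwarz and Jensen then pass the bound to $\eta_J$, and averaging over a random depth $k$ is equivalent to the paper's choice of one good $m$. Two bookkeeping points affect only constants: any enlarged reference register must exclude the measured site and be invariant under the swap (e.g.\ $I\setminus\{i_l\}$), and when $m_\star=n-p$ there is no $j_{m_\star+1}$ to swap with, so the last term needs a separate trivial bound or, as in the paper, the chain should start with $Z_{i_l}$.
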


Before presenting the proof of Lemma~\ref{lem:measurement-parametrized-definetti}, we recall the information-theoretic identities to be used. The von Neumann entropy of a state $\omega$ is $S(\omega):=-\Tr(\omega\log\omega)$,
with natural logarithms and $0\log0:=0$.
For a state on a finite-dimensional Hilbert space $\mathcal H$,
the entropy satisfies $0\le S(\omega)\le\log\dim\mathcal H$.  For a subsystem $A$, write
$S(A)_\omega:=S(\omega_A)$.  Mutual information and conditional
mutual information are defined as
\begin{align*}
    I(A:B)_\omega
    &:=S(A)_\omega+S(B)_\omega-S(AB)_\omega,\\
    I(A:B\mid Z)_\omega
    &:=S(AZ)_\omega+S(BZ)_\omega-S(Z)_\omega-S(ABZ)_\omega.
\end{align*}
When $Z$ is classical, so that
\begin{equation}
\label{eq:cq-state-decomposition}
    \omega_{ABZ}
    =\sum_z q(z)\omega_{AB\mid z}
           \otimes|z\rangle\langle z|_{\mathrm{cl}},
\end{equation}
conditional mutual information is the average
\[
    I(A:B\mid Z)_\omega
    =\sum_zq(z)I(A:B)_{\omega_{AB\mid z}}.
\]
We use the chain rule
\[
    I(A:BC)_\omega=I(A:B)_\omega+I(A:C\mid B)_\omega
\]
and quantum Pinsker's inequality
\[
    \|\omega_{AB}-\omega_A\otimes\omega_B\|_1^2
    \le c I(A:B)_\omega
\]
for a positive universal constant $c$. These identities and this inequality can be found in standard quantum information theory texts, such as \cite[Theorem~5.38 and Eqs.~(5.91), (5.295)]{watrous2018theory}.

\begin{proof}[Proof of Lemma~\ref{lem:measurement-parametrized-definetti}]
We follow the argument of
\cite[Theorem~2.5]{berta2022semidefinite}, keeping the measurement
constant explicit and truncating the average over conditioning depths.

Let $\pi=(i_1,\ldots,i_p,j_1,\ldots,j_{n-p})$ be a uniformly random permutation of $[n]$.  The first $p$ qubits are retained for the comparison, and the $j$-qubits are available for
conditioning.  Write $A_t:=i_1\cdots i_t$ for the subsystem of the
first $t$ retained qubits, for $1\le t\le p$, and write $Z_j$ for
the outcome of the prescribed POVM on qubit $j$.  For \(0\le m\le n-p\), let
\[
    S_m:=\{j_1,\ldots,j_m\},
    \qquad
    Z^{(m)}:=(Z_{j_1},\ldots,Z_{j_m})
\]
denote the measured qubits and their outcomes. We use the convention \(S_0=\emptyset\) and \(Z^{(0)}\) is the unique empty
outcome, which has probability one and satisfies
\(\rho_{A_t\mid Z^{(0)}}=\rho_{A_t}\).

For $2\le t\le p$, define
\begin{equation}
\label{eq:revmag-delta-def}
    \Delta_{t,m}
    :=\|
        \rho_{A_t\mid Z^{(m)}}
        -\rho_{A_{t-1}\mid Z^{(m)}}\otimes
         \rho_{i_t\mid Z^{(m)}}
      \|_1.
\end{equation}
This measures the remaining dependence when adjoining the next
retained qubit.  We use $\E_\pi$ for the average over permutations,
$\E[\,\cdot\mid\pi]$ for the average over measurement outcomes at
a fixed permutation, and $\E$ for the joint average.

We first control the conditional dependence using a bound
on the information shared with the measurement outcomes. Fix \(t\in\{2,\ldots,p\}\).  Imagine also measuring \(i_t\) and
define the classical register $W_t:=(Z_{i_t},Z_{j_1},\ldots,Z_{j_{n-p}})$.

Let $\omega^{\pi,t}_{A_{t-1}W_t}$ be the state obtained by measuring
$i_t,j_1,\ldots,j_{n-p}$ in $\rho$ and keeping only the unmeasured
subsystem $A_{t-1}$ and the classical outcome register $W_t$.

For $0\le m\le n-p$, its marginal on
$A_{t-1}Z_{i_t}Z^{(m)}$ is obtained by discarding the later outcomes
$Z_{j_{m+1}},\ldots,Z_{j_{n-p}}$. Equivalently, we can obtain this
marginal by measuring only $i_t$ and $S_m$, retaining $A_{t-1}$
and those outcomes, and tracing out all other qubits.
These procedures agree because discarding a measurement outcome
means summing over it, which replaces the corresponding POVM
effect by $\sum_z M_z=I$.

All information quantities in this step are computed in
$\omega^{\pi,t}$ or its marginals, with the dependence on $\pi,t$
suppressed from the notation.

For each fixed ordering, the chain rule gives
\begin{align}
\label{eq:revmag-fixed-chain-rule}
    I(A_{t-1}:W_t)
    &=
    I(A_{t-1}:Z_{i_t})
    +
    \sum_{m=1}^{n-p}
    I\left(
        A_{t-1}:Z_{j_m}
        \,\middle|\,
        Z_{i_t},Z_{j_1},\ldots,Z_{j_{m-1}}
    \right).
\end{align}

We now average over the uniformly random ordering.  Fix \(m\ge1\)
and pair each ordering $\pi$ with the ordering obtained by exchanging the entries in positions
\(t\) and \(p+m\).  This pairing is a bijection of the permutation
space and leaves \(i_1,\ldots,i_{t-1}\), and hence \(A_{t-1}\),
unchanged.

For a given ordering \(\pi\), the \(m\)-th chain-rule summand for
the paired ordering, expressed in terms of the physical-site labels
from \(\pi\), is
\[
    I\left(
        A_{t-1}:Z_{i_t}
        \,\middle|\,
        Z_{j_m},Z_{j_1},\ldots,Z_{j_{m-1}}
    \right).
\]
Since the order of the classical conditioning registers is
immaterial, this equals
\[
    I\left(
        A_{t-1}:Z_{i_t}
        \,\middle|\,
        Z_{j_1},\ldots,Z_{j_m}
    \right)
    =
    I(A_{t-1}:Z_{i_t}\mid Z^{(m)}).
\]
Because the uniform average is invariant under this pairing,
\[
    \E_\pi
    I\left(
        A_{t-1}:Z_{j_m}
        \,\middle|\,
        Z_{i_t},Z_{j_1},\ldots,Z_{j_{m-1}}
    \right)
    =
    \E_\pi
    I(A_{t-1}:Z_{i_t}\mid Z^{(m)}).
\]
For \(m=0\), the corresponding term in
\eqref{eq:revmag-fixed-chain-rule} is already
\(I(A_{t-1}:Z_{i_t})\).  Averaging
\eqref{eq:revmag-fixed-chain-rule} therefore yields
\[
    \sum_{m=0}^{n-p}
    \E_\pi I(A_{t-1}:Z_{i_t}\mid Z^{(m)})
    =
    \E_\pi I(A_{t-1}:W_t).
\]
Note that this is an equality of expectations and the corresponding information
quantities need not be equal for an individual fixed ordering.

Since $W_t$ is classical, for every fixed ordering,
\begin{align*}
    I(A_{t-1}:W_t)
    &=
    S(A_{t-1})+S(W_t)-S(A_{t-1}W_t) \\
    &=
    S(A_{t-1})+H(q_\pi)
    -\left[
        H(q_\pi)
        +\sum_w q_\pi(w)S(\omega_{A_{t-1}\mid w})
    \right] \\
    &=
    S(A_{t-1})
    -\sum_w q_\pi(w)S(\omega_{A_{t-1}\mid w}) \\
    &\le
    S(A_{t-1}) 
    \le
    \log\dim\mathcal H_{A_{t-1}}
    =(t-1)\log 2,
\end{align*}
where $q_\pi$ is the outcome distribution of $W_t$ and
$H(q_\pi):=-\sum_w q_\pi(w)\log q_\pi(w)$ is its classical (Shannon) entropy.
The second equality uses the entropy formula for a
classical--quantum state. The first inequality uses the
nonnegativity of von Neumann entropy, and the second uses
$S(\omega_A)\le\log\dim\mathcal H_A$.
Finally, $A_{t-1}$ consists of $t-1$ qubits, so
$\dim\mathcal H_{A_{t-1}}=2^{t-1}$.  It follows
that
\begin{equation}
\label{eq:revmag-mi-budget}
    \sum_{m=0}^{n-p}
    \E_\pi I(A_{t-1}:Z_{i_t}\mid Z^{(m)})
    \le
    (t-1)\log 2.
\end{equation}

We now transfer this information bound to the original conditional
quantum states.  Fix an ordering and an outcome
\(Z^{(m)}=z\) of positive probability.  Applying
\eqref{eq:measurement-trace-comparison}, with \(\mathcal M\) acting
on \(i_t\), to
\[
    \xi := \rho_{A_t\mid z}
    -
    \rho_{A_{t-1}\mid z}\otimes\rho_{i_t\mid z},
\]
and then Pinsker's inequality gives
\[
    \left\|
        \rho_{A_t\mid z}
        -
        \rho_{A_{t-1}\mid z}\otimes\rho_{i_t\mid z}
    \right\|_1^2
    \le
    C\kappa_{\mathcal M}^2
    I(A_{t-1}:Z_{i_t})_{\omega\mid Z^{(m)}=z}.
\]
Averaging over \(Z^{(m)}\), for the fixed ordering, gives
\[
    \E[\Delta_{t,m}^2\mid\pi]
    \le
    C\kappa_{\mathcal M}^2
    I(A_{t-1}:Z_{i_t}\mid Z^{(m)}).
\]
Finally, averaging over \(\pi\), summing over \(m\), and applying
\eqref{eq:revmag-mi-budget} yields
\begin{equation}
\label{eq:revmag-df-chain}
    \sum_{m=0}^{n-p}
    \E\Delta_{t,m}^2
    \le
    C\kappa_{\mathcal M}^2(t-1).
\end{equation}

We next choose a conditioning depth $m\le m_\star$ with
small average error and use it to obtain a product
approximation. The summands in \eqref{eq:revmag-df-chain} are nonnegative, and so restricting
to $m\le m_\star$ and summing over $t=2,\ldots,p$ gives
\[
    \frac1{m_\star+1}\sum_{m=0}^{m_\star}
    \E\sum_{t=2}^p\Delta_{t,m}^2
    \le C\kappa_{\mathcal M}^2\frac{p^2}{m_\star+1}.
\]
Hence, there is a single $m\in\{0,\ldots,m_\star\}$ such that
\begin{equation}
\label{eq:revmag-df-good-m-step}
    \E\sum_{t=2}^p\Delta_{t,m}^2
    \le C\kappa_{\mathcal M}^2\frac{p^2}{m_\star+1}.
\end{equation}

For the selected depth \(m\), fix an outcome \(z\) of \(Z^{(m)}\)
having positive probability.  All states below are conditioned on the
same event \(Z^{(m)}=z\).  Tensor telescoping gives
\begin{align*}
&\rho_{A_p\mid Z^{(m)}=z}
-\bigotimes_{\ell=1}^p
 \rho_{i_\ell\mid Z^{(m)}=z}
\\
&\qquad=
\sum_{t=2}^p
\left(
    \rho_{A_t\mid Z^{(m)}=z}
    -
    \rho_{A_{t-1}\mid Z^{(m)}=z}
    \otimes
    \rho_{i_t\mid Z^{(m)}=z}
\right)
\otimes
\bigotimes_{\ell=t+1}^p
\rho_{i_\ell\mid Z^{(m)}=z}.
\end{align*}

Tensoring with a density matrix preserves the trace norm.  The triangle
inequality and Cauchy--Schwarz therefore imply
\[
    \bigg\|\rho_{A_p\mid Z^{(m)}}
       -\bigotimes_{\ell=1}^p\rho_{i_\ell\mid Z^{(m)}}\bigg\|_1^2
    \le(p-1)\sum_{t=2}^p\Delta_{t,m}^2.
\]
Together with \eqref{eq:revmag-df-good-m-step}, this yields
\begin{equation}
\label{eq:revmag-df-good-m}
    \E\bigg\|\rho_{A_p\mid Z^{(m)}}
       -\bigotimes_{\ell=1}^p\rho_{i_\ell\mid Z^{(m)}}\bigg\|_1^2
    \le C\kappa_{\mathcal M}^2\frac{p^3}{m_\star+1}.
\end{equation}

Finally, we fix the measured set and average the conditional
product states over the measurement outcomes to construct
the separable approximation. The conditioning sites in \eqref{eq:revmag-df-good-m} are still random.
Condition first on their ordered list $(j_1,\ldots,j_m)$.  Since the
total average satisfies the displayed bound, some fixed list satisfies
the same bound after averaging only over the retained tuple and the
measurement outcomes.  Fix this list and set
$J^c:=\{j_1,\ldots,j_m\}$, so $|J^c|=m\le m_\star$.
Conditional on the list, the retained tuple is uniform among the ordered
$p$-tuples of distinct sites in $J$.  Every $p$-subset has exactly
$p!$ orderings, and the trace-norm error is unchanged by reordering
its tensor factors.  Thus, writing $q(z)=q_{J^c}(z)$,
\begin{equation}
\label{eq:conditioning-fixed-set-error}
    \frac1{\binom{|J|}{p}}
    \sum_{\substack{I\subseteq J\\|I|=p}}\sum_zq(z)
    \bigg \|\rho_{I\mid z}
       -\bigotimes_{i\in I}\rho_{i\mid z}\bigg\|_1^2
    \le C\kappa_{\mathcal M}^2\frac{p^3}{m_\star+1}.
\end{equation}
The same measured set is used for every $I$ in this average.

Form $\eta_J$ as in the statement.  Each conditional product is
separable, so $\eta_J$ is fully separable.  By
\eqref{eq:conditioning-average-marginal},
\[
    \rho_I=\sum_zq(z)\rho_{I\mid z},
    \qquad
    (\eta_J)_I=\sum_zq(z)\bigotimes_{i\in I}\rho_{i\mid z}.
\]
The triangle inequality and Jensen's inequality give
\[
    \|\rho_I-(\eta_J)_I\|_1^2
    \le\sum_zq(z)
       \bigg\|\rho_{I\mid z}
          -\bigotimes_{i\in I}\rho_{i\mid z}\bigg \|_1^2.
\]
Average over $I$ and apply \eqref{eq:conditioning-fixed-set-error} to
obtain \eqref{eq:measurement-conditioned-trace-error}.
\end{proof}

The following corollary shows that the conditioning construction of Lemma~\ref{lem:measurement-parametrized-definetti} is compatible with adjoining and subsequently discarding an auxiliary system, a fact we will use later to treat circuits with ancillas.

\begin{corollary}[Conditioning and partial trace]
\label{cor:conditioning-extensions}
Let $\widehat\rho$ be a state on $[n]\cup E$ and set
$\rho:=\Tr_E\widehat\rho$. Fix $S\subseteq[n]$ and put
$J=[n]\setminus S$.

Measure the qubits in $S$ using the prescribed POVM.
Write $\widehat q(z),\widehat\rho_{JE\mid z}$ for the outcome
probabilities and conditional states when $E$ is kept, and
$q(z),\rho_{J\mid z}$ when $E$ is traced out before measurement.
Then $\widehat q(z)=q(z)$ for every $z$, and
\[
    \Tr_E\widehat\rho_{JE\mid z}=\rho_{J\mid z},
    \qquad q(z)>0.
\]
If $\widehat\rho$ is pure and the POVM effects have rank one,
then $\widehat\rho_{JE\mid z}$ is pure for every outcome
of positive probability.
\end{corollary}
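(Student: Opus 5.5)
The plan is to compare the two procedures directly at the level of unnormalized conditional operators. Everything reduces to the fact that the measurement effects act only on $S\subseteq[n]$ and therefore commute with the partial trace over $E$.

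\textbf{Probabilities.} Write $\widehat M_z:=M_z^{(S)}\otimes I_E$ and $\widehat K_z:=K_z^{(S)}\otimes I_E$, where $K_z^{(S)}$ is the positive square root of $M_z^{(S)}$. Then $\widehat K_z$ is the positive square root of $\widehat M_z$, because the square root of $A\otimes I$ is $\sqrt A\otimes I$. For the outcome probabilities, taking the partial trace over $E$ first gives
\[
    \widehat q(z)=\Tr\bigl[(M_z^{(S)}\otimes I_E)\widehat\rho\bigr]
    =\Tr\bigl[M_z^{(S)}\Tr_E\widehat\rho\bigr]=q(z).
\]

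\textbf{Conditional states.} For the unnormalized conditional operators, I would use that $\Tr_E$ commutes with conjugation by an operator acting as the identity on $E$:
\[
    \Tr_E\Tr_S\bigl[\widehat K_z\widehat\rho\,\widehat K_z\bigr]
    =\Tr_S\bigl[K_z^{(S)}(\Tr_E\widehat\rho)K_z^{(S)}\bigr].
\]
Here the partial traces over $S$ and $E$ commute because they act on disjoint tensor factors. This can be checked on product operators $A\otimes B$ with $A$ on $[n]$ and $B$ on $E$, and then extended by linearity. Dividing both sides by $\widehat q(z)=q(z)>0$ yields $\Tr_E\widehat\rho_{JE\mid z}=\rho_{J\mid z}$.

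\textbf{Purity.} Suppose $\widehat\rho=|\psi\rangle\langle\psi|$ and each effect has rank one, say $M_u=|v_u\rangle\langle v_u|$ with $v_u$ not necessarily normalized. Then $M_z^{(S)}$ restricted to the qubits in $S$ is the rank-one operator $|v_z\rangle\langle v_z|$ with $|v_z\rangle:=\bigotimes_{i\in S}|v_{z_i}\rangle$. Its positive square root is $|v_z\rangle\langle v_z|/\|v_z\|$. Consequently,
\[
    \widehat K_z|\psi\rangle=|v_z\rangle\otimes|\phi_z\rangle,
    \qquad
    |\phi_z\rangle:=\|v_z\|^{-1}\bigl(\langle v_z|\otimes I_{JE}\bigr)|\psi\rangle .
\]
Tracing out $S$ then leaves $\|v_z\|^2|\phi_z\rangle\langle\phi_z|$. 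After normalization by $q(z)>0$, this shows that $\widehat\rho_{JE\mid z}$ is the pure state $|\phi_z\rangle\langle\phi_z|/\|\phi_z\|^2$.

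I do not expect a real obstacle. The only points needing care are the commutation of the square root with the tensor factor $I_E$, and the rank-one computation of $K_z$ in the purity claim.
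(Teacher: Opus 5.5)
Your proposal is correct and follows the paper's proof essentially step for step. Both arguments obtain $\widehat q(z)=q(z)$ and the conditional identity by pulling $\Tr_E$ through $K_z^{(S)}\otimes I_E$, and both prove purity by showing that the rank-one operator on $S$ factors the pure state as $|v_z\rangle_S\otimes|\widetilde\phi_z\rangle_{JE}$; your explicit square root $|v_z\rangle\langle v_z|/\|v_z\|$ only spells out the paper's remark that $\bigotimes_{i\in S}M_{z_i}^{1/2}$ has rank one.
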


\begin{proof}
Write $K_z=(M_z^{(S)})^{1/2}$, using the measurement operators
on the original $n$ qubits. Since the measurement does not act on $E$,
\[
    \widehat q(z)
    =\Tr\left[(M_z^{(S)}\otimes I_E)\widehat\rho\right]
    =\Tr\left[M_z^{(S)}\Tr_E\widehat\rho\right]
    =q(z).
\]
For $q(z)>0$, the conditioning formula gives
\begin{align*}
    \Tr_E\widehat\rho_{JE\mid z}
    &=
    \frac{\Tr_{SE}\left[
        (K_z\otimes I_E)\widehat\rho(K_z\otimes I_E)
    \right]}{q(z)}\\
    &=
    \frac{\Tr_S\left[K_z(\Tr_E\widehat\rho)K_z\right]}{q(z)}
    =
    \frac{\Tr_S[K_z\rho K_z]}{q(z)}
    =\rho_{J\mid z}.
\end{align*}

Now let $\widehat\rho=|\Psi\rangle\langle\Psi|$ and order the
qubits as $S,J,E$.
If the effects have rank one, then
$L_z:=\bigotimes_{i\in S}M_{z_i}^{1/2}$ has rank one.
Hence, for some vectors $|v_z\rangle_S$ and
$|\widetilde\phi_z\rangle_{JE}$,
\[
    (L_z\otimes I_{JE})|\Psi\rangle
    =|v_z\rangle_S\otimes|\widetilde\phi_z\rangle_{JE}.
\]
Tracing out $S$ and normalizing gives $\widehat\rho_{JE\mid z}
    =
    |\widetilde\phi_z\rangle\langle\widetilde\phi_z|/
         \|\widetilde\phi_z\|^2$,
which is pure.
\end{proof}

\subsubsection{Profile approximation and conditioning benchmarks}
\label{subsubsec:canonical-conditioning}

The conditioning results of the previous section control the average squared trace-norm distance between the \(p\)-qubit marginals of the original state and those of a separable approximation. The Hamiltonian comparison, however, is formulated in terms of the canonical norm of their Pauli profiles. We now convert the local trace-norm estimate into the required canonical-norm estimate. Although an analogous result holds for any informationally complete POVM satisfying \eqref{eq:measurement-trace-comparison}, we state only the Pauli-POVM version needed in this work.

For $J\subseteq[n]$, define the coordinate projection
\[
    (\mathsf M_Jf)(P)
    :=f(P)\mathbf1_{\{\operatorname{supp}(P)\subseteq J\}},
    \qquad P\in\mathcal P_{n,p}.
\]
It keeps the interactions supported entirely on $J$ without changing
the original normalization in $\|\cdot\|_{n,p}$.

\begin{corollary}[Canonical form of the conditioning approximation]
\label{cor:conditioning-canonical}
There is a universal constant \(C_0<\infty\) such that the following
holds.  For every \(n\)-qubit state \(\rho\), integers \(n\ge p\ge2\),
and an integer \(0\le m_\star\le n-p\), there exist a set \(J\subseteq[n]\), with
\(|J^c|\le m_\star\), and a fully separable \(n\)-qubit state \(\sigma\)
such that
\begin{equation}
\label{eq:definetti-canonical}
    \|\mathsf M_J(f_\rho-f_\sigma)\|_{n,p}
    \le
    r_{p,m_\star}
    :=
    C_0\frac{2^{p/2}p^{3/2}}{\sqrt{m_\star+1}}.
\end{equation}
\end{corollary}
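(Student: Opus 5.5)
We apply Lemma~\ref{lem:measurement-parametrized-definetti} with the Pauli POVM, for which $\kappa_{\mathcal M}=4$ by Lemma~\ref{lem:pauli-measurement-comparison}. This produces a set $J$ with $|J^c|\le m_\star$ and a fully separable state $\eta_J$ on $J$ satisfying \eqref{eq:measurement-conditioned-trace-error}, with right-hand side $16Cp^3/(m_\star+1)$. Define the $n$-qubit state $\sigma:=\eta_J\otimes\bigotimes_{i\in J^c}|0\rangle\langle0|$. It is fully separable, and for every $P$ with $\operatorname{supp}(P)\subseteq J$ we have $f_\sigma(P)=\Tr(P\,\eta_J)$. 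Only these coordinates survive $\mathsf M_J$, so the choice on $J^c$ is irrelevant.

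The main step converts trace-norm errors on $p$-qubit marginals into the canonical norm. Group the weight-$p$ Paulis supported in $J$ by their support $I\in\binom Jp$. For fixed $I$, write $\Delta_I:=\rho_I-(\eta_J)_I$, a traceless Hermitian operator on $p$ qubits. The Paulis $P_I^a$ with $a\in\{X,Y,Z\}^p$ are among the $4^p$ Pauli strings on $I$, which are orthogonal with $\Tr(PQ)=2^p\delta_{PQ}$. Parseval therefore gives
\[
\sum_{a}\bigl(f_\rho(P_I^a)-f_\sigma(P_I^a)\bigr)^2\le\sum_{Q}\Tr(Q\Delta_I)^2=2^p\|\Delta_I\|_2^2\le2^p\|\Delta_I\|_1^2,
\]
using $\|\cdot\|_2\le\|\cdot\|_1$. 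Summing over $I\subseteq J$ then yields
\[
\|\mathsf M_J(f_\rho-f_\sigma)\|_{n,p}^2\le\frac{2^p}{\binom np}\sum_{I\subseteq J}\|\Delta_I\|_1^2\le 2^p\frac{\binom{|J|}p}{\binom np}\cdot\frac{16Cp^3}{m_\star+1}.
\]
Since $\binom{|J|}p\le\binom np$, taking square roots gives \eqref{eq:definetti-canonical} with $C_0:=4\sqrt C$.

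The only real subtlety is this trace-to-Hilbert--Schmidt conversion. A crude bound $|\Tr(P\Delta_I)|\le\|\Delta_I\|_1$ for each of the $3^p$ Paulis would cost a factor $3^p$ rather than $2^p$. Parseval on the full $p$-qubit Pauli basis is what delivers the stated $2^{p/2}$. The remaining steps are routine bookkeeping: the normalization by $\binom np$ rather than $\binom{|J|}p$ only helps, and the hypothesis $m_\star\le n-p$ is exactly what Lemma~\ref{lem:measurement-parametrized-definetti} requires.
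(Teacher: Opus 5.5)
Your proposal is correct and follows the paper's own proof essentially step for step. You apply Lemma~\ref{lem:measurement-parametrized-definetti} with the Pauli POVM ($\kappa_{\mathcal M}=4$), pad $\eta_J$ with $|0\rangle\langle0|$ on $J^c$, use Parseval over the full $p$-qubit Pauli basis to get the $2^p$ (rather than $3^p$) factor, and bound $\binom{|J|}{p}/\binom np\le1$, which yields $C_0=4\sqrt C$ exactly as in the paper.
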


\begin{proof}
Apply Lemma~\ref{lem:measurement-parametrized-definetti} with the
Pauli POVM. By Lemma~\ref{lem:pauli-measurement-comparison}, we may
take $\kappa_{\mathcal M}=4$. Let $J$ and $\eta_J$ be the resulting
set and separable state, and define
$\sigma:=\eta_J\otimes|0\rangle\langle0|^{\otimes J^c}$.
Then $\sigma$ is fully separable and
$\sigma_{J_0}=(\eta_J)_{J_0}$ for every $J_0\subseteq J$.

For a $p$-element subset $J_0\subseteq J$, set
$A_{J_0}:=\rho_{J_0}-\sigma_{J_0}$. Pauli orthogonality gives
\begin{align*}
    \sum_{a\in\{X,Y,Z\}^p}
    \left|
        \Tr\left[(a_1\otimes\cdots\otimes a_p)A_{J_0}\right]
    \right|^2
    &\le
    \sum_{P\in\{I,X,Y,Z\}^{\otimes p}}
        |\Tr(PA_{J_0})|^2 \\
    &=2^p\Tr(A_{J_0}^2)
    \le 2^p\|A_{J_0}\|_1^2.
\end{align*}
The equality follows from Parseval's identity, since the
normalized Pauli strings $2^{-p/2}P$ form an orthonormal basis
for the Hilbert--Schmidt inner product. This gives a factor $2^p$ instead of the $3^p$ obtained by
bounding each Pauli expectation separately, which is important for the large-$p$ ground-state comparison. Consequently,
\begin{align*}
    \|\mathsf M_J(f_\rho-f_\sigma)\|_{n,p}^2
    &\le
    \frac{2^p}{\binom np}
    \sum_{\substack{J_0\subseteq J\\|J_0|=p}}
        \|\rho_{J_0}-(\eta_J)_{J_0}\|_1^2 \\
    &\le
    C\kappa_{\mathcal M}^2
    \frac{2^pp^3}{m_\star+1}
    \frac{\binom{|J|}{p}}{\binom np} \\
    &\le
    16C\frac{2^pp^3}{m_\star+1},
\end{align*}
where the second inequality uses
\eqref{eq:measurement-conditioned-trace-error}.
Taking square roots and absorbing the universal factor
$4\sqrt C$ into $C_0$ proves \eqref{eq:definetti-canonical}.
\end{proof}

A quantitative sufficient condition for
Definition~\ref{def:conditioning-benchmark} is the following.
For every $n\ge p\ge2$, every integer $0\le m_\star\le n-p$,
and every $\rho\in\mathcal C_n$, one can choose
$J\subseteq[n]$ and $\sigma\in\operatorname{conv}(\mathcal B_n)$ with
\begin{equation}
\label{eq:conditioning-benchmark-sufficient}
    |J^c|\le m_\star,
    \qquad
    \|\mathsf M_J(f_\rho-f_\sigma)\|_{n,p}
    \le
    \frac{A_p}{\sqrt{m_\star+1}},
\end{equation}
where $A_p<\infty$ is independent of $n$, $m_\star$, and $\rho$.
Indeed, for fixed $\alpha\in(0,1)$, take
$m_\star=\lfloor\alpha n\rfloor$, which is at most $n-p$
for all sufficiently large $n$. The squared canonical error is then
at most $A_p^2/(\alpha n)=O_{p,\alpha}(n^{-1})$, uniformly over
$\rho\in\mathcal C_n$, as required by the definition.

Corollary~\ref{cor:conditioning-canonical} gives this condition with
$\mathcal B_n=\Prod_n$ and $A_p=C_0 2^{p/2}p^{3/2}$.
Thus $\Prod_n$ is a conditioning benchmark for \textit{every} state class.

\subsection{Uniform energy comparison}
For a subset $J \subseteq [n]$, we now bound the operator norm of the sum of Hamiltonian terms whose supports intersect $J^c$. As a reminder, we decompose
$H_{n,p} = H_J+H_{\mathrm{hit}(J^c)}$,
where
\[
    H_J
    :=
    \binom np^{-1/2}
    \sum_{\substack{P\in\mathcal P_{n,p}\\\operatorname{supp}(P)\subseteq J}}
    g_P P,
    \qquad
    H_{\mathrm{hit}(J^c)}
    :=
    H_{n,p}-H_J.
\]
For an integer $0\le m_{\star}\le n$, write $\mathcal J_{n,m_{\star}}
    :=
    \{
        J\subseteq[n]:
        |J^c|\le m_{\star}
    \}$.
\label{subsec:conditioning-to-energy}
\begin{lemma}[Uniform cost of a conditioning budget]
\label{lem:revmag-hit-bound}
There is a universal constant $C<\infty$ such that, for every $n,p\ge2$ and
$0\le m_{\star}\le n-p$,
\begin{equation}
\label{eq:variable-budget-hit-bound}
    \E
    \max_{J\in\mathcal J_{n,m_{\star}}}
    \|H_{\mathrm{hit}(J^c)}\|_{\mathrm{op}}
    \le
    C\,3^{p/2}\sqrt{pm_{\star}}.
\end{equation}
\end{lemma}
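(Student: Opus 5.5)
The plan is to reduce the maximum over $J$ to a maximum over sets of exact size and then combine a matrix-Gaussian-series bound with a union bound. If $J^c\subseteq S$, the terms hitting $J^c$ are a subset of those hitting $S$, but operator norm is not monotone under taking subsets of terms. So we do not rely on monotonicity: we bound each $\|H_{\mathrm{hit}(J^c)}\|_{\mathrm{op}}$ separately for every $J\in\mathcal J_{n,m_\star}$.

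For a fixed set $S=J^c$, $H_{\mathrm{hit}(S)}=\sum_P \gamma_P g_P P$ with $\gamma_P=\binom np^{-1/2}$, summed over $P$ with $\operatorname{supp}(P)\cap S\ne\emptyset$. This is a Gaussian series with Hermitian coefficients and $P^2=I$. Its variance parameter is
\[
\sigma_S^2=\Big\|\sum_P\gamma_P^2P^2\Big\|_{\mathrm{op}}=\frac{3^p N_S}{\binom np},
\]
where $N_S$ is the number of $p$-subsets meeting $S$. Since $N_S\le |S|\binom{n-1}{p-1}$, we get $\sigma_S^2\le 3^p p|S|/n\le 3^p p m_\star/n$. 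The Gaussian matrix series tail bound \cite[Theorem~4.1]{tropp2012user} for dimension $d=2^n$ then gives
\[
\P\big(\|H_{\mathrm{hit}(S)}\|_{\mathrm{op}}\ge u\big)\le 2\cdot 2^n e^{-u^2/(2\sigma_S^2)}.
\]

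Next we take a union bound over the $\sum_{k\le m_\star}\binom nk\le 2^n$ sets $S$. The total prefactor is $2\cdot 4^n$, so the maximum has tail $\P(\max\ge u)\le 2\cdot4^ne^{-u^2/(2\sigma^2)}$ with $\sigma^2=3^ppm_\star/n$. Integrating this tail yields
\[
\E\max_S\|H_{\mathrm{hit}(S)}\|_{\mathrm{op}}\le C\sigma\sqrt{n}=C\,3^{p/2}\sqrt{pm_\star}.
\]
The $\log(4^n)\approx n$ coming from the union bound and the matrix dimension exactly cancels the $1/n$ in $\sigma^2$. For $m_\star=0$ the only set is $S=\emptyset$, for which $H_{\mathrm{hit}}=0$, so the bound holds trivially.

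The step that needs the most care is the variance computation: checking that $N_S/\binom np\le p|S|/n$, which follows from $\binom{n-1}{p-1}/\binom np=p/n$. One must also confirm that the dimension factor $2^n$ from Tropp's bound and the entropy $n\log 2$ from the union bound each contribute only $O(n)$ to the exponent, so that the $m_\star$-dependence enters solely through $\sigma$.
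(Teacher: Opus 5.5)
Your proof is correct and follows the paper's argument: the same variance bound $\sigma_S^2\le 3^p p\,m_\star/n$ combined with Tropp's Gaussian matrix-series estimate. The paper packages your union bound as a block-diagonal dilation $\bigoplus_{J} H_{\mathrm{hit}(J^c)}$ of dimension at most $2^{2n}$ and then applies the expectation form of Tropp's theorem directly. This is equivalent to your tail-plus-union-bound integration, since the dilation's dimension factor is exactly your union-bound prefactor.
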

\begin{proof}
For $S\subseteq[n]$, $|S|\le m_{\star}$, write
\[
    H_{\mathrm{hit}(S)}
    =
    \sum_{P\in\mathcal P_{n,p}}g_PA_{P,S},
    \qquad
    A_{P,S}
    :=
    \binom np^{-1/2}
    P\,\mathbf 1_{\{\operatorname{supp}(P)\cap S\ne\emptyset\}}.
\]
Since $P^2=I$,
\begin{align*}
    \bigg\|\sum_{P\in\mathcal P_{n,p}}A_{P,S}^2
    \bigg\|_{\mathrm{op}}
    &=
    3^p
    \frac{
        \#\{J_0\in\binom{[n]}p: J_0\cap S\ne\emptyset\}
    }{
        \binom np
    }     
    \le
    3^pp\frac{|S|}{n}
    \le
    3^pp\frac{m_{\star}}{n}.
\end{align*}
To control all admissible sets $J$ simultaneously, we use a
block-diagonal dilation.  Namely, define the block-diagonal Gaussian series
\[
    \mathbf H_{m_{\star}}
    :=
    \bigoplus_{J\in\mathcal J_{n,m_{\star}}}
    H_{\mathrm{hit}(J^c)},
\]
which is simply the block-diagonal matrix having one block
$H_{\mathrm{hit}(J^c)}$ for each admissible $J$.  The operator norm
of a block-diagonal matrix is the largest operator norm of its blocks,
so
$\|\mathbf H_{m_{\star}}\|_{\mathrm{op}}
=
\max_{J\in\mathcal J_{n,m_{\star}}}
\|H_{\mathrm{hit}(J^c)}\|_{\mathrm{op}}$. Since  $|\mathcal J_{n,m_{\star}}|\le2^n$, its dimension is
\[
    d_{m_{\star}}
    =
    2^n|\mathcal J_{n,m_{\star}}|
    \le
    2^{2n}.
\]
Set
$\mathbf A_P:=\bigoplus_{J\in\mathcal J_{n,m_\star}}A_{P,J^c}$.
Then $\mathbf H_{m_\star}=\sum_Pg_P\mathbf A_P$ is a Gaussian
matrix series with independent coefficients $g_P$, shared
by all blocks. Its variance parameter satisfies
\[
    \big\|\sum_P\mathbf A_P^2\big\|_{\mathrm{op}}
    \le
    3^pp\frac{m_\star}{n}.
\]
The Gaussian matrix-series estimate
\cite[Theorem~4.1 and Eq.~(4.9)]{tropp2012user} therefore gives
\[
    \E\|\mathbf H_{m_\star}\|_{\mathrm{op}}
    \le
    C\sqrt{\log(2d_{m_\star})}\,
    3^{p/2}\sqrt{\frac{pm_\star}{n}}
    \le
    C\,3^{p/2}\sqrt{pm_\star}.
\]
\end{proof}

The following lemma converts local-profile complexity
into an energy bound relative to a benchmark class.  It shows that the
possible advantage over the benchmark is controlled by the Gaussian width
of the approximation residual together with the contribution of interactions
whose supports intersect the excluded qubits.

\begin{lemma}[Profile entropy and benchmark comparison]
\label{lem:profile-entropy-comparison}
Fix $n\ge p\ge2$.  Let $\mathcal C_n$ and $\mathcal B_n$ be classes
of $n$-qubit states, let $0\le m_{\star}\le n-p$ be an integer, and let $r>0$.
Suppose that, for every $\rho\in\mathcal C_n$, there exist
$J\in\mathcal J_{n,m_{\star}}$ and
$\sigma\in\operatorname{conv}(\mathcal B_n)$ such that $\|\mathsf M_J(f_\rho-f_\sigma)\|_{n,p}\le r$. Then
\begin{align}
\label{eq:profile-entropy-comparison}
    \E\Delta^{\mathrm{ref}}_{n,p}(\mathcal C_n;\mathcal B_n)
    \le{}&
    C
    \int_0^{2r}
    \sqrt{
        \log\left(
            2|\mathcal J_{n,m_{\star}}|
            N\left(
                \varepsilon,
                \mathcal F_{n,p}(\mathcal C_n),
                \|\cdot\|_{n,p}
            \right)
        \right)
    }
    \,d\varepsilon
    +
    C\,3^{p/2}\sqrt{pm_{\star}}.
\end{align}
Moreover, for every $t>0$,
\begin{equation}
\label{eq:profile-gap-concentration}
    \P\left(
        \Delta^{\mathrm{ref}}_{n,p}(\mathcal C_n;\mathcal B_n)
        \ge
        \E\Delta^{\mathrm{ref}}_{n,p}(\mathcal C_n;\mathcal B_n)+t
    \right)
    \le
    \exp\left(-\frac{t^2}{8\cdot3^p}\right).
\end{equation}
\end{lemma}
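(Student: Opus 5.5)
We follow the decomposition sketched in the overview, but choose the approximating profile so that the residual map is distance-nonincreasing. For each $J\in\mathcal J_{n,m_\star}$, let $K_J$ be the closure of $\{\mathsf M_Jf_\sigma:\sigma\in\operatorname{conv}(\mathcal B_n)\}$. This is a compact convex set in the finite-dimensional Hilbert space $(\mathbb R^{\mathcal P_{n,p}},\|\cdot\|_{n,p})$, since convex hulls of bounded sets in finite dimension have compact closure. Let $\Pi_J$ denote the metric projection onto $K_J$. Assign to each $\rho$ one admissible $J_\rho$ given by the hypothesis; the choice is made deterministically, independently of the disorder. Set $h_\rho:=\mathsf M_{J_\rho}f_\rho-\Pi_{J_\rho}(\mathsf M_{J_\rho}f_\rho)$.

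By the hypothesis, $\operatorname{dist}(\mathsf M_{J_\rho}f_\rho,K_{J_\rho})\le r$, so $\|h_\rho\|_{n,p}\le r$. The map $x\mapsto x-\Pi_J x$ is $1$-Lipschitz, being the projection onto the normal cone complement (firm nonexpansiveness of $\Pi_J$). The map $\mathsf M_J$ is also a contraction. Together these give $\|h_\rho-h_{\rho'}\|_{n,p}\le\|f_\rho-f_{\rho'}\|_{n,p}$ whenever $J_\rho=J_{\rho'}$.

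Pick $k_\rho\in K_{J_\rho}$ equal to the projection. Choosing near-optimal $\sigma$ with $\mathsf M_{J_\rho}f_\sigma$ within $\delta$ of $k_\rho$, and letting $\delta\to0$ at the end, we write
\[
\Tr(H\rho)=X_{f_\sigma}+X_{\mathsf M_J(f_\rho-f_\sigma)}+\Tr\bigl(H_{\mathrm{hit}(J^c)}(\rho-\sigma)\bigr).
\]
The first term is at most $\mathcal E_{n,p}(\mathcal B_n)$ by linearity. The third is at most $2\max_{J}\|H_{\mathrm{hit}(J^c)}\|_{\mathrm{op}}$. The middle term is $X_{h_\rho}+X_{k_\rho-\mathsf M_Jf_\sigma}$, and the error $X_{k_\rho-\mathsf M_Jf_\sigma}$ is bounded by $\delta$ times a fixed integrable random variable, namely $\binom np^{-1/2}(\sum_Pg_P^2)^{1/2}$ times the normalization, and so vanishes. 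Since $[\cdot]_+$ is handled by adjoining $0$ to the index set, we obtain
\[
\Delta^{\mathrm{ref}}\le\sup_{h\in\mathcal R_n}X_h+2\max_J\|H_{\mathrm{hit}(J^c)}\|_{\mathrm{op}}, \qquad \mathcal R_n:=\{h_\rho\}\cup\{0\}.
\]

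Next apply Dudley's inequality to the process $(X_h)_{h\in\mathcal R_n}$, whose canonical metric is exactly $\|\cdot\|_{n,p}$ and whose diameter is at most $2r$. Split $\mathcal R_n=\bigcup_J\mathcal R_{n,J}$. By the Lipschitz property, an $\varepsilon$-cover of $\mathcal F_{n,p}(\mathcal C_n)$ maps to an $\varepsilon$-cover of each $\mathcal R_{n,J}$. This requires centers inside the class, which costs a factor $2$ in radius; we absorb it by rescaling $\varepsilon$, or use the convention that centers may be taken at images of points within $\varepsilon$ of the class. Adding the ball at $0$, we get $N(\varepsilon,\mathcal R_n)\le 2|\mathcal J_{n,m_\star}|N(\cdot)$. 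The hit term is controlled by Lemma~\ref{lem:revmag-hit-bound}. Measurability of suprema is handled by finite-dimensionality and separability.

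For the concentration bound~\eqref{eq:profile-gap-concentration}, view $\Delta^{\mathrm{ref}}$ as a function of $g=(g_P)$. Both $\mathcal E(\mathcal C_n)$ and $\mathcal E(\mathcal B_n)$ are suprema of linear functionals $X_f$ with Lipschitz constant $\|f\|_{n,p}\le\sqrt{3^p}$, since $|f(P)|\le1$ and there are $3^p\binom np$ coordinates. Their difference is therefore $2\cdot3^{p/2}$-Lipschitz, and $[\cdot]_+$ preserves this. Gaussian concentration then gives $\exp(-t^2/(2\cdot4\cdot3^p))$, as claimed.

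The main obstacle is the bookkeeping in the covering step: justifying that covers of the profile class transfer to the residual classes with only the stated constant loss, and handling the closure and near-optimal $\sigma$ limit cleanly.
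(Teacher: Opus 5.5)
Your proposal is correct and follows the paper's proof essentially step for step. The shared steps are:
\begin{itemize}
\item metric projection onto the compact convex set $\mathcal K_J$, which makes the residual map $\mathsf M_J f-\Pi_J(\mathsf M_J f)$ nonexpansive;
\item the three-term energy decomposition, bounded by $\mathcal E_{n,p}(\mathcal B_n)$ plus $2\max_J\|H_{\mathrm{hit}(J^c)}\|_{\mathrm{op}}$;
\item transfer of profile covers through the residual maps, with one extra ball at zero;
\item Dudley's inequality together with Lemma~\ref{lem:revmag-hit-bound};
\item the $2\cdot 3^{p/2}$-Lipschitz Gaussian concentration bound.
\end{itemize}
The differences are cosmetic. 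The paper realizes the projected point exactly by some $\sigma\in\overline{\operatorname{conv}}(\mathcal B_n)$, with the closure taken among density matrices, rather than through your $\delta\to0$ limit. The paper also works with external covers, applying the residual maps (defined on the whole profile space) to arbitrary centers, so your factor-$2$ adjustment for centers is simply absorbed into $C$, as you say.
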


\begin{proof}
For each $J\in\mathcal J_{n,m_{\star}}$, define $\mathcal K_J
    :=
    \left\{
        \mathsf M_Jf_\sigma:
        \sigma\in\overline{\operatorname{conv}}(\mathcal B_n)
    \right\}$.
This is a nonempty compact convex set.  The closure is taken in the
finite-dimensional space of density matrices and does not change the
supremum of any linear energy functional.  Let $\Pi_J$ denote metric
projection onto $\mathcal K_J$ in the canonical norm, and set $\mathsf{R}_J(f)
    :=
    \mathsf M_Jf-\Pi_J(\mathsf M_Jf)$.

The dependence of these objects on $\mathcal B_n$ is suppressed in the
notation.  By the approximation hypothesis, every $\rho\in\mathcal C_n$
admits some $J\in\mathcal J_{n,m_{\star}}$ for which
\[
    \|\mathsf{R}_J(f_\rho)\|_{n,p}
    =\operatorname{dist}(\mathsf M_Jf_\rho,\mathcal K_J)
    \le r.
\]
Define the class of admissible residuals
\[
    \mathfrak R_{n,p,m_{\star}}(\mathcal C_n)
    :=
    \left\{
        \mathsf{R}_J(f_\rho):
        \rho\in\mathcal C_n,
        \ J\in\mathcal J_{n,m_{\star}},
        \ \|\mathsf{R}_J(f_\rho)\|_{n,p}\le r
    \right\} \cup \{0\},
\]
where the zero profile is included to ensure that suprema of the associated Gaussian process are nonnegative, as required to bound \(\Delta^{\mathrm{ref}}_{n,p}\) from above. The benchmark and $r$ are fixed throughout the proof.

Fix $\rho\in\mathcal C_n$ and choose a corresponding set $J$.  Let
\[
    k_J(\rho)
    :=
    \Pi_J(\mathsf M_Jf_\rho),
    \qquad
    r_J(\rho)
    :=
    \mathsf{R}_J(f_\rho).
\]
Then
\[
    \Tr(H_{n,p}\rho)
    =
    X_{k_J(\rho)}
    +
    X_{r_J(\rho)}
    +
    \Tr(H_{\mathrm{hit}(J^c)}\rho).
\]
Because $k_J(\rho)\in\mathcal K_J$, there exists
$\sigma\in\overline{\operatorname{conv}}(\mathcal B_n)$ such that
$\mathsf M_Jf_\sigma=k_J(\rho)$.  Hence
\begin{align*}
    X_{k_J(\rho)}
    &=
    \Tr(H_J\sigma)                                                   \\
    &=
    \Tr(H_{n,p}\sigma)
    -
    \Tr(H_{\mathrm{hit}(J^c)}\sigma)                                \\
    &\le
    \mathcal{E}_{n,p}(\mathcal B_n)
    +
    \|H_{\mathrm{hit}(J^c)}\|_{\mathrm{op}}.
\end{align*}
Here we used linearity and continuity of the energy on the
closed convex hull of $\mathcal B_n$, together with the bounds
$-\Tr(H_{\mathrm{hit}(J^c)}\sigma)\le
\|H_{\mathrm{hit}(J^c)}\|_{\mathrm{op}}$ and
$\Tr(H_{\mathrm{hit}(J^c)}\rho)\le
\|H_{\mathrm{hit}(J^c)}\|_{\mathrm{op}}$.
Consequently, pointwise in the Gaussian disorder,
\begin{equation}
\label{eq:profile-pointwise-comparison}
    \Delta^{\mathrm{ref}}_{n,p}(\mathcal C_n;\mathcal B_n)
    \le
    \sup_{r'\in\mathfrak R_{n,p,m_{\star}}(\mathcal C_n)}X_{r'}
    +
    2
    \max_{J\in\mathcal J_{n,m_{\star}}}
    \|H_{\mathrm{hit}(J^c)}\|_{\mathrm{op}}.
\end{equation}

We next cover the residual class. Since metric projections
onto closed convex subsets of a Hilbert space are firmly
nonexpansive, $I-\Pi_J$ is nonexpansive
\cite[Fact~1.1]{bauschke2012firmly}. Moreover, $\mathsf M_J$ is an orthogonal coordinate projection,
so
\begin{align}
\label{eq:residual-map-nonexpansive}
    \|\mathsf{R}_J(f)-\mathsf{R}_J(g)\|_{n,p}
    &\le
    \|\mathsf M_Jf-\mathsf M_Jg\|_{n,p}
    \le
    \|f-g\|_{n,p}.
\end{align}
Map an $\varepsilon$-cover of the profile class through each
residual map. These maps are defined on the entire profile
space and are nonexpansive, so the resulting balls cover
all admissible residuals at radius $\varepsilon$.
Taking their union and adding a ball centered at zero gives
\[
    N\left(
        \varepsilon,
        \mathfrak R_{n,p,m_{\star}}(\mathcal C_n),
        \|\cdot\|_{n,p}
    \right)
    \le
    1
    +
    |\mathcal J_{n,m_{\star}}|
    N\left(
        \varepsilon,
        \mathcal F_{n,p}(\mathcal C_n),
        \|\cdot\|_{n,p}
    \right).
\]
The residual class contains zero and lies in the canonical ball of radius
$r$. Dudley's entropy inequality therefore implies
\begin{align*}
    \E
    \sup_{r'\in\mathfrak R_{n,p,m_{\star}}(\mathcal C_n)}X_{r'}
    \le
    C
    \int_0^{2r}
    \sqrt{
        \log\left(
            2|\mathcal J_{n,m_{\star}}|
            N\left(
                \varepsilon,
                \mathcal F_{n,p}(\mathcal C_n),
                \|\cdot\|_{n,p}
            \right)
        \right)
    }
    \,d\varepsilon.
\end{align*}
Taking expectations in \eqref{eq:profile-pointwise-comparison} and applying
Lemma~\ref{lem:revmag-hit-bound} proves
\eqref{eq:profile-entropy-comparison}.

To show the concentration result, see
\cite[Section 3.2]{anschuetz2025bounds} or note that for every
state $\rho$, the coefficient vector of the Gaussian functional
$g\mapsto\Tr(H_{n,p}\rho)$ satisfies
\[
    \frac1{\binom np}
    \sum_{P\in\mathcal P_{n,p}}
    |\Tr(P\rho)|^2
    \le
    3^p.
\]
Thus both $\mathcal{E}_{n,p}(\mathcal C_n)$ and
$\mathcal{E}_{n,p}(\mathcal B_n)$ are
$3^{p/2}$-Lipschitz functions of the standard Gaussian coefficient vector.
Their difference is $2\cdot3^{p/2}$-Lipschitz, and taking the positive part
does not increase the Lipschitz constant.
\eqref{eq:profile-gap-concentration} therefore follows by Gaussian concentration.
\end{proof}

For $\mathcal B_n=\Prod_n$, Corollary~\ref{cor:conditioning-canonical}
verifies the approximation hypothesis with $r=r_{p,m_{\star}}$.  This recovers the
universal product-state comparison used below.  For a general conditioning
benchmark, Definition~\ref{def:conditioning-benchmark} verifies the
hypothesis with $m_{\star}=\lfloor\alpha n\rfloor$ and
$r=A_{p,\alpha}/\sqrt n$ for each fixed $p,\alpha$ and all sufficiently
large $n$.

\subsection{Ground-state separation at quadratic profile complexity}

We now apply Lemma~\ref{lem:profile-entropy-comparison} at the quadratic
profile-complexity scale.  The resulting estimate gives a quantitative
upper bound on the energy attainable by a state class in terms of its
effective profile complexity.  Comparing this bound with the
ground-state energy will yield
Theorem~\ref{thm:local-profile-complexity-principle}\textup{(i)}.

\begin{corollary}[Quadratic profile complexity]
\label{cor:quadratic-profile-complexity}
Fix $p\ge2$, and let $\mathcal C_n$ be a class of $n$-qubit states.
Suppose that there exist sequences $a_n,b_n\ge0$ and $B_n\ge1$ such
that, for every $0<\varepsilon\le1$,
\[
    \log
    N\left(
        \varepsilon,
        \mathcal F_{n,p}(\mathcal C_n),
        \|\cdot\|_{n,p}
    \right)
    \le
    a_n
    +
    b_n\log\left(
        \frac{B_n}{\varepsilon}
    \right),
\]
and that, for some $A>0$,
\[
    \limsup_{n\to\infty}
    \frac{
        a_n+b_n\log(eB_nn)
    }{
        n^2
    }
    \le
    A.
\]
Then, for every fixed $0<\alpha<1/2$,
\[
    \limsup_{n\to\infty}
    \frac{
        \E\Delta^{\mathrm{ref}}_{n,p}(\mathcal C_n;\Prod_n)
    }{
        \sqrt n
    }
    \le
    C\left[
        \frac{
            2^{p/2}p^{3/2}\sqrt A
        }{
            \sqrt\alpha
        }
        +
        3^{p/2}\sqrt{p\alpha}
    \right].
\]
Moreover, for every $\varepsilon>0$, with probability tending to one,
\[
    \mathcal{E}_{n,p}(\mathcal C_n)
    \le
    \mathcal{E}_{n,p}(\Prod_n)
    +
    \left\{
        C\left[
            \frac{
                2^{p/2}p^{3/2}\sqrt A
            }{
                \sqrt\alpha
            }
            +
            3^{p/2}\sqrt{p\alpha}
        \right]
        +
        \varepsilon
    \right\}\sqrt n.
\]
\end{corollary}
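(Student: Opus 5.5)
We apply Lemma~\ref{lem:profile-entropy-comparison} with $\mathcal B_n=\Prod_n$, $m_\star=\lfloor\alpha n\rfloor$ (which is at most $n-p$ for large $n$ since $\alpha<1/2$), and $r=r_{p,m_\star}=C_02^{p/2}p^{3/2}/\sqrt{m_\star+1}$. The approximation hypothesis is supplied by Corollary~\ref{cor:conditioning-canonical}, since fully separable states lie in $\operatorname{conv}(\Prod_n)$. The second term of \eqref{eq:profile-entropy-comparison} is then at most $C3^{p/2}\sqrt{p\alpha n}$, and after dividing by $\sqrt n$ it gives the term $C3^{p/2}\sqrt{p\alpha}$ directly.

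For the Dudley term we use $|\mathcal J_{n,m_\star}|\le 2^n$ and the entropy hypothesis. This is legitimate because $2r\le 1$ for large $n$: $r=O_{p,\alpha}(n^{-1/2})$, so $\varepsilon\in(0,2r]$ lies in the allowed range. We then need to bound
\[
\int_0^{2r}\sqrt{(n+1)\log 2+a_n+b_n\log(B_n/\varepsilon)}\,d\varepsilon .
\]
Using $\sqrt{x+y}\le\sqrt x+\sqrt y$, this is at most
\[
2r\sqrt{(n+1)\log2+a_n}+\sqrt{b_n}\int_0^{2r}\sqrt{\log(B_n/\varepsilon)}\,d\varepsilon .
\]
For the remaining integral, the substitution or the elementary bound $\int_0^\delta\sqrt{\log(B/\varepsilon)}\,d\varepsilon\le \delta\sqrt{\log(eB/\delta)}$, valid for $B\ge1$ and $\delta\le 1$, yields $2r\sqrt{b_n\log(eB_n/(2r))}$. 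Since $1/(2r)\lesssim_{p,\alpha}\sqrt n\le n$ for large $n$, we get $\log(eB_n/(2r))\le\log(eB_nn)$ up to an additive $O_{p,\alpha}(1)$, and this constant is absorbed by $b_n\log(eB_n n)$ because $\log(eB_nn)\ge1$. Combining the two pieces gives the Dudley term
\[
\lesssim r\sqrt{n+\Gamma_n},\qquad \Gamma_n:=a_n+b_n\log(eB_nn).
\]

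Dividing by $\sqrt n$ and substituting $r\le C_02^{p/2}p^{3/2}/\sqrt{\alpha n}$ gives
\[
C\,2^{p/2}p^{3/2}\alpha^{-1/2}\sqrt{(n+\Gamma_n)/n^2}.
\]
Its $\limsup$ is at most $C2^{p/2}p^{3/2}\sqrt{A/\alpha}$ because $n/n^2\to0$. This proves the first claim. The main thing to watch is that the $O_{p,\alpha}(1)$ correction in the logarithm and the $n\log2$ term are lower order relative to $n^2$. They are, although one must take care that constants hidden in ``$C$'' stay universal. Any $p$- and $\alpha$-dependent additive pieces are either multiplied by something vanishing or become $o(n^2)$ inside the square root.

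For the high-probability statement, we write $K$ for the bracketed bound. For large $n$, $\E\Delta^{\mathrm{ref}}\le (K+\varepsilon/2)\sqrt n$. We then apply the concentration inequality \eqref{eq:profile-gap-concentration} with $t=\varepsilon\sqrt n/2$, which gives failure probability at most $\exp(-\varepsilon^2n/(32\cdot3^p))\to0$. On the complementary event, $\mathcal E_{n,p}(\mathcal C_n)-\mathcal E_{n,p}(\Prod_n)\le\Delta^{\mathrm{ref}}\le (K+\varepsilon)\sqrt n$, which is the second claim.
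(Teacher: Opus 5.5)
Your proposal is correct and follows the paper's proof essentially step for step. Like the paper, you apply Lemma~\ref{lem:profile-entropy-comparison} with $\mathcal B_n=\Prod_n$, $m_\star=\lfloor\alpha n\rfloor$ and $r=r_{p,m_\star}$ supplied by Corollary~\ref{cor:conditioning-canonical}, bound the Dudley integral by $Cr\sqrt{n+a_n+b_n\log(eB_n/r)}$ using $2r\le1$ and $1/r\le n$ for large $n$, and finish with the concentration inequality \eqref{eq:profile-gap-concentration} at $t=\varepsilon\sqrt n/2$. The additive $O_{p,\alpha}(1)$ correction you carry in the logarithm is unnecessary: $1/(2r)\le n$ eventually already gives $\log(eB_n/(2r))\le\log(eB_nn)$, which is how the paper keeps $C$ universal.
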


\begin{proof}
Fix $0<\alpha<1/2$. For all sufficiently large $n$, apply
Lemma~\ref{lem:profile-entropy-comparison} with benchmark
$\mathcal B_n=\Prod_n$, budget $m_\star:=\lfloor\alpha n\rfloor$,
and radius $r=r_{p,m_\star}$. Then
\[
    r=\frac{C_0 2^{p/2}p^{3/2}}{\sqrt{m_\star+1}},
    \qquad
    \log(2|\mathcal J_{n,m_\star}|)\le(n+1)\log2.
\]
For fixed $p,\alpha$ and all sufficiently large $n$,
$n^{-1}\le r\le1/2$. The assumed entropy bound therefore
bounds the normalized residual contribution by
\begin{align*}
    \frac{C}{\sqrt n}
    \int_0^{2r}
        \sqrt{a_n+b_n\log(B_n/\varepsilon)+Cn}
    \,d\varepsilon
    &\le
    \frac{Cr}{\sqrt n}
        \sqrt{a_n+b_n\log(eB_n/r)+n} \\
    &\le
    C\frac{2^{p/2}p^{3/2}}{\sqrt\alpha}
    \left[
        \frac{\sqrt{a_n+b_n\log(eB_nn)}}{n}
        +\frac1{\sqrt n}
    \right].
\end{align*}
The last inequality uses $m_\star+1\ge\alpha n$ and
$\log(eB_n/r)\le\log(eB_nn)$.
Taking the limsup and using the assumption on $A$ bounds
this contribution by
$C2^{p/2}p^{3/2}\sqrt A/\sqrt\alpha$.

The normalized contribution of interactions intersecting the
conditioned set is at most $C3^{p/2}\sqrt{p\alpha}$.
Lemma~\ref{lem:profile-entropy-comparison} therefore gives the
stated bound on
$\E\Delta^{\mathrm{ref}}_{n,p}(\mathcal C_n;\Prod_n)$.
The high-probability statement follows from
\eqref{eq:profile-gap-concentration}.
\end{proof}

We complete the proof of
Theorem~\ref{thm:local-profile-complexity-principle}\textup{(i)} after
recording the required product-state and unrestricted-energy benchmarks
below.

\subsection{Energy benchmarks}
We next record two energy benchmark estimates for the dense
quantum $p$-spin model that will be used throughout the applications.
The first controls the product-state optimum.
The product-state energy is studied in
\cite[Sections~3.2 and~4.2]{anschuetz2025bounds}, with the normalization
used here.  For completeness, we give a direct proof of the uniform
finite-$n$ bound needed below.

\begin{lemma}[Product-state energy benchmark]
\label{lem:profile-product-benchmark}
There is a universal constant $C<\infty$ such that, for every
$n\ge p\ge2$,
\[
    \E \mathcal{E}_{n,p}(\Prod_n)
    \le
    C\sqrt{n\log(ep)}.
\]
Moreover, for every $t>0$,
\[
    \P\left(
        \mathcal{E}_{n,p}(\Prod_n)
        >
        C\sqrt{n\log(ep)}+t
    \right)
    \le
    \exp\left(-\frac{t^2}{2}\right).
\]
\end{lemma}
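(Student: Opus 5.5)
We need to bound $\E\sup_{\rho\in\Prod_n}\Tr(H_{n,p}\rho)$ by $C\sqrt{n\log(ep)}$. The plan is a union bound over a net of product states, followed by a standard concentration upgrade.

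\emph{Profile of a product state.} For a pure product state $\rho=\bigotimes_i\frac12(I+v_i\cdot\vec\sigma)$ with Bloch vectors $v_i\in S^2$, the profile factorizes: $f_\rho(P_I^a)=\prod_{j}(v_{i_j})_{a_j}$. Hence
\[
\sum_{a}f_\rho(P_I^a)^2=\prod_{i\in I}\|v_i\|_2^2=1,
\]
and the variance of $X_{f_\rho}$ equals exactly $1$. This is the key point: product states have canonical variance $1$, rather than the $3^p$ bound valid for general states. Consequently the energy is a $1$-Lipschitz function of $g$. Indeed, $g\mapsto\sup_{\rho\in\Prod_n}X_{f_\rho}$ is a supremum of linear functionals whose coefficient vectors have Euclidean norm $1$. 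Gaussian concentration then gives the tail bound $\exp(-t^2/2)$ once the expectation bound is established, so it remains to prove the expectation bound.

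\emph{Nets.} Fix $\delta\in(0,1)$ and an $\delta$-net $\mathcal N_\delta$ of $S^2$ of size at most $(3/\delta)^3$, which gives at most $(3/\delta)^{3n}$ net product states. For each net state, $X$ is standard Gaussian, so
\[
\E\max_{\text{net}}X\le\sqrt{2\cdot 3n\log(3/\delta)}.
\]

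\emph{Passing from the net to all product states.} This is the main obstacle: we must bound $X_{f_\rho}-X_{f_{\rho'}}$ uniformly when $\rho'$ is the net point nearest to $\rho$. Write the difference by telescoping the factors one site at a time. Alternatively, use the self-bounding trick: let $\Lambda:=\sup_{\rho\in\Prod_n}X_{f_\rho}$. Changing one site's Bloch vector by $w=v_i-v_i'$ with $\|w\|\le\delta$ changes the energy by a term linear in $w$ at site $i$. Viewed as a function of the remaining sites, that term is a $p$-spin-type form whose maximum over unit vectors is controlled by a comparable supremum. I would instead take the cleaner route of treating the energy as a multilinear form $T(v_1,\dots,v_n)$ in the $3$-vectors. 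Telescoping gives
\[
T(v)-T(v')=\sum_{i}T(v'_1,\dots,v'_{i-1},v_i-v_i',v_{i+1},\dots,v_n).
\]
Each summand is $\|v_i-v_i'\|$ times $T$ evaluated at unit vectors. However, there are $n$ summands, so this naive bound gives $n\delta\Lambda$, which is too lossy.

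The better decomposition is by the number of sites changed: since $T$ is degree $p$ in distinct sites, each Pauli term involves only $p$ sites. Expand $T(v'+w)$ multilinearly. Each term with $k\ge1$ sites from $w$ is a product-form energy scaled by the norms of $w$, and after rescaling it is bounded by $\delta^k$ times a supremum over products of vectors in the unit ball, which by multilinearity equals $\Lambda$. The number of such terms per Pauli is $\binom pk$, but the sum is bounded collectively as
\[
\sum_{k\ge1}\binom pk\delta^k\Lambda=\bigl((1+\delta)^p-1\bigr)\Lambda.
\]
To see this, replace each site vector by $v_i'+w_i$ and expand; the mixed terms are grouped by which set of sites uses $w$. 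The variance-$1$ normalization holds for arbitrary product vectors in the unit ball, since $\sum_a f(P_I^a)^2\le1$. I would verify this grouping carefully, since it is the delicate step.

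\emph{Conclusion.} Choosing $\delta=1/(2p)$ gives $(1+\delta)^p-1\le e^{1/2}-1<2/3$. Hence
\[
\Lambda\le\max_{\text{net}}X+\tfrac23\Lambda,\qquad\text{so}\qquad\Lambda\le3\max_{\text{net}}X.
\]
Taking expectations yields
\[
\E\Lambda\le3\sqrt{6n\log(6p)}\le C\sqrt{n\log(ep)}.
\]
Combined with the $1$-Lipschitz concentration above, this gives the stated tail bound.
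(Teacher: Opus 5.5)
\textbf{Gap: the uniform increment bound.} The weak point is the step controlling $T(v)-T(v')$; the union bound over the net and the $1$-Lipschitz concentration are fine.

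First, $T$ is not multilinear in the site vectors. Pauli terms whose support avoids site $i$ do not depend on $v_i$, so $T$ is only affine in each $v_i$. It is, however, homogeneous of degree $p$ in the concatenated vector $v\in\mathbb R^{3n}$. Your telescoping identity is therefore false as written: the $i$th increment involves only the Pauli terms whose support contains $i$, and is not $T$ evaluated at $(\dots,v_i-v'_i,\dots)$.

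More importantly, the grouping you propose does not give $((1+\delta)^p-1)\Lambda$. Expand $T(v'+w)-T(v')=\sum_{k=1}^p S_k$. In $S_k$, the set $B\subseteq I$ of sites carrying $w$ varies with the Pauli support $I$. So $S_k$ is not the energy of any single configuration in $(B^3)^n$, and you cannot bound it by $\delta^k\Lambda$ with multiplicity $\binom pk$. Grouping instead by a global set $A\subseteq[n]$ of $w$-sites does give product-form objects, but there are $\binom nk$ of them, which brings back $(1+\delta)^n-1$. Writing $S_k=\binom pk\,\mathcal T(w^{\otimes k},{v'}^{\otimes(p-k)})$ and using polarization costs a factor $p^p/p!\approx e^p$. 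That forces $\log(1/\delta)\gtrsim p$ and gives only $\sqrt{np}$.

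\textbf{How to repair it.} Set $\Lambda:=\sup_{(S^2)^n}|T|$. This equals the supremum over $(B^3)^n$, because $|T|$ is convex in each site. If $\|v_i-v'_i\|_2\le\delta$ for all $i$, then $g(s):=T(v'+s(v-v'))$ is a real polynomial of degree at most $p$. Homogeneity gives $|g(s)|\le(1+|s|\delta)^p\Lambda\le e\Lambda$ for $|s|\le R:=1/(p\delta)$. Markov's inequality on $[-R,R]$ then gives $|g'|\le p^2e\Lambda/R=ep^3\delta\Lambda$ on $[0,1]$, hence $|T(v)-T(v')|\le ep^3\delta\Lambda$. With $\delta=1/(2ep^3)$ you get $\Lambda\le 2\max_{\mathrm{net}}|T|$. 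Since $\log(3/\delta)=O(\log(ep))$, the union bound still yields $C\sqrt{n\log(ep)}$.

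The paper avoids any deterministic control of increments. It bounds the canonical metric, $d_{n,p}(f_u,f_v)^2\le\frac pn\sum_i\|u_i-v_i\|_2^2$, using $1-\prod_j c_j\le\sum_j(1-c_j)$. It then applies Dudley's entropy integral with product nets of radius $\varepsilon/\sqrt p$.
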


\begin{proof}
Let $\mathbb S^2:=\{v\in\mathbb R^3:\|v\|_2=1\}$.
Represent a pure product state by
$u=(u_1,\ldots,u_n)\in(\mathbb S^2)^n$, where
$u_i=(u_{i,X},u_{i,Y},u_{i,Z})$, with its $i$th factor equal to
$\frac12(I+u_{i,X}X+u_{i,Y}Y+u_{i,Z}Z)$.
Write $f_u$ for its profile and $X_u:=X_{f_u}$ for its energy.  On a $p$-element support $J_0$, the vector
of Pauli expectations is $\bigotimes_{i\in J_0}u_i$.  Thus $ \|f_u\|_{n,p}^2=\E X_u^2=1$ and 
\[
    d_{n,p}(f_u,f_v)^2
    =\binom np^{-1}\sum_{|J_0|=p}
      2\left(1-\prod_{i\in J_0}(u_i\cdot v_i)\right).
\]
For $c_1,\ldots,c_p\in[-1,1]$, telescoping gives
\[
    1-\prod_{j=1}^p c_j
    =\sum_{j=1}^p(1-c_j)\prod_{\ell<j}c_\ell
    \le\sum_{j=1}^p(1-c_j).
\]
Consequently,
\[
    d_{n,p}(f_u,f_v)^2
    \le\binom np^{-1}\sum_{|J_0|=p}\sum_{i\in J_0}\|u_i-v_i\|_2^2
    =\frac pn\sum_{i=1}^n\|u_i-v_i\|_2^2.
\]
A Euclidean $\delta$-net of $\mathbb S^2$ has cardinality at most $(C/\delta)^C$.  Taking the product of
$n$ such nets with $\delta=\varepsilon/\sqrt p$ therefore gives
\[
    \log N\left(
        \varepsilon,\mathcal F_{n,p}(\Prod_n),\|\cdot\|_{n,p}
    \right)
    \le Cn\log\left(\frac{C\sqrt p}{\varepsilon}\right),
    \qquad 0<\varepsilon\le2.
\]
The canonical diameter is at most two. Dudley's entropy integral gives
\[
    \E \mathcal{E}_{n,p}(\Prod_n)
    \le C\sqrt n\int_0^2
        \sqrt{\log\left(\frac{C\sqrt p}{\varepsilon}\right)}
        \,d\varepsilon
    \le C\sqrt{n\log(ep)}.
\]
Finally, $\|f_u\|_{n,p}=1$ implies that the energy supremum is
one-Lipschitz in the standard Gaussian coefficient vector.  Gaussian
concentration gives the stated tail bound.
\end{proof}

A sharper asymptotic characterization was given recently by
Bhattacharya~\cite[Theorem~1]{bhattacharya2026parisi}, who proves
that, for every fixed $p\ge2$, the limit
\[
    e_{\mathrm{prod}}(p)
    :=
    \lim_{n\to\infty}
    \frac{\E \mathcal{E}_{n,p}(\Prod_n)}{\sqrt n}
\]
exists and is given by a Parisi variational formula.
Moreover, Gaussian concentration also gives
$\mathcal{E}_{n,p}(\Prod_n)/\sqrt n\to e_{\mathrm{prod}}(p)$
in $L^1$ and in probability.
Consequently,
Theorem~\ref{thm:local-profile-complexity-principle}\textup{(ii)}
gives the same limiting energy for any class
$\mathcal C_n\supseteq\Prod_n$ satisfying its entropy assumptions
with $\Gamma_{n,p}=o(n^2)$.

The second benchmark is a lower bound on the unrestricted optimum.
In Corollary~D.2 of \cite{anschuetz2025strongly}, the
spin-glass Hamiltonian is normalized as
$\widetilde H_{n,p}:=(3^p\binom np)^{-1/2}\sum_{P\in\mathcal
P_{n,p}}g_PP$.  Hence $H_{n,p}=3^{p/2}\widetilde H_{n,p}$, and their
bound $\lambda_{\max}(\widetilde H_{n,p})=\Omega(\sqrt n/p)$
gives the following statement in our normalization.

\begin{lemma}[Ground-state energy benchmark]
\label{lem:revmag-ground-benchmark}
There is a universal constant $c_0>0$ such that, for every fixed
$p\ge3$,
\[
    \P\left(
        \frac{1}{\sqrt n}\lambda_{\max}(H_{n,p})
        \ge
        c_0\frac{3^{p/2}}{p}
    \right)
    \longrightarrow1.
\]
\end{lemma}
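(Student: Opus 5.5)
The plan is to obtain the bound by rescaling the cited result and then upgrading an expectation bound to a high-probability statement by Gaussian concentration.

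First I would translate the bound. Corollary~D.2 of \cite{anschuetz2025strongly} concerns $\widetilde H_{n,p}=(3^p\binom np)^{-1/2}\sum_P g_PP$. It gives, for every fixed $p\ge3$, a universal $c_1>0$ with $\E\lambda_{\max}(\widetilde H_{n,p})\ge c_1\sqrt n/p$ for all large $n$. I would check that $c_1$ is genuinely independent of $p$, since the statement needs a universal $c_0$; this is where the restriction $p\ge3$ enters. Since $H_{n,p}=3^{p/2}\widetilde H_{n,p}$ and $\lambda_{\max}$ is positively homogeneous, this yields
\[
    \E\lambda_{\max}(H_{n,p})\ge c_1\frac{3^{p/2}}{p}\sqrt n .
\]
If the reference instead states a high-probability bound, the rescaling alone already proves the lemma with $c_0=c_1$, and the remaining steps are unnecessary.

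Next I would prove concentration. Write $\lambda_{\max}(H_{n,p})=\sup_\rho X_{f_\rho}$, a supremum of linear functionals of the Gaussian vector $g=(g_P)$. The coefficient vector of $\rho$ has squared Euclidean norm $\binom np^{-1}\sum_P\Tr(P\rho)^2$. For each $p$-subset $I$, Parseval (as in the proof of Corollary~\ref{cor:conditioning-canonical}) gives
\[
    \sum_{a}\Tr(P_I^a\rho)^2\le 2^p\Tr(\rho_I^2)\le 2^p .
\]
So $\lambda_{\max}$ is $2^{p/2}$-Lipschitz in $g$; the cruder constant $3^{p/2}$ from Lemma~\ref{lem:profile-entropy-comparison} would also suffice. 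Gaussian concentration then gives
\[
    \P\bigl(\lambda_{\max}\le\E\lambda_{\max}-t\bigr)\le e^{-t^2/2^{p+1}} .
\]

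Finally I would choose $t=\tfrac12 c_1 3^{p/2}\sqrt n/p$. The failure probability is then $\exp(-c\,(9/4)^p n/p^2)\to0$ for fixed $p$. Hence $\lambda_{\max}(H_{n,p})/\sqrt n\ge \tfrac12 c_1 3^{p/2}/p$ with probability tending to one, which is the claim with $c_0=c_1/2$. The main obstacle is bookkeeping in the first step: confirming that the cited corollary's constant is uniform in $p$ and identifying whether it is stated in expectation or with high probability. The concentration step is routine.
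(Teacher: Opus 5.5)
Your proposal is correct and matches the paper's proof, which likewise rescales Corollary~D.2 of \cite{anschuetz2025strongly} via $H_{n,p}=3^{p/2}\widetilde H_{n,p}$ and reads off the stated high-probability bound. Your Gaussian-concentration fallback, with the valid Lipschitz constant $2^{p/2}$, is a harmless safeguard in case the cited bound holds only in expectation; its failure exponent should read $c(3/2)^p n/p^2$ rather than $c(9/4)^p n/p^2$, which does not affect the conclusion for fixed $p$.
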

\subsection{Completing the quadratic
ground-state argument}
We now combine the quadratic profile estimate with the two
energy benchmarks to prove the ground-state conclusion.  The subquadratic
benchmark comparison is treated separately in the following subsection.

\begin{proof}[Proof of
Theorem~\ref{thm:local-profile-complexity-principle}\textup{(i)}] 
Fix $L\ge0$ and an integer $p \ge 3$ satisfying
$\limsup_{n\to\infty}\Gamma_{n,p}/n^2\le L$.
We require $p$ to be sufficiently large depending only on $L$.
Corollary~\ref{cor:quadratic-profile-complexity}, together with
$\E \mathcal{E}_{n,p}(\mathcal C_n)-\E \mathcal{E}_{n,p}(\Prod_n)
\le\E\Delta^{\mathrm{ref}}_{n,p}(\mathcal C_n;\Prod_n)$, gives, for every
$0<\alpha<1/2$, 
\[
    \limsup_{n\to\infty}
    \frac1{\sqrt n}
    \left[
        \E \mathcal{E}_{n,p}(\mathcal C_n)
        -
        \E \mathcal{E}_{n,p}(\Prod_n)
    \right]
    \le
    C\left[
        \frac{2^{p/2}p^{3/2}\sqrt L}{\sqrt\alpha}
        +
        3^{p/2}\sqrt{p\alpha}
    \right].
\]
If $L>0$, set $\alpha_p := \sqrt L\,p (2/3)^{p/2}$.  Then, for sufficiently large $p$,
$\alpha_p<1/2$, and the preceding bound becomes
\[
    \limsup_{n\to\infty}
    \frac1{\sqrt n}
    \left[
        \E \mathcal{E}_{n,p}(\mathcal C_n)
        -
        \E \mathcal{E}_{n,p}(\Prod_n)
    \right]
    \le
    C L^{1/4}p\,6^{p/4}.
\]
For $L=0$, the same bound follows by letting $\alpha\downarrow0$ for
each fixed $p$ in the unoptimized bound.  Together
with Lemma~\ref{lem:profile-product-benchmark}, this gives
\[
    \limsup_{n\to\infty}
    \frac1{\sqrt n}
    \E \mathcal{E}_{n,p}(\mathcal C_n)
    \le
    C\sqrt{\log(ep)}
    +
    C L^{1/4}p\,6^{p/4}.
\]
On the other hand, Lemma~\ref{lem:revmag-ground-benchmark} gives $\lambda_{\max}(H_{n,p})/\sqrt n
    \ge c_0 3^{p/2}/p$ with probability tending to one.  Since
\[
    \frac{
        L^{1/4}p\,6^{p/4}
    }{
        3^{p/2}/p
    }
    =
    L^{1/4}p^2
    \left(\frac23\right)^{p/4}
    \longrightarrow0,
\]
and likewise
\[
    \frac{\sqrt{\log(ep)}}{3^{p/2}/p}
    \longrightarrow0,
\]
the upper bound for $\mathcal C_n$ is a strict fraction of the
ground-state scale for all sufficiently large fixed $p$.

Finally, $\mathcal{E}_{n,p}(\mathcal C_n)$ is
$3^{p/2}$-Lipschitz in the Gaussian coefficients, so for
every fixed $\delta>0$,
\[
    \P\left(
        \mathcal{E}_{n,p}(\mathcal C_n)
        \ge \E\mathcal{E}_{n,p}(\mathcal C_n)+\delta\sqrt n
    \right)
    \le
    \exp\left(-\frac{\delta^2n}{2\cdot3^p}\right).
\]
Taking $\delta$ sufficiently small and combining this
estimate with Lemma~\ref{lem:revmag-ground-benchmark}
by a union bound gives, for every sufficiently large
fixed $p$ depending only on $L$, a constant $c_{p,L}>0$
such that
$\P(\Delta^{\mathrm{opt}}_{n,p}(\mathcal C_n)
\ge c_{p,L}\sqrt n)\to1$.
\end{proof}

\subsection{Benchmark comparison at subquadratic profile complexity}
\label{subsec:subquadratic-benchmark-comparison}

In the subquadratic regime, the expected supremum of the residual Gaussian process is
$o(\sqrt n)$.  The comparison lemma therefore shows that $\mathcal C_n$
cannot outperform any conditioning benchmark $\mathcal B_n$ at the leading
$\sqrt n$ scale.  Unlike part~\textup{(i)}, this argument neither fixes
$\mathcal B_n=\Prod_n$ nor invokes the unrestricted ground-state benchmark.

\begin{proof}[Proof of
Theorem~\ref{thm:local-profile-complexity-principle}\textup{(ii)}]
Fix $p\ge2$, and let $\mathcal B_n$ be a conditioning
benchmark for $\mathcal C_n$.  Fix $0<\alpha<1/2$.
Definition~\ref{def:conditioning-benchmark} gives a constant
$A_{p,\alpha}$, which we may assume is at least one, and the approximation
hypothesis of Lemma~\ref{lem:profile-entropy-comparison} with $m_{\star}:=\lfloor\alpha n\rfloor$, and $r_n:=A_{p,\alpha}/ \sqrt n$, for all sufficiently large $n$.  In particular, $m_{\star}\le n-p$ and
$2r_n<1$ once $n$ is sufficiently large.

The entropy assumption and $|\mathcal J_{n,m_{\star}}|\le2^n$ imply
\begin{align*}
    &\int_0^{2r_n}
    \sqrt{
        \log\left(
            2|\mathcal J_{n,m_{\star}}|
            {N\left(
                \varepsilon,
                \mathcal F_{n,p}(\mathcal C_n),
                \|\cdot\|_{n,p}
            \right)}
        \right)
    }
    \,d\varepsilon
    \\
    &\qquad\le
    C r_n
    \sqrt{
        n+a_{n,p}
        +b_{n,p}\log\left(\frac{eB_{n,p}}{r_n}\right)
    }
    \le
    C\frac{A_{p,\alpha}}{\sqrt n}
    \sqrt{n+\Gamma_{n,p}}.
\end{align*}
For the first inequality, substitute $\varepsilon=2r_nu$ and use
$\int_0^1\sqrt{\log(1/u)}\,du<\infty$.  For the second, use
$A_{p,\alpha}\ge1$ and
$\log(eB_{n,p}/r_n)\le\log(eB_{n,p}n)$.
Consequently,
\[
    \frac{
        \E\Delta^{\mathrm{ref}}_{n,p}(\mathcal C_n;\mathcal B_n)
    }{\sqrt n}
    \le
    C A_{p,\alpha}\frac{\sqrt{n+\Gamma_{n,p}}}{n}
    +
    C3^{p/2}\sqrt{p\alpha}.
\]
Since $\Gamma_{n,p}=o(n^2)$, the first term tends to zero for each
fixed $p$ and $\alpha$.  It follows that
\[
    \limsup_{n\to\infty}
    \frac{
        \E\Delta^{\mathrm{ref}}_{n,p}(\mathcal C_n;\mathcal B_n)
    }{\sqrt n}
    \le
    C3^{p/2}\sqrt{p\alpha}.
\]
This holds for every $0<\alpha<1/2$.  Letting $\alpha\downarrow0$
proves convergence of the normalized expectation to zero.  Because
$\Delta^{\mathrm{ref}}_{n,p}(\mathcal C_n;\mathcal B_n)\ge0$, this is convergence in
$L^1$, and convergence in probability follows from Markov's inequality.
No uniform bound on $A_{p,\alpha}$ as $\alpha\downarrow0$ is needed,
since $n\to\infty$ is taken first.
\end{proof}

\section{Circuits and structured state classes}
\label{sec:structured-classes}
Throughout this section, parameters without subscript such
as $M,G,D,b,$ and $\chi$ are used in fixed-$n$ definitions, while
$M_n,G_n,D_n,b_n,$ and $\chi_n$ denote sequences in asymptotic statements.
For $d\ge1$, let $I_d$ denote the $d\times d$ identity matrix.  Write
$\mathsf{U}(d):=
\{U\in\mathbb C^{d\times d}:U^\dagger U=UU^\dagger=I_d\}$
for the unitary group.

\subsection{Circuit size}
\label{sec:gate-count}

Our first application of
Theorem~\ref{thm:local-profile-complexity-principle} is a lower bound in
terms of total circuit size.  This is perhaps the most direct illustration
of the local-profile complexity principle as we place no restriction on the
depth or geometry of the circuit, and allow an arbitrary
number of discardable ancillas.  

Throughout, the \emph{gate count} of a quantum circuit is the total number of one- and
two-qubit gates appearing in the circuit.

\begin{definition}
For integers $M\ge n$ and $G\ge0$, let
$\mathsf{Circ}^{(\le G)}_M$ denote the class of $M$-qubit circuits of
gate count at most $G$, with arbitrary one- and two-qubit gates and no
restriction on their arrangement.  We define
\[
    \mathsf{Gate}^{[G],\mathrm{anc}}_n
    :=
    \bigcup_{M\ge n}
    \left\{
        \Tr_{\{n+1,\ldots,M\}}
        \left(
            Q|0^M\rangle\langle0^M|Q^\dagger
        \right)
        :
        Q\in\mathsf{Circ}^{(\le G)}_M
    \right\}
\]
to be the class of all $n$-qubit states obtainable using at most $G$
one- and two-qubit gates, with an arbitrary number of discardable ancillas.
\end{definition}

The following result removes the dependence on the ambient ancilla
workspace.  It is the gate-count analogue of the backward-lightcone
pruning used for shallow circuits in
\cite[proof of Proposition~10]{al2026circuit}.

\begin{lemma}
\label{lem:gate-count-active-width}
Every state
$\rho\in\mathsf{Gate}^{[G],\mathrm{anc}}_n$ admits a representation
using a circuit of gate count at most $G$ on at most $n+G$ qubits.
\end{lemma}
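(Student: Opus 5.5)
Fix $\rho=\Tr_{\{n+1,\ldots,M\}}(Q|0^M\rangle\langle0^M|Q^\dagger)$ with $Q\in\mathsf{Circ}^{(\le G)}_M$, written as an ordered gate sequence $g_1,\ldots,g_{G'}$ with $G'\le G$. The plan is to prune every gate outside the backward lightcone of the retained outputs $[n]$, then discard every ancilla not touched by a surviving gate. The key fact is that an ancilla untouched by any gate stays in $\ket0$ throughout and factors out as a tensor factor, so it can simply be deleted.

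\textbf{Step 1: backward lightcone.} Sweep the gates in reverse order while maintaining a set $L$ of ``relevant'' qubits, initialized to $L=[n]$. When processing $g_k$, keep it if $\operatorname{supp}(g_k)\cap L\neq\emptyset$ and then set $L\leftarrow L\cup\operatorname{supp}(g_k)$; otherwise mark it for deletion.

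\textbf{Step 2: the pruned circuit gives the same $\rho$.} Induct backwards over the gates. The inductive claim is that the marginal on the current set $L$ after the gates processed so far is unchanged by deleting the marked gates. The case that needs checking is a marked gate $g_k$, whose support is disjoint from $L$ at time $k$. Since $L$ only grows during the reverse sweep, that support is also disjoint from $[n]$. Hence $g_k$ acts on the complement of $L$. A unitary acting on the complement of $L$ does not change the reduced state on $L$, which follows from cyclicity of the partial trace. Any later kept gates that touch $g_k$'s qubits would have added those qubits to $L$ before $g_k$ was reached in the reverse sweep, contradicting disjointness. So every kept gate acts after $g_k$ only on qubits that remain untouched by it, and $g_k$ commutes past all subsequent kept gates. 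This commuting step is the main point to verify carefully. A cleaner way to phrase it is to let $W$ be the product of the kept gates and $V$ the product of the marked ones, show $Q=WV$ after reordering (the commutation is valid because each marked gate's support is disjoint from the supports of all later kept gates), and then note $\Tr_{[n]^c}(WV\cdot V^\dagger W^\dagger)$ with $V$ supported off the final $L\supseteq[n]$ and off $\operatorname{supp}(W)$.

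Actually, the simplest route is to check directly that $V$ acts on qubits disjoint from $\operatorname{supp}(W)\cup[n]$. This can fail, because a marked gate may share qubits with an \emph{earlier} kept gate. Handle this by applying the partial-trace invariance gate by gate from the end: write $Q=g_{G'}\cdots g_1$, and observe that if $g_{G'}$ is marked it acts off $[n]$ and can be traced away. Then induct on $G'$ using the reduced state on the current $L$ rather than on $[n]$.

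\textbf{Step 3: width count.} After pruning, the remaining qubits that matter are the final $L$. Its size is at most $n$ plus the number of new qubits added by the kept gates. Each kept gate intersects $L$ at the moment it is processed, so it adds at most one new qubit (gates act on at most two qubits). Therefore $|L|\le n+G$. Qubits outside $L$ carry no kept gate, so they stay in $\ket0$, form a tensor factor, and can be removed without affecting $\rho$. Relabeling $L$ as $[n]\cup\{n+1,\ldots,|L|\}$ gives a circuit with at most $G$ gates on at most $n+G$ qubits that prepares $\rho$.
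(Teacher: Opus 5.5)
Your proposal is correct, and its construction matches the paper's. Your reverse sweep is exactly the recursion $A_{j-1}=A_j\cup E_j$ when $E_j\cap A_j\neq\emptyset$ (and $A_{j-1}=A_j$ otherwise). The kept gates are those meeting the current set, and ``each kept gate adds at most one new qubit'' gives the same bound $|A_0|\le n+G$. The only difference is how you verify that pruning preserves $\rho$. You work in the Schr\"odinger picture, tracking marginals on the current set $L$. The paper works in the Heisenberg picture instead: it pulls an observable $O\otimes I$ on the outputs back through the circuit and shows $\operatorname{supp}(O_j)\subseteq A_j$ by backward induction. Each deleted gate is then supported off $A_j$, so it commutes with $O_j$ and drops out. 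That version needs no reordering of gates and handles kept and deleted gates in one line.

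The ``cleaner'' paragraph of your Step~2 is wrong as written, though you recover from it. Each marked gate commutes with every \emph{later} kept gate, so the marked gates move to the \emph{end} of the circuit. That gives $Q=VW$ with $V$ applied last, not $Q=WV$. All you then need is that $V$ acts on $[n]^c$; you do not need it to avoid the final $L$ or $\operatorname{supp}(W)$, which indeed can fail. With $Q=VW$ the conclusion is immediate: $\Tr_{[n]^c}(VW\,|0^M\rangle\langle0^M|\,W^\dagger V^\dagger)=\Tr_{[n]^c}(W\,|0^M\rangle\langle0^M|\,W^\dagger)$, by cyclicity of the partial trace over the qubits $V$ acts on.

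Your fallback gate-by-gate induction is also valid, but you should state its kept-gate case explicitly. After a kept gate $g_k$, the marginal on the current set is determined by the marginal before $g_k$ on that set together with $\operatorname{supp}(g_k)$. This is why the hypothesis must be carried on the growing sets $L$ rather than on $[n]$.
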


\begin{proof}
By definition, for some $M\ge n$, there is a circuit
$Q=U_L\cdots U_1$ with $L\le G$ such that
\[
    \rho
    =
    \Tr_{\{n+1,\ldots,M\}}
    \left(
        Q|0^M\rangle\langle0^M|Q^\dagger
    \right).
\]
Let $E_j$ be the support of $U_j$, so $|E_j|\le2$.
Starting from $A_L=[n]$, define recursively, for
$j=L,\ldots,1$,
\[
    A_{j-1}
    :=
    \begin{cases}
        A_j\cup E_j, & E_j\cap A_j\ne\emptyset,\\
        A_j, & E_j\cap A_j=\emptyset.
    \end{cases}
\]
Thus $A_j$ is the set of qubits immediately after gate $j$
in the backward lightcone of the retained outputs.
A gate whose support meets $A_j$ can add at most one new
qubit, so $|A_{j-1}|\le|A_j|+1$ and
\[
    |A_0|\le n+L\le n+G.
\]
Let $\barQ$ be the pruned circuit obtained by retaining
precisely those gates $U_j$ for which
$E_j\cap A_j\ne\emptyset$, in their original order.
Every retained gate is supported in
$A_{j-1}\subseteq A_0$, so we regard $\barQ$ as a circuit
on $A_0$. It has at most $L\le G$ gates.

To verify that this pruning preserves the reduced output,
fix an observable $O$ on the retained qubits and define
recursively
\[
    O_L:=O\otimes I_{\{n+1,\ldots,M\}},
    \qquad
    O_{j-1}:=U_j^\dagger O_jU_j,
    \qquad j=L,\ldots,1.
\]
Backward induction gives
$\operatorname{supp}(O_j)\subseteq A_j$.
Indeed, if $E_j\cap A_j=\emptyset$, then $U_j$ commutes
with $O_j$, and hence $O_{j-1}=O_j$.
Otherwise, conjugation by $U_j$ can enlarge the support
only to $A_j\cup E_j=A_{j-1}$.
In particular, each deleted gate leaves the corresponding
observable unchanged. Therefore,
\[
    Q^\dagger
    (O\otimes I_{\{n+1,\ldots,M\}})
    Q
    =
    \left[
        \barQ^\dagger
        (O\otimes I_{A_0\setminus[n]})
        \barQ
    \right]\otimes I_{A_0^c},
\]
where $A_0^c=[M]\setminus A_0$.

Taking expectations in $|0^M\rangle$ shows that the original
and pruned circuits have the same expectation for every
observable on the retained qubits.
The input on $A_0$ is itself a product of $|0\rangle$-states.
Relabeling the qubits in $A_0$ so that the retained outputs
remain the first $n$ qubits therefore gives
\[
    \rho
    =
    \Tr_{\{n+1,\ldots,|A_0|\}}
    \left(
        \barQ|0^{|A_0|}\rangle
        \langle0^{|A_0|}|\barQ^\dagger
    \right).
\]
Since $|A_0|\le n+G$ and $\barQ$ has at most $G$ gates,
this proves the claim.
\end{proof}

We next cover the resulting circuits.  The gate discretization below is
the same operator-norm net and telescoping argument used in
\cite[proof of Proposition~10]{al2026circuit} (see also~\cite{zhao2024learning}). The difference here is that
we count arbitrary gate-support sequences rather than depth-$D$
matching architectures. We provide full details for completeness.

\begin{lemma}
\label{lem:gate-count-profile-entropy}
There is a universal constant $C<\infty$ such that, for every
$n\ge p\ge2$, $G\ge1$, and $0<\varepsilon<1$,
\begin{equation}
\label{eq:gate-count-profile-entropy}
    \log
    N\left(
        \varepsilon,
        \mathcal F_{n,p}
        \left(
            \mathsf{Gate}^{[G],\mathrm{anc}}_n
        \right),
        \|\cdot\|_{n,p}
    \right)
    \le
    CG
    \log\left(
        \frac{
            C\,3^{p/2}G(n+G)
        }{
            \varepsilon
        }
    \right).
\end{equation}
\end{lemma}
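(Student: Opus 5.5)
By Lemma~\ref{lem:gate-count-active-width}, every state in $\mathsf{Gate}^{[G],\mathrm{anc}}_n$ is the reduced output on the first $n$ qubits of a circuit on exactly $M:=n+G$ qubits with exactly $G$ gates. Circuits with fewer gates can be padded with identity gates, and circuits on fewer qubits can be padded with idle qubits. It therefore suffices to cover the profiles of the states
\[
\rho_{\mathbf E,\mathbf U}:=\Tr_{\{n+1,\ldots,M\}}\bigl(U_G\cdots U_1|0^M\rangle\langle0^M|U_1^\dagger\cdots U_G^\dagger\bigr).
\]
Here $\mathbf E=(E_1,\ldots,E_G)$ is a sequence of supports with $E_j\subseteq[M]$ and $|E_j|=2$. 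Single-qubit gates are regarded as two-qubit gates acting trivially on a second qubit. Each $U_j$ is a gate in $\mathsf U(4)$ acting on $E_j$. The number of support sequences is at most $M^{2G}$, which contributes $2G\log(n+G)$ to the entropy.

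For a fixed support sequence, I would take an operator-norm $\delta$-net $\mathcal N_\delta$ of $\mathsf U(4)$. Viewing $\mathsf U(4)$ inside the unit ball of a real space of dimension $32$ with norm $\|\cdot\|_{\op}$, the standard volumetric bound gives $|\mathcal N_\delta|\le(C/\delta)^{32}$. Given gates $U_j$, pick $V_j\in\mathcal N_\delta$ with $\|U_j-V_j\|_{\op}\le\delta$. The usual telescoping identity for the products then gives $\|U_G\cdots U_1-V_G\cdots V_1\|_{\op}\le G\delta$. Hence the pure output states differ in trace norm by at most $2G\delta$, and partial trace does not increase this. The net centers need not be unitary, but they may be projected back onto $\mathsf U(4)$ at the cost of doubling $\delta$, so the approximants are still states. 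Alternatively, since the paper allows centers anywhere, one can simply take the approximant's profile as the center.

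The transfer from states to the canonical norm uses $|f_\rho(P)-f_\sigma(P)|\le\|\rho-\sigma\|_1$. Thus
\[
\|f_\rho-f_\sigma\|_{n,p}^2\le 3^p\|\rho-\sigma\|_1^2\le 3^p(2G\delta)^2 .
\]
Choosing $\delta=\varepsilon/(2\cdot3^{p/2}G)$ makes every profile lie within $\varepsilon$ of a net profile. The covering number is then at most $M^{2G}\bigl(C\,3^{p/2}G/\varepsilon\bigr)^{32G}$, and its logarithm is bounded by
\[
CG\log\Bigl(\frac{C\,3^{p/2}G(n+G)}{\varepsilon}\Bigr),
\]
which is \eqref{eq:gate-count-profile-entropy}.

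The main obstacle is bookkeeping rather than conceptual. Two points need care. First, the width reduction must really be uniform: the pruned representation is needed so that the location count depends on $n+G$ and not on the ambient $M$. Second, the net elements must be taken unitary, or the centers allowed outside the class, so that the approximants remain valid; both are handled as described above.
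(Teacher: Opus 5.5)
Your proposal is correct and follows essentially the same route as the paper. Both arguments prune via Lemma~\ref{lem:gate-count-active-width} to at most $G$ gates on $n+G$ qubits, count at most $(n+G)^{2G}$ support sequences, approximate each gate from an operator-norm $\delta$-net, telescope to $\|Q-\widetilde Q\|_{\op}\le G\delta$, and transfer to profiles via $\|f_\rho-f_{\widetilde\rho}\|_{n,p}\le 3^{p/2}\|\rho-\widetilde\rho\|_1$ with $\delta=\varepsilon/(2\cdot3^{p/2}G)$. The differences are cosmetic: you fold one-qubit gates into $\mathsf U(4)$ instead of using separate nets for $\mathsf U(2)$ and $\mathsf U(4)$, and the paper takes nets with centers in the unitary groups, which avoids your non-unitary-center discussion; any factor-of-two loss from projecting centers back is absorbed into $C$.
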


\begin{proof}
By Lemma~\ref{lem:gate-count-active-width}, after padding with
unused $|0\rangle$-qubits, every state in the class has a
representation
\[
    \rho
    =
    \Tr_{\{n+1,\ldots,n+G\}}(Q\rho_0Q^\dagger),
    \qquad
    \rho_0:=|0^{n+G}\rangle\langle0^{n+G}|,
\]
where $Q$ has at most $G$ gates.
We construct an operator-norm cover of these circuits and
then bound the corresponding error in their output profiles.

\textit{Covering the circuit class: }
By inserting one-qubit identity gates, we may represent
every circuit using exactly $G$ gate locations.
At each location, the gate acts on either one qubit or
an unordered pair of qubits, giving at most
\[
    (n+G)+\binom{n+G}{2}
    \le(n+G)^2
\]
possible supports.
There are therefore at most $(n+G)^{2G}$ support sequences.

As in \cite[proof of Proposition~10]{al2026circuit},
choose operator-norm $\delta$-nets for $\mathsf{U}(2)$
and $\mathsf{U}(4)$, with centers in the respective unitary
groups and cardinalities at most $(C/\delta)^C$.
For each support sequence, choosing a gate from the
appropriate net at every location gives at most
$(C/\delta)^{CG}$ circuits.
Taking the union over all support sequences produces
a finite set $\mathcal Q$ satisfying
\begin{equation}
\label{eq:gate-count-circuit-cover-size}
    |\mathcal Q|
    \le
    (n+G)^{2G}(C/\delta)^{CG}.
\end{equation}

For every circuit $Q=U_G\cdots U_1$, the set $\mathcal Q$
contains a circuit
$\widetilde Q=\widetilde U_G\cdots\widetilde U_1$
with the same gate supports and
$\|U_j-\widetilde U_j\|_{\mathrm{op}}\le\delta$.
Here each gate acts as the identity outside its support.
The telescoping identity
\[
    Q-\widetilde Q
    =
    \sum_{j=1}^G
        \widetilde U_G\cdots\widetilde U_{j+1}
        (U_j-\widetilde U_j)
        U_{j-1}\cdots U_1
\]
and the unitarity of the surrounding factors give
\[
    \|Q-\widetilde Q\|_{\mathrm{op}}
    \le
    \sum_{j=1}^G
        \|U_j-\widetilde U_j\|_{\mathrm{op}}
    \le G\delta.
\]
Thus $\mathcal Q$ is an operator-norm $G\delta$-cover
of the circuit class.

\textit{Transferring the circuit approximation to profiles: }
Let $\rho$ and $\widetilde\rho$ be the reduced outputs of
$Q$ and $\widetilde Q$ applied to $\rho_0$.
Using contractivity under partial trace and $\|\rho_0\|_1=1$,
we obtain
\begin{align*}
    \|\rho-\widetilde\rho\|_1
    &\le
    \|Q\rho_0Q^\dagger
        -\widetilde Q\rho_0\widetilde Q^\dagger\|_1
    \le
    \|(Q-\widetilde Q)\rho_0Q^\dagger\|_1
    +
    \|\widetilde Q\rho_0(Q^\dagger-\widetilde Q^\dagger)\|_1\\
    &\le
    2\|Q-\widetilde Q\|_{\mathrm{op}}
    \le2G\delta.
\end{align*}
Since every Pauli has operator norm one and
$|\mathcal P_{n,p}|=3^p\binom np$,
\[
    \|f_\rho-f_{\widetilde\rho}\|_{n,p}^2
    =
    \frac1{\binom np}
    \sum_{P\in\mathcal P_{n,p}}
        |\Tr(P(\rho-\widetilde\rho))|^2
    \le
    3^p\|\rho-\widetilde\rho\|_1^2.
\]
Consequently,
\begin{equation}
\label{eq:gate-count-profile-approximation}
    \|f_\rho-f_{\widetilde\rho}\|_{n,p}
    \le
    3^{p/2}\|\rho-\widetilde\rho\|_1
    \le
    2\cdot3^{p/2}G\delta.
\end{equation}

\textit{Combining the estimates: }
Choose $\delta=\varepsilon/(2\cdot3^{p/2}G).$ By \eqref{eq:gate-count-profile-approximation}, the output
profiles of the circuits in $\mathcal Q$ form an
$\varepsilon$-cover of
$\mathcal F_{n,p}(\mathsf{Gate}^{[G],\mathrm{anc}}_n)$.
Its cardinality is at most $|\mathcal Q|$, so
\eqref{eq:gate-count-circuit-cover-size} gives
\begin{align*}
    \log N\left(
        \varepsilon,
        \mathcal F_{n,p}
        \left(\mathsf{Gate}^{[G],\mathrm{anc}}_n\right),
        \|\cdot\|_{n,p}
    \right)
    &\le
    \log|\mathcal Q|\\
    &\le
    2G\log(n+G)
    +
    CG\log\left(
        \frac{C3^{p/2}G}{\varepsilon}
    \right)\\
    &\le
    CG\log\left(
        \frac{C3^{p/2}G(n+G)}{\varepsilon}
    \right),
\end{align*}

after increasing $C$.
This proves \eqref{eq:gate-count-profile-entropy}.
\end{proof}
We may now apply the general local-profile complexity principle.

\begin{proposition}[Circuit-size energy bounds]
\label{prop:gate-count-main}
Let $(G_n)_{n\ge1}$ be a fixed deterministic sequence of nonnegative integers.
\begin{enumerate}[label=\textup{(\roman*)}]
\item Suppose $G_n\le Ln^2/\log n$ for all sufficiently large $n$,
for some fixed $L<\infty$. Then, for every sufficiently large fixed $p$, with the lower threshold depending only on $L$, there is
$c_{p,L}>0$ such that
\[
    \P\left(
        \Delta^{\mathrm{opt}}_{n,p}
        \left(\mathsf{Gate}^{[G_n],\mathrm{anc}}_n\right)
        \ge c_{p,L}\sqrt n
    \right)
    \longrightarrow1.
\]

\item For every fixed $p\ge2$, if $G_n=o(n^2/\log n)$, then
\[
    \frac{
        \Delta^{\mathrm{ref}}_{n,p}
        \left(\mathsf{Gate}^{[G_n],\mathrm{anc}}_n;\Prod_n\right)
    }{\sqrt n}
    \longrightarrow0
\]
in $L^1$ and in probability.
\end{enumerate}
Both conclusions are uniform over the number of discardable ancillas.
\end{proposition}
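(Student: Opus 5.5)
The plan is to feed the entropy bound of Lemma~\ref{lem:gate-count-profile-entropy} directly into Theorem~\ref{thm:local-profile-complexity-principle}, with $\mathcal C_n=\mathsf{Gate}^{[G_n],\mathrm{anc}}_n$. We treat the degenerate case $G_n=0$ separately. In that case the class is the single state $|0^n\rangle\langle0^n|$, whose profile class has covering number one, so we may take $a_{n,p}=b_{n,p}=0$ and $\Gamma_{n,p}=0$. Otherwise, Lemma~\ref{lem:gate-count-profile-entropy} gives the hypothesis of the theorem with
\[
    a_{n,p}=0,\qquad b_{n,p}=CG_n,\qquad B_{n,p}=C\,3^{p/2}G_n(n+G_n),
\]
and we have $B_{n,p}\ge1$ after enlarging $C$. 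The bound holds for all $0<\varepsilon<1$, which is the range the theorem requires. Because the lemma was proved through the pruning of Lemma~\ref{lem:gate-count-active-width}, this bound does not depend on the number of ancillas. That is the source of the uniformity claimed in the final sentence of the proposition.

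The next step is to estimate the effective complexity
\[
    \Gamma_{n,p}=CG_n\log\bigl(eC\,3^{p/2}G_n(n+G_n)n\bigr).
\]
In regime (i) we have $G_n\le Ln^2/\log n\le n^2$, so the argument of the logarithm is at most $eC\,3^{p/2}n^2(n+n^2)\le C'_p n^5$. Hence $\log(\cdot)\le 5\log n+O_p(1)$, and therefore
\[
    \limsup_{n\to\infty}\frac{\Gamma_{n,p}}{n^2}\le 5CL=:L'.
\]
This $L'$ depends only on $L$, because the $p$-dependence is confined to the additive $O_p(1)$ term, which is swamped by $\log n$. Part (i) of Theorem~\ref{thm:local-profile-complexity-principle} applied with $L'$ then yields the conclusion for every fixed $p$ beyond a threshold that depends only on $L'$, and hence only on $L$. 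In regime (ii) the same computation gives $\Gamma_{n,p}\le G_n(5C\log n+O_p(1))=o(n^2)$.

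For part (ii) we take the benchmark $\mathcal B_n=\Prod_n$. This is a conditioning benchmark for every class, by Corollary~\ref{cor:conditioning-canonical} and the discussion after it. Part (ii) of the theorem then gives convergence in $L^1$ and in probability.

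The only delicate point is the one noted in (i): the threshold on $p$ must not depend on $p$ through the logarithm. Since $p$ is fixed as $n\to\infty$, the $\tfrac p2\log3$ contribution disappears from the limsup, so this causes no difficulty. The rest is bookkeeping.
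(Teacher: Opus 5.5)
Your proposal is correct and follows the paper's proof. You feed the ancilla-independent entropy bound of Lemma~\ref{lem:gate-count-profile-entropy} into Theorem~\ref{thm:local-profile-complexity-principle} with $a_{n,p}=0$, show that $\limsup_n\Gamma_{n,p}/n^2$ is bounded by a $p$-independent multiple of $L$ in regime (i) and that $\Gamma_{n,p}=o(n^2)$ in regime (ii), and use that $\Prod_n$ is always a conditioning benchmark. The one cosmetic difference is the case $G_n=0$: you treat it with a separate trivial covering, whereas the paper replaces $G_n$ by $\max\{G_n,1\}$ and uses monotonicity of the class in $G$.
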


\begin{proof}
Set $\widehat G_n:=\max\{G_n,1\}$.  Since
$\mathsf{Gate}^{[G_n],\mathrm{anc}}_n\subseteq
\mathsf{Gate}^{[\widehat G_n],\mathrm{anc}}_n$, Lemma~\ref{lem:gate-count-profile-entropy}, after enlarging the universal constant $C$ if necessary, allows us to take
\[
    a_{n,p}=0,
    \qquad
    b_{n,p}=C\widehat G_n,
    \qquad
    B_{n,p}=C3^{p/2}\widehat G_n(n+\widehat G_n)
\]
in Theorem~\ref{thm:local-profile-complexity-principle}.  Hence $\Gamma_{n,p}
    \le
    C\widehat G_n
    \log (
        eC3^{p/2}n\widehat G_n(n+\widehat G_n)
    )$.
Under the hypothesis of part~\textup{(i)}, for every fixed $p$,
$\limsup_{n\to\infty}\Gamma_{n,p}/n^2\le CL$, with a constant on the
right independent of $p$.  Theorem~\ref{thm:local-profile-complexity-principle}\textup{(i)}
therefore gives the first conclusion.  Under the hypothesis of
part~\textup{(ii)}, $\Gamma_{n,p}=o(n^2)$ for every fixed $p$.
Because $\Prod_n$ is a conditioning benchmark for every state class,
Theorem~\ref{thm:local-profile-complexity-principle}\textup{(ii)} gives the
second conclusion.
\end{proof}

\begin{proof}[Proof of Corollary~\ref{cor:intro-circuit-size}]
The two conclusions follow from
Proposition~\ref{prop:gate-count-main}\textup{(i)} and~\textup{(ii)},
respectively, applied to $\mathsf{Gate}^{[G_n],\mathrm{anc}}_n$.
That proposition permits arbitrary circuit depth and geometry and is uniform
over the number of discardable ancillas.
\end{proof}

\subsection{A quadratic-gate construction}
\label{subsec:quadratic-gate-construction}
In this section, we construct a class of states preparable using fewer
than $n^2$ two-qubit gates whose optimal energy exceeds the product-state
optimum by a positive multiple of $\sqrt n$, with high probability as
$n\to\infty$, for sufficiently large fixed $p$. Together with our circuit
lower bound, this shows that the gate-count threshold $n^2/\log n$ is
sharp up to a logarithmic factor. Our construction specializes the
commuting-rotation ansatz considered in \cite{anshu2021improved}.

\begin{definition}\label{def:class}
Let
\begin{equation}\label{eq:class}
 \mathcal C_n^{\operatorname{pair}}:=\left\{
   \rho_\theta
   =U_\theta\ket{0^n}\bra{0^n}U_\theta^\dagger
   :\theta= (\theta_{ij})_{1\le i<j\le n}\in\mathbb{R}^{\binom n2}
 \right\},
\end{equation}
where
\[
 U_\theta:=\prod_{1\le i<j\le n}
             \exp(-i\theta_{ij}Y_iY_j).
\]
\end{definition}

The class consists of states obtained from $\ket{0^n}$ by applying
a two-qubit gate to each pair. Since $(Y_iY_j)^2=I$, we may write
\[
 e^{-i\theta_{ij}Y_iY_j}
 =\cos(\theta_{ij})I-i\sin(\theta_{ij})Y_iY_j.
\]
Each factor therefore performs a joint rotation of qubits $i$ and $j$,
with angle $\theta_{ij}$, and can create entanglement between them. 

The following result shows that a deterministic class preparable with fewer than $n^2$ gates obtains an energy that beats the optimum over all product states by a positive multiple of $\sqrt n$, with high probability, for sufficiently large fixed $p$.

\begin{proposition}
\label{prop:beatingProds}
The class $\mathcal{C}_n^{\operatorname{pair}}$ in Definition~\ref{def:class} has the following properties.
\begin{enumerate}[label=\textup{(\roman*)}]
 \item Every state in $\mathcal{C}_n^{\operatorname{pair}}$ can be prepared exactly from $\ket{0^n}$
 using at most $\binom n2<n^2$ two-qubit gates and no ancillary qubits.

 \item For every fixed integer $p\ge2$, there is a state $\rho^\star_{n,p}\in\mathcal{C}_n^{\operatorname{pair}}$ such that
 \begin{equation}\label{eq:state-limit}
 \frac{\Tr(H_{n,p}\rho^\star_{n,p})}{\sqrt n}
 \longrightarrow\sqrt{\frac{p}{2e}}
 \qquad\text{in $L^2$ and in probability}.
 \end{equation}

 \item There exists an integer $p_0<\infty$ such that, for every fixed
 integer $p\ge p_0$, there is a constant $\delta_p>0$ for which
 \begin{equation}\label{eq:advantage}
 \P\left (\mathcal E_{n,p}(\mathcal{C}_n^{\operatorname{pair}})\ge
              \mathcal E_{n,p}(\Prod_n)+\delta_p\sqrt n\right)
 \longrightarrow1.
 \end{equation}
\end{enumerate}
\end{proposition}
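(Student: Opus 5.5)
Part~(i) is immediate from Definition~\ref{def:class}. $U_\theta$ is a product of $\binom n2$ gates $\exp(-i\theta_{ij}Y_iY_j)$, each acting on two qubits. Since the generators $Y_iY_j$ pairwise commute, their order is irrelevant. Part~(iii) will follow from part~(ii) combined with Lemma~\ref{lem:profile-product-benchmark}. With probability tending to one, $\mathcal E_{n,p}(\Prod_n)\le C\sqrt{n\log(ep)}+o(\sqrt n)$. Meanwhile $\mathcal E_{n,p}(\mathcal C_n^{\mathrm{pair}})\ge\Tr(H_{n,p}\rho^\star_{n,p})=(\sqrt{p/(2e)}+o_{\P}(1))\sqrt n$. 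Since $\sqrt{p/(2e)}$ grows polynomially in $p$ while $C\sqrt{\log(ep)}$ grows only logarithmically, there is a $p_0$ beyond which $\delta_p:=\frac12(\sqrt{p/(2e)}-C\sqrt{\log(ep)})>0$. A union bound then gives \eqref{eq:advantage}. The substance is therefore part~(ii).

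For (ii) I would first compute the profile of $\rho_\theta$ in the Heisenberg picture. Write $V_{ij}=e^{-i\theta_{ij}Y_iY_j}$. Then $V_{ij}^\dagger Z_iV_{ij}=\cos(2\theta_{ij})Z_i\pm\sin(2\theta_{ij})X_iY_j$, and $Y$-factors are invariant. Expanding $U_\theta^\dagger PU_\theta$ and taking the $\ket{0^n}$ expectation, only terms that reduce to products of $Z$'s survive. I will restrict attention to a single family of strings: for a $p$-set $I$ and an ordered pair $(i,j)$ in $I$, let $P_{I,i,j}:=X_iY_j\prod_{k\in I\setminus\{i,j\}}Z_k$. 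The target identity is
\[
f_{\rho_\theta}(P_{I,i,j})=\pm\sin(2\theta_{ij})\prod_{k\in I\setminus\{i,j\}}\prod_{l\ne k}\cos(2\theta_{kl})\cdot(1+\text{error}).
\]
The error collects the contributions in which two rotation sines pair up inside $I$. When all angles are $O(n^{-1/2})$ these contributions have total size $O(1/n)$, and after summing against the Gaussians they contribute only $o(\sqrt n)$ in $L^2$.

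Next I would choose the angles depending on the disorder. Set $W_{ij}:=\sum_{I\ni i,j}\pm g_{P_{I,i,j}}$, with the signs matching the formula above. Then take $\theta_{ij}:=\tfrac12\arcsin\bigl(\tau W_{ij}/\sqrt{n\binom{n-2}{p-2}}\bigr)$, truncated to stay small, where $\tau>0$ is a free parameter. Each $k$ then carries a product $\prod_l\cos(2\theta_{kl})\approx\exp(-\tfrac12\sum_l\sin^2 2\theta_{kl})$. By the law of large numbers this concentrates at $e^{-\tau^2/2}$ up to normalization. The energy restricted to this family becomes
\[
\binom np^{-1/2}e^{-(p-2)\tau^2/2}\frac{\tau}{\sqrt{n\binom{n-2}{p-2}}}\sum_{i\ne j}W_{ij}^2.
\]
Since $\E W_{ij}^2=\binom{n-2}{p-2}$, this is $\sqrt n\,\tau e^{-(p-2)\tau^2/2}\cdot c_p(1+o(1))$ for an explicit combinatorial constant $c_p\approx p/\sqrt{\cdot}$. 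Maximizing $\tau e^{-p\tau^2/2}$ at $\tau=p^{-1/2}$ produces the factor $e^{-1/2}$. With the exact constants this should give $\sqrt{p/(2e)}$; I expect the bookkeeping of the $\pm$ signs and of ordered versus unordered pairs to fix the precise normalization.

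The terms $g_P$ of $H_{n,p}$ outside this family enter the energy linearly. Their coefficients $f_{\rho_\theta}(P)$ depend on the $g$'s only through the $W$'s, which are independent of those $g_P$. So, conditionally on the $W$'s, their contribution is a centered Gaussian. Its variance is $\binom np^{-1}\sum_P f(P)^2$, summed over the other strings; each such $f(P)$ involves at least two sines and so is of size $O(1/n)$, which makes the contribution $o(\sqrt n)$ in $L^2$.

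The main obstacle is making this rigorous. It requires (a) an exact or controlled expansion of $f_{\rho_\theta}(P)$ for all weight-$p$ strings, and (b) second-moment bounds showing that $\sum_{i\ne j}W_{ij}^2$ and the products of cosines concentrate. Part~(b) is delicate because the $W_{ij}$ share Gaussians across different pairs. I would handle (a) by an exact Heisenberg formula for commuting $YY$ rotations, in which each $Z_k$ becomes a product over $l$ of $(\cos2\theta_{kl}Z_k\pm\sin 2\theta_{kl}X_kY_l)$. I would handle (b) by a direct variance computation with Wick's theorem, which yields the $L^2$ convergence claimed in \eqref{eq:state-limit}.
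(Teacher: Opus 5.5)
Parts (i) and (iii) are fine and match the paper's argument. The gap is in (ii): what you defer as bookkeeping is the actual content of the proof, and two estimates you rely on are false as stated.

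\textbf{The leading-order identity drops a $\Theta(1)$ factor.} The string $P=X_iY_jZ_S$ anticommutes with $Y_aY_b$ exactly when one endpoint lies in $A:=\{i\}\cup S$ and the other does not. So the single-edge term is $\pm\sin(2\theta_{ij})$ times $\cos(2\theta_e)$ over \emph{all} $A$-crossing edges $e\neq\{i,j\}$, including the $n-p$ edges at site $i$. That gives $p-1$ sites' worth of cosine products, not $p-2$. With your exponent, the maximal value over $\tau$ is off by the factor $\sqrt{(p-1)/(p-2)}$, so the limit would not be $\sqrt{p/(2e)}$.

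\textbf{The remainder is not $O(1/n)$ merely because all angles are $O(n^{-1/2})$.} The surviving terms of $\langle 0^n|U_\theta^\dagger PU_\theta|0^n\rangle$ are indexed by sets $F$ of $A$-crossing edges with odd degree exactly at $i$ and $j$. Besides $\{i,j\}$, these include the paths $i$--$l$--$s$--$j$ with $l$ ranging over $n-p$ sites, and also $\{i,j\}$ together with $4$-cycles $a$--$l$--$a'$--$l'$. Under a size bound alone these are $O(n^{-1/2})$ and $O(1)\cdot\sin(2\theta_{ij})$ respectively, the same order as the main term. They are small only through sign cancellations among the random angles, and those angles are correlated with the very coefficients $g_P$ that multiply $f_{\rho_\theta}(P)$. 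Even granting $|\mathrm{err}_P|=O(1/n)$, Cauchy--Schwarz on $\binom np^{-1/2}\sum_P g_P\,\mathrm{err}_P$ gives only $O(p^2\binom np^{1/2}/n)$, which is not $o(\sqrt n)$ once $p\ge3$. The proposal offers no replacement argument.

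\textbf{Two smaller points.} The $W_{ij}$ for distinct pairs are built from disjoint coefficients and are independent, so that is not where the difficulty lies. Strings outside your family are not $O(1/n)$: all-$Z$ strings have $f=\Theta(1)$. Your conditional-Gaussian step still survives, since $|f|\le1$ bounds that variance by $3^p$.

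\textbf{The missing idea is to avoid expanding the profile at all.} The paper takes the angles Gaussian and linear in the disorder, $\theta_{ij}=t\sqrt n\,b_{ij}$, with no arcsin or truncation. It then computes only the mean, by Gaussian integration by parts: $\E[g_Pf_\theta(P)]=\E[\partial_{g_P}f_\theta(P)]$. The exact identity $\partial_{\theta_{ij}}f_\theta(X_iY_jZ_S)=\Tr(i[Y_iY_j,P]\rho_\theta)=2\Tr(Z_{S\cup\{i\}}\rho_\theta)$ reduces the mean to $\E\Tr(Z_W\rho_\theta)$ with $|W|=p-1$. Averaging the commuting gates one at a time, using $\E\cos(2\theta)=e^{-2\sigma^2}$ and $\E\sin(2\theta)=0$, gives exactly $e^{-2\sigma^2(p-1)(n-p+1)}$. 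This automatically accounts for every path and cycle term above. Fluctuations are then controlled by the Gaussian Poincar\'e inequality. Write $\theta=Kg$ with $KK^{\mathsf T}=\sigma^2I$, and use that the vectors $Y_iY_j\ket{\psi_\theta}$ are orthonormal. This gives $\|\nabla_g\Tr(H_{n,p}\rho_\theta)\|_2^2\le2\cdot3^p+8\sigma^2\|H_{n,p}\|_{\mathrm{op}}^2$, whose expectation is bounded by a constant depending only on $p$. Hence the variance is $o(n)$, and $L^2$ convergence follows. Your pathwise route might be completable with a careful Wick-type expansion, but as written the decisive estimate is missing and the stated ones are wrong.
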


\begin{proof}
 Part~\textup{(i)} follows since each factor $\exp(-i\theta_{ij}Y_iY_j)$ is a single two-qubit gate and there is one factor for each unordered pair. 

To show part~\textup{(ii)}, we construct $\rho^\star_{n,p}$ by choosing the corresponding angles by taking a single gradient-ascent step on the energy $\Tr(H_{n,p}\rho_\theta)$, starting from $\theta=0$, which corresponds to the product state $\ket{0^n}$. The step size is chosen to maximize the expected energy
of the resulting state.

Let $\ket{\psi_\theta} := U_\theta \ket{0^n}$, then $\ket{\psi_0}=\ket{0^n}$.  Since all the generators commute, differentiating gives $\partial_{\theta_{ij}}U_\theta=-iY_iY_jU_\theta$ and $\partial_{\theta_{ij}}U_\theta^\dagger=iU_\theta^\dagger Y_iY_j$. Therefore
\begin{align*}
 \partial_{\theta_{ij}} \Tr(H_{n,p}\rho_\theta)
 &=
 \partial_{\theta_{ij}} \bra{0^n}U_\theta^\dagger H_{n,p}U_\theta\ket{0^n}\\
 &=\bra{0^n}\bigl(iU_\theta^\dagger Y_iY_jH_{n,p}U_\theta
       -iU_\theta^\dagger H_{n,p}Y_iY_jU_\theta\bigr)\ket{0^n}\\
  &=\bra{\psi_\theta}i \bigl( Y_iY_jH_{n,p}
       - H_{n,p}Y_iY_j \bigr)\ket{\psi_\theta}.
\end{align*}
It follows that
\[
 \partial_{\theta_{ij}} \Tr(H_{n,p}\rho_\theta)|_{\theta=0}
 = {\binom{n}{p}}^{-1/2}
   \sum_{P\in\mathcal P_{n,p}} g_P
   \bra{0^n} i (Y_iY_j P - P Y_iY_j)\ket{0^n}.
\]

For $P\in\mathcal P_{n,p}$, let $T(P):= i(\bra{0^n} Y_iY_jP\ket{0^n}-\bra{0^n} PY_iY_j \ket{0^n})$. We claim that
\begin{align}\label{eq:TP}
 T(P)=
 \begin{cases}
  2, & P\in \{ X_iY_jZ_S, Y_iX_jZ_S\},\\
  0, & \text{otherwise},
 \end{cases}
\end{align}
where $S\subseteq[n]\setminus\{i,j\}$, $|S|=p-2$ and $Z_S := \prod_{k \in S} Z_k$.

To see this, note first that for any $ k \notin \{i,j\}$, both expectations in $T(P)$ contain the factor $\bra{0} P_k \ket{0} = \mathbf{1}_{P_k \in \{ I,Z\}}.$ Therefore, any factor outside $\{i,j\}$ must be $I$ or $Z$, and contributes $1$.

For $k = i$, the first expectation contains $\bra{0} Y P_i \ket{0}$ and the second contains $\bra{0} P_i Y \ket{0}$. If $P_i = I$, both are zero. If $P_i =Z$ then using $YZ = iX$ and $ZY = -iX$ gives that both expectations are zero. Applying the same argument to the case $k=j$ shows that $P_i, P_j \in \{X,Y\}.$ It follows that $P_i P_j \in \{ XX, XY, YX, YY\}$. We consider the four cases in turn. Note that the factors outside of $i,j$ must be taken to be $Z$ or $I$ and contribute 1. 
\begin{itemize}
    \item If $P_iP_j=XX$, then, using $YX=-iZ$,
    \[
        \bra{0^n}Y_iY_jP\ket{0^n}
        = \bra{0}YX\ket{0}\bra{0}YX\ket{0}
        = \bigl(\bra{0}(-iZ)\ket{0}\bigr)^2
        = -1.
    \]
    Similarly, $\bra{0^n}PY_iY_j\ket{0^n}=-1$, so $T(P)=0$.

    \item If $P_iP_j=XY$, then
    \[
        \bra{0^n}Y_iY_jP\ket{0^n}
        = \bra{0}YX\ket{0}\bra{0}YY\ket{0}
        = \bra{0}(-iZ)\ket{0}
        = -i.
    \]
    Also, $\bra{0^n}PY_iY_j\ket{0^n}=i$, so
    $T(P)=i(-i-i)=2$.

    \item If $P_iP_j=YX$, a similar argument to the $XY$ case
    gives $T(P)=2$.

    \item If $P_iP_j=YY$, then
    \[
        \bra{0^n}Y_iY_jP\ket{0^n}
        = \bra{0}Y^2\ket{0}\bra{0}Y^2\ket{0}
        = 1.
    \]
    Similarly, $\bra{0^n}PY_iY_j\ket{0^n}=1$, so $T(P)=0$.
\end{itemize}
This proves \eqref{eq:TP}. Consequently,
\[
 \partial_{\theta_{ij}} \operatorname{Tr}(H_{n,p} \rho_\theta)|_{\theta=0}
 =2 \binom{n}{p}^{-1/2}
 \sum_{\substack{S\subseteq[n]\setminus\{i,j\}\\|S|=p-2}}
       \bigl(g_{X_iY_jZ_S}+g_{Y_iX_jZ_S}\bigr).
\]
Let $b_{ij}:=\frac12\partial_{\theta_{ij}}
\operatorname{Tr}(H_{n,p}\rho_\theta)|_{\theta=0}$.
The sums for distinct unordered pairs $\{i,j\}$ involve
disjoint sets of Hamiltonian coefficients, since the locations
of the $X$ and $Y$ factors uniquely identify the pair.
Thus the $b_{ij}$ are independent centered Gaussians.
Each defining sum contains $2\binom{n-2}{p-2}$ independent
standard Gaussians, giving the common variance
\[
    v_{n,p}
    =\frac{2\binom{n-2}{p-2}}{\binom np}
    =\frac{2p(p-1)}{n(n-1)}.
\]
We set $\theta_{ij}(t)=t\sqrt n\,b_{ij}$, which is equivalent
to taking one gradient-ascent step from zero, with step size
$t\sqrt n/2$, where $t$ is a parameter to be chosen later.

We now compute the mean energy of the state produced by the gradient
step. Writing $f_\theta(P) := f_{\rho_\theta}(P) =\Tr(P\rho_\theta)$ and expanding the Hamiltonian, we obtain
\[
 m_{n,p}(t)
 :=\E\Tr(H_{n,p}\rho_{\theta(t)})
 ={\binom{n}{p}}^{-1/2}
   \sum_{P\in\mathcal P_{n,p}}
   \E[g_Pf_{\theta(t)}(P)].
\]
Note that the expectation is over the Hamiltonian coefficients, which also
determine the angles $\theta(t)$. We therefore need to evaluate
the terms $\E[g_Pf_{\theta(t)}(P)]$. By Stein's lemma, this is equivalent to evaluating $\E[\partial_{g_P} f_{\theta(t)}(P)].$

Only strings of the forms $X_iY_jZ_S$ and $Y_iX_jZ_S$,
where $i<j$, $S\subseteq[n]\setminus\{i,j\}$, and $|S|=p-2$,
can contribute to the mean. For any other string $\tilde P$,
the coefficient $g_{\tilde P}$ appears in none of the $b_{ab}$.
It is therefore independent of the entire angle vector
$\theta(t)$ and has mean zero, giving
$\E[g_{\tilde P}f_{\theta(t)}(\tilde P)]=0$.
We first consider $P=X_iY_jZ_S$; the other orientation is
treated below.

The coefficient $g_P$ appears only in $b_{ij}$, and hence only
in $\theta_{ij}(t)$. Thus, with $t$ fixed,
\[
 \frac{\partial\theta_{ij}(t)}{\partial g_P}
 =\frac{t\sqrt n}{\sqrt{\binom{n}{p}}},
 \qquad
 \frac{\partial\theta_{ab}(t)}{\partial g_P}=0
 \quad\text{for }(a,b)\ne(i,j).
\]
Since $f_{\theta(t)}(P)$ depends on the disorder only through
$\theta(t)$, the chain rule gives
\begin{align*}
 \partial_{g_P}\bigl(f_{\theta(t)}(P)\bigr)
 &=\nabla_\theta f_\theta(P)  \cdot |_{\theta = \theta(t)} \frac{\partial \theta(t)}{\partial g_P} \\
 &=\sum_{a<b}
   \left.
   \frac{\partial}{\partial\theta_{ab}}f_\theta(P)
   \right|_{\theta=\theta(t)}
   \frac{\partial\theta_{ab}(t)}{\partial g_P}\\
 &=\frac{t\sqrt n}{\sqrt{\binom{n}{p}}}
   \left.
   \frac{\partial}{\partial\theta_{ij}}f_\theta(P)
   \right|_{\theta=\theta(t)}.
\end{align*}

We next compute this derivative. By commutation of the generators defining the circuit $U_\theta$, we have $\frac{\partial}{\partial\theta_{ij}}\ket{\psi_\theta}
 =-iY_iY_j\ket{\psi_\theta}$, and therefore
\begin{align*}
 \frac{\partial}{\partial\theta_{ij}}f_\theta(P)
 &=
 \frac{\partial}{\partial\theta_{ij}} \bra{\psi_\theta}P\ket{\psi_\theta}
 =i\bra{\psi_\theta}Y_iY_jP\ket{\psi_\theta}
   -i\bra{\psi_\theta}PY_iY_j\ket{\psi_\theta}
 =\Tr\!\left(i[Y_iY_j,P]\rho_\theta\right),
\end{align*}
where $[A,B]=AB-BA.$ For our chosen string $P=X_iY_jZ_S$, the identities
$YX=-iZ$, $XY=iZ$, and $Y^2=I$ give
\[
 \begin{aligned}
 Y_iY_jP
 &=(Y_iX_i)(Y_jY_j)Z_S=-iZ_iZ_S,\\
 PY_iY_j
 &=(X_iY_i)(Y_jY_j)Z_S= iZ_iZ_S.
 \end{aligned}
\]
Consequently,
\[
 i[Y_iY_j,P]
 =i\bigl(-iZ_iZ_S-iZ_iZ_S\bigr)
 =2Z_iZ_S
 =2Z_{S\cup\{i\}},
\]
and hence $ \frac{\partial}{\partial\theta_{ij}}f_\theta(P)
 =2\Tr(Z_{S\cup\{i\}}\rho_\theta)$. The preceding calculations then yield
\begin{align}
 \frac{1}{\sqrt{\binom{n}{p}}}
 \E[g_Pf_{\theta(t)}(P)]
 &=\frac{1}{\sqrt{\binom{n}{p}}}
   \E\!\left[
     \partial_{g_P}\bigl(f_{\theta(t)}(P)\bigr)
   \right]
   \notag\\
 &=\frac{t\sqrt n}{\binom{n}{p}}
   \E\!\left[
     \left.
     \frac{\partial}{\partial\theta_{ij}}f_\theta(P)
     \right|_{\theta=\theta(t)}
   \right]
   \notag\\
 &=\frac{2t\sqrt n}{\binom{n}{p}}
   \E\Tr(Z_{S\cup\{i\}}\rho_{\theta(t)}).
 \label{eq:ibp}
\end{align}
For $P=Y_iX_jZ_S$, interchanging the roles of $i$ and $j$
gives $i[Y_iY_j,P]=2Z_{S\cup\{j\}}$, so the same calculation
holds with $Z_{S\cup\{i\}}$ replaced by $Z_{S\cup\{j\}}$.

Both sets $S\cup\{i\}$ and $S\cup\{j\}$ have cardinality $p-1$.
Moreover, since the angles are independent and identically distributed,
$\E\Tr(Z_W\rho_{\theta(t)})$ depends only on $|W|$. To see this, consider any permutation $\pi$ of the qubit labels.
The initial state $\ket{0^n}$ is unchanged by this permutation,
while the circuit is transformed by relabelling its angles, since the gate order is not a factor due to commutation. Since the angles are i.i.d., this relabelling preserves their
joint distribution. Consequently,
\[
 \E\Tr(Z_{\pi(W)}\rho_{\theta(t)})
 =\E\Tr(Z_W\rho_{\theta(t)}).
\]
Any two subsets of $[n]$ with the same cardinality are related
by such a permutation, so the expectation depends only on $|W|$. Summing over the
$p(p-1)\binom{n}{p}$ contributing strings therefore gives
\[
 m_{n,p}(t)
 =2tp(p-1)\sqrt n\,
   \E\Tr(Z_W\rho_{\theta(t)}),
 \qquad |W|=p-1.
\]
It remains to compute this expectation. For a general set $W\subseteq[n]$ of cardinality $k$, with independent angles $\theta_{ab}\sim N(0,\sigma^2)$,
\[
 \E\Tr(Z_W\rho_\theta)
 =\bra{0^n}\E[U_\theta^\dagger Z_WU_\theta]\ket{0^n}.
\]
Since $YZ = -ZY$, the generator $Y_aY_b$ anticommutes with $Z_W$ exactly when one of
$a,b$ lies in $W$ and the other does not. For such a crossing pair,
\[
 e^{i\theta_{ab}Y_aY_b}Z_We^{-i\theta_{ab}Y_aY_b}
 =\cos(2\theta_{ab})Z_W
  +i\sin(2\theta_{ab})Y_aY_bZ_W,
\]
and using $\E[\cos(2\theta_{ab})]=e^{-2\sigma^2}$ and $\E[\sin(2\theta_{ab})]=0$, gives 
\begin{align*}
    \E[e^{i\theta_{ab}Y_aY_b}Z_We^{-i\theta_{ab}Y_aY_b}] 
     = 
     e^{-2\sigma^2}Z_W. 
\end{align*}
A noncrossing pair commutes with $Z_W$ and leaves it unchanged.
Independence allows us to average over the gates successively.
There are $k(n-k)$ crossing pairs, so
\[
 \E[U_\theta^\dagger Z_WU_\theta]
 =e^{-2\sigma^2k(n-k)}Z_W.
\]
Therefore, 
\begin{align*}
    \E \operatorname{Tr}(Z_W \rho_{\theta})
    &=
    \E \operatorname{Tr}( U_\theta^\dagger Z_W U_\theta \ket{0^n}\bra{0^n} )
    = 
    \operatorname{Tr}( \E [U_\theta^\dagger Z_W U_\theta] \ket{0^n}\bra{0^n} )\\
    &= 
    e^{-2\sigma^2k(n-k)} \operatorname{Tr}( Z_W \ket{0^n}\bra{0^n} )= e^{-2\sigma^2k(n-k)},
\end{align*}
where the last line follows using $\bra{0^n}Z_W\ket{0^n}=1$.  For our choice of angles, we have $k=p-1$ and 
\[
 \sigma^2=\operatorname{var}(\theta_{ij}(t)) = t^2 n v_{n,p}
 =\frac{2t^2p(p-1)}{n-1},
\]
giving $m_{n,p}(t)$
\begin{equation}\label{eq:mean-t}
 m_{n,p}(t)
 =2tp(p-1)\sqrt n\,
  \exp\!\left(
   -\frac{4t^2p(p-1)^2(n-p+1)}{n-1}
  \right).
\end{equation}

Maximizing this expression over $t\ge0$ gives
\[
 t^\star_{n,p}
 =\sqrt{\frac{n-1}{8p(p-1)^2(n-p+1)}}.
\]
Writing $\rho_{n,p}^\star$ for the corresponding state, we obtain
\begin{equation}\label{eq:mean-opt}
 \E\Tr(H_{n,p}\rho^\star_{n,p})
 =\sqrt{\frac{p(n-1)}{2e(n-p+1)}}\,\sqrt n.
\end{equation}
Thus, for fixed $p$, the normalized mean converges to
$\sqrt{p/(2e)}$ as $n\to\infty$.
To establish the $L^2$ convergence in part~\textup{(ii)},
it remains to control the fluctuations around this mean.
We next show that
\[
 \operatorname{var}\!\left(\Tr(H_{n,p}\rho^\star_{n,p})\right)\le C_p
\]
for all sufficiently large $n$, where $C_p$ depends only on $p$.
The normalized variance therefore tends to zero, yielding
convergence in $L^2$ and hence in probability.

Let $g=(g_P)_{P\in\mathcal P_{n,p}}$ denote the Hamiltonian coefficients,
and write
\[
 F(g):=\Tr(H_{n,p}(g)\rho_{\theta^\star(g)})=\frac{1}{\sqrt{\binom{n}{p}}}
       \sum_{P\in\mathcal P_{n,p}}g_Pf_{\theta^\star(g)}(P).
\]
By the Gaussian Poincar\'e inequality, $\operatorname{var}(F)\le\E\|\nabla_gF\|_2^2$, and so it suffices to bound the expected squared-norm of the gradient of $F$ with respect to the disorder.

For each $Q\in\mathcal P_{n,p}$,
\[
 \frac{\partial F}{\partial g_Q}
 =
 \frac{f_{\theta^\star}(Q)}{\sqrt{\binom{n}{p}}}
 +
 \sum_{i<j}
 \left.
 \frac{\partial}{\partial\theta_{ij}} \operatorname{Tr}(H_{n,p} \rho_{\theta})
 \right|_{\theta=\theta^\star}
 \frac{\partial\theta^\star_{ij}}{\partial g_Q},
\]
where $H$ is held fixed when taking the derivative with respect to the angles.

Since the angles are linear functions of $g$, we may write
$\theta^\star= K g$ for a deterministic matrix $K$. Defining
\[
 a(\theta):=
 \left(\frac{f_\theta(P)}{\sqrt{\binom{n}{p}}}\right)
 _{P\in\mathcal P_{n,p}},
\]
we can express the preceding identity as
\begin{equation}\label{eq:chain}
 \nabla_gF
 =a(\theta^\star)
  +K^{\mathsf T}\nabla_\theta \operatorname{Tr}(H_{n,p} \rho_{\theta^\star}).
\end{equation}
We now bound these two terms.

For the first, using $|f_\theta(P)|\le1$ and
$|\mathcal P_{n,p}|=3^p\binom{n}{p}$ gives
\[
 \|a(\theta)\|_2^2
 =\frac{1}{\binom{n}{p}}
   \sum_{P\in\mathcal P_{n,p}}|f_\theta(P)|^2
 \le3^p.
\]

For the second, note that distinct angles depend on
disjoint sets of the Hamiltonian coefficients. Consequently, the rows
of $K$ have disjoint supports. Moreover, since the coefficients
$g_P$ are independent standard Gaussians, the squared norm of each
row is the variance of the corresponding angle:
\[
 \sigma_{n,p}^2
 :=\operatorname{var}(\theta^\star_{ij})
 =\frac{1}{4(p-1)(n-p+1)}.
\]
It follows that $KK^{\mathsf T}=\sigma_{n,p}^2I$, and in particular, for every vector $v$,
\[
 \|K^{\mathsf T}v\|_2^2
 =v^{\mathsf T}KK^{\mathsf T}v
 =\sigma_{n,p}^2\|v\|_2^2.
\]
Using this identity in \eqref{eq:chain}, together with
$\|x+y\|_2^2\le2\|x\|_2^2+2\|y\|_2^2$, gives
\[
 \|\nabla_gF\|_2^2
 \le2\cdot3^p+
    2\sigma_{n,p}^2
    \|\nabla_\theta\operatorname{Tr}(H_{n,p} \rho_{\theta^\star})\|_2^2.
\]

To bound the gradient norm term, note that as in the earlier derivative calculation,
\[
 \frac{\partial}{\partial\theta_{ij}} \operatorname{Tr}(H_{n,p} \rho_{\theta})
 =i\bra{\psi_\theta}[Y_iY_j,H_{n,p}]\ket{\psi_\theta}.
\]
Since both $H_{n,p}$ and $Y_iY_j$ are Hermitian, the two matrix
elements in the commutator satisfy
\[
 \bra{\psi_\theta}H_{n,p}Y_iY_j\ket{\psi_\theta}
 =
 \overline{
 \bra{\psi_\theta}Y_iY_jH_{n,p}\ket{\psi_\theta}
 }.
\]
Thus, the derivative is $i$ times the difference between a complex
number and its conjugate, giving
\[
 \left|
 \frac{\partial}{\partial\theta_{ij}}
 \Tr(H_{n,p}\rho_\theta)
 \right|
 \le2\left|
 \bra{\psi_\theta}Y_iY_jH_{n,p}\ket{\psi_\theta}
 \right|.
\]

To control the sum of these squared bounds, we show that the vectors
$\{Y_iY_j\ket{\psi_\theta}:i<j\}$ are orthonormal.
The inner product between the vectors indexed by $(i,j)$ and
$(k,\ell)$ is
\begin{align*}
 \bigl(Y_iY_j\ket{\psi_\theta}\bigr)^\dagger
 \bigl(Y_kY_\ell\ket{\psi_\theta}\bigr)
 &=
 \bra{0^n}U_\theta^\dagger
 Y_iY_jY_kY_\ell U_\theta\ket{0^n}
 =
 \bra{0^n}Y_iY_jY_kY_\ell\ket{0^n}.
\end{align*}
The last equality follows because each generator commutes with
$U_\theta$, allowing us to cancel $U_\theta^\dagger U_\theta=I$.

If $(i,j)\ne(k,\ell)$, the pairs either share exactly one site
or are disjoint. If they share one site, say $j=k$, then
\[
 Y_iY_jY_kY_\ell=Y_iY_j^2Y_\ell=Y_iY_\ell.
\]
Since $i\ne\ell$ and $\ket{0^n}$ is a product state,
\[
 \bra{0^n}Y_iY_\ell\ket{0^n}
 =\bra0Y\ket0\,\bra0Y\ket0=0.
\]
The same argument applies whichever site is shared.
If the pairs are disjoint, the expectation instead factorizes
into four such factors:
\[
 \bra{0^n}Y_iY_jY_kY_\ell\ket{0^n}
 =\bigl(\bra0Y\ket0\bigr)^4=0.
\]

Finally, if $(i,j)=(k,\ell)$, then
$Y_iY_jY_kY_\ell=Y_i^2Y_j^2=I$, so the expectation is one.
Thus the vectors $\{Y_iY_j\ket{\psi_\theta}:i<j\}$ are orthonormal.

Bessel's inequality, applied to $H_{n,p}\ket{\psi_\theta}$ and
this orthonormal family, therefore gives
\begin{align*}
 \|\nabla_\theta\operatorname{Tr}(H_{n,p} \rho_\theta)\|_2^2
 &\le4\sum_{i<j}
   \left|
   \bra{\psi_\theta}Y_iY_jH_{n,p}\ket{\psi_\theta}
   \right|^2
 \le4\|H_{n,p}\ket{\psi_\theta}\|_2^2
 \le4\|H_{n,p}\|_\op^2.
\end{align*}
It follows that 
\[
 \|\nabla_gF\|_2^2
 \le2\cdot3^p+8\sigma_{n,p}^2\|H_{n,p}\|_\op^2.
\]

We therefore only need an estimate for
$\E\|H_{n,p}\|_\op^2$. The Hamiltonian is a matrix Gaussian
series of dimension $2^n$, with variance parameter $3^p$. The matrix Gaussian tail inequality
\cite[Theorem~4.1]{tropp2012user} gives, for $u\ge0$,
\[
 \P\{\|H_{n,p}\|_\op>u\}
 \le\min\{1,2^{n+1}e^{-u^2/(2\cdot3^p)}\}.
\]
Set $s_0=2\cdot3^p(n+1)\log2$, and using the above bound gives
\begin{align*}
 \E\|H_{n,p}\|_\op^2
 &=\int_0^\infty
   \P\{\|H_{n,p}\|_\op^2>s\}\,ds\notag\\
 &\le s_0+
   \int_{s_0}^\infty2^{n+1}e^{-s/(2\cdot3^p)}\,ds\notag\\
 &=2\cdot3^p\bigl((n+1)\log2+1\bigr).
\end{align*}

Putting it together, we conclude that
\begin{align*}
 \operatorname{var}(F)
 &\le\E\|\nabla_gF\|_2^2\notag\\
 &\le2\cdot3^p+
   8\sigma_{n,p}^2\E\|H_{n,p}\|_\op^2\notag\\
 &\le2\cdot3^p+
   \frac{4\cdot3^p\bigl((n+1)\log2+1\bigr)}
        {(p-1)(n-p+1)}.
\end{align*}
For each fixed $p$, this remains bounded as $n\to\infty$. Combining it with the mean computed in \eqref{eq:mean-opt} gives
\[
 \E\left|
 \frac{F}{\sqrt n}-\sqrt{\frac{p}{2e}}
 \right|^2
 =
 \frac{\operatorname{var}(F)}{n}
 +
 \left(
 \sqrt{\frac{p(n-1)}{2e(n-p+1)}}-\sqrt{\frac{p}{2e}}
 \right)^2
 \longrightarrow0.
\]
This establishes part~\textup{(ii)}.

Finally, to show part~\textup{(iii)}, let $C_0$ denote the
constant in Lemma~\ref{lem:profile-product-benchmark}, and
fix $p$ sufficiently large so that
\[
    \delta_p:=\frac12\left(
        \sqrt{\frac{p}{2e}}-C_0\sqrt{\log(ep)}
    \right)>0.
\]
Let $A_n$ be the event that
\[
    \frac{\mathcal E_{n,p}(\Prod_n)}{\sqrt n}
    \le C_0\sqrt{\log(ep)}+\frac{\delta_p}{2}.
\]
By Lemma~\ref{lem:profile-product-benchmark},
$\P(A_n)\ge1-e^{-\delta_p^2n/8}$.
Let $B_n$ be the event that
$\Tr(H_{n,p}\rho^\star_{n,p})/\sqrt n
\ge\sqrt{p/(2e)}-\delta_p/2$.
Part~\textup{(ii)} gives $\P(B_n)\to1$, so a union bound yields
$\P(A_n\cap B_n)\ge1-\P(A_n^c)-\P(B_n^c)\to1$.

Since $\rho^\star_{n,p}\in\mathcal{C}_n^{\operatorname{pair}}$, on $A_n\cap B_n$ we have
\[
 \mathcal E_{n,p}(\mathcal{C}_n^{\operatorname{pair}})-\mathcal E_{n,p}(\Prod_n)
 \ge\Tr(H_{n,p}\rho^\star_{n,p})-\mathcal E_{n,p}(\Prod_n)
 \ge\delta_p\sqrt n.
\]
This proves \eqref{eq:advantage}.
\end{proof}

\subsection{Circuit depth and width}
\label{subsec:plain-shallow}

Proposition~\ref{prop:gate-count-main} yields substantially stronger depth
bounds when the circuit width is controlled.  With unrestricted ancillas,
one first restricts to the backward lightcone of the retained outputs,
as in \cite[Proposition~10]{al2026circuit}, and then invokes the same
gate-count theorem.  Thus no separate covering argument for shallow
circuits is required.

\begin{definition}[Shallow circuit classes]
For integers $M\ge n$ and $D\ge0$, let
$\mathsf{Circ}^{[D]}_M$ denote the class of $M$-qubit circuits of depth at
most $D$, where every layer consists of one- and two-qubit gates with
pairwise disjoint supports.  Define
\[
    \mathsf{shallow}^{[D]}_{n,M}
    :=
    \left\{
        \Tr_{\{n+1,\ldots,M\}}
        \left(
            Q|0^M\rangle\langle0^M|Q^\dagger
        \right):
        Q\in\mathsf{Circ}^{[D]}_M
    \right\}.
\]
We write
$\mathsf{shallow}^{[D]}_n:=\mathsf{shallow}^{[D]}_{n,n}$ for the
ancilla-free class, and set
\[
    \mathsf{shallow}^{[D],\mathrm{anc}}_n
    :=
    \bigcup_{M\ge n}
    \mathsf{shallow}^{[D]}_{n,M}.
\]
\end{definition}

The gate-count bound gives a continuous tradeoff between circuit depth
and ancillary workspace.  If a depth-$D$ circuit acts on $M$ qubits
but only $n$ outputs are retained, then at most
${\widehat M}:=\min\{M,n2^D\}$ qubits can lie in the backward
lightcone of the retained outputs.

\begin{proposition}[Depth--width energy bounds]
\label{prop:plain-shallow-width-depth-tradeoff}
Let $(M_n)_{n\ge1}$ and $(D_n)_{n\ge1}$ be fixed deterministic integer
sequences with $M_n\ge n$ and $D_n\ge0$, and set
${\widehat M}_n:=\min\{M_n,n2^{D_n}\}$.
\begin{enumerate}[label=\textup{(\roman*)}]
\item Suppose $D_n{\widehat M}_n\le Ln^2/\log n$ for all
sufficiently large $n$, for some fixed $L<\infty$. Then, for every
sufficiently large fixed $p$, with the lower threshold depending
only on $L$, there is $c_{p,L}>0$ such that
\[
    \P\left(
        \Delta^{\mathrm{opt}}_{n,p}
        \left(\mathsf{shallow}^{[D_n]}_{n,M_n}\right)
        \ge c_{p,L}\sqrt n
    \right)
    \longrightarrow1.
\]

\item For every fixed $p\ge2$, if
$D_n{\widehat M}_n=o(n^2/\log n)$, then
\[
    \frac{
        \Delta^{\mathrm{ref}}_{n,p}
        \left(\mathsf{shallow}^{[D_n]}_{n,M_n};\Prod_n\right)
    }{\sqrt n}
    \longrightarrow0
\]
in $L^1$ and in probability.
\end{enumerate}
\end{proposition}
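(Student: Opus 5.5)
The plan is to reduce the statement to Proposition~\ref{prop:gate-count-main} by showing that every state in $\mathsf{shallow}^{[D_n]}_{n,M_n}$ lies in $\mathsf{Gate}^{[G_n],\mathrm{anc}}_n$ with $G_n\le D_n\widehat M_n$. Since both conclusions of Proposition~\ref{prop:gate-count-main} are monotone under inclusion of classes, they transfer directly. More precisely, $\Delta^{\mathrm{opt}}$ can only increase and $\Delta^{\mathrm{ref}}$ can only decrease when passing to a subclass, because $\mathcal E_{n,p}$ is monotone in the class. The hypotheses $D_n\widehat M_n\le Ln^2/\log n$, resp.\ $o(n^2/\log n)$, then become exactly the gate-count hypotheses.

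The key step is a pruning argument mirroring Lemma~\ref{lem:gate-count-active-width}, organized by layers rather than by gates. Given a depth-$D$ circuit $Q=V_D\cdots V_1$ on $M$ qubits, I define backward lightcones $A_D=[n]$ and $A_{\ell-1}=A_\ell\cup\bigcup\{E: E \text{ a gate support in } V_\ell,\ E\cap A_\ell\neq\emptyset\}$. Each gate in layer $\ell$ touching $A_\ell$ contains at least one point of $A_\ell$ and has at most one extra qubit. Gates within a layer are disjoint, so $|A_{\ell-1}|\le 2|A_\ell|$, giving $|A_0|\le n2^D$. Trivially $A_\ell\subseteq[M]$, so $|A_\ell|\le\widehat M$ for all $\ell$.

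Next I keep only the gates of layer $\ell$ meeting $A_\ell$. Their number is at most $|A_\ell|\le\widehat M$, since the gates are disjoint and each meets $A_\ell$. This gives at most $D\widehat M$ gates in total. The same Heisenberg-picture induction as in Lemma~\ref{lem:gate-count-active-width} shows that $\operatorname{supp}(O_\ell)\subseteq A_\ell$ for any observable $O$ on the outputs, and that deleted gates commute with $O_\ell$. Hence the pruned circuit produces the same reduced state, and it is a valid element of $\mathsf{Gate}^{[D\widehat M],\mathrm{anc}}_n$ because ancillas are unrestricted there. In fact the per-gate statement of Lemma~\ref{lem:gate-count-active-width} already gives the pruning. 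The only new content is the count: at most $\min\{M,n2^D\}$ gates per layer survive.

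With $G_n:=D_n\widehat M_n$, the inclusion $\mathsf{shallow}^{[D_n]}_{n,M_n}\subseteq\mathsf{Gate}^{[G_n],\mathrm{anc}}_n$ lets us invoke Proposition~\ref{prop:gate-count-main}(i) and (ii) directly. The threshold on $p$ depends only on $L$, as there. The case $D_n=0$ is trivial, since the class is then just $\ket{0^n}$, which is covered with $G_n=0$.

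The main point requiring care is the per-layer count: the count of retained gates must be bounded by $|A_\ell|$ and not by $|A_{\ell-1}|$, using disjointness within a layer. Beyond that, the argument is routine bookkeeping.
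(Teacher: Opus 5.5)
Your proposal is correct and is essentially the paper's proof. Both arguments prune to the backward lightcone layer by layer, bound $|A_0|\le\min\{M_n,n2^{D_n}\}$, derive the inclusion $\mathsf{shallow}^{[D_n]}_{n,M_n}\subseteq\mathsf{Gate}^{[G_n],\mathrm{anc}}_n$ with $G_n=D_n\widehat M_n$, and then invoke the gate-count proposition using monotonicity of $\Delta^{\mathrm{opt}}$ and $\Delta^{\mathrm{ref}}$ under passage to subclasses.

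The one cosmetic difference is the per-layer count. The paper notes that the retained gates in each layer are disjoint and supported in $A_0$, so there are at most $|A_0|\le\widehat M_n$ of them. Bounding by $|A_\ell|$ rather than $|A_{\ell-1}|$ is therefore not the delicate point you suggest, since every $A_\ell$ has size at most $\widehat M_n$.
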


\begin{proof}
Fix $n$ and a state
$\rho\in\mathsf{shallow}^{[D_n]}_{n,M_n}$,
prepared by a circuit $Q=Q_{D_n}\cdots Q_1$.
Let $E_{t,j}$ denote the gate supports in layer $Q_t$.
Starting from the retained qubits $A_{D_n}=[n]$, define recursively
\[
    A_{t-1}
    :=
    A_t\cup
    \bigcup_{j:E_{t,j}\cap A_t\ne\emptyset}E_{t,j},
    \qquad t=D_n,\ldots,1,
\]
so that $A_t$ collects the qubits that can affect the $n$ retained qubits through the remaining layers $Q_{t+1},\ldots, Q_{D_n}.$ The gate supports in each layer are disjoint and have size at most two. Thus at most $|A_t|$ gates meet $A_t$,
and each adds at most one qubit outside $A_t$, giving $|A_{t-1}|\le2|A_t|$. Since $A_0\subseteq[M_n]$, it follows that
\[
    |A_0|
    \le
    \min\{M_n,2^{D_n}|A_{D_n}|\}
    =
    \min\{M_n,n2^{D_n}\}
    =
    \hatM_n.
\]

Retain precisely those gates in layer $t$ whose supports
meet $A_t$.
By the observable argument in the proof of
Lemma~\ref{lem:gate-count-active-width}, deleting the other
gates preserves the reduced state on the retained outputs.
After relabeling the active qubits, the resulting circuit
$\barQ$ acts on $|A_0|$ qubits, starts from
$|0^{|A_0|}\rangle$, and prepares $\rho$ on its first $n$
output qubits.

The retained gates still have disjoint supports within each
layer, so each layer contains at most $|A_0|$ gates.
Consequently, $\barQ$ has at most $D_n|A_0| \le D_n\hatM_n$ gates. Setting $G_n:=D_n\hatM_n$, we obtain
\begin{equation}
\label{eq:shallow-gate-count-inclusion}
    \mathsf{shallow}^{[D_n]}_{n,M_n}
    \subseteq
    \mathsf{Gate}^{[G_n],\mathrm{anc}}_n.
\end{equation}
This also holds when $D_n=0$, in which case $A_0=[n]$
and the pruned circuit has no gates.

By \eqref{eq:shallow-gate-count-inclusion} and monotonicity
with respect to the state class,
\[
    \Delta^{\mathrm{opt}}_{n,p}
    \left(\mathsf{shallow}^{[D_n]}_{n,M_n}\right)
    \ge
    \Delta^{\mathrm{opt}}_{n,p}
    \left(\mathsf{Gate}^{[G_n],\mathrm{anc}}_n\right),
\]
and
\[
    0
    \le
    \Delta^{\mathrm{ref}}_{n,p}
    \left(\mathsf{shallow}^{[D_n]}_{n,M_n};\Prod_n\right)
    \le
    \Delta^{\mathrm{ref}}_{n,p}
    \left(\mathsf{Gate}^{[G_n],\mathrm{anc}}_n;\Prod_n\right).
\]
Under the hypothesis of part~\textup{(i)},
$G_n\le Ln^2/\log n$ eventually.
Proposition~\ref{prop:gate-count-main}\textup{(i)}
and the first inequality therefore give part~\textup{(i)}. Under the hypothesis of part~\textup{(ii)},
$G_n=o(n^2/\log n)$.
Proposition~\ref{prop:gate-count-main}\textup{(ii)}
and the second inequality give the claimed convergence
in $L^1$ and in probability.
\end{proof}

\begin{proof}[Proof of Corollary~\ref{cor:intro-shallow-depth}]
Apply Proposition~\ref{prop:plain-shallow-width-depth-tradeoff} with the
same sequences $D_n,M_n$ and effective width
${\widehat M}_n=\min\{M_n,n2^{D_n}\}$.
Its parts~\textup{(i)} and~\textup{(ii)} give the ground-state
separation and product-state comparison, respectively, at the stated
thresholds.
\end{proof}

\begin{remark}[Comparison with \cite{al2026circuit}]
\cite[Theorem~13]{al2026circuit} proves a comparison with product states,
uniformly over an arbitrary number of ancillas, under the condition
$4^{D_n}D_n(\log n+D_n)=o(n)$.  The present argument gives the same
product-state comparison under the weaker depth restriction $D_n2^{D_n}
    \log(e n(D_n+1)2^{D_n})
    =
    o(n)$, which in particular permits
$D_n\le c\log_2n$ for every fixed $c\in(0,1)$.

The improvement in the accessible depth range does not come from a
stronger backward-lightcone reduction, since we use the same pruning
mechanism developed in \cite{al2026circuit}.  The difference is how the
reduction is used.  The argument in \cite{al2026circuit} controls a
lightcone-localized residual process, whose variance introduces an
additional $2^D$ cost.  Here pruning is used only to bound the
effective gate count, after which the local-profile complexity principle
gives both the product-state comparison and ground-state separation.

For circuits of linear width, pruning is unnecessary altogether, and
the gate-count theorem yields the much larger depth range
$D_n=o(n/\log n)$, with ground-state separation continuing to the
critical scale $D_n=O(n/\log n)$ for sufficiently large $p$.
\end{remark}

\subsection{Block-product states}
\label{subsec:block-product}

The preceding applications constrain states through the resources used to
prepare them.  We now consider a different restriction, based only on the
extent of entanglement.  Within a block of qubits the state may be
completely arbitrary and may require unbounded circuit depth or gate
count.  The only requirement is that the global state factors across
blocks of bounded size.

\begin{definition}[Block-product states]
For integers $1\le b\le n$, let $\mathfrak B_{n,b}$ be the set of partitions
$\pi$ of $[n]$ into nonempty blocks $B$ with $|B|\le b$.  Define
\[
    \BlockProd^{[b]}_n
    :=
    \bigcup_{\pi\in\mathfrak B_{n,b}}
    \left\{
        \bigotimes_{B\in\pi}
        |\psi_B\rangle\langle\psi_B|
        :
        |\psi_B\rangle\in(\mathbb C^2)^{\otimes |B|},
        \ \|\psi_B\|_2=1
        \ \text{for all } B\in\pi
    \right\}.
\]
Thus $\BlockProd^{[b]}_n$ consists of pure states whose entanglement is
confined to blocks of at most $b$ qubits.  The convex hull of
$\BlockProd^{[b]}_n$ is the usual class of mixed $b$-producible
states.
\end{definition}

There is no restriction on the complexity of the state within an
individual block.  Moreover, the block-product restriction is stable
under arbitrary discardable ancillary workspace.

\begin{lemma}[Stability under discardable ancillas]
\label{lem:block-product-ancillas}
Suppose a joint state of the $n$ physical qubits and an arbitrary
number of ancillas factors across blocks, each containing at most $b$
physical qubits.  After discarding the ancillas, the resulting state on
the physical qubits belongs to
$\operatorname{conv}(\BlockProd_n^{[b]})$.
\end{lemma}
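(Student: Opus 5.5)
The plan is to use the structure of the joint state directly. By hypothesis, the joint state on the $n$ physical qubits together with the ancillas has the form $\widehat\rho=\bigotimes_{k}\widehat\rho_{k}$. Here the factors are indexed by blocks $\widehat B_k=B_k\cup E_k$, where $B_k\subseteq[n]$ satisfies $|B_k|\le b$, the sets $E_k$ are the ancilla parts, and the physical parts $\{B_k\}$ partition $[n]$. If some factor contains only ancillas, i.e. $B_k=\emptyset$, it traces to a scalar equal to one and can be dropped. Partial trace over the ancillas factorizes across a tensor product, so
\[
    \Tr_E\widehat\rho=\bigotimes_{k:\,B_k\ne\emptyset}\Tr_{E_k}\widehat\rho_k=:\bigotimes_k\tau_{B_k},
\]
where each $\tau_{B_k}$ is a (generally mixed) state on at most $b$ physical qubits. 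This identity is the only structural input; the rest is convexity.

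Next I spectrally decompose each block state as $\tau_{B_k}=\sum_{x}\lambda^{(k)}_x|\psi^{(k)}_x\rangle\langle\psi^{(k)}_x|$, with $\lambda^{(k)}_x\ge0$ summing to one and unit vectors $|\psi^{(k)}_x\rangle\in(\mathbb C^2)^{\otimes|B_k|}$. Multilinearity of the tensor product then gives
\[
    \bigotimes_k\tau_{B_k}=\sum_{x_1,x_2,\ldots}\Bigl(\prod_k\lambda^{(k)}_{x_k}\Bigr)\bigotimes_k|\psi^{(k)}_{x_k}\rangle\langle\psi^{(k)}_{x_k}|.
\]
The weights are nonnegative and sum to one, and each summand is a pure state factorizing over the fixed partition $\pi=\{B_k\}\in\mathfrak B_{n,b}$. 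Every summand therefore lies in $\BlockProd^{[b]}_n$, and the discarded state lies in $\operatorname{conv}(\BlockProd^{[b]}_n)$. If the joint state is itself a mixture of such factorized states, possibly over different partitions, linearity of the partial trace extends the conclusion.

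There is no real obstacle. The only point needing care is bookkeeping: I will interpret the hypothesis so that the physical parts of the blocks are nonempty and partition $[n]$, while any ancilla-only factors are traced out and contribute a trivial scalar.
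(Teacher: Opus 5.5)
Your proposal is correct and matches the paper's proof: the partial trace over the ancillas preserves the block tensor factorization, and decomposing each remaining block factor into pure states then exhibits the physical state as a convex combination of elements of $\BlockProd_n^{[b]}$. Your extra bookkeeping only spells out what the paper's three-sentence argument leaves implicit, namely dropping ancilla-only factors and extending by linearity to mixtures over different partitions.
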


\begin{proof}
Tracing out the ancillas preserves the tensor-product decomposition
between blocks.  Each remaining factor is a mixed state on at most $b$
physical qubits.  Decomposing these factors into pure states expresses
the resulting physical state as a convex combination of elements of
$\BlockProd_n^{[b]}$.
\end{proof}

We next bound the local-profile complexity of the class.  There are two
sources of entropy: the choice of the partition and the states assigned
to its blocks.

\begin{lemma}[Profile entropy of block-product states]
\label{lem:block-product-profile-entropy}
There is a universal constant $C<\infty$ such that, for every
$n\ge p\ge2$, integer $1\le b\le n$, and $0<\varepsilon<1$,
\begin{equation}
\label{eq:block-product-profile-entropy}
    \log
    N\left(
        \varepsilon,
        \mathcal F_{n,p}\left(\BlockProd^{[b]}_n\right),
        \|\cdot\|_{n,p}
    \right)
    \le
    n\log n
    +
    \frac{Cn2^b}{b}
    \log\left(
        \frac{C\,3^{p/2}n}{\varepsilon}
    \right).
\end{equation}
\end{lemma}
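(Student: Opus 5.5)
The plan is to cover the profile class in two stages: first choose the partition, then, for a fixed partition, discretize the pure state on each block. This follows the circuit-covering argument of Lemma~\ref{lem:gate-count-profile-entropy}. Counting the partitions gives the $n\log n$ term. The number of partitions of $[n]$ into blocks of size at most $b$ is at most the $n$th Bell number. This is at most $n^n$, since each element can be assigned a block label in $[n]$. Its logarithm is therefore at most $n\log n$.

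Now fix a partition $\pi\in\mathfrak B_{n,b}$. For each block $B\in\pi$, the vector $|\psi_B\rangle$ lies in the unit sphere of $\mathbb C^{2^{|B|}}\cong\mathbb R^{2^{|B|+1}}$. This sphere admits a Euclidean $\delta$-net, with centers on the sphere, of cardinality at most $(3/\delta)^{2^{|B|+1}}$. Taking products of these nets over the blocks yields a set of block-product states $\bigotimes_B|\widetilde\psi_B\rangle\langle\widetilde\psi_B|$ of cardinality at most $\exp\bigl(2\sum_{B\in\pi}2^{|B|}\log(3/\delta)\bigr)$.

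To control the exponent, I would use that $k\mapsto 2^k/k$ is nondecreasing on the integers $k\ge1$. Hence $2^{|B|}\le |B|\,2^b/b$ for every block, and summing over blocks gives
\[
\sum_{B\in\pi}2^{|B|}\le \frac{n2^b}{b}.
\]

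For the approximation error, I would telescope across the tensor factors, exactly as in the proof of Lemma~\ref{lem:measurement-parametrized-definetti}. Together with $\||\psi\rangle\langle\psi|-|\phi\rangle\langle\phi|\|_1\le2\|\psi-\phi\|_2$, this gives
\[
\Bigl\|\bigotimes_B\psi_B-\bigotimes_B\widetilde\psi_B\Bigr\|_1\le 2|\pi|\delta\le 2n\delta.
\]
The profile-norm comparison from the proof of Lemma~\ref{lem:gate-count-profile-entropy} gives $\|f_\rho-f_{\widetilde\rho}\|_{n,p}\le 3^{p/2}\|\rho-\widetilde\rho\|_1\le 2\cdot3^{p/2}n\delta$.

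I would choose $\delta=\varepsilon/(2\cdot 3^{p/2}n)$, so that the net states form an $\varepsilon$-cover for each fixed partition. Taking the union over all partitions then gives
\[
\log N\le n\log n+\frac{2n2^b}{b}\log\Bigl(\frac{6\cdot3^{p/2}n}{\varepsilon}\Bigr),
\]
which is \eqref{eq:block-product-profile-entropy} with $C=6$.

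The only step requiring care is the convexity bound $\sum_B 2^{|B|}\le n2^b/b$. It is where the factor $2^b/b$ arises, and it relies on the monotonicity of $2^k/k$, which holds for $k\ge1$ (it is equal at $k=1,2$). The remaining steps are routine. The crude $n$ in the trace-norm error costs only a logarithm, so no locality refinement is needed.
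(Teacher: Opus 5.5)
Your proposal is correct and follows essentially the same route as the paper's proof: you bound the number of partitions by $n^n$, take per-block sphere nets converted to trace-norm nets via $\bigl\||\psi\rangle\langle\psi|-|\phi\rangle\langle\phi|\bigr\|_1\le2\|\psi-\phi\|_2$, tensor-telescope, apply the $3^{p/2}$ profile comparison, and use monotonicity of $2^k/k$ to get $\sum_{B\in\pi}2^{|B|}\le n2^b/b$. The only difference is that you track the constants explicitly and obtain $C=6$, which the paper leaves unspecified.
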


\begin{proof}
Fix $\pi\in\mathfrak B_{n,b}$.  For each $B\in\pi$, the unit sphere
of $(\mathbb C^2)^{\otimes |B|}$ has a $\delta$-net of cardinality
at most $(C/\delta)^{C2^{|B|}}$.  For unit vectors
$\ket{\psi},\ket{\phi}$, $\|
        |\psi\rangle\langle\psi|
        -
        |\phi\rangle\langle\phi|
    \|_1
    \le
    2\|\psi-\phi\|_2$. Therefore, a Euclidean $\delta/2$-net of the unit sphere induces a
trace-norm $\delta$-net of the pure states, so
\[
    N\left(
        \delta,
        \{\text{pure states on }B\},
        \|\cdot\|_1
    \right)
    \le
    \left(\frac{C}{\delta}\right)^{C2^{|B|}}.
\]

For
$\rho=\bigotimes_{B\in\pi}\rho_B$ and
$\widetilde\rho=\bigotimes_{B\in\pi}\widetilde\rho_B$, tensor
telescoping gives
\[
    \|\rho-\widetilde\rho\|_1
    \le
    \sum_{B\in\pi}
    \|\rho_B-\widetilde\rho_B\|_1.
\]
Moreover, $\|f_\rho-f_{\widetilde\rho}\|_{n,p}
    \le
    3^{p/2}\|\rho-\widetilde\rho\|_1$. Since $|\pi|\le n$, choosing
$\delta=\varepsilon/(3^{p/2}n)$ on each block gives an
$\varepsilon$-net in the canonical metric.  For positive integers $j$,
$2^j/j$ is nondecreasing.  Since $1\le |B|\le b$ and
$\sum_{B\in\pi}|B|=n$, we have
\[
    \sum_{B\in\pi}2^{|B|}
    \le\frac{2^b}{b}\sum_{B\in\pi}|B|
    =\frac{n2^b}{b}.
\]
For the fixed partition, the logarithmic cardinality is therefore at most
\[
    C
    \sum_{B\in\pi}2^{|B|}
    \log\left(
        \frac{C\,3^{p/2}n}{\varepsilon}
    \right)
    \le
    \frac{Cn2^b}{b}
    \log\left(
        \frac{C\,3^{p/2}n}{\varepsilon}
    \right).
\]

Finally,
$|\mathfrak B_{n,b}|\le n^n$, since assigning one of $n$ block
labels to each site only enlarges the number of admissible partitions.
Taking the union over $\pi\in\mathfrak B_{n,b}$ proves
\eqref{eq:block-product-profile-entropy}.
\end{proof}

We may therefore apply the local-profile complexity principle without
placing any restriction on how the states inside the blocks are
prepared.

\begin{proposition}[Block-product energy bounds]
\label{prop:block-product-main}
Let $(b_n)_{n\ge1}$ be a fixed deterministic sequence of integers with
$1\le b_n\le n$.
\begin{enumerate}[label=\textup{(\roman*)}]
\item Suppose $2^{b_n}/b_n\le Ln/\log n$ for all sufficiently
large $n$, for some fixed $L<\infty$. Then, for every sufficiently
large fixed $p$, with the lower threshold depending only on $L$,
there is $c_{p,L}>0$ such that
\[
    \P\left(
        \Delta^{\mathrm{opt}}_{n,p}
        \left(\operatorname{conv}(\BlockProd^{[b_n]}_n)\right)
        \ge c_{p,L}\sqrt n
    \right)
    \longrightarrow1.
\]

\item For every fixed $p\ge2$, if $2^{b_n}\log n/b_n=o(n)$, then
\[
    \frac{
        \Delta^{\mathrm{ref}}_{n,p}
        \left(\operatorname{conv}(\BlockProd^{[b_n]}_n);\Prod_n\right)
    }{\sqrt n}
    \longrightarrow0
\]
in $L^1$ and in probability.
\end{enumerate}
Both conclusions remain valid when each block may use an arbitrary number
of discardable ancillas.
\end{proposition}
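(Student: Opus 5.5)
The plan is to apply Theorem~\ref{thm:local-profile-complexity-principle} to the class $\mathcal C_n:=\operatorname{conv}(\BlockProd^{[b_n]}_n)$. The only class-specific input is a metric-entropy bound for its profile class. Taking convex hulls changes neither the energy optimum nor the benchmark energy, since energies are linear. It also does not change $\lambda_{\max}$. Hence
\[
    \mathcal E_{n,p}(\operatorname{conv}(\BlockProd^{[b_n]}_n))=\mathcal E_{n,p}(\BlockProd^{[b_n]}_n),
\]
and $\Delta^{\mathrm{opt}}$ and $\Delta^{\mathrm{ref}}$ coincide for the class and its hull. We may therefore apply the theorem directly to $\BlockProd^{[b_n]}_n$, whose profile entropy is controlled by Lemma~\ref{lem:block-product-profile-entropy}. (Alternatively, if one wants the hull itself as the class, the theorem's proof only uses the supremum of a linear process, so the same reduction applies.)

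From Lemma~\ref{lem:block-product-profile-entropy}, take
\[
    a_{n,p}=n\log n,\qquad
    b_{n,p}=\frac{Cn2^{b_n}}{b_n},\qquad
    B_{n,p}=C3^{p/2}n\ge1.
\]
Then
\[
    \Gamma_{n,p}=n\log n+\frac{Cn2^{b_n}}{b_n}\log\bigl(eC3^{p/2}n^2\bigr).
\]
For fixed $p$, the logarithm is at most $3\log n$ for large $n$, so $\Gamma_{n,p}\le n\log n+C'n\,2^{b_n}\log n/b_n$, with $C'$ universal (the $3^{p/2}$ factor is absorbed into the $\log n$ asymptotically).

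Under the hypothesis of part~(i), $2^{b_n}/b_n\le Ln/\log n$, so $\limsup_n \Gamma_{n,p}/n^2\le C'L$, with a constant independent of $p$. Theorem~\ref{thm:local-profile-complexity-principle}(i), applied with $C'L$ in place of $L$, then gives the ground-state separation for all $p$ beyond a threshold depending only on $L$. Under part~(ii), $2^{b_n}\log n/b_n=o(n)$ gives $\Gamma_{n,p}=o(n^2)$. Since $\Prod_n$ is a conditioning benchmark for every class, by Corollary~\ref{cor:conditioning-canonical}, Theorem~\ref{thm:local-profile-complexity-principle}(ii) yields the comparison.

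For the ancilla statement, Lemma~\ref{lem:block-product-ancillas} shows that states prepared with per-block discardable ancillas lie in $\operatorname{conv}(\BlockProd^{[b_n]}_n)$. Monotonicity of both gaps in the class then transfers the conclusions.

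The only point needing care is the convex-hull reduction: the covering bound is for pure block-product profiles, not the hull, whose entropy could be much larger. I handle this through linearity of the energy rather than by covering the hull, checking that $\Delta^{\mathrm{opt}}$ and $\Delta^{\mathrm{ref}}$ depend on the class only through the supremum of the energy.
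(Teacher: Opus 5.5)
Your proposal is correct and follows the paper's proof essentially step for step. You use the same parameters $a_{n,p}=n\log n$, $b_{n,p}=Cn2^{b_n}/b_n$, $B_{n,p}=C3^{p/2}n$ from Lemma~\ref{lem:block-product-profile-entropy}, the same $p$-independent bound on $\limsup_n\Gamma_{n,p}/n^2$ fed into Theorem~\ref{thm:local-profile-complexity-principle}, the convex-hull reduction by linearity of the energy, and the ancilla extension via Lemma~\ref{lem:block-product-ancillas} with monotonicity in the class.
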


\begin{proof}
Lemma~\ref{lem:block-product-profile-entropy} allows us to take
\[
    a_{n,p}=n\log n,
    \qquad
    b_{n,p}=Cn2^{b_n}/b_n,
    \qquad
    B_{n,p}=C3^{p/2}n
\]
for the pure block-product class in
Theorem~\ref{thm:local-profile-complexity-principle}.  Hence
\[
    \Gamma_{n,p}
    \le
    n\log n
    +
    \frac{Cn2^{b_n}}{b_n}
    \log\left(eC3^{p/2}n^2\right).
\]
Under the hypothesis of part~\textup{(i)},
$\limsup_{n\to\infty}\Gamma_{n,p}/n^2\le CL$ for every fixed $p$,
with a constant independent of $p$.  Theorem~\ref{thm:local-profile-complexity-principle}\textup{(i)}
therefore gives ground-state separation.  Under the hypothesis of
part~\textup{(ii)}, $\Gamma_{n,p}=o(n^2)$ for every fixed $p$, so
Theorem~\ref{thm:local-profile-complexity-principle}\textup{(ii)}, with
benchmark $\mathcal B_n=\Prod_n$, gives the product-state comparison.

Passing to the convex hull does not change the maximal energy, since the objective is linear.  Finally, the extension to arbitrary
discardable ancillas follows from
Lemma~\ref{lem:block-product-ancillas}.
\end{proof}

\begin{proof}[Proof of Corollary~\ref{cor:intro-entanglement-depth}]
The class of $b_n$-producible states is
$\operatorname{conv}(\BlockProd^{[b_n]}_n)$.
Both conclusions therefore follow from
Proposition~\ref{prop:block-product-main}.
The extension to arbitrary discardable ancillas within each block is
included there by Lemma~\ref{lem:block-product-ancillas}.
\end{proof}

Proposition~\ref{prop:block-product-main} proves
Corollary~\ref{cor:intro-entanglement-depth} at the exact thresholds stated
there.  In particular, if $b_n\le c\log_2 n$ for some fixed
$c\in(0,1)$, then $2^{b_n}\log n/b_n=o(n)$, so both conclusions apply.

\subsection{Matrix product states}
\label{subsec:tensor-networks}

We now consider a class of tensor network states of a given bond dimension $\chi$.
We focus on matrix product states (MPS) with open boundary conditions
\cite{schon2005sequential,perez2006matrix}.
These states can possess
entanglement which can extend across the full system.  
An open boundary MPS is specified by an ordering of the sites and by bond dimensions $r_0=r_n=1$ and $1\le r_j\le\chi$ for $j \in [n-1]$.
For every site $j \in [n]$ and index $x \in \{0,1\}$, the MPS is parameterized by a collection of matrices $A=(A_x^{[j]})_{j,x}$ where $A_x^{[j]}\in\mathbb C^{r_{j-1}\times r_j}$. The corresponding state is
\begin{equation}
\label{eq:mps-representation}
    |\psi(A)\rangle
    =
    \sum_{x_1,\ldots,x_n\in\{0,1\}}
    A_{x_1}^{[1]}
    A_{x_2}^{[2]}
    \cdots
    A_{x_n}^{[n]}
    |x_1\cdots x_n\rangle .
\end{equation}

For $\chi\ge1$, let $\mathsf{MPS}^{[\chi]}_n$ be the class of pure $n$-qubit states whose Schmidt rank (bond dimension) across every contiguous cut $\{1,\ldots,j\}:\{j+1,\ldots,n\}$ is at most $\chi$.  
Unlike block-product states, this class permits correlations across the entire chain; for example, the GHZ state has bond dimension two.

\begin{lemma}[Profile entropy of matrix product states]
\label{lem:mps-profile-entropy}
There is a universal constant $C<\infty$ such that, for every
$n\ge p\ge2$, $\chi\ge1$, and $0<\varepsilon<1$,
\begin{equation}
\label{eq:mps-profile-entropy}
\log N\left(
    \varepsilon,
    \mathcal F_{n,p}\left(\mathsf{MPS}^{[\chi]}_n\right),
    \|\cdot\|_{n,p}
\right)
\le
C n\chi^2
\log\left(
    \frac{C\,3^{p/2}n\chi}{\varepsilon}
\right).
\end{equation}
The same estimate holds with an added $O(n\log n)$ term if the ordering of
the physical qubits is also allowed to vary.
\end{lemma}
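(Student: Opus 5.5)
We will prove the MPS entropy bound in the same way as the block-product and gate-count bounds. We first build a trace-norm net of MPS states directly from a discretization of the MPS tensors, and then pass to profiles using $\|f_\rho-f_{\widetilde\rho}\|_{n,p}\le 3^{p/2}\|\rho-\widetilde\rho\|_1$.

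\emph{Normalization.} Every $|\psi\rangle\in\mathsf{MPS}^{[\chi]}_n$ has a left-canonical representation, obtained by successive SVDs across the cuts. This gives bond dimensions $r_j\le\chi$, and each site tensor, viewed as the matrix $V^{[j]}:=\binom{A^{[j]}_0}{A^{[j]}_1}\in\mathbb C^{2r_{j-1}\times r_j}$, is an isometry: $\sum_x A_x^{[j]\dagger}A_x^{[j]}=I_{r_j}$. For $j<n$ each $V^{[j]}$ has operator norm one. The last tensor $V^{[n]}$ is a unit vector, because $r_n=1$ and the state is normalized. Padding with zeros, we may assume $r_j=\chi$ for $1\le j\le n-1$, so that each $V^{[j]}$ is a partial isometry of operator norm at most one. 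The bond dimensions themselves contribute at most $\chi^{n}$ choices. That is $n\log\chi$ to the entropy, which is absorbed.

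\emph{Net and telescoping.} We take an operator-norm $\delta$-net of the unit ball of $2\chi\times\chi$ complex matrices. It has cardinality at most $(C/\delta)^{C\chi^2}$, and we rescale centers into the unit ball so that norms stay at most one. Replacing each $V^{[j]}$ by a net point $\widetilde V^{[j]}$ produces a vector $|\widetilde\psi\rangle$. The key observation is that $|\psi\rangle$ is the composition of the linear maps $W_j:\mathbb C^{r_{j-1}}\to\mathbb C^2\otimes\mathbb C^{r_j}$ induced by $V^{[j]}$, each tensored with identities on the physical qubits already produced. These maps have operator norm at most one. Telescoping exactly as in Lemma~\ref{lem:gate-count-profile-entropy} therefore gives $\||\psi\rangle-|\widetilde\psi\rangle\|_2\le n\delta$. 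In particular $\|\widetilde\psi\|_2\ge 1-n\delta$, so normalizing $\widetilde\psi$ changes it by at most $2n\delta$ when $n\delta\le1/2$. The resulting normalized state is still an MPS of bond dimension at most $\chi$, although for a covering number this is irrelevant since centers may lie anywhere. Using $\||\psi\rangle\langle\psi|-|\phi\rangle\langle\phi|\|_1\le2\|\psi-\phi\|_2$, we get a trace-norm error of at most $Cn\delta$.

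\emph{Count.} We choose $\delta=\varepsilon/(C3^{p/2}n)$. The number of net points is then at most $\chi^n(C3^{p/2}n/\varepsilon)^{Cn\chi^2}$, which gives \eqref{eq:mps-profile-entropy} after enlarging $C$. The factor $\chi$ inside the logarithm is harmless slack. If the ordering also varies, we take a union over at most $n!\le n^n$ orderings, which adds $n\log n$.

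The main obstacle is ensuring that the discretization error accumulates only linearly in $n$ rather than exponentially. This is exactly why we pass to canonical form, so that every partial contraction is a contraction of operator norm at most one. Some care is also needed with the zero padding and with the last tensor being a vector, but with all factors of norm at most one the telescoping bound goes through.
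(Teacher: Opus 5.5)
Your proposal is essentially the paper's proof: a canonical gauge in which every site tensor is an isometry, a per-site $\delta$-net of size $(C/\delta)^{C\chi^2}$, linear telescoping to $\||\psi\rangle-|\widetilde\psi\rangle\|_2\le n\delta$, the transfer $\|f_\rho-f_\sigma\|_{n,p}\le 3^{p/2}\|\rho-\sigma\|_1$, and the $\chi^{n}$ and $n!$ counts; the paper keeps its net centers inside the set of isometries, so it needs no renormalization, whereas you net the operator-norm unit ball and renormalize, and both work. One orientation slip needs fixing: in your left-canonical gauge $\sum_x A_x^{[j]\dagger}A_x^{[j]}=I_{r_j}$, the isometry $V^{[j]}$ maps $\mathbb C^{r_j}\to\mathbb C^{r_{j-1}}\otimes\mathbb C^2$, so you must compose from the unit vector $V^{[n]}$ down to site $1$, because the maps $\mathbb C^{r_{j-1}}\to\mathbb C^2\otimes\mathbb C^{r_j}$ you wrote can have norm $\sqrt2$ in this gauge (e.g.\ $A^{[1]}_0=(1,0)$, $A^{[1]}_1=(0,1)$) and are contractions only in the right-canonical gauge $\sum_x A_x^{[j]}A_x^{[j]\dagger}=I_{r_{j-1}}$ that the paper uses.
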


\begin{proof}
We construct a cover by discretizing a canonical MPS
representation, following the approach of
Aharonov, Arad, and Irani \cite{aharonov2010efficient}
and Schuch and Cirac \cite{schuch2010matrix},
and then transfer the approximation to the Pauli-profile norm. Assume first a fixed ordering of the qubits and a sequence of bond dimensions $r_1, \dots, r_n$ where $r_0=r_n=1$ and $r_j\le\chi$.
The Schmidt rank across a cut at site $j$ is at most $r_j$.

Every state in $\mathsf{MPS}^{[\chi]}_n$ can be put into a canonical form where the state is generated by a sequence of isometries (see \cite[Theorem~1 and Section~5.1]{perez2006matrix})
\[
    V_j:\mathbb C^{r_{j-1}}
    \longrightarrow
    \mathbb C^2\otimes\mathbb C^{r_j},
    \qquad
    r_0=r_n=1,
    \quad
    r_j\le\chi.
\]
More specifically, in this canonical form, matrices may be chosen so that $\sum_{x=0}^1 A_x^{[j]}A_x^{[j]\dagger} = I_{r_{j-1}}$.

We now construct a $\delta$-net in operator norm for the possible isometries at each site.  The matrix $V_j$ is a $(2r_j)\times r_{j-1}$ matrix, and therefore belongs to a real vector space of dimension $d_j\le 4\chi^2$. Furthermore, $\|V_j\|_F \le  \sqrt\chi$ since columns of $V_j$ are orthonormal.
Let $\mathcal I_j$ denote the set of admissible isometries at site $j$,
viewed as a subset of this real Euclidean space.  Choose a maximal
$\delta$-separated subset $\mathcal N_j\subseteq\mathcal I_j$ in
Frobenius norm so that $\mathcal N_j$ is a $\delta$-net.
Using the standard packing and covering bounds from~\cite[Lemma~4.2.8 and Proposition~4.2.12]{vershynin2018high}, for $0<\delta<1$,
\[
\begin{aligned}
    |\mathcal N_j| &\le \left(1+\frac{2\sqrt\chi}{\delta}\right)^{d_j} \le \left( \frac{C\sqrt\chi}{\delta}\right)^{4\chi^2}.
\end{aligned}
\]
Since $\|A\|_{\mathrm{op}}\le\|A\|_F$, this is also a
$\delta$-net in operator norm.

If two sequential representations satisfy
$\|V_j-\widetilde V_j\|_{\mathrm{op}}\le\delta$ at every site, telescoping
through the isometries gives
\[
    \bigl\||\psi\rangle-|\widetilde\psi\rangle\bigr\|_2
    \le n\delta,
    \qquad
    \bigl\|
      |\psi\rangle\langle\psi|
      -|\widetilde\psi\rangle\langle\widetilde\psi|
    \bigr\|_1
    \le2n\delta.
\]
Since $\|f_\rho-f_\sigma\|_{n,p}
    \le3^{p/2}\|\rho-\sigma\|_1$, we choose $\delta=\varepsilon/(2\,3^{p/2}n)$.  Multiplying the tensor nets over
all sites gives logarithmic cardinality
$Cn\chi^2\log(C3^{p/2}n\chi/\varepsilon)$.  Finally, there are at most
$\chi^{n-1}$ choices of dimension in the sequences, and at most $n!$ possible site orderings if these are included.  Absorbing these additional terms proves the result.
\end{proof}

\begin{proposition}[Energy bounds for MPS]
\label{prop:mps-energy}
Let $(\chi_n)_{n\ge1}$ be a fixed deterministic sequence of bond dimensions.
\begin{enumerate}[label=\textup{(\roman*)}]
\item Suppose $\chi_n^2\log(n\chi_n)\le Ln$ for all sufficiently
large $n$, for some fixed $L<\infty$. Then there exists $p_0=p_0(L)$ such that, for every fixed $p\ge p_0$, there is $c_{p,L}>0$ for which
\[
    \P\left(
        \Delta^{\mathrm{opt}}_{n,p}
        \left(\mathsf{MPS}^{[\chi_n]}_n\right)
        \ge c_{p,L}\sqrt n
    \right)
    \longrightarrow1.
\]

\item For every fixed $p\ge2$, if $\chi_n^2\log(n\chi_n)=o(n)$,
then
\[
    \frac{
        \Delta^{\mathrm{ref}}_{n,p}
        \left(\mathsf{MPS}^{[\chi_n]}_n;\Prod_n\right)
    }{\sqrt n}
    \longrightarrow0
\]
in $L^1$ and in probability.
\end{enumerate}
The same conclusions hold for the convex hull of the MPS class and for the
union over all one-dimensional orderings.
\end{proposition}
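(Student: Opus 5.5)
The plan is to follow the template of Propositions~\ref{prop:gate-count-main} and~\ref{prop:block-product-main}: read off entropy parameters from Lemma~\ref{lem:mps-profile-entropy} and feed them into Theorem~\ref{thm:local-profile-complexity-principle}. Concretely, Lemma~\ref{lem:mps-profile-entropy} lets us take
\[
    a_{n,p}=Cn\log n,
    \qquad
    b_{n,p}=Cn\chi_n^2,
    \qquad
    B_{n,p}=C3^{p/2}n\chi_n.
\]
Here the $Cn\log n$ term in $a_{n,p}$ accounts for the $O(n\log n)$ contribution of variable orderings. We may also drop $a_{n,p}$ when the ordering is fixed.

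Then
\[
    \Gamma_{n,p}\le Cn\log n+Cn\chi_n^2\log\bigl(eC3^{p/2}n^2\chi_n\bigr).
\]
For fixed $p$, $\log(eC3^{p/2}n^2\chi_n)\le C\log(n\chi_n)+O_p(1)\le C'\log(n\chi_n)$ once $n$ is large, since $n\chi_n\ge n\to\infty$. Hence $\Gamma_{n,p}\le Cn\log n+Cn\chi_n^2\log(n\chi_n)$, with $C$ independent of $p$ in the limit.

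Under hypothesis~(i), this gives $\limsup_{n}\Gamma_{n,p}/n^2\le CL$, because the $n\log n$ term is $o(n^2)$. Theorem~\ref{thm:local-profile-complexity-principle}\textup{(i)}, applied with $CL$ in place of $L$, then yields $p_0(L)$ and $c_{p,L}$. Under hypothesis~(ii), $\Gamma_{n,p}=o(n^2)$ for every fixed $p$. Since $\Prod_n$ is a conditioning benchmark for every class (Corollary~\ref{cor:conditioning-canonical}), Theorem~\ref{thm:local-profile-complexity-principle}\textup{(ii)} gives the benchmark comparison.

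For the final sentence of the statement, consider first the union over orderings. Its profile class is covered with the extra $\log n!\le n\log n$ entropy already included above, so the same argument applies verbatim. For the convex hull, I would argue directly at the level of energies rather than entropy. By linearity, $\mathcal E_{n,p}(\operatorname{conv}\mathcal C_n)=\mathcal E_{n,p}(\mathcal C_n)$ pointwise in the disorder, so both $\Delta^{\mathrm{opt}}$ and $\Delta^{\mathrm{ref}}$ are unchanged and the conclusions transfer immediately.

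The main obstacle, though a mild one, is the bookkeeping that makes the constant in $\limsup\Gamma_{n,p}/n^2$ independent of $p$, so that the threshold $p_0$ depends only on $L$. The factor $3^{p/2}$ inside the logarithm contributes only $O(p)\cdot n\chi_n^2$. Under the hypothesis this is $O_p(n^2/\log n)=o(n^2)$, so it vanishes in the limsup. I would make this explicit by splitting
\[
    \log\bigl(eC3^{p/2}n^2\chi_n\bigr)\le\tfrac{p}{2}\log3+\log(eC)+2\log(n\chi_n).
\]
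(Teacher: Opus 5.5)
Your proposal is correct and follows essentially the same route as the paper. You take the entropy parameters from Lemma~\ref{lem:mps-profile-entropy}, note that $3^{p/2}$ only adds a constant inside the logarithm so that $\limsup_{n}\Gamma_{n,p}/n^2\le CL$ with $C$ independent of $p$, and apply both parts of Theorem~\ref{thm:local-profile-complexity-principle} with $\Prod_n$ as benchmark, handling convex hulls by linearity and varying orderings by the extra $O(n\log n)$ entropy exactly as the paper does (you simply fold the ordering term into $a_{n,p}$ from the start).
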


\begin{proof}
Lemma~\ref{lem:mps-profile-entropy} gives $\Gamma_{n,p}
    \le
    Cn\chi_n^2
    \log\left(eC3^{p/2}n^2\chi_n\right)$. If $\chi_n^2\log(n\chi_n)\le Ln$, then, for every fixed $p$,
$\limsup_{n\to\infty}\Gamma_{n,p}/n^2\le CL$, where the constant is
independent of $p$; the fixed factor $3^{p/2}$ contributes only an additive
constant inside the logarithm.  This proves part~\textup{(i)} by
Theorem~\ref{thm:local-profile-complexity-principle}\textup{(i)}.  If
$\chi_n^2\log(n\chi_n)=o(n)$, then $\Gamma_{n,p}=o(n^2)$ for every
fixed $p$, and Theorem~\ref{thm:local-profile-complexity-principle}\textup{(ii)},
with benchmark $\mathcal B_n=\Prod_n$, proves part~\textup{(ii)}.
Optimizing over the convex hull does not change the optimum of the
Hamiltonian objective, and the additional $O(n\log n)$ entropy from varying the ordering of sites is negligible in both regimes.
\end{proof}

\begin{proof}[Proof of Corollary~\ref{cor:intro-mps}]
Both conclusions follow from Proposition~\ref{prop:mps-energy}.
Its final assertion also includes convex mixtures of the MPS class and
the union over all orderings of the physical qubits, so the ordering may
vary with the state.
\end{proof}

\begin{remark}[MERA and other isometric tensor networks]
The same telescoping argument applies to other
tensor networks with fixed architectures.  A binary one-dimensional MERA of bond dimension $\chi$ has
$O(n)$ tensors, including its normalized top tensor, while
each disentangler or isometry lies in an ambient matrix
space of real dimension at most $O(\chi^4)$ \cite{vidal2008class}. Write $\mathsf{MERA}^{[\chi]}_n$ for the class of normalized states
represented by such networks with a fixed architecture and bond
dimension at most $\chi$.  Consequently,
\[
\log N\left(
    \varepsilon,
    \mathcal F_{n,p}\left(\mathsf{MERA}^{[\chi]}_n\right),
    \|\cdot\|_{n,p}
\right)
\le
C n\chi^4
\log\left(
    \frac{C\,3^{p/2}n\chi}{\varepsilon}
\right).
\]
For fixed bond dimension, MERA therefore have no leading-order
energy advantage over product states on the $\sqrt n$ scale.
More generally, the preceding energy bounds extend with
$\chi_n^2$ replaced by $\chi_n^4$.
\end{remark}
\section{Magic hierarchy} 
\label{sec:proofs-magic}

We now apply the local-profile complexity principle to circuits that
alternate unrestricted Clifford operations with shallow quantum circuits.
The shallow blocks supply the non-stabilizer resource, or \emph{magic}
\cite{bravyi2005universal,veitch2014resource,parham2026quantum}.

In Subsection~\ref{sec:higher-magic-levels}, we prove ground-state
separation for both orientations at every fixed level of the magic
hierarchy under a linear-width assumption, allowing linearly many
discardable ancillas.  We then specialize to a single round and
remove the restriction on the number of ancillas.
For forward magic, Subsection~\ref{subsec:forward-profile-complexity}
establishes a sharper profile-entropy bound that allows arbitrarily
many discardable ancillas.
For reverse magic, Subsection~\ref{sec:revmag-clean} establishes
ground-state separation at fixed depth with arbitrarily many clean
ancillas.

\subsection{Stabilizer background and magic classes}
\label{subsec:magic-background}

We first recall the Pauli and stabilizer notation and define the magic
classes used in the subsequent results.  Appendix~\ref{sec:pauli-preliminaries} provides further explanation and algebraic background on the Pauli and stabilizer formalism.

In this section, we need Pauli operators of arbitrary weight, rather than
only the weight-$p$ strings $\mathcal P_{n,p}$ appearing in the Hamiltonian.
We write $\mathcal P_M:=\{I,X,Y,Z\}^{\otimes M}$ for the set of
phase-free Pauli operators on $M$ qubits.  Thus
$\mathcal P_{M,p}\subseteq\mathcal P_M$ consists of those elements that
act nontrivially on exactly $p$ qubits.

Let $\mathbb F_2=\{0,1\}$ be the field with arithmetic modulo two.
For a Pauli operator $P$, write $[P]=(x,z)\in\mathbb F_2^{2M}$
for its binary label, ignoring its phase, with the single-qubit
correspondences $I\leftrightarrow(0,0)$, $X\leftrightarrow(1,0)$,
$Z\leftrightarrow(0,1)$, and $Y\leftrightarrow(1,1)$.
For $u=(x,z)$ and $v=(x',z')$, define the symplectic inner product
\[
    \langle u,v\rangle_{\mathrm{sp}}
    :=x\cdot z'+z\cdot x'\pmod2.
\]
Two Paulis commute exactly when this product is zero.
A binary subspace $W$ is \emph{isotropic} if the product vanishes
between every pair of its elements.  We write
$W^\perp:=\{v:\langle v,w\rangle_{\mathrm{sp}}=0
\text{ for every }w\in W\}$ and
$v^\perp:=(\operatorname{span}\{v\})^\perp$.
All spans and dimensions of binary Pauli spaces are over $\mathbb F_2$.

$\mathsf{U}(1)=\{e^{i\theta}:\theta\in\mathbb R\}$ is the group of
complex phases.  We use the projective $M$-qubit Clifford group
\[
    \Cliff_M
    :=
    \left\{
        U\in\mathsf{U}(2^M):
        UPU^\dagger\in\{\pm R:R\in\mathcal P_M\}
        \text{ for every }P\in\mathcal P_M
    \right\}\big/\mathsf{U}(1).
\]
The quotient consists of equivalence classes
$[U]=\{e^{i\theta}U:\theta\in\mathbb R\}$, so unitaries differing by
a global phase are identified.  Whenever a Clifford acts on a state,
we choose any unitary representative of its class.  This choice does
not affect the output density matrix, since $(e^{i\theta}U)\rho(e^{i\theta}U)^\dagger
    =
    U\rho U^\dagger$.
We write
\[
    \Stab_M
    :=
    \left\{
        C|0^M\rangle\langle0^M|C^\dagger:
        C\in\Cliff_M
    \right\}
\]
for the class of pure $M$-qubit stabilizer states; see, for example,
\cite{gottesman1997stabilizer,gottesman1998heisenberg}.
We also write $|\psi\rangle\in\Stab_M$ when $|\psi\rangle\langle\psi|\in\Stab_M$.

Throughout this section, parameters with subscript $n$ denote fixed deterministic
integer sequences, with $D_n\ge1$ and $M_n\ge n$.

\begin{definition}[Forward and reverse magic]
\label{def:forward-reverse-magic}
For integers $M\ge1$ and $D\ge1$, define
\[
    \mathsf{mag}^{[D]}_M
    :=
    \left\{
        QC|0^M\rangle\langle0^M|C^\dagger Q^\dagger:
        C\in\Cliff_M,\;
        Q\in\mathsf{Circ}^{[D]}_M
    \right\}
\]
and
\[
    \RevMag^{[D]}_M
    :=
    \left\{
        CQ|0^M\rangle\langle0^M|Q^\dagger C^\dagger:
        C\in\Cliff_M,\;
        Q\in\mathsf{Circ}^{[D]}_M
    \right\}.
\]
For $M\ge n$, define their reduced-output classes by
\[
    \begin{aligned}
        \mathsf{mag}^{[D]}_{n,M-n}
        &:=
        \left\{
            \Tr_{\{n+1,\ldots,M\}}(\rho):
            \rho\in\mathsf{mag}^{[D]}_M
        \right\},\\
        \RevMag^{[D]}_{n,M-n}
        &:=
        \left\{
            \Tr_{\{n+1,\ldots,M\}}(\rho):
            \rho\in\RevMag^{[D]}_M
        \right\}.
    \end{aligned}
\]
Here $M$ is the total width, and the $M-n$ ancillary qubits may be
discarded.  Taking $M=n$ recovers the classes without ancillas.
We also set
\[
    \mathsf{mag}^{[D],\mathrm{anc}}_n
    :=
    \bigcup_{M\ge n}\mathsf{mag}^{[D]}_{n,M-n}.
\]
\end{definition}

In the notation of \cite{parham2026quantum},
$\mathsf{mag}^{[D]}_M$ and $\RevMag^{[D]}_M$ are the
state-preparation versions of $\mathcal A_{CQ}^1$ and
$\mathcal A_{QC}^1$, respectively, with shallow blocks of depth
at most $D$.  Taking the shallow block to be the identity includes
all stabilizer states.

We now allow a fixed number of alternating shallow and Clifford blocks.

\begin{definition}[$k$-round magic classes]
\label{def:higher-magic-levels}
\[
    \mathsf{mag}^{[D,k]}_M
    :=
    \left\{
        U\ket{0^M}\bra{0^M}U^\dagger:
            U=Q_kC_{k-1}\cdots Q_1C_0,
            ~
            C_{j-1}\in\Cliff_M, Q_j\in\mathsf{Circ}^{[D]}_M,~ j=1,\ldots,k
    \right\}
\]
and
\[
    \RevMag^{[D,k]}_M
    :=
    \left\{
        U\ket{0^M}\bra{0^M}U^\dagger:
            U=C_kQ_k\cdots C_1Q_1,~
            C_j\in\Cliff_M, Q_j\in\mathsf{Circ}^{[D]}_M,~ j=1,\ldots,k
    \right\}.
\]
For $n\ge1$ and $r\ge0$, define the reduced-output classes by
\[
    \begin{aligned}
        \mathsf{mag}^{[D]}_{n,k,r}
        &:=
        \left\{
            \Tr_{\{n+1,\ldots,n+r\}}(\rho):
            \rho\in\mathsf{mag}^{[D,k]}_{n+r}
        \right\},\\
        \RevMag^{[D]}_{n,k,r}
        &:=
        \left\{
            \Tr_{\{n+1,\ldots,n+r\}}(\rho):
            \rho\in\RevMag^{[D,k]}_{n+r}
        \right\}.
    \end{aligned}
\]
Thus $k$ counts the shallow blocks, the total width is $M=n+r$,
and all $r$ ancillas may be discarded. For $k=1$, these agree
with $\mathsf{mag}^{[D]}_{n,r}$ and $\RevMag^{[D]}_{n,r}$,
respectively.
\end{definition}

\begin{remark}[Relation with Parham's hierarchy]
\label{rem:higher-parham-notation}
With depth at most $D$ imposed on each shallow block,
$\mathsf{mag}^{[D,k]}_M$ and $\RevMag^{[D,k]}_M$
are the output density-matrix classes corresponding to
$\mathcal A_{CQ}^{2k-1}$ and $\mathcal A_{QC}^{2k-1}$
in \cite{parham2026quantum}, respectively.

At even level $2k$, suppose again that every shallow
block has depth at most $D$.
A circuit in the $CQ$ orientation begins and ends with
a Clifford block. Appending an identity shallow block
therefore places its output density matrix in
$\mathsf{mag}^{[D,k+1]}_M$.
Similarly, a circuit in the $QC$ orientation begins and
ends with a shallow block. Appending an identity Clifford
block places its output density matrix in
$\RevMag^{[D,k+1]}_M$.
For $M=n+r$, discarding the last $r$ qubits gives states
in $\mathsf{mag}^{[D]}_{n,k+1,r}$ and
$\RevMag^{[D]}_{n,k+1,r}$, respectively.

Parham's hierarchy is defined for circuit families
admitting some constant shallow-depth bound, independent
of the number of qubits. Each such family therefore
satisfies the depth restriction above for a suitable
fixed $D$. Consequently, our bounds for every fixed
number of rounds apply to both orientations at every
fixed level, under the stated restrictions on circuit
width.
\end{remark}

\subsection{Fixed levels of the magic hierarchy}
\label{sec:higher-magic-levels}

We now bound the profile entropy of the classes in
Definition~\ref{def:higher-magic-levels} and derive ground-state
separation at every fixed level when the total width of the circuit is $M=O(n)$.

In reverse magic, the final Clifford can map a local output Pauli
to a Pauli string of large support, so backward-lightcone
arguments, such as the one used in the proof of Proposition~\ref{prop:plain-shallow-width-depth-tradeoff}, do not give a bound on the relevant workspace in terms of
$n$ and $D$ alone. 

The lightcone approach first reduces the circuit to the qubits
that can affect the retained outputs.  Here we keep the full
$M$-qubit workspace and cover the preparation circuit by counting
the possible Clifford blocks and approximating the local gates
in the shallow blocks.  Each Clifford has at most $\exp(CM^2)$
choices, so $k$ such blocks contribute $O(kM^2)$ to the logarithmic
covering number.  This argument works for both orientations,
but the resulting bound retains its dependence on the total
width $M$.

\begin{lemma}[Profile entropy of $k$ magic rounds]
\label{lem:higher-magic-entropy}
There is a universal constant $C<\infty$ such that, for all integers
$M\ge n\ge p\ge2$, $k,D\ge1$, and all $0<\varepsilon<1$,
\[
    \log N\left(
        \varepsilon,
        \mathcal F_{n,p}
        \left(
            \mathsf{mag}^{[D]}_{n,k,M-n}
            \cup
            \RevMag^{[D]}_{n,k,M-n}
        \right),
        \|\cdot\|_{n,p}
    \right)
    \le
    CkM^2
    +
    CkDM
    \log\left(
        \frac{C3^{p/2}kDM}{\varepsilon}
    \right).
\]
\end{lemma}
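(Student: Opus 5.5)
The plan is to follow the proof of Lemma~\ref{lem:gate-count-profile-entropy}: build a finite operator-norm cover of the full $M$-qubit preparation unitaries $U$ and push it through to the output profiles. Since the union of two classes has covering number at most the sum of the two covering numbers, we treat the two orientations separately and absorb the resulting factor $2$ into $C$. Every state in either class has the form $\Tr_{\{n+1,\ldots,M\}}(U|0^M\rangle\langle0^M|U^\dagger)$, where $U$ is a product of $k$ Clifford blocks and $k$ shallow blocks in a fixed alternating order.

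\textbf{Clifford blocks.} These need no approximation. The projective Clifford group $\Cliff_M$ is finite, and its order is at most $2^{2M^2+O(M)}\le\exp(CM^2)$: modulo phase, a Clifford is determined by a symplectic matrix in $\mathbb F_2^{2M\times2M}$ together with $2M$ sign bits. We fix one unitary representative for each class. Global phases cancel in the output density matrix, so choosing all $k$ Clifford blocks from this finite list contributes at most $CkM^2$ to the log-cardinality.

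\textbf{Shallow blocks.} Each block in $\mathsf{Circ}^{[D]}_M$ has $D$ layers of disjoint one- and two-qubit gates. Padding with identity gates, we may assume each layer has exactly $M$ gate locations, so each shallow block has at most $DM$ gate locations.
\begin{itemize}
\item The support pattern of one layer is a partial matching plus singletons on $[M]$. Crudely, there are at most $M^{M}$ such patterns, so one block has at most $\exp(DM\log M)$ support choices.
\item As in Lemma~\ref{lem:gate-count-profile-entropy}, every gate is replaced by an element of an operator-norm $\delta$-net of $\mathsf U(2)$ or $\mathsf U(4)$ of size $(C/\delta)^C$.
\end{itemize}
Over $k$ blocks this gives at most $(C/\delta)^{CkDM}M^{kDM}$ approximating shallow circuits.

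\textbf{Error transfer.} Telescoping over all $kDM$ approximated gates, with the exact Clifford factors unitary and matched, gives $\|U-\widetilde U\|_{\mathrm{op}}\le kDM\delta$. Hence the reduced outputs differ by at most $2kDM\delta$ in trace norm. The bound $\|f_\rho-f_{\widetilde\rho}\|_{n,p}\le3^{p/2}\|\rho-\widetilde\rho\|_1$ then yields profile error at most $2\cdot3^{p/2}kDM\delta$. Choosing $\delta=\varepsilon/(2\cdot3^{p/2}kDM)$, the log-cardinality of the cover is at most
\[
CkM^2+kDM\log M+CkDM\log\bigl(C3^{p/2}kDM/\varepsilon\bigr).
\]
The middle term is absorbed into the last one, which gives the stated bound.

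The main obstacle is the Clifford counting. One needs the bound $|\Cliff_M|\le\exp(CM^2)$ for the projective group, for instance from $|\mathrm{Sp}(2M,\mathbb F_2)|\le 2^{4M^2}$ times $4^M$ sign choices, citing Appendix~\ref{sec:pauli-preliminaries} or standard references. One must also confirm that fixing phase representatives does not break the telescoping argument; it does not, because phases cancel in $U\rho_0U^\dagger$. A careless alternative would be to discretize Cliffords as arbitrary unitaries on $2^M$ dimensions, which costs $4^M$ entropy and is far worse. The discreteness of the Clifford group is what makes the bound quadratic in $M$.
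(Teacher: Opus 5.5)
Your proposal is correct and follows the paper's proof essentially step by step. Both arguments count the Clifford blocks exactly via $|\Cliff_M|\le\exp(CM^2)$ with fixed phase representatives, count layer architectures, net the local gates in $\mathsf U(2)$ and $\mathsf U(4)$, telescope to $\|U-\widetilde U\|_{\mathrm{op}}\le kDM\delta$, and transfer to profiles through $3^{p/2}\|\rho-\widetilde\rho\|_1$ with $\delta=\varepsilon/(2\cdot3^{p/2}kDM)$. The differences are cosmetic: you count $M^M$ architectures per layer after padding, where the paper counts $(M+1)^M$; you bound the Clifford count by $|\mathrm{Sp}(2M,\mathbb F_2)|\cdot4^M$ instead of citing the exact order; and your phrase ``exactly $M$ gate locations'' per layer should read ``at most $M$''.
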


\begin{proof}

In either orientation, the preparation circuit contains $k$
Clifford blocks and $k$ shallow blocks. We construct a finite
cover of these circuits and then show that their reduced-output
profiles form the required cover.

We first count the possible choices of Clifford blocks. Identifying unitaries that differ only by a global phase,
the number of $M$-qubit Clifford operations is
\[
    |\Cliff_M|
    =
    2^{M^2+2M}\prod_{j=1}^M(4^j-1)
    \le
    \exp(CM^2);
\]
see \cite[Equation~(1)]{koenig2014efficiently}. Choosing one Clifford operation for each of the $k$ blocks
therefore gives at most
$|\Cliff_M|^k\le\exp(CkM^2)$ possibilities.
For each operation, we fix a representative unitary matrix.
The choice of global phase does not affect the output density
matrix.

An architecture specifies the gate locations in each layer.
For each qubit, there are three possibilities: it participates
in a two-qubit gate, in which case there are $M-1$ choices
for the other qubit; it has a one-qubit gate; or it has no gate.
Thus there are
\[
    (M-1)+1+1=M+1
\]
possible labels per qubit.
Every valid layer determines such a labeling, so there are
at most $(M+1)^M$ layer architectures.
Padding each shallow block with identity layers gives $kD$
layers in total, and hence at most $(M+1)^{kDM}$ choices
of their architectures.
Finally, disjointness of the gate supports implies that
each layer has at most $M$ gates, giving at most $kDM$
local gate locations in total.

We now construct an operator-norm cover of the preparation
circuits. Fix $0<\delta<1$.
As in Lemma~\ref{lem:gate-count-profile-entropy}, choose
operator-norm $\delta$-nets for $\mathsf U(2)$ and
$\mathsf U(4)$, with centers in the respective unitary groups
and cardinalities at most $(C/\delta)^C$.
For each choice of Clifford blocks and matching architectures,
approximate every local gate by a point in the corresponding
net, keeping the Clifford blocks unchanged.
Since there are at most $kDM$ local gates, this gives at most
$(C/\delta)^{CkDM}$ approximating circuits for each such choice.
Taking the union over all Clifford blocks, architectures,
and both orientations gives a finite set $\mathcal U$
of circuits satisfying
\begin{equation}
\label{eq:higher-magic-circuit-cover-size}
    |\mathcal U|
    \le
    2\exp(CkM^2)
    (M+1)^{kDM}
    (C/\delta)^{CkDM}.
\end{equation}
For every preparation circuit $U$, the set $\mathcal U$
contains a circuit $\widetilde U$ with the same Clifford
blocks and matching architectures, and with corresponding
local gates differing by at most $\delta$.

To bound the covering radius, let $Q_i$ and $\widetilde Q_i$
be the corresponding shallow blocks of $U$ and $\widetilde U$.
Each contains at most $DM$ local gates.
The telescoping estimate in the proof of
Lemma~\ref{lem:gate-count-profile-entropy} therefore gives
\[
    \|Q_i-\widetilde Q_i\|_{\mathrm{op}}
    \le DM\delta,
    \qquad i=1,\ldots,k.
\]
Replacing the shallow blocks one at a time in the full
preparation circuit, all surrounding factors are unitary
and the Clifford blocks remain unchanged. Hence
\[
    \|U-\widetilde U\|_{\mathrm{op}}
    \le
    \sum_{i=1}^k
        \|Q_i-\widetilde Q_i\|_{\mathrm{op}}
    \le
    kDM\delta.
\]
Thus $\mathcal U$ is an operator-norm $kDM\delta$-cover
of the preparation circuits with the chosen Clifford
representatives.

We now transfer this circuit approximation to the output
profiles. Let $\rho$ and $\widetilde\rho$ be the reduced
outputs on the first $n$ qubits obtained by applying
$U$ and $\widetilde U$ to the same input $|0^M\rangle$.
The trace-distance and profile estimates in the proof of
Lemma~\ref{lem:gate-count-profile-entropy} give
\[
    \|\rho-\widetilde\rho\|_1
    \le
    2\|U-\widetilde U\|_{\mathrm{op}}
    \le
    2kDM\delta
\]
and consequently
\[
    \|f_\rho-f_{\widetilde\rho}\|_{n,p}
    \le
    3^{p/2}\|\rho-\widetilde\rho\|_1
    \le
    2\cdot3^{p/2}kDM\delta.
\]
Taking $\delta = \varepsilon/(2\cdot3^{p/2}kDM)$ therefore ensures that every target profile is within
$\varepsilon$ of the reduced-output profile of a circuit
in $\mathcal U$.
These profiles form an $\varepsilon$-cover of the target
class with cardinality at most $|\mathcal U|$.

Substituting the chosen value of $\delta$ into
\eqref{eq:higher-magic-circuit-cover-size} gives
\begin{align*}
    &\log N\left(
        \varepsilon,
        \mathcal F_{n,p}\left(
            \mathsf{mag}^{[D]}_{n,k,M-n}
            \cup
            \RevMag^{[D]}_{n,k,M-n}
        \right),
        \|\cdot\|_{n,p}
    \right)
    \le
    \log|\mathcal U|\\
    &\qquad\le
    \log2+CkM^2+kDM\log(M+1)
    +CkDM\log\left(
        \frac{C3^{p/2}kDM}{\varepsilon}
    \right)\\
    &\qquad\le
    CkM^2
    +CkDM\log\left(
        \frac{C3^{p/2}kDM}{\varepsilon}
    \right).
\end{align*}
The last inequality absorbs the architecture term and
$\log2$ by increasing $C$, using $k,D\ge1$ and
$0<\varepsilon<1$.
\end{proof}

\begin{proposition}[Every fixed number of magic rounds is separated from the ground state]
\label{prop:higher-magic-ground-gap}
Fix an integer $k\ge1$ and constants $K\ge1$ and $0<L<\infty$.
Let $M_n\ge n$ and $D_n\ge1$ be deterministic integer sequences
satisfying
\[
    M_n\le Kn,
    \qquad
    D_n\le\frac{Ln}{\log n}
\]
for all sufficiently large $n$.  There is
$p_0=p_0(k,K,L)<\infty$ such that, for every fixed $p\ge p_0$,
there is $c_{p,k,K,L}>0$ for which
\[
    \P\left(
        \Delta^{\mathrm{opt}}_{n,p}
        \left(
            \mathsf{mag}^{[D_n]}_{n,k,M_n-n}
            \cup
            \RevMag^{[D_n]}_{n,k,M_n-n}
        \right)
        \ge
        c_{p,k,K,L}\sqrt n
    \right)
    \longrightarrow1.
\]
\end{proposition}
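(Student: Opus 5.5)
The plan is to apply Theorem~\ref{thm:local-profile-complexity-principle}\textup{(i)} to the class
$\mathcal C_n:=\mathsf{mag}^{[D_n]}_{n,k,M_n-n}\cup\RevMag^{[D_n]}_{n,k,M_n-n}$. The only input needed is Lemma~\ref{lem:higher-magic-entropy}, with $M=M_n$ and $D=D_n$. That lemma gives an entropy bound of the required form
\[
    \log N(\varepsilon,\mathcal F_{n,p}(\mathcal C_n),\|\cdot\|_{n,p})
    \le a_{n,p}+b_{n,p}\log(B_{n,p}/\varepsilon)
\]
with the choices
\[
    a_{n,p}=CkM_n^2,
    \qquad
    b_{n,p}=CkD_nM_n,
    \qquad
    B_{n,p}=C3^{p/2}kD_nM_n.
\]
We may enlarge $C$ so that $B_{n,p}\ge1$. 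The effective complexity is then
\[
    \Gamma_{n,p}=CkM_n^2+CkD_nM_n\log\!\left(eC3^{p/2}kD_nM_n n\right).
\]

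Next we bound $\limsup_n\Gamma_{n,p}/n^2$ by a constant independent of $p$. The first term is at most $CkK^2n^2$. For the second term, $D_nM_n\le KLn^2/\log n$. Since $D_n\le n$ eventually and $M_n\le Kn$, the logarithm is at most $\log(eC3^{p/2}kKn^3)=3\log n+O_{p,k,K}(1)$. Hence the second term is at most $CkKL\,n^2\,(3\log n+O_p(1))/\log n$. Its limsup divided by $n^2$ is at most $3CkKL$, because the $p$-dependent additive constant inside the logarithm is swamped by $\log n$. Together,
\[
    \limsup_{n\to\infty}\Gamma_{n,p}/n^2\le L':=Ck(K^2+KL),
\]
which does not depend on $p$.

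The only delicate point is this $p$-uniformity, since part~(i) of the theorem requires the threshold on $p$ to depend only on the complexity constant. Once $L'$ is seen to be independent of $p$, Theorem~\ref{thm:local-profile-complexity-principle}\textup{(i)} with $L=L'$ supplies $p_0(L')=p_0(k,K,L)$. It also gives, for every fixed $p\ge p_0$, a constant $c_{p,L'}>0$ such that $\P(\Delta^{\mathrm{opt}}_{n,p}(\mathcal C_n)\ge c_{p,L'}\sqrt n)\to1$. Setting $c_{p,k,K,L}:=c_{p,L'}$ finishes the argument.

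Two minor bookkeeping checks remain. The hypotheses $M_n\le Kn$ and $D_n\le Ln/\log n$ hold only for large $n$, but a limsup statement is insensitive to finitely many $n$. The restriction $0<\varepsilon<1$ in the lemma matches the range assumed in the theorem.
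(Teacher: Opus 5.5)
Your proposal is correct and matches the paper's proof: you plug Lemma~\ref{lem:higher-magic-entropy} into Theorem~\ref{thm:local-profile-complexity-principle}\textup{(i)} and bound the logarithm by $3\log n+O_{p,k,K}(1)$. This gives $\limsup_n\Gamma_{n,p}/n^2\le Ck(K^2+KL)$, a constant independent of $p$, so the threshold $p_0$ depends only on $k$, $K$, and $L$.
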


\begin{proof}
Lemma~\ref{lem:higher-magic-entropy} gives
\[
    \Gamma_{n,p}
    \le
    CkM_n^2+CkD_nM_n
    \log\left(eC3^{p/2}kD_nM_n n\right).
\]
For each fixed $k$ and $p$, the width and depth assumptions bound the
logarithm by $3\log n+O_{p,k,K,L}(1)$.  Hence
\[
    \limsup_{n\to\infty}\frac{\Gamma_{n,p}}{n^2}
    \le Ck(K^2+KL).
\]
The right-hand side is independent of $p$, so
Theorem~\ref{thm:local-profile-complexity-principle}\textup{(i)} gives
the conclusion with a threshold depending only on $k$, $K$, and $L$.
\end{proof}

By Remark~\ref{rem:higher-parham-notation},
Proposition~\ref{prop:higher-magic-ground-gap} gives
ground-state separation for both orientations
at every fixed level of Parham's hierarchy under the
stated assumption of linear total circuit width.
This addresses the reverse and higher-level cases
posed in \cite[Open Problem~4]{parham2026quantum}
subject to that width restriction. The proposition also allows the shallow-block depth to grow
as $D_n=O(n/\log n)$, extending beyond the constant-depth
blocks in the definition of the hierarchy.

\subsection{Forward magic with arbitrary discardable ancillas}
\label{subsec:forward-profile-complexity}

The entropy bound in Subsection~\ref{sec:higher-magic-levels}
depends on the total width $M$ and therefore does not control the
union over all ancillary widths.  In this section,  we obtain a bound for forward
magic that is independent of the original number of discardable
ancillas, together with a sharper bound at fixed total width.

In forward magic, the initial Clifford may entangle the physical qubits
with arbitrarily many ancillas.  After restricting to the backward
lightcone of the retained outputs, the relevant input state is therefore
generally mixed.  It is nevertheless the reduction of a pure stabilizer
state and hence belongs to the mixed-stabilizer class, which we now define.

\begin{definition}[Mixed stabilizer states]
\label{def:mixed-stabilizer}
Let $\mathsf{mixstab}_M$ be the class of $M$-qubit states obtained as
reductions of pure stabilizer states:
\[
    \mathsf{mixstab}_M
    :=
    \left\{
        \Tr_{\{M+1,\ldots,M+s\}}
        |\Psi\rangle\langle\Psi|:
        |\Psi\rangle\in\Stab_{M+s},\;
        s\ge0
    \right\}.
\]
These states are normalized projectors onto stabilizer codes.
Arbitrary convex mixtures of pure stabilizer states need not
belong to $\mathsf{mixstab}_M$.
\end{definition}

A backward-lightcone reduction now removes the dependence on the
original number of ancillas.  The causal part is the same as in
\cite[proof of Proposition~10]{al2026circuit}. The difference here is
that pruning a stabilizer input produces a mixed stabilizer state.

\begin{lemma}[Backward-lightcone pruning for forward magic]
\label{lem:fixed-depth-active-cone-pruning}
Every $\rho\in\mathsf{mag}^{[D],\mathrm{anc}}_n$ has a representation
\[
    \rho
    =
    \Tr_{\{n+1,\ldots,n2^D\}}
    \left(
        \widehat Q\widehat\tau\widehat Q^\dagger
    \right),
    \qquad
    \widehat\tau\in\mathsf{mixstab}_{n2^D},
\]
for some depth-$D$ circuit $\widehat Q$ on $n2^D$ qubits,
independently of the number of ancillas used in the original
preparation. For a fixed total width $M\ge n$, every
$\rho\in\mathsf{mag}^{[D]}_{n,M-n}$ admits the same representation
with $n2^D$ replaced by $\hatM:=\min\{M,n2^D\}$.
\end{lemma}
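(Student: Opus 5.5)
The plan is to repeat the backward-lightcone pruning of Proposition~\ref{prop:plain-shallow-width-depth-tradeoff} for the shallow block $Q$ alone. The Clifford block $C$, together with all ancillas outside the lightcone, is then absorbed into the input state, and that input is a reduction of a pure stabilizer state.

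Fix $\rho=\Tr_{\{n+1,\ldots,M\}}(QC|0^M\rangle\langle0^M|C^\dagger Q^\dagger)$ with $Q=Q_D\cdots Q_1$ a depth-$D$ circuit on $M$ qubits. Set $A_D=[n]$ and define recursively
\[
A_{t-1}:=A_t\cup\bigcup_{j:E_{t,j}\cap A_t\ne\emptyset}E_{t,j}.
\]
Exactly as before, $|A_{t-1}|\le 2|A_t|$, so $|A_0|\le\min\{M,n2^D\}$. Let $\barQ$ keep only the gates of layer $t$ whose supports meet $A_t$. The observable argument of Lemma~\ref{lem:gate-count-active-width} shows that for every observable $O$ on $[n]$,
\[
Q^\dagger(O\otimes I)Q=\bigl[\barQ^\dagger(O\otimes I_{A_0\setminus[n]})\barQ\bigr]\otimes I_{A_0^c}.
\]
This identity uses nothing about the input state. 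Taking expectations in the state $C|0^M\rangle$ therefore gives
\[
\Tr(O\rho)=\Tr\bigl[\barQ^\dagger(O\otimes I)\barQ\,\tau_{A_0}\bigr],\qquad \tau_{A_0}:=\Tr_{A_0^c}\bigl(C|0^M\rangle\langle0^M|C^\dagger\bigr).
\]
Hence $\rho=\Tr_{A_0\setminus[n]}(\barQ\tau_{A_0}\barQ^\dagger)$. Moreover, $\tau_{A_0}$ is the reduction of the pure stabilizer state $C|0^M\rangle$, so $\tau_{A_0}\in\mathsf{mixstab}_{|A_0|}$. Here $s=M-|A_0|$ traced qubits are allowed, after relabeling so that $A_0$ comes first.

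The remaining step is to pad to exactly $n2^D$ (respectively $\hatM$) qubits. Relabel $A_0$ so that the retained outputs are the first $n$ qubits. If $|A_0|<N$, where $N$ is the target width, tensor $\tau_{A_0}$ with $|0\rangle\langle0|^{\otimes(N-|A_0|)}$ and let $\barQ$ act trivially on the padding. The new input is still in $\mathsf{mixstab}_N$: it is the reduction of $C|0^M\rangle\otimes|0^{N-|A_0|}\rangle$, which is a pure stabilizer state, with the same qubits traced out. The circuit $\widehat Q$ is $\barQ$ extended by identities, so it still has depth at most $D$, and tracing out qubits $n+1,\ldots,N$ returns $\rho$. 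For fixed total width $M$ we have $|A_0|\le M$, so the same construction works with $N=\hatM$.

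There is one edge case: if $n2^D>M$ in the unrestricted statement, padding beyond $M$ is still harmless by the argument above. I expect no real obstacle in this proof. The only delicate point is confirming that the mixstab class is closed under relabeling and under adjoining $|0\rangle$ qubits. This follows because permutations of qubits are Clifford operations and tensor products of stabilizer states are stabilizer states.
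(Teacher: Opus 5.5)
Your proposal is correct and takes essentially the same approach as the paper. You prune $Q$ to the backward lightcone $A_0$ of the retained outputs, with $|A_0|\le\min\{M,n2^D\}$, and use the input-independent observable identity from Lemma~\ref{lem:gate-count-active-width} to absorb the Clifford preparation and the out-of-cone qubits into $\tau=\Tr_{A_0^c}(C|0^M\rangle\langle0^M|C^\dagger)\in\mathsf{mixstab}_{|A_0|}$. You then pad with $|0\rangle$ qubits, which stays in the mixed-stabilizer class because $C|0^M\rangle\otimes|0\cdots0\rangle$ is again a pure stabilizer state; your remark that qubit permutations are Cliffords also makes explicit the relabeling step the paper only states briefly.
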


\begin{proof}
By definition, for some $r\ge0$, there is a Clifford
unitary $C$ and a depth-$D$ circuit $Q=Q_D\cdots Q_1$
on $M:=n+r$ qubits such that
\[
    \rho
    =
    \Tr_{\{n+1,\ldots,M\}}
    (Q\rho_0Q^\dagger),
    \qquad
    \rho_0
    :=
    C|0^M\rangle\langle0^M|C^\dagger.
\]
Write each layer of the circuit as a product of its one- or two-qubit gates,
\[
    Q_t=\prod_{j=1}^{\ell_t}U_{t,j},
    \qquad
    E_{t,j}:=\operatorname{supp}(U_{t,j}).
\]
The sets $E_{t,j}$ are pairwise disjoint and have
size at most two.
Let $A_D=[n]$ be the set of retained output qubits at layer $D$, and for
$t=D,\ldots,1$, define recursively
\[
    J_t
    :=
    \{j\in[\ell_t]:E_{t,j}\cap A_t\ne\emptyset\},
    \qquad
    A_{t-1}
    :=
    A_t\cup\bigcup_{j\in J_t}E_{t,j}.
\]
In words, $A_t$ collects the qubits in layer $t$ that can affect
the retained output qubits through the remaining layers
$Q_{t+1},\dots,Q_D$, and $J_t$ indexes the gates in layer $t$
whose supports intersect the current set $A_t$.
Because the gate supports are disjoint, $|J_t|\le|A_t|$.
Each gate indexed by $J_t$ adds at most one qubit outside
$A_t$, so
\[
    |A_{t-1}|
    =
    |A_t|+\sum_{j\in J_t}|E_{t,j}\setminus A_t|
    \le
    |A_t|+|J_t|
    \le
    2|A_t|.
\]
Consequently, $|A_0|\le2^D|A_D|=n2^D$.
Since $A_0\subseteq[M]$, we also have $|A_0|\le M$,
and hence $|A_0|\le\hatM$.

We now construct a pruned circuit $\barQ$ by retaining only
the gates in the backward lightcone of the retained outputs.
In layer $t$, these are the gates indexed by $J_t$. Define
\[
    V_t:=\prod_{j\in J_t}U_{t,j},
    \qquad
    \barQ:=V_D\cdots V_1.
\]
For every $j\in J_t$, we have
$E_{t,j}\subseteq A_{t-1}\subseteq A_0$, so all retained
gates act on qubits in $A_0$.
We therefore regard $\barQ$ as a circuit on $A_0$.
Since each $V_t$ consists of gates with disjoint supports,
$\barQ$ has depth at most $D$.

The observable identity established in the proof of
Lemma~\ref{lem:gate-count-active-width} is independent
of the input state. Setting
$\tau:=\Tr_{A_0^c}(\rho_0)$, where
$A_0^c=[M]\setminus A_0$, and taking expectations in
$\rho_0$ shows that the original and pruned circuits agree
on every observable supported on the retained outputs.
Consequently,
\[
    \rho
    =
    \Tr_{A_0\setminus[n]}
    \left(\barQ\tau\barQ^\dagger\right).
\]

Since $\rho_0$ is a pure stabilizer state,
$\tau\in\mathsf{mixstab}_{|A_0|}$.
Relabel the qubits in $A_0$ so that the retained outputs
are the first $n$ qubits.
If $|A_0|<\hatM$, append $\hatM-|A_0|$ qubits in the
state $|0\rangle$ and extend $\barQ$ by the identity.
Appending these qubits preserves the mixed-stabilizer class
since tensoring $\rho_0$ with the appended $|0\rangle$-states
produces another pure stabilizer state, whose partial trace
over $A_0^c$ is the padded state.
Thus we obtain $\widehat\tau\in\mathsf{mixstab}_{\hatM}$
and a depth-$D$ circuit $\widehat Q$ on $\hatM$ qubits
such that
\[
    \rho
    =
    \Tr_{\{n+1,\ldots,\hatM\}}
    \left(
        \widehat Q\widehat\tau\widehat Q^\dagger
    \right).
\]
This proves the representation for fixed total width $M$.
Since $\hatM\le n2^D$, further padding with $|0\rangle$-states
and extending the circuit by the identity gives a representation
on exactly $n2^D$ qubits, independent of the original number
$r$ of ancillas.
\end{proof}

Therefore, after pruning, the output state depends on two choices: the mixed
stabilizer input and the shallow circuit. We first fix the circuit
and bound the profile entropy as the stabilizer input varies. Each weight-$p$
output Pauli depends on at most $p2^D$ input qubits, so we only need
to count the possible input Pauli expectations on these qubits.
We then discretize the gates to cover the possible circuits.
Combining these steps gives the following entropy bound.
We state the bound and derive its energy consequences before
giving its proof.

\begin{lemma}[Profile entropy of forward magic]
\label{lem:forward-profile-entropy}
There is a universal constant $C<\infty$ such that, for every
$M\ge n\ge p\ge2$, $D\ge1$, and $0<\varepsilon<1$, setting
${\widehat M}:=\min\{M,n2^D\}$ gives
\[
    \log N\left(
        \varepsilon,
        \mathcal F_{n,p}\left(\mathsf{mag}^{[D]}_{n,M-n}\right),
        \|\cdot\|_{n,p}
    \right)
    \le
    C{\widehat M}\left[
        \min\{p2^D,{\widehat M}\}
        +D\log\left(\frac{C3^{p/2}D{\widehat M}}{\varepsilon}\right)
    \right].
\]
The same bound holds for $\mathsf{mag}^{[D],\mathrm{anc}}_n$ upon setting
${\widehat M}:=n2^D$.
\end{lemma}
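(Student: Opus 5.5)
Lemma~\ref{lem:fixed-depth-active-cone-pruning} reduces every state in $\mathsf{mag}^{[D]}_{n,M-n}$ (resp.\ $\mathsf{mag}^{[D],\mathrm{anc}}_n$) to the form $\Tr_{\{n+1,\ldots,\widehat M\}}(\widehat Q\widehat\tau\widehat Q^\dagger)$, with $\widehat\tau\in\mathsf{mixstab}_{\widehat M}$ and $\widehat Q$ a depth-$D$ circuit on $\widehat M$ qubits. We cover the pair $(\widehat Q,\widehat\tau)$ in two stages. First we cover the circuits exactly as in Lemma~\ref{lem:higher-magic-entropy}. There are at most $(\widehat M+1)^{D\widehat M}$ architectures and at most $D\widehat M$ gates. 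Using operator-norm $\delta$-nets on $\mathsf U(2),\mathsf U(4)$ with $\delta=\varepsilon/(4\cdot3^{p/2}D\widehat M)$ contributes $CD\widehat M\log(C3^{p/2}D\widehat M/\varepsilon)$ to the entropy. The telescoping bound $\|\widehat Q-\widetilde Q\|_{\mathrm{op}}\le D\widehat M\delta$ then moves the profile by at most $\varepsilon/2$, uniformly in the (mixed) input $\widehat\tau$, since $\|\widehat Q\tau\widehat Q^\dagger-\widetilde Q\tau\widetilde Q^\dagger\|_1\le2\|\widehat Q-\widetilde Q\|_{\mathrm{op}}$ for any state $\tau$.

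Second, for a fixed approximating circuit $\widetilde Q$ in the net, we bound the number of distinct profiles $P\mapsto\Tr(\widetilde Q^\dagger P\widetilde Q\,\widehat\tau)$ as $\widehat\tau$ ranges over $\mathsf{mixstab}_{\widehat M}$. Each weight-$p$ output Pauli $P$, conjugated back through $\widetilde Q$, is an operator supported on the backward lightcone $L(P)$ of $\operatorname{supp}(P)$, with $|L(P)|\le\min\{p2^D,\widehat M\}$. Hence the profile is a fixed linear function of the collection of marginals $(\widehat\tau_{L(P)})_P$. Each such marginal is determined by the expectations $\Tr(R\widehat\tau)\in\{-1,0,1\}$ for Paulis $R$ supported in $L(P)$.

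For a mixed stabilizer state, these expectations are determined by the signed stabilizer group $\mathcal S(\widehat\tau)$, which is generated by at most $\widehat M$ signed Paulis. The key step is to count the possible restrictions of $\mathcal S(\widehat\tau)$ to the union of lightcones. A crude count of all $\widehat M$ generators would give $\widehat M^2$ per choice, which is acceptable only when $p2^D\ge\widehat M$. To obtain the sharper $\widehat M\,p2^D$ bound, I would instead encode, for each window $W=L(P)$, the subgroup $\mathcal S\cap\{R:\operatorname{supp}R\subseteq W\}$. This subgroup is an isotropic subspace of $\mathbb F_2^{2|W|}$ together with signs, and it has at most $\exp(C|W|^2)$ choices.

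Summing the naive per-window counts over $\binom np$ windows is far too many, so the number of windows must be controlled. My plan is to choose a generating set adapted to a fixed ordering of qubits, namely a canonical row-echelon (``clipped gauge'') form of the stabilizer tableau. In that form every generator is described by its leftmost and rightmost sites together with its Pauli content on the $\widehat M$ sites. The sublattice of elements supported in a window of size $w\le p2^D$ is then captured, up to the local structure, by at most $O(\widehat M\,w)$ bits of data. Equivalently, for each qubit $i$ one records the projections of $\mathcal S$ restricted to neighborhoods of $i$ of size $w$. This yields $\log(\#\text{profiles})\le C\widehat M\min\{p2^D,\widehat M\}$.

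The main obstacle is this counting step: showing that the local restrictions of a stabilizer group to all windows of size $w$ carry only $O(\widehat M w)$ bits rather than $O(\widehat M^2)$. When $w\ge\widehat M$ the bound is the trivial count of $|\Cliff|$-type objects, $\exp(C\widehat M^2)$. For $w<\widehat M$ I would first try to write $\widehat\tau$'s relevant data as local stabilizer tableaux per qubit neighborhood and argue that consistent choices number at most $\exp(C\widehat M w)$. A cruder alternative, which still suffices for the stated corollaries, is to bound the number of supported-in-$W$ subgroups per output qubit cluster.

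Finally, we combine the two stages. An $\varepsilon/2$ cover of the circuits times the exact enumeration of stabilizer-input profiles gives an $\varepsilon$-cover. Its log-cardinality is at most $C\widehat M[\min\{p2^D,\widehat M\}+D\log(C3^{p/2}D\widehat M/\varepsilon)]$, after absorbing the architecture term $D\widehat M\log(\widehat M+1)$. The ancilla-free-width statement uses $\widehat M=n2^D$ via the padding in Lemma~\ref{lem:fixed-depth-active-cone-pruning}.
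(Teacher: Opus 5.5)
\textbf{Verdict: there is a genuine gap in your second stage.} Your first stage matches the paper's proof: pruning via Lemma~\ref{lem:fixed-depth-active-cone-pruning}, the architecture count, $\delta$-nets on the gates, and the telescoping bound that holds uniformly over mixed inputs. So does your final union over approximating circuits. The gap is the count of stabilizer-input profiles for a fixed circuit when $p2^D<\widehat M$. This is exactly the regime the arbitrary-ancilla statement needs: there $\widehat M=n2^D$, and the crude $\exp(C\widehat M^2)$ count is useless.

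You leave this step open, and the target you set for it is false. For a fixed circuit, the number of \emph{distinct} profiles is not bounded by $\exp(C\widehat M\min\{p2^D,\widehat M\})$. Take $M=n$, $D=1$, $p=2$ and the identity circuit, so $\widehat M=n$ and $\min\{p2^D,\widehat M\}=4$. Products of Bell pairs over the $(n-1)!!$ perfect matchings of $[n]$ are stabilizer states. Their weight-$2$ profiles are pairwise distinct, because $\Tr(Z_iZ_j\rho)$ equals $1$ on matched pairs and $0$ otherwise. Hence $\log(\#\text{profiles})\ge\frac n2\log n-O(n)$, which exceeds $4Cn$ for large $n$.

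The clipped-gauge idea does not rescue this. The lightcones $L(P)$ of arbitrary weight-$p$ Paulis are unions of $p$ single-qubit cones, not intervals in any fixed ordering. Exact enumeration is also the wrong tool here. Even a correct count via Sauer's lemma on all of $\mathcal P_{n,p}$ would carry an extra $\widehat M\,p\log(en/p)$ term, and that term is not absorbed by the stated bound uniformly in $p$.

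The paper replaces counting with a covering argument built on a VC-dimension bound, and this is the ingredient you are missing.
\begin{itemize}
\item For $\tau\in\mathsf{mixstab}_{\widehat M}$, the set families $\{P:\Tr(P\tau)=1\}$ and $\{P:\Tr(P\tau)=-1\}$ have VC dimension at most $\widehat M$. A shattered set must consist of commuting Paulis with linearly independent labels, so it spans an isotropic subspace.
\item Sauer's lemma then bounds the sign patterns on any $N$ Paulis by $\bigl(\sum_{j\le\widehat M}\binom Nj\bigr)^2$; this is Lemma~\ref{lem:fixed-depth-mixed-stabilizer-growth}.
\item Each $Q^\dagger(P\otimes I)Q$ expands into at most $4^s$ Paulis, where $s=\min\{p2^D,\widehat M\}$. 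So the profile values on any $q$ output Paulis are determined by the input values on at most $q4^s$ Paulis.
\item This gives restriction growth $(Aq/d)^d$ with $d=2\widehat M$ and $A=2e4^s$.
\item The Pollard-type Lemma~\ref{lem:growth-entropy} converts this into the covering bound $\log N(\varepsilon,\mathcal F_Q,\|\cdot\|_{n,p})\le C\widehat M[s+\log(C3^{p/2}/\varepsilon)]$.
\end{itemize}
The covering lemma only ever samples $O(d+\varepsilon^{-2}\log N)$ coordinates, so it never pays for the size of $\mathcal P_{n,p}$.
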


\begin{proposition}[Forward-magic energy bounds]
\label{prop:dense-magic}
Let $(M_n)_{n\ge1}$ and $(D_n)_{n\ge1}$ be fixed deterministic
integer sequences with $M_n\ge n$ and $D_n\ge1$.
\begin{enumerate}[label=\textup{(\roman*)}]
    \item Suppose $M_n\le Kn$ and $D_n\le Ln/\log n$ for all
    sufficiently large $n$, for some fixed constants $K\ge1$ and
    $0<L<\infty$. Then, for every fixed $p$ sufficiently large in terms of $K$
and $L$, there is $c_{p,K,L}>0$ for which
\[
    \P\left(
        \Delta^{\mathrm{opt}}_{n,p}
        \left(\mathsf{mag}^{[D_n]}_{n,M_n-n}\right)
        \ge c_{p,K,L}\sqrt n
    \right)
    \longrightarrow1.
\]

    \item For every fixed $p\ge2$, if $M_n=O(n)$ and $2^{D_n}=o(n)$,
    then
    \[
        \frac{
            \Delta^{\mathrm{ref}}_{n,p}
            \left(\mathsf{mag}^{[D_n]}_{n,M_n-n};\Prod_n\right)
        }{\sqrt n}
        \longrightarrow0
    \]
    in $L^1$ and in probability.
\end{enumerate}
With arbitrarily many discardable ancillas, assume that
$4^{D_n}=o(n)$.  Then the convergence in
part~\textup{(ii)} holds for
$\mathsf{mag}^{[D_n],\mathrm{anc}}_n$ for every fixed
$p\ge2$.  Moreover, there is a universal $p_0$ such
that, for every fixed $p\ge p_0$,
\[
    \P\left(
        \Delta^{\mathrm{opt}}_{n,p}
        \left(\mathsf{mag}^{[D_n],\mathrm{anc}}_n\right)
        \ge c_p\sqrt n
    \right)
    \longrightarrow1,
\]
where $c_p>0$ depends only on $p$.
Both conclusions are uniform over the number of
discardable ancillas.
\end{proposition}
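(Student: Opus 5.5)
The plan is to feed the entropy bound of Lemma~\ref{lem:forward-profile-entropy}, assumed here, into Theorem~\ref{thm:local-profile-complexity-principle}, once for each regime. From that lemma we may take $a_{n,p}=C\widehat M_n\min\{p2^{D_n},\widehat M_n\}$, $b_{n,p}=C D_n\widehat M_n$ and $B_{n,p}=C3^{p/2}D_n\widehat M_n$. This gives
\[
    \Gamma_{n,p}\le C\widehat M_n\min\{p2^{D_n},\widehat M_n\}+CD_n\widehat M_n\log\bigl(eC3^{p/2}D_n\widehat M_n n\bigr).
\]
In the ancilla-free width case we use $\widehat M_n=\min\{M_n,n2^{D_n}\}$. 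For $\mathsf{mag}^{[D_n],\mathrm{anc}}_n$ we use $\widehat M_n=n2^{D_n}$.

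\emph{Part (i).} We bound $\widehat M_n\le M_n\le Kn$ and use the trivial branch $\min\{\cdot,\widehat M_n\}\le Kn$. The first term is then at most $CK^2n^2$. Since $D_n\le Ln/\log n$, the logarithm is at most $3\log n+O_p(1)$, so the second term is at most $CKLn^2(1+o(1))$. Hence $\limsup\Gamma_{n,p}/n^2\le C(K^2+KL)$, and the constant does not depend on $p$. Theorem~\ref{thm:local-profile-complexity-principle}(i) then applies for $p$ large in terms of $K$ and $L$.

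\emph{Part (ii).} Now we use the other branch of the minimum, $p2^{D_n}$. Since $M_n=O(n)$, the first term is $O(pn2^{D_n})=o(n^2)$ because $2^{D_n}=o(n)$. The assumption $2^{D_n}=o(n)$ forces $D_n\le\log_2 n$, so the second term is $O(n\log^2 n)=o(n^2)$. Because $\Prod_n$ is a conditioning benchmark for every class, Theorem~\ref{thm:local-profile-complexity-principle}(ii) gives the convergence.

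\emph{Ancilla case.} Here $\widehat M_n=n2^{D_n}$, and the first term is at most $Cpn4^{D_n}$, which is $o(n^2)$ exactly when $4^{D_n}=o(n)$. The second term is $O(nD_n2^{D_n}\log n)$. With $2^{D_n}=o(\sqrt n)$ and $D_n=O(\log n)$, this is $O(n^{3/2}\log^2 n)=o(n^2)$. So $\Gamma_{n,p}=o(n^2)$ for every fixed $p$, which yields the comparison with $\Prod_n$. It also yields ground-state separation via Theorem~\ref{thm:local-profile-complexity-principle}(i) with $L=0$, because $\limsup\Gamma_{n,p}/n^2=0$; the threshold $p_0$ then depends only on $L=0$ and is universal.

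The uniformity over ancilla number is automatic: the entropy bound for $\mathsf{mag}^{[D],\mathrm{anc}}_n$ already covers the union over all widths, via Lemma~\ref{lem:fixed-depth-active-cone-pruning}. The main obstacle is therefore in Lemma~\ref{lem:forward-profile-entropy}, which we are assuming, and in keeping track of which branch of the minimum to use in each regime. The bookkeeping above is routine once that lemma is in hand.
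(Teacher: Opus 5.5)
Your proposal is correct and follows the paper's proof essentially step for step. You take the same parameters $a_{n,p},b_{n,p},B_{n,p}$ from Lemma~\ref{lem:forward-profile-entropy}, use the $\widehat M_n$ branch of the minimum for part~(i) and the $p2^{D_n}$ branch for part~(ii) and the ancilla case, and obtain the universal-threshold ground-state separation from Theorem~\ref{thm:local-profile-complexity-principle}(i) with $L=0$.
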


\begin{proof}
Set ${\widehat M}_n:=\min\{M_n,n2^{D_n}\}$. Lemma~\ref{lem:forward-profile-entropy} allows us to take
\[
    a_{n,p}=C{\widehat M}_n\min\{p2^{D_n},{\widehat M}_n\},
    \qquad
    b_{n,p}=CD_n{\widehat M}_n,
    \qquad
    B_{n,p}=C3^{p/2}D_n{\widehat M}_n
\]
in Theorem~\ref{thm:local-profile-complexity-principle}.  Hence
\[
    \Gamma_{n,p}
    \le
    C{\widehat M}_n\min\{p2^{D_n},{\widehat M}_n\}
    +CD_n{\widehat M}_n\log\left(eC3^{p/2}D_n{\widehat M}_n n\right).
\]
Under the hypotheses of part~\textup{(i)}, using
${\widehat M}_n\le M_n$ gives
\[
    \Gamma_{n,p}
    \le
    CM_n^2+CD_nM_n\log\left(eC3^{p/2}D_nM_n n\right).
\]
For fixed $p,K,L$, the logarithm is at most
$3\log n+O_{p,K,L}(1)$, so
\[
    \limsup_{n\to\infty}\frac{\Gamma_{n,p}}{n^2}
    \le C(K^2+KL).
\]
The right-hand side is independent of $p$.
Theorem~\ref{thm:local-profile-complexity-principle}\textup{(i)}
therefore gives the result.

For part~\textup{(ii)}, choose a fixed
$K$ such that $M_n\le Kn$ eventually.  Using
$\min\{p2^{D_n},{\widehat M}_n\}\le p2^{D_n}$ instead gives
\[
    \Gamma_{n,p}
    \le
    C_{p,K}n\left[
        2^{D_n}+D_n\log\left(en^2D_n\right)
    \right]
    =o(n^2),
\]
since $2^{D_n}=o(n)$ implies $D_n=O(\log n)$. Theorem~\ref{thm:local-profile-complexity-principle}\textup{(ii)},
with benchmark $\mathcal B_n=\Prod_n$, gives the result.

For the arbitrary-ancilla class, the entropy bound gives
\[
    \frac{\Gamma_{n,p}}{n^2}
    \le
    C_p\left[
        \frac{4^{D_n}}{n}
        +
        \frac{
            D_n2^{D_n}
            \log(en^2D_n2^{D_n})
        }{n}
    \right].
\]
If $4^{D_n}=o(n)$, then eventually
$2^{D_n}\le\sqrt n$ and $D_n=O(\log n)$, so
\[
    \frac{\Gamma_{n,p}}{n^2}
    \le
    C_p\left[
        \frac{4^{D_n}}{n}
        +\frac{(\log n)^2}{\sqrt n}
    \right]
    \longrightarrow0.
\]
Theorem~\ref{thm:local-profile-complexity-principle}
\textup{(ii)} gives the product-state comparison.
Part~\textup{(i)}, applied with $L=0$, gives the
ground-state separation with a universal lower
threshold for $p$ and a gap constant depending only
on $p$.
\end{proof}

We now prove the profile-entropy bound in Lemma~\ref{lem:forward-profile-entropy}.  The first step is to count
how many different expectation patterns mixed stabilizer states can
realize on a prescribed collection of Pauli operators. The description of Pauli expectations through stabilizer-group
membership also underlies Rocchetto's PAC-learning algorithm
\cite[Lemma~1]{rocchetto2017stabiliser}. The following
lemma controls this number without counting all mixed stabilizer
states.

\begin{lemma}[Restriction growth of mixed stabilizer profiles]
\label{lem:fixed-depth-mixed-stabilizer-growth}
For every $\tau\in\mathsf{mixstab}_M$ and every $P\in\mathcal P_M$,
$f_\tau(P) = \Tr(P\tau)\in\{-1,0,1\}$.  Moreover, each of the set classes
\begin{align*}
    \mathcal A_+ &:=
    \bigl\{\{P\in\mathcal P_M:f_\tau(P)=1\}:
    \tau\in\mathsf{mixstab}_M\bigr\},\\
    \mathcal A_- &:=
    \bigl\{\{P\in\mathcal P_M:f_\tau(P)=-1\}:
    \tau\in\mathsf{mixstab}_M\bigr\}
\end{align*}
has VC dimension at most $M$.  Consequently, for arbitrary
$P_1,\ldots,P_q\in\mathcal P_M$,
\[
    \#\left\{
        \bigl(f_\tau(P_1),\ldots,f_\tau(P_q)\bigr):
        \tau\in\mathsf{mixstab}_M
    \right\}
    \le
    \bigg(\sum_{j=0}^M\binom qj\bigg)^2.
\]
\end{lemma}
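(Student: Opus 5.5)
Our plan is to make the stabilizer structure of $\tau$ explicit and then reduce both claims to a counting statement about subgroups. Write $\tau=\Tr_E|\Psi\rangle\langle\Psi|$ with $|\Psi\rangle\in\Stab_{M+s}$, and let $\mathcal S\subseteq\{\pm R\}$ be the stabilizer group of $|\Psi\rangle$. This group is abelian, has $2^{M+s}$ elements, and does not contain $-I$. For a phase-free $P\in\mathcal P_M$ we have $\Tr(P\tau)=\langle\Psi|P\otimes I_E|\Psi\rangle$. The standard stabilizer dichotomy gives this expectation as $s'$ if $s'(P\otimes I_E)\in\mathcal S$ for some sign $s'\in\{\pm1\}$, and as $0$ otherwise. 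The zero case arises because if $P\otimes I_E$ anticommutes with some element $g\in\mathcal S$, then $\langle\Psi|P\otimes I|\Psi\rangle=\langle\Psi|P g|\Psi\rangle=-\langle\Psi|gP|\Psi\rangle=-\langle\Psi|P|\Psi\rangle$. If instead $P\otimes I_E$ commutes with all of $\mathcal S$, then maximality puts $\pm(P\otimes I_E)$ in $\mathcal S$. This proves $f_\tau(P)\in\{-1,0,1\}$.

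Next we identify the relevant subgroup on $[M]$. Let $\mathcal G:=\{sP: s(P\otimes I_E)\in\mathcal S\}$. It is an abelian group of signed $M$-qubit Paulis that does not contain $-I$, and the map $sP\mapsto P$ is injective on it. Its binary image $W=\{[P]: sP\in\mathcal G\}$ is therefore an isotropic subspace, so $\dim W\le M$. Each $\tau$ determines a pair $(W,\epsilon)$, where $\epsilon:W\to\{\pm1\}$ is the sign function. With this notation, $\{P:f_\tau(P)=1\}=\{P:[P]\in W,\ \epsilon([P])=1\}$, and similarly for the sign $-1$.

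For the VC bound on $\mathcal A_+$, we suppose that a set $\{P_1,\dots,P_d\}$ is shattered and show that $d\le M$. Take the pattern in which every $P_i$ lies in the set. Then there is a single $\tau$ with $P_i\in\mathcal G$ for all $i$. The $P_i$ commute pairwise and no signed product of them equals $-I$. We claim that the labels $[P_1],\dots,[P_d]$ are linearly independent over $\mathbb F_2$. Suppose a nontrivial relation $\sum_{i\in T}[P_i]=0$ held. Then $\prod_{i\in T}P_i=cI$ with $c=\pm1$, and here $c=+1$ because all $P_i\in\mathcal G$ and $-I\notin\mathcal G$. Now fix some $i_0\in T$. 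Shattering provides a $\tau'$ whose positive set contains $P_i$ for $i\in T\setminus\{i_0\}$ but not $P_{i_0}$. In $\mathcal G(\tau')$, the product of the $P_i$ with $i\in T\setminus\{i_0\}$ equals $P_{i_0}$ up to the phase forced by the same algebraic identity. Since that identity is independent of $\tau$, the product equals exactly $P_{i_0}$, which forces $f_{\tau'}(P_{i_0})=1$, a contradiction. Hence the labels are independent vectors lying in an isotropic subspace, so $d\le M$. The class $\mathcal A_-$ is handled identically, since $-P_i\in\mathcal G$ for each $i$ and the same closure argument goes through with signs tracked. This sign bookkeeping is the one step we expect to require care: one must check that the phase of a product of commuting Hermitian Paulis is determined by the Paulis alone and does not depend on $\tau$.

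Finally, the vector $(f_\tau(P_1),\dots,f_\tau(P_q))$ is determined by the pair of traces $(A_+\cap\{P_j\},A_-\cap\{P_j\})$. By the Sauer--Shelah lemma, each trace class has at most $\sum_{j\le M}\binom qj$ elements, and taking the product over the two classes gives the stated square bound.
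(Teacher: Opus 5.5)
Your proposal is correct and follows essentially the same route as the paper. Both arguments characterize $f_\tau(P)$ by membership of $\pm P$ in the signed subgroup $\{S : S\otimes I_E\in\mathcal G_\Psi\}$. Both use the same two-pattern shattering argument: the full set yields commutation and the relation $\prod_{i\in T}P_i=I$, and the set minus $i_0$ yields the contradiction. This is followed by the isotropic dimension bound and Sauer's lemma applied to the pair of traces. The only cosmetic difference is how the $\{-1,0,1\}$ values are obtained. The paper uses the expansion $\tau=2^{-M}\sum_{S}S$ together with Pauli orthogonality, while you use the anticommutation/maximality dichotomy; both are valid.
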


\begin{proof}
By Definition~\ref{def:mixed-stabilizer}, every
$\tau\in\mathsf{mixstab}_M$ can be written as
$\tau=\Tr_A(|\Psi\rangle\langle\Psi|)$, where
$|\Psi\rangle\in\Stab_{M+s}$ for some $s\ge0$ and $A$ denotes the
$s$ discarded qubits.  Any $(M+s)$-qubit stabilizer state $|\Psi\rangle$ has a corresponding
stabilizer group $\mathcal G_\Psi$, consisting of all signed Pauli
operators $G$ that stabilize it, meaning
$G|\Psi\rangle=|\Psi\rangle$.
This group has $M+s$ independent generators $g_1,\ldots,g_{M+s}$,
which give the representation
\[
    |\Psi\rangle\langle\Psi|
    =
    \prod_{j=1}^{M+s}\frac{I+g_j}{2}
    =
    2^{-(M+s)}\sum_{G\in\mathcal G_\Psi}G;
\]
see Appendix~\ref{app:IsotropicSubspaces}.
The group contains $I$ but not $-I$, since
$(-I)|\Psi\rangle=-|\Psi\rangle\ne|\Psi\rangle$.

Write each $G\in\mathcal G_\Psi$ as $G=S\otimes R$, where $S$ is
a signed Pauli operator on the retained qubits and $R\in\mathcal P_s$
acts on $A$. Then
\[
    \Tr_A(G)
    =
    S\Tr(R)
    =
        2^sS \mathbf 1 \{R=I_A\}.
\]
Define
\[
    \mathcal S_\tau
    :=
    \left\{
        S\in\{\pm P:P\in\mathcal P_M\}:
        S\otimes I_A\in\mathcal G_\Psi
    \right\}.
\]
The stabilizer expansion therefore gives
\[
    \tau
    =
    2^{-(M+s)}
    \sum_{G\in\mathcal G_\Psi}\Tr_A(G)
    =
    2^{-M}\sum_{S\in\mathcal S_\tau}S.
\]
The set $\mathcal S_\tau$ inherits the group structure and
commutation relations of $\mathcal G_\Psi$: for every
$S,S'\in\mathcal S_\tau$,
\[
    SS'=S'S\in\mathcal S_\tau,
    \qquad S^{-1}=S.
\]
Moreover, $I\in\mathcal S_\tau$ and $-I\notin\mathcal S_\tau$.  Pauli orthogonality gives for $P \in \mathcal P_M$
\begin{align} \label{eq:pauli-profile-in-proof}
    f_\tau(P)
    =
    2^{-M}\sum_{S\in\mathcal S_\tau}\Tr(PS)
    =
    \begin{cases}
        +1,  & P\in\mathcal S_\tau,\\
        -1, & -P\in\mathcal S_\tau,\\
        ~0,  & \text{otherwise}.
    \end{cases}
\end{align}
Next, suppose that $P_1,\ldots,P_k$ are \emph{shattered} by $\mathcal A_+$, meaning that for every $B\subseteq[k]$, there exists
$\tau_B\in\mathsf{mixstab}_M$ such that
\[
    f_{\tau_B}(P_i)=1
    \quad\Longleftrightarrow\quad
    i\in B.
\]
First apply shattering with $B=[k]$, and denote the resulting state
by $\tau_{\mathrm{all}}$.
By \eqref{eq:pauli-profile-in-proof}, all the $P_i$ belong to
$\mathcal S_{\tau_{\mathrm{all}}}$, whose elements commute, so the $P_i$ commute.

We next show that $[P_1],\ldots,[P_k]$ are linearly independent
over $\mathbb F_2$. Suppose for the sake of contradiction that there are coefficients
$c_1,\ldots,c_k\in\{0,1\}$, not all zero, such that $\sum_{i=1}^k c_i[P_i]=0$. Setting $T:=\{i:c_i=1\}$, this becomes
\[
    \sum_{i\in T}[P_i]=0,
    \qquad T\ne\emptyset.
\]
Multiplication of Pauli operators corresponds to addition of their
binary labels, with phases ignored. Hence
\[
    \left[\prod_{i\in T}P_i\right]
    =
    \sum_{i\in T}[P_i]
    =
    0.
\]
Since the zero label corresponds to the identity, the product equals
$I$ up to a phase. Moreover, the $P_i$ commute and satisfy $P_i^2=I$,
so
\[
    \bigg(\prod_{i\in T}P_i\bigg)^2
    =
    \prod_{i\in T}P_i^2
    =
    I.
\]
The phase must therefore be $+1$ or $-1$, giving $\prod_{i\in T}P_i=\pm I$. Since the product belongs to $\mathcal S_{\tau_{\mathrm{all}}}$, which does not
contain $-I$, it must equal $I$. Now fix $j\in T$ and apply
shattering again, this time to the subset $[k]\setminus\{j\}$.
Denote the resulting state by $\tau_{-j}$, so that
\[
    P_i\in\mathcal S_{\tau_{-j}}\quad\text{for every }i\ne j,
    \qquad
    P_j\notin\mathcal S_{\tau_{-j}}.
\]
The identity $\prod_{i\in T}P_i=I$ is an identity between operators
and does not depend on the choice of state. By commutation, the identity can be written as 
\begin{align*}
    P_j \prod_{i\in T\setminus \{j\}} P_i=I,
\end{align*}
and multiplying both sides by $P_j$ and using $P_j^2=I$ yields
\begin{align*}
    P_j = \prod_{i\in T\setminus \{j\}} P_i.
\end{align*}
Since every $P_i$ with $i\in T\setminus\{j\}$ belongs to
$\mathcal S_{\tau_{-j}}$, closure under multiplication gives
\[
    \prod_{i\in T\setminus\{j\}}P_i
    \in\mathcal S_{\tau_{-j}}.
\]
This contradicts $P_j\notin\mathcal S_{\tau_{-j}}$, which was
required by the second application of shattering. Thus the labels are linearly independent. 

Hence, $W:=\operatorname{span}\{[P_1],\ldots,[P_k]\}$ has dimension $k$ by linear independence, and is isotropic, because
\[
    \langle[P_i],[P_j]\rangle_{\mathrm{sp}}=0,
    \qquad i,j\in[k].
\]
Every isotropic subspace of $\mathbb F_2^{2M}$ has dimension at
most $M$, so $k\le M$. This proves the VC-dimension bound for
the class $\mathcal A_+$.

For the class $\mathcal A_-$, the same argument applies with
$P_i$ replaced by $-P_i$, using
\[
    f_\tau(P_i)=-1
    \quad\Longleftrightarrow\quad
    -P_i\in\mathcal S_\tau.
\]
Its VC dimension is therefore also at most $M$.

Finally, fix arbitrary $P_1,\ldots,P_q\in\mathcal P_M$ and set
$E:=\{P_1,\ldots,P_q\}$. Since $\mathcal A_+$ and $\mathcal A_-$
each have VC dimension at most $M$, Sauer's lemma
\cite[Lemma~6.10]{shalev2014understanding} gives
\[
    \#\{A\cap E:A\in\mathcal A_+\}
    \le \sum_{j=0}^M\binom qj,
    \qquad
    \#\{A\cap E:A\in\mathcal A_-\}
    \le \sum_{j=0}^M\binom qj,
\]
where we used $|E|\le q$.

For each $\tau\in\mathsf{mixstab}_M$, define
\[
    E_\tau^+:=\{P\in E:f_\tau(P)=1\},
    \qquad
    E_\tau^-:=\{P\in E:f_\tau(P)=-1\}.
\]
Since $f_\tau$ takes values in $\{-1,0,1\}$,
\[
    f_\tau(P_i)
    =
    \mathbf 1_{E_\tau^+}(P_i)
    -
    \mathbf 1_{E_\tau^-}(P_i),
    \qquad i=1,\dots, q.
\]
Thus specifying $E_\tau^+$ and $E_\tau^-$ determines every
coordinate of the restricted profile. Moreover,
\[
    E_\tau^+\in\{A\cap E:A\in\mathcal A_+\},
    \qquad
    E_\tau^-\in\{A\cap E:A\in\mathcal A_-\}.
\]
Consequently,
\begin{align*}
    \#\left\{
        \bigl(f_\tau(P_1),\ldots,f_\tau(P_q)\bigr):
        \tau\in\mathsf{mixstab}_M
    \right\}
    &\le
    \#\{A\cap E:A\in\mathcal A_+\}
    \#\{A\cap E:A\in\mathcal A_-\}\\
    &\le
    \bigg(\sum_{j=0}^M\binom qj\bigg)^2.
\end{align*}
\end{proof}

We now prove Lemma~\ref{lem:forward-profile-entropy}.

\begin{proof}
Recall that ${\widehat M}=\min\{M,n2^D\}$.
For each circuit $Q$ of depth at most $D$ on ${\widehat M}$
qubits, define
\[
    \mathcal F_Q
    :=
    \left\{
        f_\rho:
        \rho=
        \Tr_{\{n+1,\ldots,{\widehat M}\}}(Q\tau Q^\dagger),~~
        \tau\in\mathsf{mixstab}_{{\widehat M}}
    \right\}.
\]
Here $f_\rho(P)=\Tr(P\rho)$ for $P\in\mathcal P_{n,p}$. By Lemma~\ref{lem:fixed-depth-active-cone-pruning}, $\mathcal F_{n,p}(\mathsf{mag}^{[D]}_{n,M-n})
    \subseteq
    \bigcup_Q\mathcal F_Q$. It therefore suffices to bound the metric entropy of
$\bigcup_Q\mathcal F_Q$.
To this end, we construct a finite circuit cover $\mathcal Q$ such that for any $Q$, there exists $\widetilde Q \in \mathcal Q$ such that
\begin{equation}
\label{eq:forward-profile-approximation}
    \sup_{f\in\mathcal F_Q}
    \inf_{g\in\mathcal F_{\widetilde Q}}
    \|f-g\|_{n,p}
    \le\varepsilon/4.
\end{equation}
Combining \eqref{eq:forward-profile-approximation} with an $\varepsilon/4$-cover
of each $\mathcal F_{\widetilde Q}$ then gives
\begin{align}
\label{eq:forward-cover-decomposition}
    \log N\bigg(
        \varepsilon,\bigcup_Q\mathcal F_Q,\|\cdot\|_{n,p}
    \bigg)
    &\le
    \log\left(
        \sum_{\widetilde Q\in\mathcal Q}
        N\left(
            \varepsilon/4,\mathcal F_{\widetilde Q},\|\cdot\|_{n,p}
        \right)
    \right) \nonumber\\
    &\le
    \log\left(
        |\mathcal Q|\,
        \sup_{\widetilde Q\in\mathcal Q}
        N\left(
            \varepsilon/4,\mathcal F_{\widetilde Q},\|\cdot\|_{n,p}
        \right)
    \right)\nonumber\\
    &=
    \log|\mathcal Q|
    +
    \sup_{\widetilde Q\in\mathcal Q}
    \log N\left(
        \varepsilon/4,\mathcal F_{\widetilde Q},\|\cdot\|_{n,p}
    \right).
\end{align}
We first bound the second term in the last line of
\eqref{eq:forward-cover-decomposition} uniformly over $Q$,
then construct $\mathcal Q$, verify
\eqref{eq:forward-profile-approximation},
and bound the first term $\log|\mathcal Q|$.

\textit{1. Covering profiles for fixed $Q$ and varying $\tau$: }
Fix $Q$. For every input $\tau\in\mathsf{mixstab}_{{\widehat M}}$, let
$\rho=\Tr_{\{n+1,\ldots,{\widehat M}\}}(Q\tau Q^\dagger)$
be its reduced output. For every $P\in\mathcal P_{n,p}$,
\[
    \begin{aligned}
    f_\rho(P)
    &=\Tr(P\rho)
    =\Tr\!\left(
        P\,\Tr_{\{n+1,\ldots,{\widehat M}\}}
        (Q\tau Q^\dagger)
    \right)\\
    &=\Tr\bigl((P\otimes I)Q\tau Q^\dagger\bigr)
    =\Tr\bigl(Q^\dagger(P\otimes I)Q\,\tau\bigr).
    \end{aligned}
\]
The third equality is the defining identity for the partial trace,
where $I$ acts on the discarded qubits. We now show that the operator $Q^\dagger(P\otimes I)Q$ is supported on at most $s:=\min\{p2^D,{\widehat M}\}$ qubits, and use its Pauli expansion to express $f_\rho(P)$
in terms of input expectations.

Write $Q=Q_D\cdots Q_1$, where each $Q_t$ is a layer of gates
with disjoint supports of size at most two.
We first consider how conjugation by one layer changes the support
of an operator.

Fix a layer $Q_t=U_1\cdots U_r$, and let $E_j$ be the support
of $U_j$. Suppose $A$ is supported on $S$, meaning
$A=A_S\otimes I_{S^c}$, and define $J:=\{j\in[r]:E_j\cap S\ne\emptyset\}$.

For $j\notin J$, the gate $U_j$ acts outside $S$, so
$U_j^\dagger A U_j=A$.
Moreover, the $U_j$ commute with one another because their
supports are disjoint. We can therefore move every gate
with $j\notin J$ next to $A$ and cancel it with its adjoint,
obtaining
\[
    Q_t^\dagger A Q_t
    =
    \big(\prod_{j\in J}U_j\big)^\dagger
    A
    \big(\prod_{j\in J}U_j\big).
\]
Every factor on the right acts as the identity outside $S':=S\cup\bigcup_{j\in J}E_j$. Hence $\operatorname{supp}(Q_t^\dagger A Q_t)\subseteq S'$.

To bound $|S'|$, note that each $E_j$ with $j\in J$
contains at least one qubit of $S$ and has size at most two.
It therefore contains at most one qubit outside $S$.
Also, the sets $E_j\cap S$, $j\in J$, are nonempty and
pairwise disjoint, so $|J|\le|S|$. Consequently,
\[
    |S'|
    =
    |S|+\sum_{j\in J}|E_j\setminus S|
    \le
    |S|+|J|
    \le
    2|S|.
\]
Thus conjugation by one layer at most doubles the support size.

The operator $P\otimes I$ starts with support of size $p$.
In the expression
\[
    Q^\dagger(P\otimes I)Q
    =
    Q_1^\dagger\cdots Q_D^\dagger
    (P\otimes I)
    Q_D\cdots Q_1,
\]
we first conjugate by $Q_D$, then by $Q_{D-1}$, and continue
through $Q_1$. Applying the preceding bound at each step,
the support has size at most $2p$ after the first step,
$2^2p$ after the second, and $2^Dp$ after all $D$ steps.
Since the support is also a subset of the
${\widehat M}$ qubits, we obtain
\[
    \bigl|\operatorname{supp}(Q^\dagger(P\otimes I)Q)\bigr|
    \le
    \min\{p2^D,{\widehat M}\}
    =s.
\]

Now fix $P_1,\ldots,P_q\in\mathcal P_{n,p}$.
Each operator $Q^\dagger(P_\ell\otimes I)Q$ has a Pauli expansion
with at most $4^s$ terms, since there are four choices
$I,X,Y,Z$ on each qubit of its support and only $I$ elsewhere.
List the distinct Paulis appearing in these $q$ expansions
as $R_1,\ldots,R_N$, where $N\le q4^s$.
Then
\begin{align*}
    Q^\dagger(P_\ell\otimes I)Q
    &=\sum_{j=1}^N a_{\ell j}R_j,\\
    f_\rho(P_\ell)
    &=\sum_{j=1}^N a_{\ell j}f_\tau(R_j),
    \qquad \ell = 1,\dots,q,
\end{align*}
where the coefficients depend on $Q$ and the chosen output
Paulis, but not on $\tau$. If $\tau,\tau'\in\mathsf{mixstab}_{{\widehat M}}$ satisfy
\[
    f_\tau(R_j)=f_{\tau'}(R_j),
    \qquad j = 1,\dots, N,
\]
then their reduced outputs $\rho,\rho'$ under the same circuit $Q$
satisfy
\[
    f_\rho(P_\ell)
    =\sum_{j=1}^N a_{\ell j}f_\tau(R_j) =\sum_{j=1}^N a_{\ell j}f_{\tau'}(R_j)
    =f_{\rho'}(P_\ell),
    \qquad \ell =1,\dots, q.
\]
Thus the input expectations at $R_1,\ldots,R_N$ uniquely determine
the output expectations at $P_1,\ldots,P_q$, so the number of
distinct output vectors cannot exceed the number of distinct
input vectors.
By Lemma~\ref{lem:fixed-depth-mixed-stabilizer-growth},
\begin{align*}
    \#\left\{
        \bigl(f(P_1),\ldots,f(P_q)\bigr):f\in\mathcal F_Q
    \right\}
    &\le
    \#\left\{
        \bigl(f_\tau(R_1),\ldots,f_\tau(R_N)\bigr):
        \tau\in\mathsf{mixstab}_{{\widehat M}}
    \right\}\\
    &\le
    \left(\sum_{j=0}^{{\widehat M}}\binom Nj\right)^2
    \le
    \left(\sum_{j=0}^{{\widehat M}}\binom{q4^s}{j}\right)^2.
\end{align*}
To apply Lemma~\ref{lem:growth-entropy} with
$d=2{\widehat M}$, we need a restriction-growth bound for
every $q\ge2{\widehat M}$.
For such $q$, we have $q4^s\ge {\widehat M}$, so the
standard binomial bound gives
\[
    \left(\sum_{j=0}^{{\widehat M}}\binom{q4^s}{j}\right)^2
    \le
    \left(\frac{eq4^s}{{\widehat M}}\right)^{2{\widehat M}}
    =
    \left(\frac{Aq}{d}\right)^d,
\]
with $d=2{\widehat M}$ and $A=2e4^s$. Thus $\mathcal F_Q$ satisfies the restriction-growth hypothesis
of the lemma with these parameters.

We use the uniform probability measure on $\mathcal P_{n,p}$.
For any function $h$ on this set, the corresponding $L_2$ norm satisfies
\[
    \|h\|_{L_2}^2
    =
    \frac{1}{3^p\binom np}
    \sum_{P\in\mathcal P_{n,p}}|h(P)|^2
    =
    3^{-p}\|h\|_{n,p}^2.
\]
Consequently, an $L_2$-cover of radius $\varepsilon/3^{p/2}$
is a cover of radius $\varepsilon$ in $\|\cdot\|_{n,p}$.
Applying Lemma~\ref{lem:growth-entropy} at this $L_2$ radius gives
\begin{align}
    \log N\left(\varepsilon,\mathcal F_Q,\|\cdot\|_{n,p}\right)
    &=
    \log N\left(
        \frac{\varepsilon}{3^{p/2}},\mathcal F_Q,L_2
    \right)\notag\\
    &\le
    Cd\log\left(\frac{eA3^{p/2}}{\varepsilon}\right)\notag\\
    &\le
    C{\widehat M}\left[
        s+\log\left(\frac{C3^{p/2}}{\varepsilon}\right)
    \right].
    \label{eq:forward-fixed-circuit-entropy}
\end{align}

The bound \eqref{eq:forward-fixed-circuit-entropy} holds uniformly over $Q$.

\textit{2. Covering the circuit class: }
An architecture specifies the gate locations in each layer.
For each qubit, there are three possibilities: it participates
in a two-qubit gate, in which case there are
${\widehat M}-1$ choices for the other qubit; it has a
one-qubit gate; or it has no gate.
Thus there are
\[
    ({\widehat M}-1)+1+1={\widehat M}+1
\]
possible labels per qubit.
Every valid layer determines such a labeling, so there are at most
$({\widehat M}+1)^{{\widehat M}}$ layer architectures
and therefore at most $({\widehat M}+1)^{D{\widehat M}}$ architectures for $D$ layers.
Finally, disjointness of the gate supports implies that each layer
has at most ${\widehat M}$ gates, giving at most
$D{\widehat M}$ gate locations in total.

As in Lemma~\ref{lem:gate-count-profile-entropy}, choose
operator-norm $\delta$-nets for $\mathsf U(2)$ and
$\mathsf U(4)$, with centers in the respective unitary groups
and cardinalities at most $(C/\delta)^C$.
For each architecture, approximate every gate by a point
in the corresponding net. Since there are at most
$D{\widehat M}$ gates, this gives at most
$(C/\delta)^{CD{\widehat M}}$ approximating circuits
per architecture. Taking the union over all architectures
gives a finite set $\mathcal Q$ of circuits satisfying
\begin{equation}
\label{eq:forward-circuit-cover-size}
    |\mathcal Q|\le
    ({\widehat M}+1)^{D{\widehat M}}
    (C/\delta)^{CD{\widehat M}}.
\end{equation}
For every circuit $Q$, the set $\mathcal Q$ contains
a circuit $\widetilde Q$ with the same architecture and
with corresponding gates differing by at most $\delta$.

To bound the covering radius, write $Q$ and $\widetilde Q$
in terms of their $L\le D{\widehat M}$ gates,
\[
    Q=U_L\cdots U_1,
    \qquad
    \widetilde Q=\widetilde U_L\cdots\widetilde U_1,
\]
using the same ordering of gate locations in both circuits.
Here each gate acts as the identity outside its designated
qubits, and
$\|U_j-\widetilde U_j\|_{\mathrm{op}}\le\delta$.
Replacing the gates one at a time gives the telescoping identity
\[
    Q-\widetilde Q
    =
    \sum_{j=1}^L
        \widetilde U_L\cdots\widetilde U_{j+1}
        (U_j-\widetilde U_j)
        U_{j-1}\cdots U_1,
\]
where an empty product is interpreted as the identity.
All factors surrounding $U_j-\widetilde U_j$ are unitary,
so multiplication by these factors preserves its operator norm.
The triangle inequality therefore gives
\[
    \|Q-\widetilde Q\|_{\mathrm{op}}
    \le
    \sum_{j=1}^L\|U_j-\widetilde U_j\|_{\mathrm{op}}
    \le
    L\delta
    \le
    D{\widehat M}\delta.
\]
Thus this construction gives an operator-norm
$D{\widehat M}\delta$-cover of the circuit class.

\textit{3. Transferring the circuit approximation to profiles: }
Let $\rho$ and $\widetilde\rho$ be the reduced outputs of
$Q$ and $\widetilde Q$ applied to the same input
$\tau\in\mathsf{mixstab}_{{\widehat M}}$.
Using contractivity under partial trace and $\|\tau\|_1=1$,
we obtain
\[
    \begin{aligned}
    \|\rho-\widetilde\rho\|_1
    &\le
    \|Q\tau Q^\dagger-\widetilde Q\tau\widetilde Q^\dagger\|_1
    \le
    \|(Q-\widetilde Q)\tau Q^\dagger\|_1
    +
    \|\widetilde Q\tau(Q^\dagger-\widetilde Q^\dagger)\|_1\\
    &\le
    2\|Q-\widetilde Q\|_{\mathrm{op}}
    \le 2D{\widehat M}\delta.
    \end{aligned}
\]
Since every Pauli has operator norm one and
$|\mathcal P_{n,p}|=3^p\binom np$, the definition of the
profile norm gives
\[
    \|f_\rho-f_{\widetilde\rho}\|_{n,p}
    \le
    3^{p/2}\|\rho-\widetilde\rho\|_1
    \le
    2\cdot3^{p/2}D{\widehat M}\delta.
\]
This bound is uniform over $\tau$. Taking $\delta
    =\varepsilon/
         ({8\cdot3^{p/2}D{\widehat M}})$ therefore ensures that every profile in $\mathcal F_Q$
lies within $\varepsilon/4$ of a profile in
$\mathcal F_{\widetilde Q}$, as required by
\eqref{eq:forward-profile-approximation}.

\textit{4. Combining the estimates:}
For each $\widetilde Q\in\mathcal Q$, choose an
$\varepsilon/4$-cover of $\mathcal F_{\widetilde Q}$
using \eqref{eq:forward-fixed-circuit-entropy}.

To verify that their union covers the target class, take any
target profile $f_\rho\in\mathcal F_Q$.
Approximate $Q$ by a circuit $\widetilde Q$ in the circuit cover
and apply it to the same input $\tau$.
For the chosen value of $\delta$, the resulting profile
$f_{\widetilde\rho}\in\mathcal F_{\widetilde Q}$ satisfies $\|f_\rho-f_{\widetilde\rho}\|_{n,p}
    \le\varepsilon/4$. The cover of $\mathcal F_{\widetilde Q}$ then contains a point
$h$ satisfying $\|f_{\widetilde\rho}-h\|_{n,p}
    \le\varepsilon/4$. Consequently,
\[
    \|f_\rho-h\|_{n,p}
    \le
    \|f_\rho-f_{\widetilde\rho}\|_{n,p}
    +
    \|f_{\widetilde\rho}-h\|_{n,p}
    \le\varepsilon/2.
\]
Thus the union of the profile covers is an
$\varepsilon/2$-cover, and hence an $\varepsilon$-cover,
of the target class.

There is one profile cover for each circuit in the circuit
cover. The total cardinality is therefore at most the
circuit-cover cardinality multiplied by the uniform bound
on the profile-cover cardinality for a fixed circuit.
Substituting \eqref{eq:forward-circuit-cover-size} and
\eqref{eq:forward-fixed-circuit-entropy} into the first and second
terms, respectively, in the last line of
\eqref{eq:forward-cover-decomposition} gives
\begin{align*}
    \log N\left(
        \varepsilon,
        \mathcal F_{n,p}(\mathsf{mag}^{[D]}_{n,M-n}),
        \|\cdot\|_{n,p}
    \right)
    &\le
    D{\widehat M}\log({\widehat M}+1)
    +CD{\widehat M}\log(C/\delta)\\
    &+
    C{\widehat M}\left[
        s+\log\left(\frac{C3^{p/2}}{\varepsilon}\right)
    \right].
\end{align*}

Here \eqref{eq:forward-fixed-circuit-entropy} is applied at
radius $\varepsilon/4$, with the factor $4$ absorbed into $C$.
Substituting
$\delta=\varepsilon/(8\cdot3^{p/2}D{\widehat M})$
and using $D\ge1$ yields
\begin{equation}
\label{eq:forward-combined-entropy}
    \log N\left(
        \varepsilon,
        \mathcal F_{n,p}(\mathsf{mag}^{[D]}_{n,M-n}),
        \|\cdot\|_{n,p}
    \right)
    \le
    C{\widehat M}\left[
        s+D\log\left(
            \frac{C3^{p/2}D{\widehat M}}{\varepsilon}
        \right)
    \right].
\end{equation}
Substituting $s=\min\{p2^D,{\widehat M}\}$ into
\eqref{eq:forward-combined-entropy} proves the first estimate.

\textit{5. Arbitrary discardable ancillas: }
For arbitrary discardable ancillas,
Lemma~\ref{lem:fixed-depth-active-cone-pruning} represents
every output using a depth-$D$ circuit on the common width
${\widehat M}=n2^D$, regardless of the original workspace.
The circuit cover counted in \eqref{eq:forward-circuit-cover-size}
and the profile covers obtained from
\eqref{eq:forward-fixed-circuit-entropy}
therefore cover the entire class, uniformly over the number
of ancillas.
Since $p\le n$, we have $s=p2^D$.
Substituting these values of ${\widehat M}$ and $s$
into \eqref{eq:forward-combined-entropy} proves the second estimate.
\end{proof}

\subsection{Reverse magic with arbitrary clean ancillas} \label{sec:revmag-clean}
We now specialize to a single round of reverse magic and extend
the bound from Subsection~\ref{sec:higher-magic-levels} to an
arbitrary number of clean ancillas. Here each ancillary qubit
must be restored to $\ket{0}$ at the output. We will use this
condition to obtain a bound independent of the number of ancillas.

\begin{definition}[Clean-ancilla reverse magic]
\label{def:revmag-clean}
Fix $D\ge1$. Let $\RevMag^{[D],\mathrm{clean}}_n$ be the class
of pure $n$-qubit states $\ket{\psi}$ for which there exist
$r\ge0$, a Clifford $C\in\Cliff_{n+r}$, and a depth-$D$
matching circuit $Q$ on $n+r$ qubits such that $CQ\ket{0^{n+r}}
    =
    \ket{\psi}\ket{0^r}$.
\end{definition}

Our reduction is based on \textit{stabilizer nullity}. To define it,
we first recall the binary representation of Pauli operators.
We use $\mathcal P_M:=\{I,X,Y,Z\}^{\otimes M}$ for the
phase-free Paulis on $M$ qubits and
$\mathcal P_M^\pm:=\{\pm P:P\in\mathcal P_M\}$ for the
signed Hermitian Paulis. Each $P\in\mathcal P_M$
has a binary label $[P]=(x,z)\in\mathbb F_2^{2M}$, obtained
by assigning the pairs $(0,0),(1,0),(1,1),(0,1)$ to the
single-qubit factors $I,X,Y,Z$, respectively. Here
$\mathbb F_2=\{0,1\}$, with arithmetic modulo two.
Labels ignore scalar phases: $[\omega P]=[P]$ for
$\omega\in\{1,-1,i,-i\}$. Pauli multiplication then
corresponds to addition of labels, $[PP']=[P]+[P']$.

The label space carries the symplectic bilinear form
\[
    \langle(x,z),(x',z')\rangle_{\mathrm{sp}}
    :=x\cdot z'+z\cdot x'\pmod2,
\]
which records Pauli commutation through the identity
$PP'=(-1)^{\langle[P],[P']\rangle_{\mathrm{sp}}}P'P$.
Thus the value is zero when the operators commute and one
when they anticommute; see Appendix~\ref{app:PauliOpsandLabels}.
Working with labels lets us express these relations using
linear algebra over $\mathbb F_2$.

A \emph{Pauli stabilizer} of a state $\ket{\xi}$ is an
operator $G\in\mathcal P_M^\pm$ satisfying
$G\ket{\xi}=\ket{\xi}$.

\begin{definition}[Stabilizer nullity~\cite{beverland2020lower}]
\label{def:stabilizer-nullity}
For a pure $M$-qubit state $\ket{\xi}$, its Pauli
stabilizer space is
\[
    \mathsf S(\xi)
    :=
    \left\{
        [P]\in\mathbb F_2^{2M}:
        P\in\mathcal P_M^\pm,\;
        P\ket{\xi}=\ket{\xi}
    \right\},
\]
and its stabilizer nullity is
$\nu_{\mathrm{st}}(\xi):=M-\dim\mathsf S(\xi)$.
\end{definition}

Notably, a pure state has stabilizer nullity zero if and only if it
is a stabilizer state; see Appendix~\ref{app:stabilizer-nullity}. Stabilizer nullity is a monotone measure of magic
\cite{veitch2014resource,gu2025magic}: for pure states,
it is invariant under Clifford operations and cannot
increase upon conditioning on a single-qubit Pauli
measurement outcome of positive probability
\cite{beverland2020lower}. A related pseudo-stabilizer property, formulated
through commuting local restrictions of the global
Pauli stabilizer group, was used in
\cite{coble2023hamiltonians} to prove the No Low-Energy
Almost-Clifford States conjecture.

We next describe the logical qubits associated with the
stabilizer constraints of $\xi$.

Set $S:=\mathsf S(\xi)$ and let
\[
    \mathcal G_\xi
    :=\left\{
        G\in\mathcal P_M^\pm:
        G\ket{\xi}=\ket{\xi}
    \right\}
\]
be its full signed Pauli stabilizer group. Its label space
is $S=\{[G]:G\in\mathcal G_\xi\}$.
The common $+1$ eigenspace of these stabilizers is
\[
    \mathcal K_S
    :=\left\{
        \ket{v}\in(\mathbb C^2)^{\otimes M}:
        G\ket{v}=\ket{v}
        \text{ for every }G\in\mathcal G_\xi
    \right\}.
\]
In particular, $\ket{\xi}\in\mathcal K_S$.
Here $\mathcal K_S$ is defined with the signed group
$\mathcal G_\xi$ fixed; $S$ alone does not specify the
signs of the constraints.

If $\dim S=M-t$, or equivalently $\nu_{\mathrm{st}}(\xi)=t$,
then $\dim\mathcal K_S=2^t$ by
Appendix~\ref{app:IsotropicSubspaces}.
Moreover, Lemma~\ref{lem:app-stabilizer-nullity-decomposition}
gives a Clifford $V_{\mathrm{glob}}\in\Cliff_M$ such that
\[
    V_{\mathrm{glob}}\mathcal K_S
    =\mathbb C^{2^t}
      \otimes\operatorname{span}\{\ket{0^{M-t}}\}.
\]
Thus every $\ket{v}\in\mathcal K_S$ has a unique expression
$V_{\mathrm{glob}}\ket{v}=\ket{u}\otimes\ket{0^{M-t}}$
with $\ket{u}\in\mathbb C^{2^t}$.
The first $t$ unrestricted qubits in this representation
are called the \emph{logical qubits} encoded by
$\mathcal K_S$, and $V_{\mathrm{glob}}$ is a \emph{decoder}.
The corresponding encoding is the isometry
\[
    \ket{u}\longmapsto
    V_{\mathrm{glob}}^\dagger
    \bigl(\ket{u}\otimes\ket{0^{M-t}}\bigr).
\]
The logical qubits need not correspond to a subset of the
original physical qubits, since $V_{\mathrm{glob}}$ may act
on all $M$ qubits. The subscript $\mathrm{glob}$ stands for
``global''. This reduction to a smaller register also underlies the
learning algorithms studied in \cite{grewal2025efficient}.

A \emph{physical Pauli} is an operator $R\in\mathcal P_M$
acting on the original $M$ qubits. It preserves
$\mathcal K_S$ if $R\ket{v}\in\mathcal K_S$ for every
$\ket{v}\in\mathcal K_S$.
This holds precisely when $R$ commutes with every element
of $\mathcal G_\xi$, or equivalently when $[R]\in S^\perp$,
where
\[
    S^\perp
    :=\left\{
        a\in\mathbb F_2^{2M}:
        \langle a,s\rangle_{\mathrm{sp}}=0
        \text{ for every }s\in S
    \right\}
\]
is the symplectic orthogonal complement of $S$.
Lemma~\ref{lem:app-stabilizer-pauli-action} shows that,
for such an $R$, there is $P_{\mathrm L}\in\mathcal P_t^\pm$
such that
\[
    V_{\mathrm{glob}}R\ket{v}
    =(P_{\mathrm L}\ket{u})
      \otimes\ket{0^{M-t}}
\]
whenever $\ket{v}\in\mathcal K_S$ and
$V_{\mathrm{glob}}\ket{v}=\ket{u}\otimes\ket{0^{M-t}}$.
Thus its restriction to $\mathcal K_S$ acts as a signed Pauli
on the $t$ logical qubits; we call
this restriction a \emph{logical Pauli operation}.
Every $t$-qubit Pauli arises in this way up to sign.

For every $G\in\mathcal G_\xi$ and
$\ket{v}\in\mathcal K_S$, we have
$RG\ket{v}=R\ket{v}$. Multiplication by a stabilizer
therefore leaves the action on $\mathcal K_S$ unchanged,
while adding an element of $S$ to the physical Pauli label.
Since the stabilizers commute, $S\subseteq S^\perp$.
Disregarding overall phases, the logical Pauli operations
are indexed by the quotient
\[
    W_S:=S^\perp/S
        =\{a+S:a\in S^\perp\},
    \qquad
    a+S:=\{a+s:s\in S\}.
\]
We call an element $a+S$ of $W_S$ a
\emph{logical Pauli class}. A physical Pauli $R$ is a
\emph{representative} of this class if $[R]\in a+S$,
or equivalently $[R]+S=a+S$; the same definition applies
to signed physical Paulis because labels ignore signs.
Representatives of the same class have the same action
on $\mathcal K_S$ up to sign, and distinct classes give
distinct logical Pauli operations up to sign.

The quotient has dimension $2t$ and carries the
nondegenerate symplectic form
\[
    \langle a+S,b+S\rangle_{\mathrm{sp}}
    :=\langle a,b\rangle_{\mathrm{sp}},
    \qquad a,b\in S^\perp,
\]
which records commutation of its physical representatives.
These statements are proved in
Appendix~\ref{app:stabilizer-nullity}; in particular,
Lemma~\ref{lem:app-stabilizer-pauli-action} identifies $W_S$
with $\mathbb F_2^{2t}$ by a linear bijection preserving
this form.

These code-space and logical-Pauli statements extend to
any group $\mathcal G$ of commuting Hermitian Pauli
operators with $-I\notin\mathcal G$, taking
$S:=\{[G]:G\in\mathcal G\}$ and $\mathcal K_S$ to be its
common $+1$ eigenspace; see
Appendix~\ref{app:stabilizer-nullity}.
We retain the notation $\dim S=M-t$ in this general setting.
The following lemma shows that, if every logical Pauli
class has a representative supported on a set $E$, then
a Clifford acting only on $E$ can recover the logical
state specified by a prescribed decoder.

\begin{lemma}[Local Clifford recovery of a prescribed logical state]
\label{lem:revmag-local-recovery}
Let $\mathcal G\subseteq\mathcal P_M^\pm$ be a group of
commuting Pauli operators with $-I\notin\mathcal G$.
Let $S$ be its label space and $\mathcal K_S$ its common
$+1$ eigenspace, with $\dim S=M-t$ and $1\le t\le M$.
Suppose $V_{\mathrm{glob}}\in\Cliff_M$ satisfies
\[
    V_{\mathrm{glob}}\mathcal K_S
    =\mathbb C^{2^t}
      \otimes\operatorname{span}\{\ket{0^{M-t}}\}.
\]
Let $E\subseteq[M]$ be such that every logical Pauli class
in $W_S=S^\perp/S$ has a representative supported on $E$.
Then there exist a Clifford $V_E$ on $E$ and an ordered set
$T=\{q_1,\ldots,q_t\}\subseteq E$ with the following property:
for every normalized $\ket{v}\in\mathcal K_S$, writing
$V_{\mathrm{glob}}\ket{v}=\ket{u}\otimes\ket{0^{M-t}}$, we have
\[
    \Tr_{E\setminus T}\left[
        V_E\bigl(\Tr_{E^c}\ket{v}\bra{v}\bigr)V_E^\dagger
    \right]
    =\ket{u}\bra{u}.
\]
On the left, the qubits of $T$ are listed in the order
$q_1,\ldots,q_t$.
\end{lemma}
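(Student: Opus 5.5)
The plan is to pick a symplectic basis of logical Pauli representatives supported on $E$, build a Clifford on $E$ that sends this basis to the standard single-qubit Paulis on $t$ chosen qubits, and then identify the reduced state on those qubits by matching Pauli expectation values. Throughout we work with the decoder $V_{\mathrm{glob}}$ and the identification $W_S\cong\mathbb F_2^{2t}$ from Lemma~\ref{lem:app-stabilizer-pauli-action}.

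\textbf{Step 1: logical generators on $E$.} Under the identification $W_S\cong\mathbb F_2^{2t}$, the logical classes of $X_j$ and $Z_j$ ($j=1,\ldots,t$) on the first $t$ decoded qubits form a symplectic basis of $W_S$. By hypothesis each class has a physical representative supported on $E$.

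By Lemma~\ref{lem:app-stabilizer-pauli-action}, such a representative acts on $\mathcal K_S$ as a signed logical Pauli. Multiplying by $-1$ where needed, we obtain signed Paulis $\bar X_j,\bar Z_j\in\mathcal P_M^\pm$, supported on $E$, such that for every $\ket{v}\in\mathcal K_S$ with $V_{\mathrm{glob}}\ket{v}=\ket u\otimes\ket{0^{M-t}}$,
\[
V_{\mathrm{glob}}\bar X_j\ket v=(X_j\ket u)\otimes\ket{0^{M-t}},
\qquad
V_{\mathrm{glob}}\bar Z_j\ket v=(Z_j\ket u)\otimes\ket{0^{M-t}}.
\]
Because these operators act as the identity off $E$, we may view them as signed Paulis $\hat X_j,\hat Z_j$ on $E$. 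Commutation is unchanged by this restriction, so they satisfy the canonical relations: $\hat X_j$ anticommutes with $\hat Z_j$, and all other pairs commute.

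\textbf{Step 2: the Clifford $V_E$.} Canonical commutation relations force the $2t$ labels of $\hat X_j,\hat Z_j$ in $\mathbb F_2^{2|E|}$ to be linearly independent, since the Gram matrix of the symplectic form on them is nondegenerate. In particular $t\le|E|$.

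Extend these labels to a full symplectic basis of $\mathbb F_2^{2|E|}$ and fix an ordered set $T=\{q_1,\ldots,q_t\}\subseteq E$. Every symplectic map is induced by a Clifford, and composing with a Pauli on $E$ corrects the signs of independent generators. This gives a Clifford $V_E$ on $E$ with
\[
V_E\hat X_jV_E^\dagger=X_{q_j},
\qquad
V_E\hat Z_jV_E^\dagger=Z_{q_j},
\]
exactly, including signs.

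\textbf{Step 3: matching Pauli expectations.} Fix $a,b\in\mathbb F_2^t$ and set $O_{a,b}:=\prod_j X_{q_j}^{a_j}Z_{q_j}^{b_j}$, ordered as in the standard $t$-qubit Pauli $X^aZ^b$. Put $\omega:=V_E(\Tr_{E^c}\ket v\bra v)V_E^\dagger$. Conjugating back through $V_E$ gives
\[
\Tr(O_{a,b}\,\omega)=\bra v\textstyle\prod_j\bar X_j^{a_j}\bar Z_j^{b_j}\ket v .
\]
Applying the displayed logical actions of Step~1 one factor at a time is legitimate, because each factor preserves $\mathcal K_S$. Hence this expectation equals $\bra u X^aZ^b\ket u$. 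The operators $X^aZ^b$ span all operators on $t$ qubits, so $\Tr_{E\setminus T}\omega$ and $\ket u\bra u$ have identical expectations on a basis, and the two states coincide.

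\textbf{Main obstacle.} The delicate point is sign and phase bookkeeping. We must make sure that the representatives can be signed so that their logical action is exactly $+X_j$ and $+Z_j$ with respect to the \emph{prescribed} decoder $V_{\mathrm{glob}}$, and that $V_E$ reproduces these signs exactly. I will first verify that Lemma~\ref{lem:app-stabilizer-pauli-action} gives a well-defined sign for each representative. I will then use the standard fact that Cliffords realize every symplectic map, with arbitrary signs on independent generators.
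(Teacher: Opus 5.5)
Your proposal is correct and follows essentially the same route as the paper. Both arguments choose $E$-supported logical generators that act exactly as $X_j,Z_j$ under the prescribed decoder, map them to $X_{q_j},Z_{q_j}$ by a Clifford on $E$ (this is Lemma~\ref{lem:app-local-symplectic}, which you reprove inline), and identify the reduced state on $T$ by matching expectations on the basis $X^aZ^b$.

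The only difference is how the signs are fixed. The paper takes $\overline X_j:=R_{X,j}G_{X,j}$, where $R_{X,j}=V_{\mathrm{glob}}^\dagger(X_j\otimes I)V_{\mathrm{glob}}$ and $G_{X,j}\in\mathcal G$, so the exact logical action is inherited automatically. You instead take an arbitrary $E$-supported representative and correct its sign via Lemma~\ref{lem:app-stabilizer-pauli-action}. That is legitimate once you note that lemma applies to the prescribed $V_{\mathrm{glob}}$: a signed Pauli preserving $\mathbb C^{2^t}\otimes\operatorname{span}\{\ket{0^{M-t}}\}$ can only have $I$ or $Z$ factors on the last $M-t$ qubits.
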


\begin{proof}
For $j\in[t]$, let $X_j,Z_j\in\mathcal P_t$ apply $X,Z$
to the $j$th logical qubit and the identity to the other
logical qubits. Define the physical operators
\[
    R_{X,j}:=V_{\mathrm{glob}}^\dagger
        (X_j\otimes I_{2^{M-t}})V_{\mathrm{glob}},
    \qquad
    R_{Z,j}:=V_{\mathrm{glob}}^\dagger
        (Z_j\otimes I_{2^{M-t}})V_{\mathrm{glob}}.
\]
Clifford conjugation preserves signed Hermitian Paulis,
so $R_{X,j},R_{Z,j}\in\mathcal P_M^\pm$; see
Appendix~\ref{app:PauliOpsandLabels}.
For any $\ket{v}\in\mathcal K_S$, write
$\ket{v}=V_{\mathrm{glob}}^\dagger
(\ket{u}\otimes\ket{0^{M-t}})$. Then
\[
    R_{X,j}\ket{v}
    =V_{\mathrm{glob}}^\dagger
      (X_j\ket{u}\otimes\ket{0^{M-t}})\in\mathcal K_S.
\]
The same identity holds with $X$ replaced by $Z$.
After decoding, these actions become $X_j\ket{u}$ and
$Z_j\ket{u}$, respectively. Thus $R_{X,j}$ and $R_{Z,j}$
implement $X$ and $Z$ on the $j$th logical qubit for
every vector in $\mathcal K_S$. In particular, they
preserve $\mathcal K_S$ and their labels belong to $S^\perp$.

By the hypothesis on $E$, there are stabilizers
$G_{X,j},G_{Z,j}\in\mathcal G$ such that
\[
    \overline X_j:=R_{X,j}G_{X,j},
    \qquad
    \overline Z_j:=R_{Z,j}G_{Z,j}
\]
are supported on $E$. To justify this choice, take a representative
on $E$ of $[R_{X,j}]+S$ and choose $G_{X,j}$ with label equal
to the difference of its label and $[R_{X,j}]$; the product
has the chosen label and hence the same support. The argument
for $Z$ is identical.
Since $R_{X,j}$ commutes with $G_{X,j}$, their product
$\overline X_j$ belongs to $\mathcal P_M^\pm$; the same
argument applies to $\overline Z_j$.
These operators agree exactly with $R_{X,j},R_{Z,j}$ on
$\mathcal K_S$, because every stabilizer acts as the
identity on this space.
Clifford conjugation preserves the commutation relations
of $X_j,Z_j$, and multiplication by stabilizers preserves
them because these stabilizers commute with every
$R_{X,j},R_{Z,j}$ and with one another. Consequently,
\[
    \overline X_j\overline Z_k
    =(-1)^{\delta_{jk}}\overline Z_k\overline X_j,
    \qquad
    [\overline X_j,\overline X_k]
    =[\overline Z_j,\overline Z_k]=0.
\]
Here $[A,B]=AB-BA$ and $\delta_{jk}$ is the Kronecker delta.

Apply Lemma~\ref{lem:app-local-symplectic} to these pairs
on the subsystem $E$. It gives distinct physical qubits
$q_1,\ldots,q_t\in E$ and a Clifford $V_E$ on $E$.
With $\widehat V_E:=V_E\otimes I_{E^c}$, we have
\[
    \widehat V_E\overline X_j\widehat V_E^\dagger=X_{q_j},
    \qquad
    \widehat V_E\overline Z_j\widehat V_E^\dagger=Z_{q_j}.
\]
Here $X_{q_j},Z_{q_j}$ act on physical qubit $q_j$ and
as the identity on the other $M-1$ qubits.
Set $T:=\{q_1,\ldots,q_t\}$, ordered by the pairs.

For any $P\in\mathcal P_t$, write
$P=\gamma\prod_{j=1}^t X_j^{a_j}Z_j^{b_j}$ with
$a_j,b_j\in\{0,1\}$ and $\gamma\in\{+1,-1,+i,-i\}$,
and define
$\overline P:=\gamma\prod_{j=1}^t
\overline X_j^{a_j}\overline Z_j^{b_j}$.
All factors preserve $\mathcal K_S$, so their matching actions
on that space give
\begin{align*}
    \overline P\ket{v}
    &=V_{\mathrm{glob}}^\dagger
      (P\ket{u}\otimes\ket{0^{M-t}}),\\
    \widehat V_E\overline P\widehat V_E^\dagger
    &=P_T\otimes I_{T^c}.
\end{align*}
Here $P_T$ denotes $P$ acting on the qubits of $T$ in
the chosen order. The second identity also shows that
$\overline P\in\mathcal P_M^\pm$.
Let $\rho_T$ denote the reduced state in the conclusion.
Since $V_E$ acts only on $E$, we also have
$\rho_T=\Tr_{T^c}[\widehat V_E\ket{v}\bra{v}\widehat V_E^\dagger]$.
The preceding identities yield
\[
    \Tr(P\rho_T)
    =\langle v|\overline P|v\rangle
    =\langle u|P|u\rangle,
    \qquad P\in\mathcal P_t.
\]
The phase-free Pauli operators form a basis of the space
of matrices on $t$ qubits by
Appendix~\ref{app:PauliOpsandLabels}, so
$\rho_T=\ket{u}\bra{u}$.
\end{proof}

We apply this recovery principle to the intermediate
state $\ket{\eta}:=Q\ket{0^{n+r}}$ in a clean-output
preparation
$C\ket{\eta}=\ket{\psi}\otimes\ket{0^r}$.
This state has a depth-$D$ preparation, while Clifford
invariance of stabilizer nullity and its invariance
under appending zero-state qubits give
\[
    t:=\nu_{\mathrm{st}}(\eta)
    =\nu_{\mathrm{st}}(\psi)\le n;
\]
see Appendix~\ref{app:stabilizer-nullity}.

The following proposition combines these two facts.
The depth bound supplies a small set supporting
all logical Pauli classes, allowing the logical
state selected by decoding $\psi$ to be recovered
from that set. Pruning its preparation then gives
a circuit on at most $2^{2D+1}t$ qubits.
For fixed $D$, this is $O_D(n)$, independently
of the original number of ancillas.

\begin{proposition}[Ancilla-independent logical core reduction]
\label{prop:revmag-clean-decomposition}
Suppose $CQ\ket{0^{n+r}}=\ket{\psi}\ket{0^r}$, where
$Q$ is a matching circuit of depth at most $D$ and
$C\in\Cliff_{n+r}$.
Then there exist an integer $0\le t\le n$, a pure $t$-qubit
state $\ket{\phi}$, and a Clifford $U\in\Cliff_n$ such that
\[
    \ket{\psi}=U\bigl(\ket{\phi}\ket{0^{n-t}}\bigr).
\]
Moreover, for some integer $m$ satisfying $t\le m\le2^{2D+1}t$,
there exist a matching circuit $Q'$ of depth at most $D$
on $m$ qubits and a Clifford $V\in\Cliff_m$ such that
\[
    \ket{\phi}\bra{\phi}
    =\Tr_{\{t+1,\ldots,m\}}\left[
        VQ'\ket{0^m}\bra{0^m}Q'^\dagger V^\dagger
    \right].
\]
\end{proposition}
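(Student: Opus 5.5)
The plan is to build the decomposition from a decoder for $\psi$, transport that decoder to the intermediate state $\ket{\eta}:=Q\ket{0^{n+r}}$, and then apply Lemma~\ref{lem:revmag-local-recovery} on a small set $E$ followed by backward-lightcone pruning.

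\emph{Step 1: the first claim.} Set $t:=\nu_{\mathrm{st}}(\psi)\le n$. Lemma~\ref{lem:app-stabilizer-nullity-decomposition}, applied to the signed stabilizer group $\mathcal G_\psi$, gives a decoder $V_\psi\in\Cliff_n$ with $V_\psi\mathcal K_{\mathsf S(\psi)}=\mathbb C^{2^t}\otimes\operatorname{span}\{\ket{0^{n-t}}\}$. Since $\ket\psi$ lies in this code space, $V_\psi\ket\psi=\ket\phi\ket{0^{n-t}}$ for a unit vector $\ket\phi$. Taking $U:=V_\psi^\dagger$ proves the first claim. If $t=0$, $\ket\phi$ is a scalar and we take $m=0$, so assume $t\ge1$.

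\emph{Step 2: transporting the decoder to $\eta$.} The signed stabilizer group of $\ket\psi\ket{0^r}$ is generated by $\mathcal G_\psi\otimes I$ and the ancilla operators $Z_{n+j}$. Conjugating by $C$ gives the stabilizer group $\mathcal G_\eta=C^\dagger(\cdots)C$, with label space $S$ of dimension $n+r-t$; this is the nullity invariance recalled above. Its code space is $\mathcal K_S=C^\dagger(\mathcal K_{\mathsf S(\psi)}\otimes\ket{0^r})$. Hence $V_{\mathrm{glob}}:=(V_\psi\otimes I_r)\,C$ is a decoder for $\mathcal K_S$, and it sends $\ket\eta$ to $\ket\phi\otimes\ket{0^{n+r-t}}$. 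So the logical state of $\eta$ selected by this decoder is exactly $\ket\phi$.

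\emph{Step 3: a small support set $E$ (main obstacle).} I need $E\subseteq[n+r]$ with $|E|\le 2^{D+1}t$ such that every class in $W_S=S^\perp/S$ has a representative supported on $E$. It suffices to do this for the $2t$ generators $\overline X_j,\overline Z_j$, each given a representative of support at most $2^D$. My first attempt uses the depth-$D$ structure of $\ket\eta=Q\ket{0^{n+r}}$. For each input qubit $i$, the operator $QZ_iQ^\dagger$ stabilizes $\eta$ and is supported on the forward lightcone $L_i$ of $i$, with $|L_i|\le2^D$.
\begin{itemize}
\item I would take the largest region $A$ that carries no nontrivial logical Pauli. The cleaning lemma then gives every logical class a representative on $A^c$.
\item I would then bound $|A^c|$ by $2t$ forward lightcones, hence by $2t\cdot 2^D$. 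The argument would be a dimension count: $W_S$ has dimension $2t$, and each lightcone $L_i$ can account for at most one new independent logical direction.
\end{itemize}
The delicate point is that $QZ_iQ^\dagger$ need not be a Pauli operator. So the cleaning argument has to run on the reduced states of $\eta$: a region $A$ is correctable when $\eta_A$ is determined by the stabilizers supported in $A$. I expect this step to carry the real work, and I would first try to prove it by induction over the layers of $Q$.

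\emph{Step 4: recovery and pruning.} Apply Lemma~\ref{lem:revmag-local-recovery} with $\ket v=\ket\eta$, $\ket u=\ket\phi$ and the set $E$ from Step 3. This gives a Clifford $V_E$ on $E$ and $T\subseteq E$ with $|T|=t$ such that
\[
\Tr_{E\setminus T}\bigl[V_E(\Tr_{E^c}\eta)V_E^\dagger\bigr]=\ket\phi\bra\phi .
\]
The marginal $\Tr_{E^c}\eta$ is produced by the backward-lightcone pruning of Proposition~\ref{prop:plain-shallow-width-depth-tradeoff}, started from $E$ instead of $[n]$. This yields a depth-$D$ matching circuit $Q'$ on $A_0\supseteq E$ with $|A_0|\le 2^D|E|\le 2^{2D+1}t$. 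Set $m:=|A_0|$ and $V:=V_E\otimes I_{A_0\setminus E}$, and relabel so that $T$ consists of the first $t$ qubits in the order $q_1,\ldots,q_t$. Then $\ket\phi\bra\phi=\Tr_{\{t+1,\ldots,m\}}[VQ'\ket{0^m}\bra{0^m}Q'^\dagger V^\dagger]$, and $m\ge t$ because $T\subseteq E\subseteq A_0$.
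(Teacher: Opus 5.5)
Your Steps 1, 2 and 4 match the paper's proof: the decoder $V_{\mathrm{glob}}=(V_\psi\otimes I)C$, the recovery lemma applied to $\ket{\eta}$ with $\ket{u}=\ket{\phi}$, and backward-lightcone pruning from $E$. The gap is Step 3, which is where the substance of the proposition lies. Your sketch does not prove it, and as written it would not give the bound.

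\emph{Why the sketch fails.} The complement of a maximal correctable region has no size bound in terms of $t$ alone; for a code of distance $d$ it contains at least $d$ qubits. So everything rests on your dimension count, and that count is not valid as stated.
\begin{itemize}
\item A single lightcone can carry several independent logical directions. If $Q$ is one gate preparing a non-stabilizer state $\ket{\phi_0}$ on qubit $1$ and $\ket{\eta}=\ket{\phi_0}\ket{0}$, then $L_1=\{1\}$ supports both logical classes.
\item ``At most one new direction per lightcone'' would bound the number of lightcones needed from below, not from above. An upper bound of $2t$ lightcones needs the converse: $W_S$ is spanned by classes each having a representative inside a single lightcone. Your sketch gives no mechanism for this.
\item The target you state first can be false: a weight-$\le 2^D$ representative for each of the decoder's own pairs $\overline X_j,\overline Z_j$. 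Take $\ket{\eta}=\ket{\phi_0}^{\otimes3}$, $r=0$ and $C=\operatorname{CNOT}_{1\to3}\operatorname{CNOT}_{1\to2}$. Then $S=\{0\}$ and $\overline X_1=X_1X_2X_3$. You only need some basis of $W_S$ with small supports.
\end{itemize}

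\emph{The missing idea.} The paper proves Step 3 by a uniqueness-plus-orthogonality argument applied directly to the non-Pauli operators you flagged.
\begin{itemize}
\item The Hermitian unitaries $A_i:=QZ_iQ^\dagger$ fix $\ket{\eta}$ and are supported on at most $2^D$ qubits. Moreover $\ket{\eta}$ is their unique common $+1$ eigenvector, since any such $\ket{v}$ has $Q^\dagger\ket{v}\propto\ket{0^M}$.
\item Pauli-expand each $A_i$, and let $\mathcal L$ collect the classes $[R]+S$ of its terms with $[R]\in S^\perp$.
\item Suppose $\operatorname{span}\mathcal L\ne W_S$. Nondegeneracy of the form on $W_S$ gives a nonzero class $[P]+S$ orthogonal to $\mathcal L$.
\item Then every term $R$ has the same expectation in $\ket{\eta}$ and $P\ket{\eta}$. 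Terms in $S^\perp$ commute with $P$. Terms outside $S^\perp$ anticommute with some stabilizer, so they have zero expectation in both states, because $PRP=\pm R$.
\item Hence $\langle P\eta|A_i|P\eta\rangle=1$ for every $i$. This forces $A_iP\ket{\eta}=P\ket{\eta}$, so $P\ket{\eta}\propto\ket{\eta}$. Then $\pm P\in\mathcal G_\eta$ and $[P]\in S$, a contradiction.
\item Select $2t$ terms whose classes form a basis of $W_S$, and let $E$ be the union of their supports. Then $|E|\le 2t\cdot 2^D$ and every logical class has a representative on $E$.
\end{itemize}
No cleaning lemma, correctable region, or induction over layers is needed.
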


\begin{proof}
Set $M:=n+r$ and $\ket{\eta}:=Q\ket{0^M}$.
Clifford invariance of stabilizer nullity and its invariance
under appending zero-state qubits give
$t:=\nu_{\mathrm{st}}(\eta)=\nu_{\mathrm{st}}(\psi)\le n$.
If $t=0$, then $\psi$ is a stabilizer state, so
$\ket{\psi}=U\ket{0^n}$ for some $U\in\Cliff_n$.
The second assertion holds with $m=0$, with the zero-qubit
state and identity represented by the scalar $1$.
Assume henceforth that $t\ge1$.

Set $S:=\mathsf S(\eta)$, and let
$\mathcal G_\eta=\{G\in\mathcal P_M^\pm:G\ket{\eta}=\ket{\eta}\}$
be the full signed Pauli stabilizer group of $\eta$.
Use $\mathcal K_S$ and $W_S$ for its code space and logical
Pauli space as defined above. We have $\dim S=M-t$ and
$\ket{\eta}\in\mathcal K_S$.
We first specify the logical state to recover, then find
a small set supporting its logical Pauli operations,
apply the recovery lemma, and prune the preparation circuit.

\medskip
\noindent
\textit{1. Choosing the logical state to recover.}
By Lemma~\ref{lem:app-stabilizer-nullity-decomposition},
there exist $V_{\mathrm{out}}\in\Cliff_n$ and a pure
$t$-qubit state $\ket{\phi}$ with $\mathsf S(\phi)=\{0\}$ such that
\[
    V_{\mathrm{out}}\ket{\psi}
    =\ket{\phi}\otimes\ket{0^{n-t}}.
\]
Thus the first assertion already holds with
$U:=V_{\mathrm{out}}^\dagger$.
Define $V_{\mathrm{glob}}:=(V_{\mathrm{out}}\otimes I_{2^r})C$.
The clean-output identity gives
\[
    V_{\mathrm{glob}}\ket{\eta}
    =(V_{\mathrm{out}}\ket{\psi})\otimes\ket{0^r}
    =\ket{\phi}\otimes\ket{0^{M-t}}.
\]
For $j=t+1,\ldots,M$, the operators
$V_{\mathrm{glob}}^\dagger Z_jV_{\mathrm{glob}}$ stabilize $\eta$.
Their $M-t$ independent labels form a basis of $S$, so
they generate $\mathcal G_\eta$ by
Appendix~\ref{app:IsotropicSubspaces}. It follows that
\[
    V_{\mathrm{glob}}\mathcal K_S
    =\mathbb C^{2^t}
      \otimes\operatorname{span}\{\ket{0^{M-t}}\}.
\]
This specifies the logical state but does not locate it
on a small set of physical qubits, since $V_{\mathrm{glob}}$
may act on all $M$ qubits.

\medskip
\noindent
\textit{2. Finding a small set supporting the logical Pauli operations.}
Set $A_i:=QZ_iQ^\dagger$, $i\in[M]$.
Each $A_i$ is a Hermitian unitary fixing $\eta$.
Moreover, $\eta$ is their unique common $+1$ eigenstate
up to global phase. To see this, let $\ket{v}$ be a normalized
state satisfying $A_i\ket{v}=\ket{v}$ for every $i$.
Multiplication by $Q^\dagger$ gives
\[
    Z_iQ^\dagger\ket{v}=Q^\dagger\ket{v},
    \qquad i\in[M].
\]
These conditions force every qubit of $Q^\dagger\ket{v}$
to be in $\ket{0}$, so
$Q^\dagger\ket{v}=e^{i\theta}\ket{0^M}$ and hence
$\ket{v}=e^{i\theta}\ket{\eta}$.

Since each layer of $Q$ consists of disjoint one- and two-qubit
gates, $A_i$ is supported on at most $2^D$ qubits.
Every nonzero term in its Pauli expansion
$A_i=\sum_{R\in\mathcal P_M}a_i(R)R$ has support contained
in $\supp(A_i)$. Collect the logical classes of the terms
that commute with every stabilizer:
\[
    \mathcal L:=\left\{
        [R]+S:[R]\in S^\perp,\ R\in\mathcal P_M,\
        a_i(R)\ne0\text{ for some }i\in[M]
    \right\}.
\]
We claim that $\operatorname{span}\mathcal L=W_S$.
Otherwise, nondegeneracy of the symplectic form on $W_S$
provides a nonzero class $[P]+S$ orthogonal to this span.
Choose $P\in\mathcal P_M$ representing that class;
in particular, $P^\dagger=P$ and $P^2=I$.

For each term $R$ with $a_i(R)\ne0$, we compare its
expectations in $\eta$ and $P\eta$.
If $[R]\in S^\perp$, then $[R]+S\in\mathcal L$, so $P$
commutes with $R$ and the expectations agree.
If $[R]\notin S^\perp$, some $G\in\mathcal G_\eta$
anticommutes with $R$, giving
\[
    \langle\eta|R|\eta\rangle
    =\langle\eta|GRG|\eta\rangle
    =-\langle\eta|R|\eta\rangle=0.
\]
Since $PRP\in\{R,-R\}$, we also have
$\langle P\eta|R|P\eta\rangle=0$.
Thus every Pauli term has the same expectation
in the two states, and
$\langle P\eta|A_i|P\eta\rangle=1$ for every $i$.
As $A_i$ is a Hermitian unitary,
\[
    \|A_iP\ket{\eta}-P\ket{\eta}\|^2
    =2-2\operatorname{Re}\langle P\eta|A_i|P\eta\rangle=0.
\]
The uniqueness proved above shows that $\ket{\eta}$ is
an eigenvector of $P$. Since $P^2=I$, either $P$ or $-P$
fixes $\ket{\eta}$ and hence belongs to $\mathcal G_\eta$.
Both operators have label $[P]$, so $[P]\in S$,
contradicting the choice of a nonzero class.

Since $\dim W_S=2t$ by Appendix~\ref{app:stabilizer-nullity},
we can select $2t$ classes from $\mathcal L$ forming a basis
of $W_S$. Let $R_1,\ldots,R_{2t}$ be their corresponding
Pauli terms and set
\[
    E:=\bigcup_{j=1}^{2t}\supp(R_j),
    \qquad |E|\le2t\,2^D=2^{D+1}t.
\]
For every $R\in\mathcal P_M^\pm$ preserving $\mathcal K_S$, there are
$c_j\in\{0,1\}$ such that
\[
    [R]+S=\sum_{j=1}^{2t}c_j([R_j]+S).
\]
Thus $[R]=\sum_jc_j[R_j]+[G]$ for some
$G\in\mathcal G_\eta$, and
$R=\gamma R_1^{c_1}\cdots R_{2t}^{c_{2t}}G$
for some $\gamma\in\{+1,-1,+i,-i\}$.
Since $G$ acts as the identity on $\mathcal K_S$,
\begin{equation}
\label{eq:RepresentationofpreservingPauli}
    R\ket{v}=\gamma R_1^{c_1}\cdots R_{2t}^{c_{2t}}\ket{v},
    \qquad \ket{v}\in\mathcal K_S.
\end{equation}
The operator $\gamma R_1^{c_1}\cdots R_{2t}^{c_{2t}}=RG$
belongs to $\mathcal P_M^\pm$, because $R$ and $G$ commute.
It is supported on $E$, so every logical Pauli class has
a signed Hermitian representative supported on $E$.

\medskip
\noindent
\textit{3. Recovering the prescribed logical state from $E$.}
Apply Lemma~\ref{lem:revmag-local-recovery} to this set $E$
and the decoder $V_{\mathrm{glob}}$ chosen in Step~1.
Because $V_{\mathrm{glob}}\ket{\eta}
=\ket{\phi}\otimes\ket{0^{M-t}}$, the lemma gives
a Clifford $V_E$ and an ordered set
$T=\{q_1,\ldots,q_t\}\subseteq E$ such that, with
$\eta_E:=\Tr_{E^c}\ket{\eta}\bra{\eta}$,
\begin{equation}
\label{eq:revmag-clean-local-recovery}
    \Tr_{E\setminus T}[V_E\eta_EV_E^\dagger]
    =\ket{\phi}\bra{\phi}.
\end{equation}
The reduced state $\eta_E$ may be mixed; the identity shows
that applying $V_E$ and retaining $T$ recovers the same
pure state $\phi$ selected in Step~1.

\medskip
\noindent
\textit{4. Pruning the preparation circuit.}
Retain the gates in the backward lightcone of $E$ through $Q$.
They form a circuit $Q'$ of depth at most $D$ on $m$ qubits,
including all qubits in $E$.
Each backward layer increases the number of included qubits
by at most a factor of two. Therefore
\[
    t\le |E|\le m\le2^D|E|\le2^{2D+1}t.
\]
Index the retained qubits with $E$ first and, within $E$,
with $q_1,\ldots,q_t$ first in this order.
The pruning argument from the proof of
Lemma~\ref{lem:fixed-depth-active-cone-pruning} gives
\[
    \eta_E=\Tr_{\{|E|+1,\ldots,m\}}\left[
        Q'\ket{0^m}\bra{0^m}Q'^\dagger
    \right].
\]
The omitted gates do not affect expectations of operators
on $E$, and the retained input qubits are in $\ket{0^m}$.
Extend $V_E$ by setting
$V:=V_E\otimes I_{2^{m-|E|}}\in\Cliff_m$.
Tracing out the remaining qubits and using
\eqref{eq:revmag-clean-local-recovery} yields
\begin{align*}
    \Tr_{\{t+1,\ldots,m\}}\left[
        VQ'\ket{0^m}\bra{0^m}Q'^\dagger V^\dagger
    \right]
    =\Tr_{E\setminus T}[V_E\eta_EV_E^\dagger]
    =\ket{\phi}\bra{\phi}.
\end{align*}
Together with $U=V_{\mathrm{out}}^\dagger$ from Step~1,
this proves both assertions.
\end{proof}

\begin{lemma}[Clean-ancilla reverse-magic entropy]
\label{lem:revmag-clean-profile-entropy}
For every fixed $D\ge1$, there exists $C_D<\infty$ such that
\[
    \log N\left(
        \varepsilon,
        \mathcal F_{n,p}
        \left(
            \RevMag^{[D],\mathrm{clean}}_n
        \right),
        \|\cdot\|_{n,p}
    \right)
    \le
    C_Dn^2
    +
    C_Dn
    \log\left(
        \frac{C_D3^{p/2}n}{\varepsilon}
    \right)
\]
for every $n\ge p\ge2$ and $0<\varepsilon<1$.
\end{lemma}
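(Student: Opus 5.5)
Via Proposition~\ref{prop:revmag-clean-decomposition}, every $\ket{\psi}\in\RevMag^{[D],\mathrm{clean}}_n$ is $U(\ket{\phi}\ket{0^{n-t}})$ with $U\in\Cliff_n$, $0\le t\le n$, and $\ket\phi\bra\phi$ the reduced output on $t$ qubits of $VQ'\ket{0^m}$ for a depth-$D$ matching circuit $Q'$ and a Clifford $V\in\Cliff_m$, $m\le 2^{2D+1}t\le 2^{2D+1}n$. So every element of the class is the reduced $n$-qubit output of the circuit
\[
    W:=(U\otimes I_{m-t})\,\Pi\,(VQ'\otimes I_{n-t})
\]
acting on $\ket{0^{n+m-t}}$. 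Here $\Pi$ is a fixed relabelling that places the $t$ retained logical qubits first, followed by the $n-t$ fresh $\ket0$'s, and then the $m-t$ discarded qubits. The plan is to cover these circuits the same way as in the proof of Lemma~\ref{lem:higher-magic-entropy}: enumerate the Clifford blocks exactly, and discretize only the local gates.

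The count goes as follows. Start with a union over $t\in\{0,\dots,n\}$ and $m\le 2^{2D+1}n$, which contributes $O(\log n+D+\log n)$ to the entropy. Next, by \cite[Equation~(1)]{koenig2014efficiently}, there are at most $|\Cliff_n|\le\exp(Cn^2)$ choices for $U$ and $|\Cliff_m|\le\exp(Cm^2)\le\exp(C4^{2D+1}n^2)$ choices for $V$. The architecture of $Q'$ has at most $(m+1)^{Dm}$ possibilities, giving $\log$-contribution $C_Dn\log n$. Finally, $Q'$ has at most $Dm\le D2^{2D+1}n$ local gates, and we approximate each from an operator-norm $\delta$-net of $\mathsf U(2)$ or $\mathsf U(4)$.

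For fixed Clifford choices, the telescoping estimate from Lemma~\ref{lem:gate-count-profile-entropy} gives $\|W-\widetilde W\|_{\op}\le Dm\delta$. Partial-trace contractivity then yields $\|\rho-\widetilde\rho\|_1\le 2Dm\delta$, and so $\|f_\rho-f_{\widetilde\rho}\|_{n,p}\le 2\cdot3^{p/2}Dm\delta$. Choosing $\delta=\varepsilon/(2\cdot 3^{p/2}Dm)$ produces an $\varepsilon$-cover whose log-cardinality is at most
\[
    C4^{2D}n^2+C_Dn\log n+C D2^{2D}n\log\!\left(\frac{C3^{p/2}D2^{2D}n}{\varepsilon}\right),
\]
and all of this can be absorbed into $C_Dn^2+C_Dn\log(C_D3^{p/2}n/\varepsilon)$.

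Most of this is routine bookkeeping. The one step I would check carefully is that the composite really is a single circuit applied to $\ket{0}$'s with the right reduced output. Proposition~\ref{prop:revmag-clean-decomposition} provides $\ket\phi\bra\phi$ only as a reduced state of an $m$-qubit preparation. Since $\phi$ is pure, the retained $t$ qubits are in product with the discarded $m-t$ qubits. Therefore applying $U$ to the retained qubits together with the fresh $n-t$ zeros, and only afterwards tracing out the discarded qubits, gives exactly $U(\ket\phi\bra\phi\otimes\ket{0^{n-t}}\bra{0^{n-t}})U^\dagger=\ket\psi\bra\psi$; this follows because the partial trace commutes with operations on the complementary system. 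The degenerate case $t=0$ just reduces to stabilizer states and is covered by the $U$-count alone.
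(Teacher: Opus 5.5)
Your proposal is correct and follows essentially the same route as the paper: invoke Proposition~\ref{prop:revmag-clean-decomposition}, count the two Clifford blocks exactly via the $\exp(Ck^2)$ bound, and discretize only the depth-$D$ gates using the telescoping estimate. The profile error then transfers with the factor $3^{p/2}$, and everything absorbs into $C_Dn^2+C_Dn\log(C_D3^{p/2}n/\varepsilon)$.

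The differences are cosmetic. The paper pads every preparation to the common width $m_*=2^{2D+1}n$ instead of taking a union over $m$. It also writes each output as $U(\sigma_t(V,Q)\otimes\ket{0^{n-t}}\bra{0^{n-t}})U^\dagger$ rather than as a single composed circuit. Finally, your appeal to the purity of $\phi$ is unnecessary: the partial trace commutes with $U$ because $U$ acts on the complementary system.
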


\begin{proof}
Fix $n\ge p\ge2$ and $0<\varepsilon<1$.
We first use Proposition~\ref{prop:revmag-clean-decomposition}
to represent every state in the class using circuits on a
common number of qubits. We then count the Clifford choices,
discretize the shallow preparation circuit, and bound the
resulting error in the profile norm.

\textit{1. Reducing to a common circuit width: }
Set $m_*:=2^{2D+1}n$.
By Proposition~\ref{prop:revmag-clean-decomposition}, every
$\ket{\psi}\in\RevMag^{[D],\mathrm{clean}}_n$ can be written as
$\ket{\psi}=U(\ket{\phi}\ket{0^{n-t}})$, where
$0\le t\le n$, $U\in\Cliff_n$, and $\ket{\phi}$ is obtained
by retaining the first $t$ qubits of a matching circuit of depth at most $D$ followed by a Clifford on at most
$2^{2D+1}t\le m_*$ qubits.
Append unused qubits in $\ket{0}$ and extend both circuits
by the identity to obtain a realization on exactly $m_*$
qubits. This preserves the depth and the reduced state on
the first $t$ qubits.

We now express the reduced logical state and the final
$n$-qubit state as density matrices. Set $\omega_*:=\ket{0^{m_*}}\bra{0^{m_*}}$ and, for
$U\in\Cliff_n$, $V\in\Cliff_{m_*}$, and a matching circuit
$Q$ of depth at most $D$ on $m_*$ qubits, define
\begin{align*}
    \sigma_t(V,Q)
    &:=\Tr_{\{t+1,\ldots,m_*\}}
        \left[VQ\omega_*Q^\dagger V^\dagger\right],\\
    \rho_t(U,V,Q)
    &:=U\left(
        \sigma_t(V,Q)\otimes
        \ket{0^{n-t}}\bra{0^{n-t}}
    \right)U^\dagger.
\end{align*}
Every target density matrix $\ket{\psi}\bra{\psi}$ has this
form for some $t,U,V,Q$.
For $t=0$, the partial trace defining $\sigma_0(V,Q)$ is
the scalar $1$, so $\rho_0(U,V,Q)=U\ket{0^n}\bra{0^n}U^\dagger$.
Thus the same family also covers the zero-nullity case.
The full family may contain mixed states; we will account
for this when choosing the centers of the final cover.

For $k=n,m_*$, let $\mathcal C_k$ contain one unitary
representative of each $k$-qubit Clifford modulo global phase.
The Clifford count used in the proof of
Lemma~\ref{lem:higher-magic-entropy} gives $|\mathcal C_k|\le\exp(Ck^2)$.

Global phases cancel in the density matrices above, so
we may restrict $U$ and $V$ to $\mathcal C_n$ and
$\mathcal C_{m_*}$, respectively, without changing the family.

\textit{2. Covering the preparation circuits: }
An architecture specifies the gate locations in each layer.
For each qubit in a layer, there are $m_*-1$ possible
partners for a two-qubit gate, one possibility for a
one-qubit gate, and one for no gate.
Every valid layer determines a labeling with $m_*+1$
choices per qubit. There are therefore at most
$(m_*+1)^{Dm_*}$ architectures with $D$ layers, including
circuits of smaller depth by allowing empty layers.
Since gate supports within a layer are disjoint, each
architecture has at most $Dm_*$ gate locations.

Fix $0<\delta<1$, to be chosen below.
As in Lemma~\ref{lem:gate-count-profile-entropy}, choose
operator-norm $\delta$-nets for $\mathsf U(2)$ and
$\mathsf U(4)$, with centers in the respective unitary
groups and cardinalities at most $(C/\delta)^C$.
For each architecture, replace each gate by a point in
the corresponding net. Taking the union over all
architectures gives a finite set $\mathcal Q$ satisfying
\begin{equation}
\label{eq:revmag-clean-circuit-cover-size}
    |\mathcal Q|
    \le
    (m_*+1)^{Dm_*}(C/\delta)^{CDm_*}.
\end{equation}
For every $Q$, there is $\widetilde Q\in\mathcal Q$ with
the same architecture and with corresponding gates
differing by at most $\delta$ in operator norm.

To bound the accumulated error, write
$Q=U_L\cdots U_1$ and
$\widetilde Q=\widetilde U_L\cdots\widetilde U_1$,
where $L\le Dm_*$ and the same ordering of gate locations
is used in both circuits. Each gate is extended by the
identity outside its support. Replacing the gates one
at a time gives
\[
    Q-\widetilde Q
    =\sum_{j=1}^L
        \widetilde U_L\cdots\widetilde U_{j+1}
        (U_j-\widetilde U_j)
        U_{j-1}\cdots U_1,
\]
where empty products are identities.
The factors surrounding $U_j-\widetilde U_j$ are unitary,
so the triangle inequality gives
\[
    \|Q-\widetilde Q\|_{\mathrm{op}}
    \le\sum_{j=1}^L
        \|U_j-\widetilde U_j\|_{\mathrm{op}}
    \le L\delta\le Dm_*\delta.
\]

\textit{3. Transferring the circuit approximation to profiles: }
Fix $t,U,V,Q$ and choose $\widetilde Q$ as above.
Set $\rho:=\rho_t(U,V,Q)$ and
$\widetilde\rho:=\rho_t(U,V,\widetilde Q)$.
Conjugation by a unitary and tensoring with a fixed pure
state preserve the trace norm. Partial trace contracts
the trace norm of the difference of two states.
Keeping $U$ and $V$ unchanged therefore gives
\begin{align*}
    \|\rho-\widetilde\rho\|_1
    &=\|\sigma_t(V,Q)-\sigma_t(V,\widetilde Q)\|_1
    \le\|Q\omega_*Q^\dagger
          -\widetilde Q\omega_*\widetilde Q^\dagger\|_1\\
    &\le\|(Q-\widetilde Q)\omega_*Q^\dagger\|_1
       +\|\widetilde Q\omega_*
           (Q^\dagger-\widetilde Q^\dagger)\|_1
    \le 2\|Q-\widetilde Q\|_{\mathrm{op}}
     \le 2Dm_*\delta,
\end{align*}
where the penultimate inequality uses $\|\omega_*\|_1=1$
and the operator norm one of the unitary factors.

For each $P\in\mathcal P_{n,p}$, we have
$|\Tr(P(\rho-\widetilde\rho))|
\le\|\rho-\widetilde\rho\|_1$, since
$\|P\|_{\mathrm{op}}=1$.
Using $|\mathcal P_{n,p}|=3^p\binom np$ in the definition
of the profile norm gives
\begin{align*}
    \|f_\rho-f_{\widetilde\rho}\|_{n,p}^2
    =\binom np^{-1}
      \sum_{P\in\mathcal P_{n,p}}
        |\Tr(P(\rho-\widetilde\rho))|^2
    \le 3^p\|\rho-\widetilde\rho\|_1^2.
\end{align*}
Consequently, taking
$\delta:=\varepsilon/(8\cdot3^{p/2}Dm_*)$ ensures that
\begin{equation}
\label{eq:revmag-clean-profile-approximation}
    \|f_\rho-f_{\widetilde\rho}\|_{n,p}
    \le 2\cdot3^{p/2}Dm_*\delta
    =\frac{\varepsilon}{4}.
\end{equation}
This estimate holds uniformly over $t,U,V,Q$.

\textit{4. Combining the estimates: }
Let $\mathcal F:=
\mathcal F_{n,p}(\RevMag^{[D],\mathrm{clean}}_n)$ and form
the finite set
\[
    \mathcal H:=\left\{
        f_{\rho_t(U,V,\widetilde Q)}:
        0\le t\le n,\quad
        U\in\mathcal C_n,\quad
        V\in\mathcal C_{m_*},\quad
        \widetilde Q\in\mathcal Q
    \right\}.
\]
By the representation above and
\eqref{eq:revmag-clean-profile-approximation},
$\mathcal H$ is an $\varepsilon/4$-cover, and hence an
$\varepsilon$-cover, of $\mathcal F$.
The centers need not belong to $\mathcal F$ under our
covering-number convention, and their number satisfies
\[
    |\mathcal H|
    \le(n+1)|\mathcal C_n|\,|\mathcal C_{m_*}|\,|\mathcal Q|.
\]
Combining the Clifford counts with
\eqref{eq:revmag-clean-circuit-cover-size} gives
\begin{align*}
    \log N(\varepsilon,\mathcal F,\|\cdot\|_{n,p})
    &\le\log(n+1)+C(n^2+m_*^2)
       +Dm_*\log(m_*+1)
       +CDm_*\log(C/\delta)\\
    &\le\log(n+1)+C(n^2+m_*^2)
       +Dm_*\log(m_*+1)
       +CDm_*\log\left(
           \frac{C3^{p/2}Dm_*}{\varepsilon}
       \right).
\end{align*}
Since $m_*=2^{2D+1}n$, increasing a constant $C_D$
depending only on $D$ yields
\[
    \log N(\varepsilon,\mathcal F,\|\cdot\|_{n,p})
    \le C_Dn^2
       +C_Dn\log\left(
           \frac{C_D3^{p/2}n}{\varepsilon}
       \right).
\]
Here the architecture count and the factor $n+1$ have
been absorbed into the right-hand side.
The representation and the cover apply to every original
ancillary width, so the bound is uniform over the number
of clean ancillas.
\end{proof}

\begin{theorem}[Reverse magic with arbitrary clean ancillas]
\label{thm:revmag-clean-ground-gap}
Fix an integer $D\ge1$.  There exists $p_0=p_0(D)<\infty$ such that,
for every fixed $p\ge p_0$, there is $c_{p,D}>0$ for which
\[
    \P\left(
        \Delta^{\mathrm{opt}}_{n,p}
        \left(
            \RevMag^{[D],\mathrm{clean}}_n
        \right)
        \ge
        c_{p,D}\sqrt n
    \right)
    \longrightarrow1.
\]
The conclusion is uniform over the number of clean ancillas.
\end{theorem}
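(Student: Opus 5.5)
The plan is to apply Theorem~\ref{thm:local-profile-complexity-principle}\textup{(i)} directly, using the entropy bound of Lemma~\ref{lem:revmag-clean-profile-entropy}. That lemma gives, for every $n\ge p\ge2$ and $0<\varepsilon<1$, an entropy bound of the required form with
\[
    a_{n,p}=C_Dn^2,
    \qquad
    b_{n,p}=C_Dn,
    \qquad
    B_{n,p}=C_D3^{p/2}n,
\]
where $B_{n,p}\ge1$ after enlarging $C_D\ge1$ if needed. The class $\RevMag^{[D],\mathrm{clean}}_n$ is deterministic and nonempty, and it does not depend on the disorder. It is also independent of the ancilla count, because the definition already takes the union over all $r\ge0$. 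Hence the hypotheses of the theorem are met with a single class for each $n$.

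Next I would compute the effective profile complexity:
\[
    \Gamma_{n,p}=C_Dn^2+C_Dn\log\bigl(eC_D3^{p/2}n^2\bigr).
\]
For each fixed $p$ the second term is $O_{p,D}(n\log n)=o(n^2)$. Therefore
\[
    \limsup_{n\to\infty}\Gamma_{n,p}/n^2\le C_D=:L,
\]
and this bound does not depend on $p$. This $p$-independence is the one point to check carefully. The factor $3^{p/2}$ enters only inside the logarithm, which is multiplied by $n$ and not $n^2$, so it disappears in the limit.

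Theorem~\ref{thm:local-profile-complexity-principle}\textup{(i)} with $L=C_D$ then supplies a threshold $p_0$ depending only on $L$, and hence only on $D$. For every fixed $p\ge p_0$ it also gives a constant $c_{p,L}=:c_{p,D}>0$ with
\[
    \P\left(\Delta^{\mathrm{opt}}_{n,p}\ge c_{p,D}\sqrt n\right)\to1.
\]
Uniformity over the number of clean ancillas is inherited from the corresponding uniformity in Lemma~\ref{lem:revmag-clean-profile-entropy}.

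No step should pose a real obstacle, since all the work sits in Proposition~\ref{prop:revmag-clean-decomposition} and the entropy lemma. The only care needed is the bookkeeping that keeps $L$ independent of $p$. One should also confirm that the theorem's entropy hypothesis is stated on the range $0<\varepsilon<1$, which matches the lemma.
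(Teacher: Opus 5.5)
Your proposal is correct and matches the paper's proof: it takes $a_{n,p}=C_Dn^2$, $b_{n,p}=C_Dn$, $B_{n,p}=C_D3^{p/2}n$ from Lemma~\ref{lem:revmag-clean-profile-entropy}, notes that $\Gamma_{n,p}/n^2\to C_D$ independently of $p$, and applies Theorem~\ref{thm:local-profile-complexity-principle}\textup{(i)} with $L=C_D$. As you say, uniformity over clean ancillas holds because the entropy bound covers the full union over all ancillary widths.
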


\begin{proof}
Fix $D\ge1$ and choose $C_D\ge1$ so that the bound in
Lemma~\ref{lem:revmag-clean-profile-entropy} holds for all
$n\ge p\ge2$ and $0<\varepsilon<1$.
To apply Theorem~\ref{thm:local-profile-complexity-principle}
to the class $\RevMag^{[D],\mathrm{clean}}_n$, take
\[
    a_{n,p}:=C_Dn^2,
    \qquad
    b_{n,p}:=C_Dn,
    \qquad
    B_{n,p}:=C_D3^{p/2}n.
\]
These parameters satisfy $a_{n,p},b_{n,p}\ge0$ and
$B_{n,p}\ge1$, and the entropy lemma gives precisely
the required bound
$a_{n,p}+b_{n,p}\log(B_{n,p}/\varepsilon)$.
The corresponding effective profile complexity is
\[
    \Gamma_{n,p}
    :=a_{n,p}+b_{n,p}\log(eB_{n,p}n)
    =C_Dn^2+C_Dn\log\left(eC_D3^{p/2}n^2\right).
\]
For every fixed $p$, we therefore have
\[
    \frac{\Gamma_{n,p}}{n^2}
    =C_D+\frac{C_D}{n}
        \left(2\log n+\log(eC_D)+\frac p2\log3\right)
    \longrightarrow C_D.
\]
In particular, the limiting bound is independent of $p$.
Theorem~\ref{thm:local-profile-complexity-principle}\textup{(i)},
applied with $L=C_D$, gives the claimed ground-state
separation, with a threshold $p_0$ depending only on $D$.
Since the entropy bound holds for the full class with
arbitrarily many clean ancillas, the separation holds
simultaneously for every prescribed ancillary width.
\end{proof}

\begin{proof}[Proof of Corollary~\ref{cor:intro-magic}]
\textup{(i)} follows from
Proposition~\ref{prop:higher-magic-ground-gap}, which covers
both orientations for every fixed number of shallow blocks.

For \textup{(ii)}, the linear-width ground-state
conclusion follows from
Proposition~\ref{prop:dense-magic}\textup{(i)}.
The comparison with $\Prod_n$ in the presence of arbitrary
discardable ancillas follows from
Proposition~\ref{prop:dense-magic}\textup{(ii)}
under the stated logarithmic-depth assumption.

For \textup{(iii)}, the linear-width conclusion follows
from Proposition~\ref{prop:higher-magic-ground-gap} with $k=1$,
and the fixed-depth conclusion with arbitrary clean ancillas
is Theorem~\ref{thm:revmag-clean-ground-gap}.
In each linear-width application, choose a fixed $K$ such
that $M_n\le Kn$ for all sufficiently large $n$.

Depth-zero blocks are included in the corresponding
depth-one classes. Replacing $D_n$ by $\max\{D_n,1\}$
preserves the stated depth bounds for all sufficiently
large $n$ when the prefactors are positive.
When $L=0$, apply the depth-one bound with $L=1$.
The same inclusion covers the clean-ancilla case at depth zero.
The conclusions pass to subclasses because restricting
the state class can only increase $\Delta^{\mathrm{opt}}_{n,p}$
and can only decrease
$\Delta^{\mathrm{ref}}_{n,p}(\,\cdot\,;\Prod_n)$.
\end{proof}

\section{Stabilizer states and Clifford+$T$ circuits}
\label{sec:sharper-conditioning-benchmarks}

In this section, we prove Corollary~\ref{cor:intro-t-count}. For integers $n\ge1$ and $t\ge0$, define the class of Clifford+$T$ states as
\[
    \CliffT^{\mathrm{anc}}_{n,t}
    :=
    \bigcup_{M\ge n}
    \left\{
        \Tr_{\{n+1,\ldots,M\}}
        \bigl(U|0^M\rangle\langle0^M|U^\dagger\bigr)
    \right\},
\]
where, for each $M$, $U$ ranges over circuits of the form
\[
    U=C_sT_{i_s}C_{s-1}\cdots T_{i_1}C_0,
    \qquad s\in\{0,\ldots,t\},
\]
with $C_0,\ldots,C_s\in\Cliff_M$ and
$i_1,\ldots,i_s\in[M]$. For $s=0$, we set $U=C_0$.
The gate $T_i$ acts on qubit $i$ and is defined by
\[
    T_i:=I^{\otimes(i-1)}\otimes T\otimes I^{\otimes(M-i)},
    \qquad
    T:=\begin{pmatrix}
        1&0\\
        0&e^{i\pi/4}
    \end{pmatrix}.
\]
Its inverse is
$T^\dagger=\operatorname{diag}(1,e^{-i\pi/4})$.
Equivalently, we may allow both $T$- and $T^\dagger$-gates,
counting each toward the bound $t$.
This leaves the class unchanged because
$T^\dagger=\operatorname{diag}(1,-i)T$,
and the Clifford factor $\operatorname{diag}(1,-i)$
can be absorbed into an adjacent Clifford block.

The bound $t$ counts both types of gates on retained and
discarded qubits. Discarding ancillas allows mixed output
states. Since Clifford operations and the number of
discardable ancillas are unrestricted, we develop a
representation of the output states whose size depends
only on $n$ and $t$, independently of the original
circuit width.

\subsection{Overview of the argument}
We use the same conditioning and profile-entropy framework as
in the previous sections, but exploit the structure of
Clifford+$T$ outputs to obtain a sharper benchmark than the unrestricted product states. We first explain the argument before giving the formal
statements and proofs.

Recall how the product approximation is constructed. Lemma~\ref{lem:measurement-parametrized-definetti}, applied with
the Pauli POVM from \eqref{eq:pauli-conditioning-measurement},
gives a set $J$ of unmeasured qubits and the state
\[
    \eta_J
    =\sum_{z:q(z)>0}q(z)\bigotimes_{i\in J}\rho_{i\mid z},
    \qquad
    \rho_{i\mid z}
    =\Tr_{J\setminus\{i\}}(\rho_{J\mid z}).
\]
Here $q(z)$ is the distribution of the outcomes obtained by
measuring $J^c$, and $\rho_{J\mid z}$ is the corresponding
conditional state. Lemma~\ref{lem:measurement-parametrized-definetti} controls
the average squared trace-norm error $\|\rho_I-(\eta_J)_I\|_1^2$ over all $p$-element subsets
$I\subseteq J$.

In the previous applications, it was enough that
$\eta_J\in\operatorname{conv}(\Prod_J)$. For Clifford+$T$ outputs, we can say more about the conditional
factors $\rho_{i\mid z}$.

First consider $t=0$, so the circuit contains only Clifford
gates. Its full output, before any qubits are discarded, is
a pure stabilizer state. We measure the qubits in $J^c$ using
the Pauli POVM from \eqref{eq:pauli-conditioning-measurement}
and condition on their joint outcome $z$. Conditioning a pure stabilizer state on single-qubit Pauli measurement outcomes of positive probability leaves a pure
stabilizer state on the unmeasured qubits
\cite[Section~10.5.3]{nielsen2010quantum}; see also Lemma~\ref{lem:app-nullity-conditioning}\textup{(ii)}. 

Therefore, letting $E$ denote the ancillas, we have that the joint conditional state
$\rho_{JE\mid z}$ is a pure stabilizer state. Its reduced
state on qubit $i\in J$ is $\rho_{i\mid z}
    =\Tr_{(J\setminus\{i\})\cup E}(\rho_{JE\mid z})$. To identify this marginal, write its Pauli expansion as
\[
    \rho_{i\mid z}
    =\frac12(I_i+r_X X_i+r_Y Y_i+r_Z Z_i),
    \qquad
    r_a:=\Tr(a_i\rho_{i\mid z}).
\]
Since the joint state is a stabilizer state, each Pauli
expectation $r_a$ belongs to $\{0,\pm1\}$.
Moreover, positivity of $\rho_{i\mid z}$ requires
$r_X^2+r_Y^2+r_Z^2\le1$. Thus either all three coefficients
vanish, or exactly one equals $\pm1$, giving $\rho_{i\mid z}=I_i/ 2$, or $\rho_{i\mid z}=(I_i+s a_i)/2$ for $a\in\{X,Y,Z\}$, $s\in\{\pm1\}$. The latter is the rank-one projector onto the eigenstate
of $a$ with eigenvalue $s$. These six eigenstates are
precisely the pure one-qubit stabilizer states
\cite[Section~10.5.1]{nielsen2010quantum}.
Since $I_i/2=(|0\rangle\langle0|_i+
|1\rangle\langle1|_i)/2$, every $\rho_{i\mid z}$ is therefore
a mixture of one-qubit stabilizer states.

Products of these pure states belong to
$\PStab_A:=\Stab_A\cap\Prod_A$, where $A$ indexes the qubits
in the product. Consequently,
\[
    \bigotimes_{i\in J}\rho_{i\mid z}
    \in\operatorname{conv}(\PStab_J),
\]
and averaging over the outcomes gives
\[
    \eta_J
    =\sum_{z:q(z)>0}q(z)\bigotimes_{i\in J}\rho_{i\mid z}
    \in\operatorname{conv}(\PStab_J).
\]
Thus $\eta_J$ belongs to the smaller class of mixtures of
product stabilizer states, sharpening the general inclusion
$\eta_J\in\operatorname{conv}(\Prod_J)$ used previously.

When $t>0$, the full output need not be a stabilizer state,
and the conditional marginals $\rho_{i\mid z}$, $i\in J$,
need not be mixtures of pure one-qubit stabilizer states.
Nevertheless, Lemma~\ref{lem:nullity-preserving-conditioning}
shows that, for each outcome $z$ with $q(z)>0$, all but
at most $t$ of these marginals still have this property. We refer to the remaining at most $t$ sites as being \textit{exceptional}.

Fix such an outcome $z$. For each $\rho_{i\mid z}$ that is
a mixture of pure one-qubit stabilizer states, choose such
a decomposition. For the remaining at most $t$ exceptional marginals,
use their spectral decompositions to express them as
mixtures of arbitrary pure one-qubit states. Taking their tensor product therefore gives
a mixture of pure product states, each with at most $t$
factors that are not Pauli eigenstates.
Writing $\mathcal B_{J,t}$ for this class of pure product
states, we obtain
\[
    \bigotimes_{i\in J}\rho_{i\mid z}
    \in\operatorname{conv}(\mathcal B_{J,t}).
\]
Averaging over $z$ then gives
\[
    \eta_J
    =\sum_{z:q(z)>0}q(z)\bigotimes_{i\in J}\rho_{i\mid z}
    \in\operatorname{conv}(\mathcal B_{J,t}).
\]
The exceptional sites may vary between terms in the mixture.
Thus $\mathcal B_{J,t}$ enlarges $\PStab_J$ by allowing
at most $t$ arbitrary factors, with
\[
    \mathcal B_{J,0}=\PStab_J,
    \qquad
    \PStab_J\subseteq\mathcal B_{J,t}\subseteq\Prod_J.
\]

The profile-entropy estimate in
Lemma~\ref{lem:t-count-profile-entropy} allows us to apply
Theorem~\ref{thm:local-profile-complexity-principle}\textup{(ii)}
with this benchmark when $t=o(n)$.
Lemma~\ref{lem:pstab-exceptional-site-cost} then shows that,
for fixed $p$, allowing the $t$ exceptional factors increases
the expected optimal energy by only $o(\sqrt n)$.
Together, these results give the comparison with $\PStab_n$.

Finally, Proposition~\ref{prop:pstab-value} determines the
sharp large-$p$ limit of the normalized product-stabilizer
energy as $\sqrt{2\log6}$. This result is of independent
interest and, together with the comparison above, yields
the large-$p$ conclusion of
Corollary~\ref{cor:intro-t-count}\textup{(ii)}. 
The constant $\sqrt{2\log6}$ is also the limiting expected
maximum of $6^n$ independent standard Gaussian energies,
divided by $\sqrt n$, in the random energy model.
Our comparison concerns the sequential limit
$n\to\infty$ followed by $p\to\infty$, although the
product-stabilizer energies are correlated at fixed $p$.
Related large-$p$ connections between spin glasses and
the random energy model are discussed in
\cite{gamarnik2025shattering,anschuetz2025bounds}.

\subsection{Pauli conditioning and sharper benchmarks}
\label{subsec:stabilizer-conditioning}

For $a\in\{X,Y,Z\}$ and $s\in\{\pm1\}$, let $|a,s\rangle$ be a normalized eigenvector of $a$ with
eigenvalue $s$, so that
\[
    a|a,s\rangle=s|a,s\rangle,
    \qquad
    \langle a,s|a,s\rangle=1.
\]
For a finite set $A$ of qubits, write
\[
    \PStab_A
    :=
    \left\{
        \bigotimes_{i\in A}|a_i,s_i\rangle\langle a_i,s_i|:
        a_i\in\{X,Y,Z\},\ s_i\in\{\pm1\}
    \right\}.
\]

For a pure product state $\sigma\in\Prod_A$, define its set of exceptional sites by
\[
    E_A(\sigma)
    :=\left\{
        i\in A:\sigma_i\notin\PStab_{\{i\}}
    \right\},
\]
where $\sigma_i:=\Tr_{A\setminus\{i\}}(\sigma)$.
For an integer $t\ge0$, set
\[
    \mathcal B_{A,t}
    :=\left\{
        \sigma\in\Prod_A:|E_A(\sigma)|\le t
    \right\},
    \qquad
    \mathcal B_{n,t}:=\mathcal B_{[n],t}.
\]
Thus $\mathcal B_{A,t}$ contains all pure product states on
$A$ with at most $t$ factors that are not one-qubit
stabilizer states.

Recall from Appendix~\ref{app:stabilizer-nullity} that the
stabilizer nullity of a pure $M$-qubit state $|\Psi\rangle$ is
$\nu_{\mathrm{st}}(\Psi)=M-\dim\mathsf S(\Psi)$.
For the full circuit output, this is at most the $T$-count
by Lemma~\ref{lem:app-nullity-conditioning}\textup{(i)}.
We use this bound to control the number of exceptional factors. By Corollary~\ref{cor:conditioning-extensions}, applying the
Pauli POVM while retaining the ancillas gives a pure joint
conditional state, whose output marginals agree with those
obtained by conditioning $\rho$ directly.
The following lemma turns this nullity bound into an
approximation by $\operatorname{conv}(\mathcal B_{n,t})$.

\begin{lemma}[Product approximation after conditioning]
\label{lem:nullity-preserving-conditioning}
Let $n\ge p\ge2$ and $t\ge0$ be integers, and suppose $\rho=\Tr_E|\Psi\rangle\langle\Psi|$ with $\nu_{\mathrm{st}}(\Psi)\le t$, where $|\Psi\rangle$ is a pure state on the $n$ qubits and finitely many additional qubits $E$.
For every integer $0\le m_\star\le n-p$, there are
$J\subseteq[n]$ and $\sigma\in\operatorname{conv}(\mathcal B_{n,t})$
such that $|J|\ge n-m_\star$ and 
\[
    \|\mathsf M_J(f_\rho-f_\sigma)\|_{n,p}
    \le c\frac{2^{p/2}p^{3/2}}{\sqrt{m_\star+1}},
\]
where $c>0$ is a universal constant.
\end{lemma}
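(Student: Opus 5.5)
The plan is to run exactly the construction of Corollary~\ref{cor:conditioning-canonical} and then check that the separable state it produces lies in the smaller set $\operatorname{conv}(\mathcal B_{n,t})$.

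\textbf{Approximation step.} Apply Lemma~\ref{lem:measurement-parametrized-definetti} to $\rho$ with the Pauli POVM, for which $\kappa_{\mathcal M}=4$ by Lemma~\ref{lem:pauli-measurement-comparison}. This yields $J$ with $|J^c|\le m_\star$ and the state
\[
\eta_J=\sum_z q(z)\bigotimes_{i\in J}\rho_{i\mid z}.
\]
Set $\sigma:=\eta_J\otimes|0\rangle\langle0|^{\otimes J^c}$. The Parseval computation in the proof of Corollary~\ref{cor:conditioning-canonical} then gives the stated bound on $\|\mathsf M_J(f_\rho-f_\sigma)\|_{n,p}$, with a universal constant. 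Nothing in this part depends on $t$. It remains to show $\sigma\in\operatorname{conv}(\mathcal B_{n,t})$. The padding factors $|0\rangle\langle0|$ are $Z$-eigenstates, so they are never exceptional, and it suffices to prove that each conditional product $\bigotimes_{i\in J}\rho_{i\mid z}$ with $q(z)>0$ lies in $\operatorname{conv}(\mathcal B_{J,t})$.

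\textbf{Conditioning the purification.} Fix an outcome $z$ with $q(z)>0$. The Pauli POVM effects $E_{a,s}$ have rank one, so Corollary~\ref{cor:conditioning-extensions} gives two facts. First, the conditional state $\widehat\rho_{JE\mid z}=|\Phi_z\rangle\langle\Phi_z|$ obtained by measuring $J^c$ on $|\Psi\rangle$ is pure. Second, its marginals on single sites $i\in J$ are exactly the $\rho_{i\mid z}$.

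Each rank-one effect is proportional to a Pauli eigenprojector $\frac12(I+sa)$, so conditioning is a sequence of single-qubit Pauli measurements with postselection. By Lemma~\ref{lem:app-nullity-conditioning}, stabilizer nullity does not increase under such measurements. The post-measurement state on the measured qubit is a stabilizer state and contributes no nullity, so discarding that qubit does not increase nullity either. Hence $\nu_{\mathrm{st}}(\Phi_z)\le t$ as a state on $J\cup E$.

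\textbf{Key claim: at most $t$ exceptional marginals.} I will show that at most $t$ sites $i\in J$ have a marginal $\rho_{i\mid z}$ that is not a mixture of one-qubit stabilizer states. Write $S=\mathsf S(\Phi_z)$, so $\dim S\ge |J\cup E|-t$. Each single-site marginal has the form
\[
\rho_{i\mid z}=\tfrac12\bigl(I+r_XX_i+r_YY_i+r_ZZ_i\bigr),
\]
and a coefficient $r_a$ is nonzero only when $\pm a_i$ is a stabilizer, or when $a_i$ commutes with the whole stabilizer group. The second alternative holds because any stabilizer anticommuting with $a_i$ forces $r_a=0$.

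This gives three cases for each site $i$:
\begin{itemize}
\item If some $a\in\{X,Y,Z\}$ has $\pm a_i$ in the group, the marginal is a Pauli eigenstate.
\item If some pair of single-site Paulis on $i$ both anticommute with stabilizers, the marginal is a multiple of $I/2$ or has only one nonzero component, and is therefore a stabilizer mixture.
\item Otherwise $i$ is \emph{bad}: at least two Pauli axes on $i$ have labels in $S^\perp$ but not in $S$.
\end{itemize}
For a bad site $i$, the nonzero images of its single-site Paulis in the logical quotient $W_S=S^\perp/S$ contain an anticommuting pair, or a single commuting class. To count bad sites I will use the projection of the label space onto the qubits of $J$. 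Labels supported on distinct sites are independent and commute across sites. Therefore the subspaces $(S^\perp\cap\mathcal P_{\{i\}})+S$ for distinct bad sites span a subspace of $W_S$ that is a direct sum of nonzero pieces. Because the symplectic form is nondegenerate on $W_S$ and $\dim W_S=2\nu_{\mathrm{st}}\le 2t$, this bounds the number of bad sites by $t$. A site whose logical piece is isotropic (one-dimensional) must be matched against another direction by a dimension argument, so the count I expect is $\#\text{bad}\le 2t$ naively. I hope to sharpen this to $t$ by using $\dim S\ge M-t$ directly: each bad site contributes at least one dimension to $S^\perp$ beyond $S$ in a way that is disjoint across sites, and $\dim S^\perp-\dim S=2t$ gives... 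This constant-factor bookkeeping is the main obstacle. If only $2t$ comes out, I would instead invoke the cleaner fact, presumably Lemma~\ref{lem:app-nullity-conditioning}(ii) or its proof, that tracing out a qubit whose marginal is not a stabilizer mixture costs one unit of nullity.

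\textbf{Conclusion.} With the claim in hand, decompose each non-exceptional marginal as a mixture of Pauli eigenstates (using $I/2=\frac12(|0\rangle\langle0|+|1\rangle\langle1|)$), and each of the at most $t$ exceptional marginals spectrally into pure one-qubit states. Expanding the tensor product writes $\bigotimes_{i\in J}\rho_{i\mid z}$ as a convex combination of elements of $\mathcal B_{J,t}$. Averaging over $z$ and padding with $|0\rangle\langle0|$ on $J^c$ places $\sigma$ in $\operatorname{conv}(\mathcal B_{n,t})$.
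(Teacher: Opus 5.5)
\paragraph{The gap: the count of exceptional sites.}
Your reduction follows the paper's proof step for step. You use the conditioning lemma with the Pauli POVM and pad with $|0\rangle\langle0|$. You use Corollary~\ref{cor:conditioning-extensions} to get a pure conditional state $\Phi_z$ on $J\cup E$ whose one-site marginals are the $\rho_{i\mid z}$. You use monotonicity of nullity under Pauli conditioning, and finally a spectral decomposition of the exceptional marginals.

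The gap is the step you flag yourself. You never prove that at most $t$ of the $\rho_{i\mid z}$ lie outside $\operatorname{conv}(\PStab_{\{i\}})$.
\begin{itemize}
\item A bound of $2t$ would only give $\sigma\in\operatorname{conv}(\mathcal B_{n,2t})$, which is weaker than the statement.
\item Your fallback is not available. Lemma~\ref{lem:app-nullity-conditioning}\textup{(ii)} concerns conditioning on a Pauli outcome, not partial trace. Moreover, $\nu_{\mathrm{st}}$ is defined only for pure states, so ``tracing out a qubit costs one unit of nullity'' is not a usable fact.
\item Your claim that the pieces at distinct bad sites form a direct sum in $W_S$ does not follow as written. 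You justify it by the independence of labels on distinct sites, but independence in $\mathbb F_2^{2N}$, with $N:=|J\cup E|$, need not survive passing to the quotient by $S$.
\end{itemize}

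\paragraph{The missing idea.}
Suppose two axes at site $i$ have labels in $S^\perp$. Then so does the third, since its label is their sum. Hence every element of the stabilizer group of $\Phi_z$ commutes with $X_i,Y_i,Z_i$, and so acts as the identity at $i$.

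So your bad sites are exactly the set $F$ of sites on which the stabilizer group acts trivially. In particular, the one-dimensional ``single commuting class'' case you feared never arises.

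Now count $|F|$:
\begin{itemize}
\item Every label supported on $F$ is symplectically orthogonal to $S$.
\item No nonzero element of $S$ is supported on $F$.
\item Therefore the $2|F|$-dimensional space of labels supported on $F$ injects into $W_S=S^\perp/S$.
\item $W_S$ has dimension $2\nu_{\mathrm{st}}(\Phi_z)\le 2t$, so $|F|\le t$.
\end{itemize}

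Combined with your case analysis (sites outside $F$ have marginals $(I+\alpha a)/2$), this is exactly Lemma~\ref{lem:app-nullity-conditioning}\textup{(iii)}, which the paper simply invokes. Its proof phrases the count as injectivity of restricting $S$ to $F^c$. The image is an isotropic subspace on $N-|F|$ qubits, so $\dim S\le N-|F|$. With this count in place, the rest of your argument goes through unchanged.
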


\begin{proof}
We first construct a mixture of product states approximating
the local marginals of $\rho$. We then show that each product
state in the mixture can be chosen with at most $t$ factors
that are not one-qubit stabilizer states.

\textit{1. Construct the conditioning approximation.}
Apply Lemma~\ref{lem:measurement-parametrized-definetti}
to $\rho$ with measurement budget $m_\star$ and the Pauli POVM in \eqref{eq:pauli-conditioning-measurement}.
This gives a set $S$ of measured qubits with $|S|\le m_\star$.
Let $J=[n]\setminus S$ denote the unmeasured qubits.

Let $q(z)$ and $\rho_{J\mid z}$ be the outcome probabilities
and conditional states defined in
\eqref{eq:conditioning-state-convention}.
For each $z$ with $q(z)>0$, define
\begin{equation}
\label{eq:t-count-conditional-products}
    \rho_{i\mid z}
    :=\Tr_{J\setminus\{i\}}(\rho_{J\mid z}),
    \qquad
    \theta_{J,z}
    :=\bigotimes_{i\in J}\rho_{i\mid z}.
\end{equation}
Thus $\theta_{J,z}$ is the product of the one-qubit marginals
of $\rho_{J\mid z}$. Averaging these products gives
\[
    \eta_J:=\sum_{z:q(z)>0}q(z)\theta_{J,z}.
\]
Lemma~\ref{lem:measurement-parametrized-definetti} then guarantees
\begin{equation}
\label{eq:t-count-conditioned-marginal-error}
    \frac1{\binom{|J|}{p}}
    \sum_{\substack{I\subseteq J\\|I|=p}}
        \|\rho_I-(\eta_J)_I\|_1^2
    \le C\frac{p^3}{m_\star+1}.
\end{equation}

\textit{2. Identify the exceptional factors.}
We now perform the same measurements on $S$ in the full pure state
$|\Psi\rangle\langle\Psi|$, retaining the extra qubits $E$.
By Corollary~\ref{cor:conditioning-extensions}, the outcome
probabilities are again $q(z)$. Since the Pauli POVM has
rank-one effects, the joint conditional state is pure.
Writing it as
$\widehat\rho_{JE\mid z}=|\Psi_z\rangle\langle\Psi_z|$, we have
\begin{equation}
\label{eq:t-count-conditional-marginals}
\begin{aligned}
    \rho_{J\mid z}
    &=\Tr_E\widehat\rho_{JE\mid z},\\
    \rho_{i\mid z}
    &=\Tr_{J\setminus\{i\}}(\rho_{J\mid z})
      =\Tr_{(J\setminus\{i\})\cup E}
          \widehat\rho_{JE\mid z},
    \qquad i\in J.
\end{aligned}
\end{equation}
By Lemma~\ref{lem:app-nullity-conditioning}\textup{(ii)}, conditioning on Pauli measurement outcomes does not increase
nullity, so
\[
    \nu_{\mathrm{st}}(\Psi_z)
    \le\nu_{\mathrm{st}}(\Psi)\le t.
\]
Applying part~\textup{(iii)} of that lemma to $|\Psi_z\rangle$
and using \eqref{eq:t-count-conditional-marginals} gives
\[
    F_z:=\left\{
        i\in J:
        \rho_{i\mid z}\notin\operatorname{conv}(\PStab_{\{i\}})
    \right\},
    \qquad
    |F_z|\le\nu_{\mathrm{st}}(\Psi_z)\le t.
\]
Here $F_z$ is the set of sites $i\in J$ whose conditional
marginals $\rho_{i\mid z}$ cannot be written as mixtures of
pure one-qubit stabilizer states. These are the sites at
which the decomposition below requires non-stabilizer factors.

Fix an outcome $z$ with $q(z)>0$. For $i\in J\setminus F_z$,
decompose $\rho_{i\mid z}$ into pure one-qubit stabilizer states.
For $i\in F_z$, use its spectral decomposition into arbitrary
pure states. In both cases, write
\[
    \rho_{i\mid z}
    =\sum_\ell\lambda_{i,z,\ell}
        |\phi_{i,z,\ell}\rangle\langle\phi_{i,z,\ell}|,
    \qquad
    \lambda_{i,z,\ell}\ge0,\quad
    \sum_\ell\lambda_{i,z,\ell}=1.
\]
Expanding the tensor product in
\eqref{eq:t-count-conditional-products} gives
\[
    \theta_{J,z}
    =
    \sum_{(\ell_i)_{i\in J}}
        \left(\prod_{i\in J}\lambda_{i,z,\ell_i}\right)
        \bigotimes_{i\in J}
        |\phi_{i,z,\ell_i}\rangle\langle\phi_{i,z,\ell_i}|.
\]
For every pure product state $\tau$ appearing in this sum,
\[
    E_J(\tau)\subseteq F_z,
    \qquad
    |E_J(\tau)|\le t,
    \qquad
    \tau\in\mathcal B_{J,t}.
\]
The coefficients are nonnegative and sum to one. Consequently,
\[
    \theta_{J,z}\in\operatorname{conv}(\mathcal B_{J,t}),
    \qquad
    \eta_J=\sum_{z:q(z)>0}q(z)\theta_{J,z}
    \in\operatorname{conv}(\mathcal B_{J,t}).
\]

\textit{3. Extend the approximation and bound the profile error.}
Append $|0\rangle\langle0|$ at every site in $S$, defining
\[
    \sigma
    :=\eta_J\otimes\bigotimes_{i\in S}|0\rangle\langle0|_i.
\]
These additional factors are one-qubit stabilizer states, so
\[
    \sigma\in\operatorname{conv}(\mathcal B_{n,t}),
    \qquad
    \sigma_I=(\eta_J)_I
    \quad\text{for }I\subseteq J.
\]
Pauli orthogonality and
\eqref{eq:t-count-conditioned-marginal-error} therefore give
\begin{align*}
    \|\mathsf M_J(f_\rho-f_\sigma)\|_{n,p}^2
    &\le\frac{2^p}{\binom np}
        \sum_{\substack{I\subseteq J\\|I|=p}}
        \|\rho_I-(\eta_J)_I\|_1^2
    \le c\frac{2^pp^3}{m_\star+1}
        \frac{\binom{|J|}{p}}{\binom np}
    \le c\frac{2^pp^3}{m_\star+1}.
\end{align*}
Taking square roots proves the claim.
\end{proof}

The following shows that Clifford+$T$ outputs with at most $t_n$ $T$-gates admit
conditioning approximations by mixtures of product states,
each with at most $t_n$ non-stabilizer factors.

\begin{corollary}[Conditioning benchmark for Clifford+$T$ outputs]
\label{cor:t-count-conditioning-benchmark}
For every sequence $(t_n)_{n\ge1}$ of nonnegative integers,
$(\mathcal B_{n,t_n})_{n\ge1}$ is a conditioning benchmark for
$(\CliffT^{\mathrm{anc}}_{n,t_n})_{n\ge1}$ in the sense of
Definition~\ref{def:conditioning-benchmark}.
\end{corollary}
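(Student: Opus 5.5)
The plan is to verify the quantitative sufficient condition \eqref{eq:conditioning-benchmark-sufficient} with $\mathcal B_n=\mathcal B_{n,t_n}$ and $A_p=c\,2^{p/2}p^{3/2}$. As noted after Corollary~\ref{cor:conditioning-canonical}, this condition implies Definition~\ref{def:conditioning-benchmark}. The needed approximation is already supplied by Lemma~\ref{lem:nullity-preserving-conditioning}, provided every $\rho\in\CliffT^{\mathrm{anc}}_{n,t_n}$ is a partial trace of a pure state of nullity at most $t_n$.

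\emph{Step 1: purification with bounded nullity.} Fix $\rho\in\CliffT^{\mathrm{anc}}_{n,t_n}$. By definition, for some $M\ge n$ and $s\le t_n$,
\[
    \rho=\Tr_E|\Psi\rangle\langle\Psi|,
    \qquad
    |\Psi\rangle=U|0^M\rangle,
    \qquad
    U=C_sT_{i_s}\cdots T_{i_1}C_0,
\]
where $E=\{n+1,\ldots,M\}$. Lemma~\ref{lem:app-nullity-conditioning}\textup{(i)} gives $\nu_{\mathrm{st}}(\Psi)\le s\le t_n$. If one wants to see this directly: Clifford gates preserve nullity, and a single $T$-gate can lower $\dim\mathsf S$ by at most one. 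Indeed, the stabilizers commuting with $Z_i$ form a subspace of codimension at most one, and $T_i$ maps each of them to a signed Pauli. The ancillas in $E$ are finitely many and unrestricted in number, which is exactly the setting of the lemma.

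\emph{Step 2: apply the approximation lemma.} Fix $p\ge2$, $n\ge p$, and $0\le m_\star\le n-p$. Lemma~\ref{lem:nullity-preserving-conditioning}, applied with $t=t_n$, yields $J$ with $|J^c|\le m_\star$ and $\sigma\in\operatorname{conv}(\mathcal B_{n,t_n})$ such that
\[
    \|\mathsf M_J(f_\rho-f_\sigma)\|_{n,p}\le\frac{c\,2^{p/2}p^{3/2}}{\sqrt{m_\star+1}}.
\]
The constant $c$ is universal, so it does not depend on $n$, $m_\star$, $\rho$, $M$, or $t_n$.

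\emph{Step 3: conclude.} For fixed $\alpha\in(0,1)$, take $m_\star=\lfloor\alpha n\rfloor$, which satisfies $m_\star\le n-p$ for large $n$. Then $|J|\ge(1-\alpha)n$, and the squared canonical error is at most $c^2 2^p p^3/(\alpha n)=O_{p,\alpha}(n^{-1})$. Since $\mathsf M_J$ restricts to Paulis supported in $J$, this is exactly \eqref{eq:conditioning-benchmark-asymptotic}. The bound is uniform over the class, including arbitrary ancilla width, because nothing in Steps 1–2 depends on $M$.

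The only substantive point is Step 1, the nullity bound for the full output with ancillas included. It is already recorded as Lemma~\ref{lem:app-nullity-conditioning}\textup{(i)}, so the remaining argument is routine bookkeeping.
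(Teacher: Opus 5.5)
Your proposal is correct and follows the paper's proof. Both arguments bound the nullity of the full purified output by $t_n$ using Lemma~\ref{lem:app-nullity-conditioning}\textup{(i)} and Clifford invariance, then apply Lemma~\ref{lem:nullity-preserving-conditioning} with $t=t_n$ and $m_\star=\lfloor\alpha n\rfloor$, which gives the $O_{p,\alpha}(n^{-1})$ bound uniformly in $\rho$, $t_n$, and the ancilla width.
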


\begin{proof}
Fix $p\ge2$ and $\alpha\in(0,1)$.
Let $\rho\in\CliffT^{\mathrm{anc}}_{n,t_n}$, and let
$|\Psi\rangle$ be its full circuit output before the
additional qubits $E$ are discarded.
The initial state has nullity zero, Clifford gates preserve
nullity, and each $T$-gate increases it by at most one,
by Lemma~\ref{lem:app-nullity-conditioning}\textup{(i)}.
Consequently,
\[
    \rho=\Tr_E|\Psi\rangle\langle\Psi|,
    \qquad
    \nu_{\mathrm{st}}(\Psi)\le t_n.
\]

We may therefore apply
Lemma~\ref{lem:nullity-preserving-conditioning}
with $t=t_n$ and $m_\star=\lfloor\alpha n\rfloor$.
For all sufficiently large $n$, this budget satisfies
$m_\star\le n-p$. The lemma gives $J\subseteq[n]$ and
$\sigma\in\operatorname{conv}(\mathcal B_{n,t_n})$ such that
\[
    |J|\ge n-\lfloor\alpha n\rfloor\ge(1-\alpha)n
\]
and
\[
    \|\mathsf M_J(f_\rho-f_\sigma)\|_{n,p}^2
    \le
    C\frac{2^pp^3}{\lfloor\alpha n\rfloor+1}
    \le
    C\frac{2^pp^3}{\alpha n}
    =O_{p,\alpha}(n^{-1}).
\]
The bound is uniform over $\rho$ and independent of $t_n$,
which verifies the conditioning-benchmark property.
\end{proof}

\subsection{Profile entropy of Clifford+$T$ circuits} 
\label{subsec:t-count-profile-entropy}

Corollary~\ref{cor:t-count-conditioning-benchmark} shows that
$\mathcal B_{n,t_n}$ is a conditioning benchmark for
$\CliffT^{\mathrm{anc}}_{n,t_n}$.
The remaining input to
Theorem~\ref{thm:local-profile-complexity-principle}\textup{(ii)}
is a bound on the profile entropy of these circuit outputs.

To obtain this bound, we reformulate the postselected $T$-gate
construction used in \cite[Section~IV]{bravyi2016improved}
to express every $\rho\in\CliffT^{\mathrm{anc}}_{n,t}$ as a
fixed projection of a mixed stabilizer state on $n+t$ qubits. A related representation of quantum-verification acceptance
operators as contractions of stabilizer projectors with magic
states appears in \cite[Section~2]{arunachalam2022parameterized}. Our representation allows us to apply the mixed-stabilizer counting bound
of Lemma~\ref{lem:fixed-depth-mixed-stabilizer-growth}
to the output profiles, independently of the number of
original ancillas. We include a proof to make the normalization
and the treatment of discarded ancillary qubits explicit.

\begin{lemma}[Fixed-projection representation]
\label{lem:t-count-fixed-projection}
Set
\[
    |a\rangle:=\frac{|0\rangle+e^{-i\pi/4}|1\rangle}{\sqrt2},
    \qquad \Pi_a:=|a\rangle\langle a|.
\]
For every $\rho\in\CliffT^{\mathrm{anc}}_{n,t}$, there exists
$\tau\in\mathsf{mixstab}_{n+t}$ such that
\begin{equation}
\label{eq:t-count-fixed-projection}
    \rho
    =2^t\bigl(I_{2^n}\otimes\langle a|^{\otimes t}\bigr)
      \tau\bigl(I_{2^n}\otimes|a\rangle^{\otimes t}\bigr).
\end{equation}
In particular,
$\Tr[(I_{2^n}\otimes\Pi_a^{\otimes t})\tau]=2^{-t}$.
\end{lemma}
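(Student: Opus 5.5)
The plan is to use gate teleportation via magic states, in the postselected form of \cite[Section~IV]{bravyi2016improved}. Each $T$-gate is replaced by a Clifford interaction with a fresh qubit, followed by postselection of that fresh qubit onto $|a\rangle$. We then push all Cliffords together and trace out the ancillas.

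\textbf{Step 1: teleport a single $T$ gate.} Fix $\rho=\Tr_{\{n+1,\ldots,M\}}(U|0^M\rangle\langle0^M|U^\dagger)$ with
\[
U=C_sT_{i_s}\cdots T_{i_1}C_0,\qquad s\le t.
\]
The key identity to verify is
\[
\sqrt2\,(I\otimes\langle a|)\,\mathrm{CNOT}_{q\to f}\,(|\psi\rangle\otimes|{+}\rangle_f)=T|\psi\rangle,
\]
possibly up to a global phase, for the appropriate $|{+}\rangle$ or Clifford-dressed ancilla.

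Since $\mathrm{CNOT}$ is diagonal-controlled on $f$, the $Z$-basis amplitudes are easy to check. The input is $\sum_x\psi_x|x\rangle\otimes|{+}\rangle$. After the $\mathrm{CNOT}$, the $f$-register is in state $|x\oplus y\rangle$, averaged over $y$. Projecting onto $\langle a|=(\langle0|+e^{i\pi/4}\langle1|)/\sqrt2$ then gives
\[
\psi_x\,(1+e^{i\pi/4})/2,
\]
independently of $x$. This is not quite right, because the $+$ state washes out the phase.

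I would therefore instead use the fresh qubit in $|0\rangle$ with the following sequence: a $\mathrm{CNOT}$ from $q$ to $f$, which creates $\sum_x\psi_x|x\rangle|x\rangle$, followed by projection of $f$ onto $\langle a|$. This yields $\psi_x\,\overline{a_x}/\sqrt2$. With $a_0=1$ and $a_1=e^{-i\pi/4}$, we get $\overline{a_1}=e^{i\pi/4}$, so the result is exactly $2^{-1/2}T|\psi\rangle$.

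This explains the choice of $|a\rangle$ with phase $e^{-i\pi/4}$. It also shows that each teleportation carries exactly a factor $2^{-1/2}$ in amplitude, hence $2^{-1}$ in probability.

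\textbf{Step 2: apply this to every $T$ gate and pad.} Replacing each $T_{i_j}$ this way, with fresh qubits $f_1,\ldots,f_s$, gives
\[
U|0^M\rangle=2^{s/2}\bigl(I\otimes\langle a|^{\otimes s}\bigr)\,W|0^{M+s}\rangle,
\]
where $W$ is a Clifford on $M+s$ qubits. To see that the teleportation qubits can be factored out as written, note that the projections on distinct fresh qubits commute with later gates, which never touch those qubits. So they can all be moved to the end.

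If $s<t$, append $t-s$ further qubits in state $|a'\rangle$ for a stabilizer state $|a'\rangle$, and write $1=\sqrt2\,\langle a|\,\cdot\,$(something). Concretely, I would prepare $|0\rangle$, apply a Clifford, and project onto $\langle a|$. I expect the cleanest way to obtain exactly the factor $2^{-1/2}$ is to use a $T$-teleportation of the identity: one teleports $T$ onto a dummy $|0\rangle$ qubit, and since $T|0\rangle=|0\rangle$, this dummy qubit can then be discarded among the ancillas. This keeps the count at exactly $t$ projections with factor $2^{t/2}$.

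\textbf{Step 3: reorder and trace out the ancillas.} Reorder the qubits so that the $n$ outputs come first, then the $t$ projected qubits, then the remaining $M-n$ (plus dummy) ancillas, collected as $A$. Then
\[
\rho=\Tr_A\bigl[2^t(I\otimes\langle a|^{\otimes t})\,W|0\rangle\langle0|W^\dagger\,(I\otimes|a\rangle^{\otimes t})\bigr].
\]
The partial trace over $A$ commutes with the projection on the $t$ qubits, since the two act on disjoint systems. Hence $\tau:=\Tr_A(W|0\rangle\langle0|W^\dagger)\in\mathsf{mixstab}_{n+t}$, and \eqref{eq:t-count-fixed-projection} follows.

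\textbf{Step 4: the normalization.} Taking the trace of both sides of \eqref{eq:t-count-fixed-projection} gives
\[
1=2^t\Tr\bigl[(I\otimes\Pi_a^{\otimes t})\tau\bigr],
\]
which is the stated normalization identity.

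The main obstacle is the bookkeeping in Steps 1 and 2: getting the teleportation phase convention right so that the factor is exactly $2^{-1/2}$ per gate with no outcome-dependent Clifford correction, and handling the padding when $s<t$.
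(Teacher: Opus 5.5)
Your final route is correct and is essentially the paper's proof. You use a CNOT from the target qubit into a fresh $|0\rangle$ qubit and contract that qubit with $\langle a|$, which gives $2^{-1/2}T$ per gate. You then move all fresh-qubit preparations to the start and all contractions to the end, and trace out the original ancillas, so $\tau\in\mathsf{mixstab}_{n+t}$. The one difference is the padding for $s<t$: the paper leaves the $t-s$ unused extra qubits in $|0\rangle$, each contributing $\langle a|0\rangle=2^{-1/2}$, and your ``teleport $T$ onto a dummy $|0\rangle$'' reduces to exactly this because that CNOT acts trivially; the abandoned $|{+}\rangle$ attempt should be removed from a final write-up.
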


\begin{proof}
We first express a single $T$-gate using an extra qubit, a CNOT,
and a fixed contraction. We then apply this construction throughout
the circuit and trace out the original ancillary qubits.

\textit{1. Replacing a single $T$-gate.}
Let $|\psi\rangle_i=\alpha|0\rangle_i+\beta|1\rangle_i$ be an arbitrary
one-qubit state. By definition,
\[
    T_i|\psi\rangle_i
    =
    \alpha|0\rangle_i+e^{i\pi/4}\beta|1\rangle_i.
\]

Appending an extra qubit $j$ in the state $|0\rangle_j$ gives
\[
    (I\otimes|0\rangle_j)|\psi\rangle_i
    =
    |\psi\rangle_i\otimes|0\rangle_j
    =
    \alpha|0\rangle_i|0\rangle_j
    +\beta|1\rangle_i|0\rangle_j.
\]
The controlled-NOT gate with control qubit $i$ and target qubit $j$
is defined by
\[
    \operatorname{CNOT}_{i\to j}
    (|x\rangle_i\otimes|y\rangle_j)
    =
    |x\rangle_i\otimes|x\oplus y\rangle_j,
    \qquad x,y\in\{0,1\},
\]
where $\oplus$ denotes addition modulo $2$. Applying it gives
\[
    \operatorname{CNOT}_{i\to j}
    (|\psi\rangle_i\otimes|0\rangle_j)
    =
    \alpha|0\rangle_i|0\rangle_j
    +\beta|1\rangle_i|1\rangle_j.
\]

We next apply $I\otimes\langle a|_j$, which takes the inner product
on qubit $j$ and leaves a vector on qubit $i$. We obtain
\begin{align*}
    (I\otimes\langle a|_j)
    \bigl(
        \alpha|0\rangle_i|0\rangle_j
        +\beta|1\rangle_i|1\rangle_j
    \bigr) 
    &=
    \alpha\langle a|0\rangle\,|0\rangle_i
    +\beta\langle a|1\rangle\,|1\rangle_i \\
    &=
    \frac1{\sqrt2}
    \bigl(
        \alpha|0\rangle_i
        +e^{i\pi/4}\beta|1\rangle_i
    \bigr)
    =
    \frac1{\sqrt2}\,T_i|\psi\rangle_i.
\end{align*}
We have therefore shown that
\[
    \sqrt2\,
    (I\otimes\langle a|_j)
    \operatorname{CNOT}_{i\to j}
    (I\otimes|0\rangle_j)|\psi\rangle_i
    =
    T_i|\psi\rangle_i.
\]
Since $|\psi\rangle_i$ was arbitrary, this proves the operator identity
\begin{equation}
\label{eq:t-gate-contraction}
    T_i
    =
    \sqrt2\,
    (I\otimes\langle a|_j)
    \operatorname{CNOT}_{i\to j}
    (I\otimes|0\rangle_j).
\end{equation}
If qubit $i$ belongs to an $M$-qubit system, tensoring
\eqref{eq:t-gate-contraction} with the identity on the other
$M-1$ qubits gives
\begin{equation}
\label{eq:t-gate-contraction-extended}
    T_i
    =
    \sqrt2\,
    (I^{\otimes M}\otimes\langle a|_j)
    \operatorname{CNOT}_{i\to j}
    (I^{\otimes M}\otimes|0\rangle_j),
\end{equation}
where $T_i=I^{\otimes(i-1)}\otimes T\otimes I^{\otimes(M-i)}$.

\textit{2. Replacing all $T$-gates.}
Choose $M\ge n$ and a Clifford+$T$ circuit $U$ on $M$ qubits,
containing $s\le t$ $T$-gates, such that
\[
    |\Psi\rangle:=U|0^M\rangle,
    \qquad
    \rho=\Tr_{n+1,\ldots,M}(|\Psi\rangle\langle\Psi|).
\]

First consider a circuit with one $T$-gate, $U=C_1T_iC_0$.
Since $C_0$ and $C_1$ act only on the original $M$ qubits,
the extra qubit can be appended before $C_0$ is applied and contracted
after $C_1$ is applied. Specifically, by \eqref{eq:t-gate-contraction-extended} 
\begin{align*}
    C_1T_iC_0
    &=
    \sqrt2\,C_1
    (I^{\otimes M}\otimes\langle a|_j)
    \operatorname{CNOT}_{i\to j}
    (I^{\otimes M}\otimes|0\rangle_j)C_0 \\
    &=
    \sqrt2\,
    (I^{\otimes M}\otimes\langle a|_j)
    (C_1\otimes I)
    \operatorname{CNOT}_{i\to j}
    (I^{\otimes M}\otimes|0\rangle_j)C_0 \\
    &=
    \sqrt2\,
    (I^{\otimes M}\otimes\langle a|_j)
    (C_1\otimes I)
    \operatorname{CNOT}_{i\to j}
    (C_0\otimes I)
    (I^{\otimes M}\otimes|0\rangle_j).
\end{align*}

For the general circuit, write
\[
    U=C_sT_{i_s}C_{s-1}\cdots T_{i_1}C_0,
    \qquad C_0,\ldots,C_s\in\Cliff_M.
\]
We introduce $t$ extra qubits, labelled $M+1,\ldots,M+t$,
all initialized in $|0\rangle$ and replace the $k$-th $T$-gate
by $\operatorname{CNOT}_{i_k\to M+k}$. Each occurrence uses
a different extra qubit, even if some of the indices $i_k$
coincide. This gives the Clifford circuit
\[
    \widetilde U
    :=
    (C_s\otimes I^{\otimes t})
    \operatorname{CNOT}_{i_s\to M+s}
    (C_{s-1}\otimes I^{\otimes t})
    \cdots
    \operatorname{CNOT}_{i_1\to M+1}
    (C_0\otimes I^{\otimes t}).
\]
The last $t-s$ extra qubits are untouched. For $s=0$, take
$\widetilde U=C_0\otimes I^{\otimes t}$.

No later gate acts on an extra qubit after its CNOT.
Consequently, as in the one-gate example, all extra qubits
can be appended at the beginning and all contractions
postponed until the end. Each used extra qubit implements
its $T$-gate with a factor $2^{-1/2}$, while each unused
extra qubit contributes $\langle a|0\rangle=2^{-1/2}$.
Therefore,
\[
    U
    =
    2^{t/2}
    (I^{\otimes M}\otimes\langle a|^{\otimes t})
    \widetilde U
    (I^{\otimes M}\otimes|0^t\rangle).
\]

\textit{3. Tracing out the original ancillary qubits.}
Using the representation of $U$ from Step~2, we obtain
\begin{align*}
    |\Psi\rangle
    &=U|0^M\rangle \\
    &=
    2^{t/2}
    (I^{\otimes M}\otimes\langle a|^{\otimes t})
    \widetilde U
    (I^{\otimes M}\otimes|0^t\rangle)|0^M\rangle \\
    &=
    2^{t/2}
    (I^{\otimes M}\otimes\langle a|^{\otimes t})
    \widetilde U
    (|0^M\rangle\otimes|0^t\rangle) \\
    &=
    2^{t/2}
    (I^{\otimes M}\otimes\langle a|^{\otimes t})
    \widetilde U|0^{M+t}\rangle.
\end{align*}
Set $|\Phi\rangle:=\widetilde U|0^{M+t}\rangle$.
Since $\widetilde U$ is Clifford, $|\Phi\rangle$ is a pure
stabilizer state. Taking outer products gives
\[
    |\Psi\rangle\langle\Psi|
    =
    2^t
    (I^{\otimes M}\otimes\langle a|^{\otimes t})
    |\Phi\rangle\langle\Phi|
    (I^{\otimes M}\otimes|a\rangle^{\otimes t}).
\]
Substituting into the definition of $\rho$, we have
\begin{align*}
    \rho
    &=
    \Tr_{n+1,\ldots,M}(|\Psi\rangle\langle\Psi|) \\
    &=
    2^t\Tr_{n+1,\ldots,M}\!\left[
        (I^{\otimes M}\otimes\langle a|^{\otimes t})
        |\Phi\rangle\langle\Phi|
        (I^{\otimes M}\otimes|a\rangle^{\otimes t})
    \right] \\
    &=
    2^t
    (I^{\otimes n}\otimes\langle a|^{\otimes t})
    \Tr_{n+1,\ldots,M}(|\Phi\rangle\langle\Phi|)
    (I^{\otimes n}\otimes|a\rangle^{\otimes t}).
\end{align*}
The last equality holds because the partial trace acts on
the original ancillary qubits $n+1,\ldots,M$, whereas the
contractions act on the extra qubits $M+1,\ldots,M+t$.

Now define $\tau
    :=
    \Tr_{n+1,\ldots,M}(|\Phi\rangle\langle\Phi|)$. This state retains the first $n$ original qubits and all $t$
extra qubits. Since it is a reduction of the pure stabilizer
state $|\Phi\rangle$, we have
$\tau\in\mathsf{mixstab}_{n+t}$. The preceding calculation gives
\[
    \rho
    =
    2^t
    (I^{\otimes n}\otimes\langle a|^{\otimes t})
    \tau
    (I^{\otimes n}\otimes|a\rangle^{\otimes t}),
\]
which proves \eqref{eq:t-count-fixed-projection}.
Taking traces yields
\[
    1=\Tr\rho
    =
    2^t\Tr\bigl[
        (I^{\otimes n}\otimes\Pi_a^{\otimes t})\tau
    \bigr],
\]
and hence the stated probability $2^{-t}$.
\end{proof}

We now use this representation to control the profile entropy of
$\CliffT^{\mathrm{anc}}_{n,t}$.

Related counting bounds independent of ancillary workspace
were obtained by Gosset, Kothari, and Wu
\cite[Section~4.1]{gosset2026quantum}, using the canonical
form for postselected stabilizer computations of
Beverland et al.~\cite[Appendix~A.7]{beverland2020lower}.
Here we use the fixed-projection representation above to
control the metric entropy of the local Pauli profiles of
$\CliffT^{\mathrm{anc}}_{n,t}$.

\begin{lemma}[Profile entropy of Clifford+$T$ circuits]
\label{lem:t-count-profile-entropy}
There is a universal constant $C<\infty$ such that, for every
$n\ge p\ge2$, integer $t\ge0$, and $0<\varepsilon<1$,
\[
    \log N\left(
        \varepsilon,
        \mathcal F_{n,p}(\CliffT^{\mathrm{anc}}_{n,t}),
        \|\cdot\|_{n,p}
    \right)
    \le
    C(n+t)\left[
        t+\log\left(\frac{C3^{p/2}}{\varepsilon}\right)
    \right].
\]
\end{lemma}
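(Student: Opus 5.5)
By Lemma~\ref{lem:t-count-fixed-projection}, every $\rho\in\CliffT^{\mathrm{anc}}_{n,t}$ is a fixed linear image of some $\tau\in\mathsf{mixstab}_{n+t}$, with no dependence on the original ancilla count. The plan is to push the profile of $\rho$ back through this map, in the same way the fixed-circuit step in the proof of Lemma~\ref{lem:forward-profile-entropy} pulls output Paulis back through $Q$. There is no circuit to discretize here, so the whole entropy bound should come from the restriction-growth estimate for mixed stabilizer profiles.

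For $P\in\mathcal P_{n,p}$, \eqref{eq:t-count-fixed-projection} gives
\[
    f_\rho(P)=2^t\Tr\bigl[(P\otimes\Pi_a^{\otimes t})\tau\bigr].
\]
We expand $\Pi_a=\tfrac12(I+(X-Y)/\sqrt2)$, since $|a\rangle$ has Bloch vector $(1,-1,0)/\sqrt2$. This writes $P\otimes\Pi_a^{\otimes t}$ as a fixed linear combination of Paulis $P\otimes R$ with $R\in\{I,X,Y\}^{\otimes t}$, at most $3^t$ terms. The coefficients are independent of $\tau$. Consequently, for $P_1,\dots,P_q\in\mathcal P_{n,p}$, the output vector $(f_\rho(P_\ell))_\ell$ is determined by the input vector $(f_\tau(R_j))_{j\le N}$ on a list of $N\le q3^t$ Paulis in $\mathcal P_{n+t}$.

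Lemma~\ref{lem:fixed-depth-mixed-stabilizer-growth}, applied with $M=n+t$, then bounds the number of distinct restricted profiles by
\[
    \Bigl(\sum_{j\le n+t}\binom{q3^t}{j}\Bigr)^2\le\Bigl(\frac{Aq}{d}\Bigr)^{d},\qquad d=2(n+t),\quad A=Ce3^t,
\]
valid for $q\ge d$. This is exactly the restriction-growth hypothesis of Lemma~\ref{lem:growth-entropy}. We apply that lemma under the uniform measure on $\mathcal P_{n,p}$ at $L_2$ radius $\varepsilon/3^{p/2}$, as in \eqref{eq:forward-fixed-circuit-entropy}, and obtain
\[
    \log N\le Cd\log\frac{eA3^{p/2}}{\varepsilon}\le C(n+t)\Bigl[t+\log\frac{C3^{p/2}}{\varepsilon}\Bigr].
\]

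The main obstacle is that Lemma~\ref{lem:growth-entropy} requires the profiles to be uniformly bounded. The raw bound is only $|f_\rho(P)|\le 2^t\cdot 2^{-t}$-type control through the $\tau$-coefficients, and the prefactor $2^t$ together with the $3^t$ terms could look dangerous. However, $f_\rho(P)=\Tr(P\rho)\in[-1,1]$ automatically because $\rho$ is a state, so the class is $[-1,1]$-valued, and Lemma~\ref{lem:growth-entropy} only uses boundedness of the class together with the combinatorial growth of the determining vectors. The exponential-in-$t$ factor therefore enters only through $A$, and hence only as the additive $t$ inside the bracket. I would check carefully that Lemma~\ref{lem:growth-entropy} is stated for real-valued bounded classes whose restriction counts obey the growth bound, rather than requiring finitely-valued functions. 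If it does require finite values, I would argue that each restricted profile takes finitely many values determined by $\tau$'s $\{-1,0,1\}$ vector, so the same count applies.
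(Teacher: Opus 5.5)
Your proposal is correct and follows the paper's proof essentially step for step: you pull the weight-$p$ profile back through the fixed-projection representation, expand $(2\Pi_a)^{\otimes t}$ into at most $3^t$ Paulis with no $Z$ factors, apply the mixed-stabilizer restriction-growth bound with $M=n+t$ to verify the hypothesis of Lemma~\ref{lem:growth-entropy} with $d=2(n+t)$ and $A=2e3^t$, and then convert the $L_2$ radius using the factor $3^{p/2}$. Your resolution of the boundedness concern also matches the paper: profiles of normalized states lie in $[-1,1]$, which is all Lemma~\ref{lem:growth-entropy} requires, and finiteness of the class already follows from the growth hypothesis itself.
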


\begin{proof}
Fix $n\ge p\ge2$, $t\ge0$, and $0<\varepsilon<1$, and set
$m:=n+t$ and
$\mathcal F:=\mathcal F_{n,p}(\CliffT^{\mathrm{anc}}_{n,t})$.

\textit{1. Expressing output expectations through stabilizer profiles.}
By Lemma~\ref{lem:t-count-fixed-projection}, every
$\rho\in\CliffT^{\mathrm{anc}}_{n,t}$ has a representing
state $\tau\in\mathsf{mixstab}_m$ satisfying
\eqref{eq:t-count-fixed-projection}.
For any $Q\in\mathcal P_{n,p}$, 
\begin{align*}
    f_\rho(Q)
    &=
    2^t\Tr\left[
        Q(I^{\otimes n}\otimes\langle a|^{\otimes t})
        \tau(I^{\otimes n}\otimes|a\rangle^{\otimes t})
    \right] \\
    &=
    2^t\Tr\left[
        (I^{\otimes n}\otimes|a\rangle^{\otimes t})
        Q(I^{\otimes n}\otimes\langle a|^{\otimes t})\tau
    \right] \\
    &=
    2^t\Tr\left[
        (Q\otimes\Pi_a^{\otimes t})\tau
    \right] \\
    &=
    \Tr\left[
        \bigl(Q\otimes(2\Pi_a)^{\otimes t}\bigr)\tau
    \right],
\end{align*}
where $\Pi_a = \ket{a}\bra{a}$.

By definition of $|a\rangle$, we have
\[
    2\Pi_a
    =I+\frac{X-Y}{\sqrt2}
    =\sum_{P\in\mcP_1}b_PP,
\]
where $b_I:=1$, $b_X:=1/\sqrt2$, $b_Y:=-1/\sqrt2$,
and $b_Z:=0$.
Expanding the tensor product amounts to choosing a Pauli operator
at each position and multiplying its coefficients. Thus
\[
    (2\Pi_a)^{\otimes t}
    =
    \sum_{R\in\mcP_t}
        \left(\prod_{k=1}^t b_{R_k}\right)R.
\]
Since $b_Z=0$, only the $3^t$ strings with factors in
$\{I,X,Y\}$ have nonzero coefficients.
For $t=0$, the expansion consists of the scalar $1$,
with the empty product interpreted as $1$.

Now fix $Q_1,\ldots,Q_q\in\mathcal P_{n,p}$.
Each operator $Q_i\otimes(2\Pi_a)^{\otimes t}$ is a linear
combination of $3^t$ Pauli strings of the form $Q_i\otimes R$,
where $R\in\mcP_t$ contains no $Z$ factors.
Collect the distinct Pauli strings appearing across these
$q$ expansions into a single list $P_1,\ldots,P_N$.
Each acts on $m=n+t$ qubits, and $N\le q3^t$. Write
\[
    Q_i\otimes(2\Pi_a)^{\otimes t}
    =
    \sum_{j=1}^N a_{ij}P_j,
    \qquad i=1,\ldots,q,
\]
with $a_{ij}=0$ whenever $P_j$ does not appear in the $i$-th
expansion. Taking expectations against $\tau$ gives
\begin{equation}
\label{eq:t-count-profile-linear-map}
    f_\rho(Q_i)
    =
    \Tr\left[
        \bigl(Q_i\otimes(2\Pi_a)^{\otimes t}\bigr)\tau
    \right] 
    =
    \sum_{j=1}^N a_{ij}\Tr(P_j\tau)
    =
    \sum_{j=1}^N a_{ij}f_\tau(P_j),
    \qquad i=1,\ldots,q.
\end{equation}
The collection $P_1,\ldots,P_N$ and the coefficients $a_{ij}$ are
fixed once $t$ and $Q_1,\ldots,Q_q$ are chosen and they do not
depend on $\rho$ or $\tau$.

\textit{2. Counting the possible response vectors.}
Equation~\eqref{eq:t-count-profile-linear-map} shows that
the values $f_\tau(P_j)$, $1\le j\le N$, determine all
the queried output expectations $f_\rho(Q_i)$, $1\le i\le q$. Counting every $\tau\in\mathsf{mixstab}_m$ can only enlarge
the set of possible input response vectors. Hence
Lemma~\ref{lem:fixed-depth-mixed-stabilizer-growth} gives
\begin{align*}
    \#\left\{
        (f(Q_1),\ldots,f(Q_q)):f\in\mathcal F
    \right\}
    &\le
    \#\left\{
        (f_\tau(P_1),\ldots,f_\tau(P_N)):
        \tau\in\mathsf{mixstab}_m
    \right\}\\
    &\le\left(\sum_{j=0}^m\binom Nj\right)^2
     \le\left(\sum_{j=0}^m\binom{q3^t}{j}\right)^2.
\end{align*}
For every $q\ge2m$, we have $q3^t\ge m$, and so a binomial
bound yields
\[
    \left(\sum_{j=0}^m\binom{q3^t}{j}\right)^2
    \le\left(\frac{eq3^t}{m}\right)^{2m}
    =\left(\frac{Aq}{d}\right)^d,
\]
where $d:=2m$, $A:=2e3^t$. This verifies the restriction-growth hypothesis of
Lemma~\ref{lem:growth-entropy} for every required $q$.

\textit{3. Converting the growth bound to profile entropy.}
Every $f\in\mathcal F$ takes values in $[-1,1]$, since it
is the profile of a normalized output state and every
Pauli has operator norm one. 

Let $\mu_{n,p}$ be the uniform probability measure on
$\mathcal P_{n,p}$. Lemma~\ref{lem:growth-entropy} then gives, for
$0<\delta<1$,
\begin{align*}
    \log N(\delta,\mathcal F,L_2(\mu_{n,p}))
    &\le Cd\log\left(\frac{eA}{\delta}\right)
    \le Cm\left[t+\log\left(\frac C\delta\right)\right].
\end{align*}
For any function $h$ on the retained output Paulis,
\[
    \|h\|_{L_2(\mu_{n,p})}^2
    =\frac1{3^p\binom np}
       \sum_{Q\in\mathcal P_{n,p}}|h(Q)|^2
    =3^{-p}\|h\|_{n,p}^2.
\]
An $L_2(\mu_{n,p})$-cover of radius
$\delta:=\varepsilon/3^{p/2}$ is therefore an
$\varepsilon$-cover in $\|\cdot\|_{n,p}$.
Substituting this value and $m=n+t$ gives the stated bound.
\end{proof}

\subsection{Energy bounds for Clifford+\(T\) states}
\label{subsec:sublinear-t-count}

We now prove Corollary~\ref{cor:intro-t-count}\textup{(i)}
and establish the comparison with $\PStab_n$ needed for
part~\textup{(ii)}. The large-$p$ conclusion in part~\textup{(ii)}
will follow from the product-stabilizer bounds in
Subsection~\ref{subsec:product-stabilizer-value}.

For part~\textup{(i)}, we combine
Theorem~\ref{thm:local-profile-complexity-principle}\textup{(i)}
with the profile-entropy bound in
Lemma~\ref{lem:t-count-profile-entropy}.

\begin{proof}[Proof of Corollary~\ref{cor:intro-t-count}\textup{(i)}]
Suppose $t_n\le\kappa n$ for all sufficiently large $n$.
By Lemma~\ref{lem:t-count-profile-entropy}, we may take
\[
    a_{n,p}:=C(n+t_n)t_n,
    \qquad b_{n,p}:=C(n+t_n),
    \qquad B_{n,p}:=C3^{p/2}
\]
in Theorem~\ref{thm:local-profile-complexity-principle},
where $C\ge1$ is universal.
The effective complexity
$\Gamma_{n,p}:=a_{n,p}+b_{n,p}\log(eB_{n,p}n)$ therefore satisfies
\begin{equation}
\label{eq:t-count-effective-complexity}
    \frac{\Gamma_{n,p}}{n^2}
    =C\left(1+\frac{t_n}{n}\right)
       \left[
           \frac{t_n}{n}
           +\frac{\log(eC3^{p/2}n)}{n}
       \right].
\end{equation}
For every fixed $p$, it follows that
\[
    \limsup_{n\to\infty}\frac{\Gamma_{n,p}}{n^2}
    \le C\kappa(1+\kappa).
\]
Since this bound is independent of $p$,
Theorem~\ref{thm:local-profile-complexity-principle}\textup{(i)}
implies that, for every sufficiently large fixed $p$
depending on $\kappa$, there is $c_{p,\kappa}>0$ such that
\[
    \P\left(
        \Delta^{\mathrm{opt}}_{n,p}
        (\CliffT^{\mathrm{anc}}_{n,t_n})
        \ge c_{p,\kappa}\sqrt n
    \right)\longrightarrow1.
\]
\end{proof}

For the sublinear comparison, we use the same entropy bound
and the conditioning benchmark $\mathcal B_{n,t_n}$ from
Corollary~\ref{cor:t-count-conditioning-benchmark} to apply
Theorem~\ref{thm:local-profile-complexity-principle}\textup{(ii)}.
We then use Lemma~\ref{lem:pstab-exceptional-site-cost}
to show that the exceptional factors increase the expected
optimal energy over $\PStab_n$ by only $o(\sqrt n)$
when $t_n=o(n)$.

\begin{lemma}[Energy of the exceptional factors]
\label{lem:pstab-exceptional-site-cost}
For integers $n\ge p\ge2$ and $0\le t\le n-p$,
\[
    0\le
    \E\left[\mathcal{E}_{n,p}(\mathcal B_{n,t})-\mathcal{E}_{n,p}(\PStab_n)\right]
    \le C3^{p/2}\sqrt{pt},
\]
where $C>0$ is a universal constant.
\end{lemma}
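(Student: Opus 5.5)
The lower bound is immediate, since $\PStab_n\subseteq\mathcal B_{n,t}$ pointwise in the disorder. For the upper bound I would reuse the decomposition $H_{n,p}=H_J+H_{\mathrm{hit}(J^c)}$ with $J^c$ the exceptional set. Fix any $\sigma\in\mathcal B_{n,t}$. Its exceptional set $F:=E_{[n]}(\sigma)$ has $|F|\le t$, and we may pad $F$ to a set $S\supseteq F$ with $|S|\le t\le n-p$. Set $J:=S^c$ and define $\sigma'\in\PStab_n$ by keeping the stabilizer factors $\sigma_i$ for $i\in J$ and replacing the factors on $S$ by arbitrary one-qubit stabilizer states, say $|0\rangle\langle0|$. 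Then $\sigma$ and $\sigma'$ agree on $J$. Every Pauli $P$ with $\operatorname{supp}(P)\subseteq J$ therefore has $f_\sigma(P)=f_{\sigma'}(P)$, which gives $\Tr(H_J\sigma)=\Tr(H_J\sigma')$.

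It follows that, pointwise in the disorder,
\[
    \Tr(H_{n,p}\sigma)
    =\Tr(H_{n,p}\sigma')+\Tr\bigl(H_{\mathrm{hit}(S)}(\sigma-\sigma')\bigr)
    \le \mathcal E_{n,p}(\PStab_n)+2\|H_{\mathrm{hit}(S)}\|_{\mathrm{op}}.
\]
Taking the supremum over $\sigma\in\mathcal B_{n,t}$, with $S$ depending on $\sigma$, gives
\[
    \mathcal E_{n,p}(\mathcal B_{n,t})-\mathcal E_{n,p}(\PStab_n)
    \le 2\max_{J\in\mathcal J_{n,t}}\|H_{\mathrm{hit}(J^c)}\|_{\mathrm{op}}.
\]
Lemma~\ref{lem:revmag-hit-bound} with $m_\star=t\le n-p$ then bounds the expectation of the right-hand side by $C3^{p/2}\sqrt{pt}$. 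This yields the claim after absorbing the factor $2$ into $C$.

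There is essentially no obstacle here. The only point to check is that the maximum over excluded sets is handled uniformly, and Lemma~\ref{lem:revmag-hit-bound} is stated in exactly this uniform form. It is also worth noting that the expectations are finite, since each quantity is bounded by $\|H_{n,p}\|_{\mathrm{op}}$, so taking expectations of the pointwise inequality is legitimate. The case $t=0$ is trivial, because then $\mathcal B_{n,0}=\PStab_n$.
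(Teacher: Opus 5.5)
Your proposal is correct and follows the paper's proof essentially verbatim. The paper replaces the at most $t$ exceptional factors by $|0\rangle\langle0|$, bounds the energy change by $2\|H_{\mathrm{hit}(F)}\|_{\mathrm{op}}$, maximizes over $|F|\le t$, and applies Lemma~\ref{lem:revmag-hit-bound} with budget $t$; your padding of $F$ to a larger set $S$ is unnecessary but harmless.
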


\begin{proof}
Let $\omega\in\mathcal B_{n,t}$.
Let $F:=E_{[n]}(\omega)$, so $|F|\le t$, and define
\[
    \omega'
    :=\bigotimes_{i=1}^n\omega_i',
    \qquad
    \omega_i':=
    \begin{cases}
        |0\rangle\langle0|_i, & i\in F,\\
        \omega_i, & i\notin F,
    \end{cases}
\]
where $\omega_i$ is the one-qubit marginal of $\omega$.
Every factor of $\omega'$ is a pure one-qubit stabilizer
state, so $\omega'\in\PStab_n$.

The states $\omega$ and $\omega'$ have identical marginals
on $F^c$. Write
\[
    H_{\mathrm{hit}(F)}
    :=
    \frac{1}{\sqrt{\binom np}}
    \sum_{\substack{P\in\mathcal P_{n,p}\\
                    \operatorname{supp}(P)\cap F\ne\emptyset}}
        g_P P,
\]
where $g_P$ are the same Gaussian coefficients as in $H_{n,p}$.
Thus $H_{\mathrm{hit}(F)}$ contains the interactions
involving at least one qubit in $F$.

We then have
\begin{align*}
    \left|\Tr[H_{n,p}(\omega-\omega')]\right|
    &=\left|\Tr[H_{\mathrm{hit}(F)}(\omega-\omega')]\right|
    \le \|H_{\mathrm{hit}(F)}\|_{\mathrm{op}}
        \|\omega-\omega'\|_1
    \le 2\|H_{\mathrm{hit}(F)}\|_{\mathrm{op}}.
\end{align*}
The first inequality uses
$|\Tr(AB)|\le\|A\|_{\mathrm{op}}\|B\|_1$. The second follows from
$\|\omega-\omega'\|_1\le\|\omega\|_1+\|\omega'\|_1=2$.
Since $\PStab_n\subseteq\mathcal B_{n,t}$, taking the supremum
over $\omega$ yields
\[
    0\le \mathcal{E}_{n,p}(\mathcal B_{n,t})-\mathcal{E}_{n,p}(\PStab_n)
    \le2\max_{\substack{F\subseteq[n]\\|F|\le t}}
        \|H_{\mathrm{hit}(F)}\|_{\mathrm{op}}.
\]
Taking expectations and applying
Lemma~\ref{lem:revmag-hit-bound} with budget $t$
proves the claim.
\end{proof}

\begin{proof}[Proof of the energy comparison in Corollary~\ref{cor:intro-t-count}\textup{(ii)}]
Suppose $t_n=o(n)$ and fix $p\ge2$.
Set $\mathcal C_n=\CliffT^{\mathrm{anc}}_{n,t_n}$ and
$\mathcal B_n=\mathcal B_{n,t_n}$.
We have shown that $\mathcal B_n$ is a conditioning benchmark
for $\mathcal C_n$, and
\eqref{eq:t-count-effective-complexity} gives
$\Gamma_{n,p}=o(n^2)$.
Theorem~\ref{thm:local-profile-complexity-principle}\textup{(ii)}
therefore implies
\[
    \frac{[\mathcal{E}_{n,p}(\mathcal C_n)-\mathcal{E}_{n,p}(\mathcal B_n)]_+}
         {\sqrt n}
    \longrightarrow0
\]
in $L^1$.
For sufficiently large $n$, we also have $t_n\le n-p$, so
Lemma~\ref{lem:pstab-exceptional-site-cost} gives
\[
    0\le
    \frac{\E[\mathcal{E}_{n,p}(\mathcal B_n)-\mathcal{E}_{n,p}(\PStab_n)]}{\sqrt n}
    \le C3^{p/2}\sqrt{\frac{pt_n}{n}}
    \longrightarrow0.
\]
Product Pauli eigenstates belong to both $\mathcal B_n$ and
$\mathcal C_n$. Hence
\begin{align*}
    0
    &\le \mathcal{E}_{n,p}(\mathcal C_n)-\mathcal{E}_{n,p}(\PStab_n)\\
    &\le[\mathcal{E}_{n,p}(\mathcal C_n)-\mathcal{E}_{n,p}(\mathcal B_n)]_+
       +\mathcal{E}_{n,p}(\mathcal B_n)-\mathcal{E}_{n,p}(\PStab_n).
\end{align*}
After division by $\sqrt n$, both terms on the right are
nonnegative and have expectations tending to zero. Thus
\begin{equation}
\label{eq:t-count-product-stabilizer-comparison}
    \frac{\mathcal{E}_{n,p}(\mathcal C_n)-\mathcal{E}_{n,p}(\PStab_n)}{\sqrt n}
    \longrightarrow0
\end{equation}
in $L^1$ and in probability.
The large-$p$ assertion follows from the benchmark calculation
in Subsection~\ref{subsec:product-stabilizer-value}.
\end{proof}

Taking $t_n=0$ gives the stabilizer comparison as a special case.

\begin{corollary}[Stabilizer states reduce to product stabilizers]
\label{thm:clifford-product-stabilizer}
For every fixed $p\ge2$,
\[
    \frac{
        \mathcal{E}_{n,p}(\Stab_n)-\mathcal{E}_{n,p}(\PStab_n)
    }{\sqrt n}
    \longrightarrow0
\]
in $L^1$ and in probability.
\end{corollary}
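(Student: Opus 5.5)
This is the case $t_n=0$ of the energy comparison in Corollary~\ref{cor:intro-t-count}\textup{(ii)}, which was just proved. The plan is to check that the three ingredients of that argument apply with no $T$-gates, and to note that the result is monotone under passing to a subclass.

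First, take $t_n=0$ for every $n$. With no $T$-gates, every circuit in the definition of $\CliffT^{\mathrm{anc}}_{n,0}$ is a single Clifford $C_0$ on some $M\ge n$ qubits. Taking $M=n$ with no ancillas then gives exactly the states $C_0|0^n\rangle\langle0^n|C_0^\dagger$, so $\Stab_n\subseteq\CliffT^{\mathrm{anc}}_{n,0}$. We also have $\PStab_n\subseteq\Stab_n$, since product Pauli eigenstates are stabilizer states. It follows pointwise in the disorder that
\[
0\le\mathcal{E}_{n,p}(\Stab_n)-\mathcal{E}_{n,p}(\PStab_n)\le\mathcal{E}_{n,p}(\CliffT^{\mathrm{anc}}_{n,0})-\mathcal{E}_{n,p}(\PStab_n).
\]

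Second, the right-hand side divided by $\sqrt n$ tends to zero in $L^1$, by \eqref{eq:t-count-product-stabilizer-comparison} with $t_n=0$, which is certainly $o(n)$. Concretely, for $t_n=0$ the benchmark $\mathcal B_{n,0}$ equals $\PStab_n$, so Lemma~\ref{lem:pstab-exceptional-site-cost} contributes nothing. Corollary~\ref{cor:t-count-conditioning-benchmark} makes $\PStab_n$ a conditioning benchmark. Lemma~\ref{lem:t-count-profile-entropy} gives the entropy bound $Cn\log(C3^{p/2}/\varepsilon)$, so $\Gamma_{n,p}=O_p(n\log n)=o(n^2)$ by \eqref{eq:t-count-effective-complexity}. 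Theorem~\ref{thm:local-profile-complexity-principle}\textup{(ii)} then applies directly.

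Finally, squeezing the nonnegative quantity between $0$ and a sequence that tends to zero in $L^1$ gives convergence in $L^1$, and Markov's inequality upgrades this to convergence in probability. I do not expect a real obstacle here. The only point to check is the inclusion $\Stab_n\subseteq\CliffT^{\mathrm{anc}}_{n,0}$, which holds because the $s=0$ case with $M=n$ is explicitly allowed in the definition.
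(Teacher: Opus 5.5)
Your proof is correct and follows the paper's route: both specialize \eqref{eq:t-count-product-stabilizer-comparison} to $t_n=0$. The only difference is that the paper also uses $\CliffT^{\mathrm{anc}}_{n,0}=\mathsf{mixstab}_n\subseteq\operatorname{conv}(\Stab_n)$ to get the equality $\mathcal{E}_{n,p}(\CliffT^{\mathrm{anc}}_{n,0})=\mathcal{E}_{n,p}(\Stab_n)$, which it reuses in Corollary~\ref{cor:sharp-clifford-energy}, whereas your sandwich needs only $\PStab_n\subseteq\Stab_n\subseteq\CliffT^{\mathrm{anc}}_{n,0}$.
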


\begin{proof}
With no $T$ gates, the output is a reduced state of a pure stabilizer
state, so $\CliffT^{\mathrm{anc}}_{n,0}=\mathsf{mixstab}_n$.
Such reduced states are mixtures of pure stabilizer states; see
Appendix~\ref{app:IsotropicSubspaces}. Consequently,
\[
    \Stab_n
    \subseteq\CliffT^{\mathrm{anc}}_{n,0}
    \subseteq\operatorname{conv}(\Stab_n).
\]
Linearity of the energy gives
\begin{equation}
\label{eq:t-count-zero-energy}
    \mathcal{E}_{n,p}(\CliffT^{\mathrm{anc}}_{n,0})
    =\mathcal{E}_{n,p}(\Stab_n).
\end{equation}
The claim follows by setting $t_n=0$ in
\eqref{eq:t-count-product-stabilizer-comparison}.
\end{proof}

\subsection{Product-stabilizer energy}
\label{subsec:product-stabilizer-value}

To obtain the large-$p$ energy in
Corollary~\ref{cor:intro-t-count}\textup{(ii)}, it remains to
evaluate the product-stabilizer optimum. A product stabilizer
state is specified by choosing one of six Pauli eigenstates at
each qubit. We use this finite description to express its energy
as a Gaussian process and estimate the maximum.

Let
\(\sigma_1=X\), \(\sigma_2=Y\), and \(\sigma_3=Z\), and set $\mathcal A:=\{\pm e_1,\pm e_2,\pm e_3\}\subset\mathbb R^3$. Then every element of \(\PStab_n\) can be written as
\[
    \rho_u
    :=
    \bigotimes_{i=1}^n
    \frac{I+s_i\sigma_{a_i}}{2},
    \qquad
    u_i:=s_i e_{a_i}\in\mathcal A,
\]
so \(u=(u_1,\ldots,u_n)\in\mathcal A^n\) records the Pauli eigenstate
chosen at each qubit.  Define its energy by $X_u:=\Tr(H_{n,p}\rho_u)$. Then \(\E X_u=0\), and independence gives
\begin{align}
\label{eq:pstab-covariance}
    \E X_uX_v
    =
    \binom np^{-1}
    \sum_{\substack{J_0\subseteq[n]\\|J_0|=p}}
    \prod_{i\in J_0}(u_i\cdot v_i).
\end{align}
In particular, since \(\|u_i\|_2=1\), we have \(\E X_u^2=1\).

\begin{lemma}[Product-stabilizer thermodynamic limit]
\label{lem:pstab-thermodynamic-limit}
For every fixed $p\ge2$, the limit
\[
    \lim_{n\to\infty}
    \frac{\E \mathcal{E}_{n,p}(\PStab_n)}{\sqrt n}
\]
exists.
\end{lemma}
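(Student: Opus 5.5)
We prove existence of the limit by showing that $a_n:=\E\mathcal{E}_{n,p}(\PStab_n)$ is approximately superadditive, $a_{n_1+n_2}\ge \sqrt{\tfrac{n_1}{n}}\,a_{n_1}\cdot(1-o(1))+\dots$, after passing to an equivalent model in which superadditivity becomes exact. Then we invoke Fekete's lemma. The cleanest route uses the Guerra--Toninelli interpolation, which is standard for classical mixed $p$-spin models. The covariance \eqref{eq:pstab-covariance} shows that $(X_u)_{u\in\mathcal A^n}$ is a Gaussian process on the finite configuration space $\mathcal A^n$. Its covariance depends only on the overlap structure $(u_i\cdot v_i)_i$, through the symmetric polynomial $\binom np^{-1}e_p(c_1,\dots,c_n)$ with $c_i=u_i\cdot v_i\in\{-1,0,1\}$.

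First, we replace this covariance by the ``mean-field'' form $\xi(R)=R^p$, where $R=R(u,v):=\frac1n\sum_i u_i\cdot v_i$. The difference $\binom np^{-1}e_p(c)-R^p$ is $O_p(1/n)$ uniformly, because it counts the contribution of repeated indices. By Sudakov--Fernique, or more precisely by the Gaussian interpolation $\frac{d}{ds}\E\max$ bound combined with Slepian, the expected maxima of two processes whose covariance kernels differ by at most $\delta$ uniformly differ by at most $C\sqrt{\delta\log|\mathcal A^n|}=O_p(1)$. For this we use the estimate $|\E\max Y-\E\max Y'|\le\sqrt{2\,\sup|K-K'|\,\log N}$, which follows from Chatterjee's comparison inequality. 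After normalizing by $\sqrt n$ the discrepancy vanishes, so it suffices to treat the process $Y_u$ with covariance $\E Y_uY_v=R(u,v)^p$.

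Next, we prove superadditivity of the free energy $F_n(\beta):=\E\log\sum_{u\in\mathcal A^n}\exp(\beta\sqrt n\,Y_u)$ via Guerra--Toninelli. Split $[n]=[n_1]\sqcup[n_2]$ and interpolate between $\sqrt{n}Y_u$ and the independent sum $\sqrt{n_1}Y^{(1)}_{u^{(1)}}+\sqrt{n_2}Y^{(2)}_{u^{(2)}}$. The derivative of the interpolated free energy equals $\frac{\beta^2}{2}$ times the Gibbs average of
\[
n\,R^p-n_1R_1^p-n_2R_2^p,\qquad R=\tfrac{n_1}{n}R_1+\tfrac{n_2}{n}R_2,
\]
at replica pairs minus the same quantity at the diagonal. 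The diagonal terms cancel since $R(u,u)=1$. The off-diagonal expression has a sign by convexity of $x\mapsto x^p$ on $[-1,1]$. This holds for even $p$; for odd $p$, convexity fails on negative overlaps. This sign issue is the main obstacle. For odd $p$, we plan to handle it either by noting that $\mathcal A$ is symmetric under $u\mapsto-u$ and restricting to nonnegative overlaps via a Panchenko-style argument, or by falling back on the general existence theorem for Gaussian processes with covariance a function of a vector overlap. Bhattacharya's Parisi-formula approach for $\Prod_n$ adapts verbatim to the finite spin set $\mathcal A$ with the quadratic form $u\cdot v$, and that formula yields existence for all $p$.

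Given superadditivity, Fekete's lemma gives that $F_n(\beta)/n$ converges for each $\beta$. Finally, we pass to the ground state through the standard sandwich
\[
\frac{\E\max_u Y_u}{\sqrt n}\le\frac{F_n(\beta)}{\beta n}\le\frac{\E\max_u Y_u}{\sqrt n}+\frac{\log 6}{\beta}.
\]
Letting $n\to\infty$ and then $\beta\to\infty$ shows that $\limsup$ and $\liminf$ of $\E\max Y_u/\sqrt n$ coincide. Combined with the first step, this proves that the stated limit exists.
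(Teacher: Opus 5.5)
\textbf{Your argument is complete for even $p$, but the lemma is claimed for every $p\ge2$, and for odd $p$ there is a genuine gap.}

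\textbf{What works.} Your reduction to the with-replacement covariance $R(u,v)^p$ is fine. You compare expected maxima via Chatterjee's bound, where the paper interpolates free energies; both give an $O_p(1)$ error. The final sandwich with $\beta\to\infty$ is also fine. For even $p$, Guerra--Toninelli superadditivity plus Fekete gives a complete proof that is more elementary than the paper's. You should also note that $R^p$ is a legitimate covariance: $R$ is a Gram kernel, so the Schur product theorem applies.

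\textbf{Where it fails for odd $p$.} The block overlaps $R_1,R_2\in[-1,1]$ can be negative, for instance when $u_i=-v_i$ on many sites. Then $nR^p-n_1R_1^p-n_2R_2^p$ has no sign, and superadditivity is unavailable.

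Your first fallback does not repair this. For odd $p$ the symmetry $u\mapsto-u$ only gives $Y_{-u}=-Y_u$ almost surely, which says nothing about the sign of replica overlaps. Talagrand's positivity principle needs perturbed Hamiltonians and the Ghirlanda--Guerra identities. Even granting it, it would have to control both block overlaps along the whole interpolation path, and that is not a standard route to superadditivity. Your second fallback (``adapts verbatim'') is only asserted, and none of its hypotheses are checked.

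\textbf{The missing idea.} This is how the paper handles all $p$ at once. Treat the model as a vector spin glass with $3\times3$ matrix overlap $R_n(u,v)=\frac1n\sum_i u_iv_i^\top$. The with-replacement covariance is then $n\,\xi(R_n(u,v))$ with $\xi(A)=\beta^2(\Tr A)^p$. The paper then invokes Proposition~8.4 of Chen--Mourrat, following Bhattacharya's Lemma~3.4, with reference measure uniform on $\mathcal A$.

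The key point is that this theorem requires $\xi(0)=0$ and convexity of $\xi$ only on the cone of positive semidefinite matrices, not on all realizable overlaps. On that cone $\Tr A\ge0$, so $(\Tr A)^p$ is convex for every integer $p\ge2$, odd or even. Your option~(b) omits exactly this verification, and it is what makes the odd case work. With it, your first and last steps combine with the cited result to prove the lemma.
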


\begin{proof}
Set $G_n(u):=\sqrt n\,X_u$. For $\beta>0$, define the
normalized expected free energy
\[
    F_{n,\beta}
    :=
    \frac1{\beta n}\E\log
    \sum_{u\in\mathcal A^n}\exp\bigl(\beta G_n(u)\bigr).
\]
The elementary bounds
\[
    \max_u G_n(u)
    \le
    \frac1\beta\log\sum_u e^{\beta G_n(u)}
    \le
    \max_u G_n(u)+\frac{n\log6}{\beta}
\]
give
\[
    0\le
    F_{n,\beta}
    -\frac{\E \mathcal{E}_{n,p}(\PStab_n)}{\sqrt n}
    \le\frac{\log6}{\beta}.
\]
Thus, it suffices to prove that $F_{n,\beta}$ converges
as $n\to\infty$ for every fixed $\beta>0$.

We first compare the original process with an auxiliary
process that allows repeated site indices.
For $u,v\in\mathcal A^n$, define $q_n(u,v):=\frac1n\sum_{i=1}^n u_i\cdot v_i$. Expanding its $p$th power gives
\[
    q_n(u,v)^p
    =
    \frac1{n^p}
    \sum_{i_1,\ldots,i_p\in[n]}
        \prod_{k=1}^p(u_{i_k}\cdot v_{i_k}).
\]
On the other hand, writing
$(n)_p:=n(n-1)\cdots(n-p+1)$,
\eqref{eq:pstab-covariance} gives
\[
    \frac1n\E G_n(u)G_n(v)
    =
    \frac1{(n)_p}
    \sum_{\substack{i_1,\ldots,i_p\in[n]\\
                    \text{all distinct}}}
        \prod_{k=1}^p(u_{i_k}\cdot v_{i_k}).
\]
The first expression averages over all ordered tuples,
whereas the second averages only over tuples with
distinct indices. These correspond to sampling with
and without replacement, respectively.

Extend $q_n$ to $(\mathbb R^3)^n$ by the same formula, and let $\widetilde G_n$ be a centered Gaussian process on this space with
covariance $\E\widetilde G_n(u)\widetilde G_n(v)
    =
    nq_n(u,v)^p$. Note that for any finite collection of configurations, the overlap matrix $(q_n(u,v))_{u,v}$
is positive semidefinite, and its entrywise $p$th power
is positive semidefinite by the Schur product theorem \cite[Section~7.5]{horn2012matrix}.

To compare the two processes, fix $u,v\in\mathcal A^n$
and let $J_1,\ldots,J_p$ be independent uniform indices
in $[n]$, with $\mathcal D$ denoting the event that they
are all distinct. Setting
$W:=\prod_{k=1}^p(u_{J_k}\cdot v_{J_k})$,
the two covariance formulas above give
\[
    \frac1n\E\widetilde G_n(u)\widetilde G_n(v)
    =q_n(u,v)^p=\E W,
    \qquad
    \frac1n\E G_n(u)G_n(v)
    =\E[W\mid\mathcal D].
\]
By conditioning on $\mathcal D$,
\[
\begin{aligned}
    \frac1n\E G_n(u)G_n(v)-q_n(u,v)^p
    &=
    \P(\mathcal D^c)
    \left(
        \E[W\mid\mathcal D]
        -
        \E[W\mid\mathcal D^c]
    \right).
\end{aligned}
\]
Since $|W|\le1$, and since a union bound gives
\[
    \P(\mathcal D^c)
    \le
    \sum_{1\le k<\ell\le p}\P(J_k=J_\ell)
    =
    \binom p2\frac1n,
\]
we obtain
\begin{equation}
\label{eq:pstab-with-replacement}
    \left|
        \frac1n\E G_n(u)G_n(v)-q_n(u,v)^p
    \right|
    \le
    \frac{p(p-1)}n.
\end{equation}
Consequently,
\[
    \sup_{u,v\in\mathcal A^n}
    \left|
        \E G_n(u)G_n(v)
        -
        \E\widetilde G_n(u)\widetilde G_n(v)
    \right|
    \le p(p-1).
\]

Define the corresponding auxiliary free energy by
\[
    \widetilde F_{n,\beta}
    :=
    \frac1{\beta n}\E\log
    \sum_{u\in\mathcal A^n}
        \exp\bigl(\beta\widetilde G_n(u)\bigr).
\]
Both Gaussian processes have variance $n$ at every
configuration, so the diagonal covariance terms cancel
in Gaussian interpolation. As in the proof of
\cite[Lemma~3.2]{bhattacharya2026parisi}, this gives
\begin{align}
\label{eq:interpolation-bd}
    |F_{n,\beta}-\widetilde F_{n,\beta}|
    \le \frac{\beta}{2n}
    \sup_{u,v\in\mathcal A^n}
    \left|
        \E G_n(u)G_n(v)
        -\E\widetilde G_n(u)\widetilde G_n(v)
    \right| 
    \le \frac{\beta p(p-1)}{2n}.
\end{align}
For fixed $p$ and $\beta$, this difference tends to zero.
It remains to prove convergence of
$\widetilde F_{n,\beta}$.

We now use \cite[Proposition~8.4]{chen2025free},
following the approach of
\cite[Lemma~3.4]{bhattacharya2026parisi}.
We state explicitly the consequence we need.
The result concerns centered Gaussian processes $Y_n$
with covariance
\[
    \E Y_n(u)Y_n(v)
    =
    n\xi\bigl(R_n(u,v)\bigr),
    \qquad
    R_n(u,v):=\frac1n\sum_{i=1}^n u_iv_i^\top.
\]
It averages $e^{Y_n(u)}$ over configurations
$u=(u_1,\ldots,u_n)$ by integrating each $u_i$
independently against the same fixed compactly supported probability measure
on $\mathbb R^3$. We use the uniform measure on the finite set $\mathcal A$,
so this condition holds and the configuration average is $6^{-n}\sum_{u\in\mathcal A^n}e^{Y_n(u)}$.

In the polynomial case used here, the remaining
conditions are that $\xi(0)=0$ and that $\xi$ is convex
on the cone of positive semidefinite $3\times3$ matrices.
Under these conditions, the cited result guarantees
that
\[
    \frac1n\E\log\left(
        6^{-n}\sum_{u\in\mathcal A^n}e^{Y_n(u)}
    \right)
\]
converges as $n\to\infty$.

We apply this result with $Y_n=\beta\widetilde G_n$.
Since $q_n(u,v)=\Tr R_n(u,v)$, its covariance has
the required form with $\xi_\beta(A):=\beta^2(\Tr A)^p$.

This is a polynomial that vanishes at zero.
On the positive semidefinite cone, the trace is
nonnegative, so convexity of $x\mapsto x^p$ on
$[0,\infty)$ implies convexity of $\xi_\beta$.
Thus all the conditions hold for every integer $p\ge2$.

Proposition~8.4 of \cite{chen2025free} therefore implies
that
\[
    \frac1n\E\log\left(
        6^{-n}\sum_{u\in\mathcal A^n}
            e^{\beta\widetilde G_n(u)}
    \right)
    =
    \beta\widetilde F_{n,\beta}-\log6
\]
converges. Hence $\widetilde F_{n,\beta}$ converges
for every fixed $\beta>0$, and
\eqref{eq:interpolation-bd} gives the same conclusion
for $F_{n,\beta}$.

Since $F_{n,\beta}$ converges for every fixed $\beta>0$,
the preceding inequality implies
\[
    0\le
    \limsup_{n\to\infty}
        \frac{\E \mathcal{E}_{n,p}(\PStab_n)}{\sqrt n}
    -
    \liminf_{n\to\infty}
        \frac{\E \mathcal{E}_{n,p}(\PStab_n)}{\sqrt n}
    \le \frac{\log6}{\beta}.
\]
Letting $\beta\to\infty$ shows that the limit superior
and limit inferior coincide, proving the claim.
\end{proof}

We next show that the fixed-$p$ limit converges to
$\sqrt{2\log6}$ as $p\to\infty$, matching the limiting
normalized extremal energy of the Random Energy Model
with $6^n$ configurations~\cite{derrida1981random}. The related proof of the large-$p$ lower bound for unrestricted product states in~\cite{anschuetz2025bounds} used a second moment method argument. Our proof instead uses the Sudakov--Fernique Gaussian
comparison inequality to reduce the lower bound to
the expected maximum of independent Gaussian variables,
as in the Random Energy Model. \cite{gamarnik2023shattering} also used comparison inequalities to establish the overlap gap property for classical spin glasses.

\begin{proposition}[Large-$p$ product-stabilizer energy]
\label{prop:pstab-value}
There exist universal constants $C>0$ and $p_0\ge2$ such that, for
every integer $p\ge p_0$,
\[
    \sqrt{2\log6}
    -
    C\frac{(\log p)^2}{p}
    \le
    \lim_{n\to\infty}
    \frac{\E \mathcal{E}_{n,p}(\PStab_n)}{\sqrt n}
    \le
    \sqrt{2\log6}.
\]
Consequently,
\[
   \lim_{p\to\infty}
   \lim_{n\to\infty}
    \frac{\E \mathcal{E}_{n,p}(\PStab_n)}{\sqrt n}
    =
    \sqrt{2\log6}.
\]
\end{proposition}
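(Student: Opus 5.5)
The upper bound and the lower bound are handled separately, and in both we work with the Gaussian process $X_u$, $u\in\mathcal A^n$, whose covariance is \eqref{eq:pstab-covariance}. The limit in $n$ exists by Lemma~\ref{lem:pstab-thermodynamic-limit}, so it suffices to bound $\E\max_u X_u/\sqrt n$ above and below for large $n$.

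\textbf{Upper bound.} We use the elementary bound for the maximum of $6^n$ unit-variance Gaussians, which holds without independence:
\[
\E\max_{u\in\mathcal A^n}X_u\le\sqrt{2\log(6^n)}=\sqrt{2n\log6}.
\]
It follows from the moment generating function bound $\E\max_u X_u\le\beta^{-1}\log\sum_u\E e^{\beta X_u}=\beta^{-1}(n\log6+\beta^2/2)$, optimized in $\beta$. Dividing by $\sqrt n$ gives the upper bound for every $n$, and hence in the limit.

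\textbf{Lower bound.} We use the Sudakov--Fernique inequality. Choose a subfamily $\mathcal U\subseteq\mathcal A^n$ whose pairwise overlaps are all small: $|q_n(u,v)|\le\gamma$ for distinct $u,v\in\mathcal U$. By \eqref{eq:pstab-with-replacement}, the covariance then satisfies $\E X_uX_v\le\gamma^p+p(p-1)/n$, so
\[
\E(X_u-X_v)^2\ge 2(1-\gamma^p)-2p(p-1)/n.
\]
Compare $X$ on $\mathcal U$ with independent Gaussians of variance $1-\gamma^p-p^2/n$. Sudakov--Fernique gives
\[
\E\max_{\mathcal U}X\ge\sqrt{1-\gamma^p-o(1)}\;\E\max_{\mathcal U}g\approx\sqrt{(1-\gamma^p)\,2\log|\mathcal U|}.
\]

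\textbf{Size of $\mathcal U$.} We need $\log|\mathcal U|\ge n(\log6-\eta)$. Greedy or random selection works. For independent uniform $u,v$, the coordinates $u_i\cdot v_i$ are i.i.d.\ on $\{-1,0,1\}$ with probabilities $1/6,2/3,1/6$. Chernoff/Hoeffding then gives $\P(|q_n|>\gamma)\le 2e^{-cn\gamma^2}$ with variance proxy $1/3$; since $q$ is a sum of mean-zero bounded terms, the exact rate function $I(\gamma)\gtrsim\gamma^2/(2\cdot 1/3)$ for small $\gamma$. A Gilbert--Varshamov-type greedy argument then yields $|\mathcal U|\ge 6^n e^{-nI'(\gamma)}$ with $I'(\gamma)\lesssim c\gamma^2$.

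\textbf{Optimizing $\gamma$.} This is the main obstacle. Balancing the two losses $\gamma^p$ and $\gamma^2/\log6$, take $\gamma=\sqrt{C\log p/p}$. Then $\gamma^p$ is superpolynomially small, and the entropy loss is $O(\gamma^2)=O(\log p/p)$. This gives
\[
\sqrt{2\log 6}\,\bigl(1-O(\log p/p)\bigr).
\]
That is within the stated $C(\log p)^2/p$, leaving slack for cruder constants in the Chernoff bound; a cruder choice such as $\gamma=\log p/\sqrt p$ also suffices. The care needed is to make the greedy packing bound explicit with rate $O(\gamma^2)$ uniformly for small $\gamma$ and large $p\ge p_0$. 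Letting $n\to\infty$ first and then $p\to\infty$ gives the stated limit.
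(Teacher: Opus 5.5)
Your upper bound is correct. So is the Sudakov--Fernique reduction: comparing with independent Gaussians of variance $1-\gamma^p-p(p-1)/n$ is equivalent to the paper's equicorrelated comparison process. The step ``Size of $\mathcal U$'' fails, however, and it fails because of the choice of constraint, not because of loose constants.

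You require $|q_n(u,v)|\le\gamma$ with $\gamma\to0$, which is near-orthogonality. For fixed $u$, the excluded set $\{v:|q_n(u,v)|>\gamma\}$ has size about $6^n\cdot 2e^{-cn\gamma^2}$ by your own Hoeffding estimate. That is nearly all of $\mathcal A^n$. The greedy (Gilbert--Varshamov) argument therefore gives only $|\mathcal U|\ge e^{cn\gamma^2}/2$, not $6^ne^{-nI'(\gamma)}$; the bound has been inverted.

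No cleverer selection can rescue this. Suppose $|\mathcal U|\ge6^ne^{-n\eta}$ and set $k:=\lfloor n\eta/\log6\rfloor+1$. Then $|\mathcal U|>6^{n-k}$, so by pigeonhole two distinct elements agree on the last $n-k$ sites. Hence $q_n(u,v)\ge 1-2k/n\ge 1-2\eta/\log6-2/n$. So any family with entropy loss $\eta=O(\gamma^2)$ contains pairs with overlap close to $1$. Demanding $|q_n|\le\gamma$ forces $\eta\ge\tfrac{\log6}{2}(1-\gamma)-O(1/n)$, and the resulting lower bound stays near $\sqrt{\log6}$ rather than approaching $\sqrt{2\log6}$. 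In fact, spherical-code bounds limit near-orthogonal families to $\log|\mathcal U|=O(n\gamma^2\log(1/\gamma))$. Your balancing of $\gamma^p$ against $\gamma^2$ thus rests on an entropy estimate that cannot hold.

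The missing idea is that the large power $p$ already suppresses any overlap bounded away from $\pm1$ by a small margin. You only need $|q_n(u,v)|\le1-\delta$ with $\delta=2\log p/p$, since then $(1-\delta)^p\le e^{-p\delta}=p^{-2}$. This is a non-near-parallelism condition, and its excluded set is small. It suffices that $u$ lie at Hamming distance more than $\delta n$ from both $v$ and $-v$; the $-v$ is needed because $X_u=X_{-u}$ for even $p$. Then $|q_n(u,v)|\le\max\{N_+,N_-\}/n<1-\delta$, where $N_\pm$ counts the sites with $u_i=\pm v_i$.

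Each greedy choice now removes only the two Hamming balls of radius $\lfloor\delta n\rfloor$ around $\pm u$, of total size at most $2e^{n(h(\delta)+\delta\log5)}$. This gives $\frac1n\log|\mathcal U|\ge\log6-h(\delta)-\delta\log5-\frac{\log2}{n}$. With $\delta=2\log p/p$ the entropy loss is $O((\log p)^2/p)$, exactly the stated error term. With this packing in place of yours, the rest of your argument goes through as in the paper.
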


\begin{proof}
The existence of the limit follows from Lemma~\ref{lem:pstab-thermodynamic-limit}.

The upper bound follows immediately from the Gaussian maximal
inequality. Since $|\mathcal A^n|=6^n$ and every $X_u$ has variance
one, we have
\[
    \E \mathcal{E}_{n,p}(\PStab_n)
    \le
    \sqrt{2\log(6^n)}
    =
    \sqrt{2n\log6}.
\]
Therefore
\begin{equation}
\label{eq:pstab-upper}
    \frac{\E \mathcal{E}_{n,p}(\PStab_n)}{\sqrt n}
    \le
    \sqrt{2\log6}.
\end{equation}

For the lower bound, we construct a packing in the canonical
metric of the Gaussian process, $d_X(u,v)
    :=
    (\E(X_u-X_v)^2)^{1/2}$. Since $\E X_u^2=1$ for every $u$,
\[
    d_X(u,v)^2
    =
    2\bigl(1-\E X_uX_v\bigr).
\]
Thus, constructing an $\varepsilon$-packing amounts to finding a set
whose distinct elements satisfy
\[
    \E X_uX_v
    \le
    1-\frac{\varepsilon^2}{2}.
\]

Recall the overlaps $q_n(u,v)$ from the proof of
Lemma~\ref{lem:pstab-thermodynamic-limit}.
Fix $\delta\in(0,1/2)$. We will construct a set
$\mathcal C_{n,\delta}\subseteq\mathcal A^n$ such that for all distinct
$u,v\in\mathcal C_{n,\delta}$, $|q_n(u,v)|\le1-\delta$.
For every integer $p\ge2$,
\[
    q_n(u,v)^p
    \le |q_n(u,v)|^p
    \le (1-\delta)^p.
\]
Together with \eqref{eq:pstab-with-replacement} and
$G_n(u)=\sqrt n\,X_u$, this gives
\[
    \E X_uX_v
    \le
    (1-\delta)^p+\frac{p(p-1)}n
    =:
    \beta_n.
\]
Consequently, for all sufficiently large $n$,
$\mathcal C_{n,\delta}$ is a
$\sqrt{2(1-\beta_n)}$-packing in the canonical metric.

For even $p$, the configurations $u$ and $-u$ have identical
energies, despite differing at every site. Thus Hamming separation
alone does not guarantee separation in the canonical metric.
To control $|q_n(u,v)|$, we require separation from both
$v$ and $-v$. For $u,v\in\mathcal A^n$, define
\[
    d_{\pm}(u,v)
    :=
    \min\{d_H(u,v),d_H(u,-v)\},
\]
where $d_H$ denotes Hamming distance.
Let $r=\lfloor\delta n\rfloor$.

For $w\in\mathcal A^n$, write the radius-$r$ Hamming ball
\[
    B_H(w,r)
    :=
    \{v\in\mathcal A^n:d_H(w,v)\le r\}.
\]
Since $d_H(u,-v)=d_H(-u,v)$,
\[
    \{v\in\mathcal A^n:d_{\pm}(u,v)\le r\}
    =
    B_H(u,r)\cup B_H(-u,r).
\]
To construct a vector at Hamming distance exactly $k$ from $w$, one
chooses the $k$ coordinates to be changed and, at each such coordinate,
one of the five elements of $\mathcal A$ different from the original
entry. Hence
\[
    |B_H(w,r)|
    =
    \sum_{k=0}^r\binom nk5^k
    :=V_6(n,r).
\]
The standard $6$-ary Hamming-volume estimate gives
\[
    V_6(n,r)
    \le
    \exp\left\{
        n\bigl(h(\delta)+\delta\log5\bigr)
    \right\},
\]
where $h(\delta) := -\delta\log\delta-(1-\delta)\log(1-\delta)$; see \cite[Proposition~3.3.3]{guruswami2025essential}.

We now construct $\mathcal C_{n,\delta}$ greedily by adapting the argument in \cite[Sec.~4.2.1]{guruswami2025essential}. Initialize
$\mathcal C_{n,\delta}=\emptyset$ and let
$\mathcal R=\mathcal A^n$ be the set of available points. While
$\mathcal R$ is nonempty, choose any $u\in\mathcal R$, add $u$ to
$\mathcal C_{n,\delta}$, and remove from $\mathcal R$ every $v$
satisfying $d_{\pm}(u,v)\le r$. Continue until $\mathcal R$ is empty. Each selected
point removes at most $2V_6(n,r)$ points. Since all $6^n$ points of
$\mathcal A^n$ have been removed when the procedure terminates,
\[
    |\mathcal C_{n,\delta}|
    \ge
    \frac{6^n}{2V_6(n,r)}.
\]
Moreover, any two distinct selected points satisfy
$d_{\pm}(u,v)>r$. Since $d_{\pm}(u,v)$ is integer-valued and
$r=\lfloor\delta n\rfloor$, this implies $d_{\pm}(u,v)>\delta n$.

The Hamming-volume bound therefore gives
\begin{equation}
\label{eq:pstab-code-rate}
    \frac1n\log|\mathcal C_{n,\delta}|
    \ge
    \log6-h(\delta)-\delta\log5-\frac{\log2}{n}.
\end{equation}
It remains to verify that the Hamming separation gives the desired separation in the canonical metric. Fix distinct $u,v\in\mathcal C_{n,\delta}$ and define
\[
N_+(u,v) := \bigl|\{i:u_i=v_i\}\bigr|,
\qquad N_-(u,v) := \bigl|\{i:u_i=-v_i\}\bigr|.
\]
Since
\[
d_H(u,v)=n-N_+(u,v), \qquad d_H(u,-v)=n-N_-(u,v),
\]
the inequality $d_{\pm}(u,v)>\delta n$ implies
\[
N_+(u,v)<(1-\delta)n, \qquad N_-(u,v)<(1-\delta)n.
\]
Moreover, $u_i\cdot v_i$ equals $1$, $-1$, or $0$ as $u_i=v_i$, $u_i=-v_i$, or neither. Hence
\[
|q_n(u,v)|
= \frac{|N_+(u,v)-N_-(u,v)|}{n}
\le
\frac{\max\{N_+(u,v),N_-(u,v)\}}{n} < 1-\delta.
\]
Thus $\mathcal C_{n,\delta}$ has the covariance separation claimed above, i.e. $\E X_uX_v\le\beta_n$ for all distinct $u,v\in\mathcal C_{n,\delta}$.

Let $K_n:=|\mathcal C_{n,\delta}|$, and let $Z_0$ and $(Z_u)_{u\in\mathcal C_{n,\delta}}$ be independent standard Gaussian variables. For all sufficiently large $n$, we have $0\le\beta_n<1$, so define the comparison process
\[
Y_u := \sqrt{\beta_n}\,Z_0 + \sqrt{1-\beta_n}\,Z_u, \qquad u\in\mathcal C_{n,\delta}.
\]
The process $(Y_u)$ has variance one and pairwise covariance $\beta_n$. Consequently, for distinct $u,v\in\mathcal C_{n,\delta}$,
\[
\E(X_u-X_v)^2 = 2\bigl(1-\E X_uX_v\bigr) \ge 2(1-\beta_n) = \E(Y_u-Y_v)^2.
\]
Sudakov--Fernique \cite[Theorem~7.2.11]{vershynin2018high} therefore gives
\begin{align*}
\E \mathcal{E}_{n,p}(\PStab_n) &\ge \E\max_{u\in\mathcal C_{n,\delta}}X_u \\
&\ge \E\max_{u\in\mathcal C_{n,\delta}}Y_u \\
&= \sqrt{1-\beta_n}\, \E\max_{1\le j\le K_n}Z_j.
\end{align*}
The final equality holds because the term $\sqrt{\beta_n}Z_0$ is common to every $Y_u$ and has mean zero. The standard asymptotic for the maximum of independent standard Gaussian variables gives
\[
\E\max_{1\le j\le K_n}Z_j = (1+o_n(1))\sqrt{2\log K_n}.
\]
Since $\beta_n\to(1-\delta)^p$ as $n\to\infty$, it follows from \eqref{eq:pstab-code-rate} that
\begin{equation}
\label{eq:pstab-lower-delta}
\lim_{n\to\infty}
\frac{\E \mathcal{E}_{n,p}(\PStab_n)}{\sqrt n}
\ge \sqrt{ 2\bigl(1-(1-\delta)^p\bigr) \bigl( \log6-h(\delta)-\delta\log5 \bigr) }.
\end{equation}
For sufficiently large $p$, take $\delta=\frac{2\log p}{p}$. Then $\delta\in(0,1/2)$ and
\[
(1-\delta)^p \le e^{-p\delta} = p^{-2}.
\]
Moreover, using $h(\delta)\le\delta\log(e/\delta)$,
\[
h(\delta)+\delta\log5 \le \frac{2\log p}{p} \log\left(\frac{5ep}{2\log p}\right) \le C\frac{(\log p)^2}{p}.
\]
Substituting these estimates into \eqref{eq:pstab-lower-delta} gives
\[
\lim_{n\to\infty}
\frac{\E \mathcal{E}_{n,p}(\PStab_n)}{\sqrt n}
\ge \sqrt{2\log6} - C\frac{(\log p)^2}{p}.
\]
Taking $n\to\infty$ in \eqref{eq:pstab-upper} gives
\[
\lim_{n\to\infty}
\frac{\E \mathcal{E}_{n,p}(\PStab_n)}{\sqrt n}\le\sqrt{2\log6}.
\]
Combining the two estimates proves the claimed bounds. Letting $p\to\infty$ gives the result.
\end{proof}

\begin{proof}[Completion of the proof of
Corollary~\ref{cor:intro-t-count}\textup{(ii)}]
Let $t_n=o(n)$ and set
$\mathcal C_n:=\CliffT^{\mathrm{anc}}_{n,t_n}$.
By \eqref{eq:t-count-product-stabilizer-comparison},
it remains to prove the large-$p$ assertion.

Fix $\varepsilon>0$.
By Proposition~\ref{prop:pstab-value}, for every sufficiently
large fixed integer $p$ and all sufficiently large $n$,
\[
    \left|
        \frac{\E\mathcal{E}_{n,p}(\PStab_n)}{\sqrt n}
        -\sqrt{2\log6}
    \right|
    \le\frac{\varepsilon}{3}.
\]
Each product-stabilizer energy is a linear functional of the
Gaussian disorder with coefficient vector of norm one.
Their supremum is therefore $1$-Lipschitz, so Gaussian
concentration gives
\[
    \P\left(
        |\mathcal{E}_{n,p}(\PStab_n)
        -\E\mathcal{E}_{n,p}(\PStab_n)|>u
    \right)
    \le2e^{-u^2/2},
    \qquad u>0.
\]
Taking $u=(\varepsilon/3)\sqrt n$ and using
\eqref{eq:t-count-product-stabilizer-comparison}, we obtain,
with probability tending to one,
\begin{align*}
    \left|
        \frac{\mathcal{E}_{n,p}(\mathcal C_n)}{\sqrt n}
        -\sqrt{2\log6}
    \right|
    &\le
    \frac{\mathcal{E}_{n,p}(\mathcal C_n)
        -\mathcal{E}_{n,p}(\PStab_n)}{\sqrt n}\\
    &\quad+
    \frac{|\mathcal{E}_{n,p}(\PStab_n)
        -\E\mathcal{E}_{n,p}(\PStab_n)|}{\sqrt n}\\
    &\quad+
    \left|
        \frac{\E\mathcal{E}_{n,p}(\PStab_n)}{\sqrt n}
        -\sqrt{2\log6}
    \right|
    \le\varepsilon.
\end{align*}
\end{proof}
The case $t_n=0$ gives the following stabilizer energy value.

\begin{corollary}[Sharp normalized stabilizer energy]
\label{cor:sharp-clifford-energy}
For every $\varepsilon>0$, there exists $p_\varepsilon\ge2$ such that,
for every fixed integer $p\ge p_\varepsilon$,
\[
    \P\left(
        \left|
            \frac{\mathcal{E}_{n,p}(\Stab_n)}{\sqrt n}
            -
            \sqrt{2\log6}
        \right|
        \le\varepsilon
    \right)
    \longrightarrow1
\]
as $n\to\infty$.
\end{corollary}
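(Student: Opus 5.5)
The plan is to deduce this from Corollary~\ref{cor:intro-t-count}\textup{(ii)} with $t_n=0$, together with the identity between the stabilizer energy and the energy of $\CliffT^{\mathrm{anc}}_{n,0}$. The proof of Corollary~\ref{thm:clifford-product-stabilizer} gives the sandwich $\Stab_n\subseteq\CliffT^{\mathrm{anc}}_{n,0}\subseteq\operatorname{conv}(\Stab_n)$. By linearity of the energy functional, this yields \eqref{eq:t-count-zero-energy}, namely $\mathcal{E}_{n,p}(\CliffT^{\mathrm{anc}}_{n,0})=\mathcal{E}_{n,p}(\Stab_n)$, pointwise in the disorder. Hence any statement about the normalized energy of $\CliffT^{\mathrm{anc}}_{n,0}$ transfers verbatim to $\Stab_n$.

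Concretely, fix $\varepsilon>0$. The constant sequence $t_n=0$ is $o(n)$, so the completion of the proof of Corollary~\ref{cor:intro-t-count}\textup{(ii)} applies. That argument runs as follows.
\begin{enumerate}
\item Proposition~\ref{prop:pstab-value} supplies $p_\varepsilon$ such that, for every fixed $p\ge p_\varepsilon$, $\lim_n\E\mathcal{E}_{n,p}(\PStab_n)/\sqrt n$ lies within $\varepsilon/3$ of $\sqrt{2\log6}$.
\item Gaussian concentration for the $1$-Lipschitz supremum $\mathcal{E}_{n,p}(\PStab_n)$ controls its fluctuations at scale $\varepsilon\sqrt n/3$.
\item The comparison \eqref{eq:t-count-product-stabilizer-comparison}, specialized to $t_n=0$ (which is exactly Corollary~\ref{thm:clifford-product-stabilizer}), makes $(\mathcal{E}_{n,p}(\Stab_n)-\mathcal{E}_{n,p}(\PStab_n))/\sqrt n$ at most $\varepsilon/3$ with probability tending to one.
\end{enumerate}
A triangle inequality then places $\mathcal{E}_{n,p}(\Stab_n)/\sqrt n$ within $\varepsilon$ of $\sqrt{2\log6}$ with probability tending to one.

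Two bookkeeping points remain. First, the difference $\mathcal{E}_{n,p}(\Stab_n)-\mathcal{E}_{n,p}(\PStab_n)$ is nonnegative because $\PStab_n\subseteq\Stab_n$, so only its upper tail matters, and that tail is handled by Markov's inequality from the $L^1$ convergence. Second, Step 1 gives closeness only in the limit, so I would choose $n$ large enough that the mean at finite $n$ is within $\varepsilon/3$ of that limit.

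There is no real obstacle, since all the heavy lifting (the entropy bound for stabilizer profiles, the sharper conditioning benchmark, and the large-$p$ evaluation) is already done. The only care needed is the order of quantifiers. The threshold $p_\varepsilon$ must come solely from Proposition~\ref{prop:pstab-value} and be independent of $n$. The comparison step holds for every fixed $p\ge2$, so it imposes no further restriction on $p$.
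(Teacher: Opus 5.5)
Your proposal is correct and matches the paper's proof: the paper applies the large-$p$ assertion of Corollary~\ref{cor:intro-t-count}\textup{(ii)} with $t_n=0$ and uses the pointwise identity \eqref{eq:t-count-zero-energy}. Your unpacking of that assertion also agrees with the paper's completion argument, with the quantifiers in the right order:
\begin{enumerate}
\item $p_\varepsilon$ is chosen from Proposition~\ref{prop:pstab-value};
\item Gaussian concentration is applied to $\mathcal{E}_{n,p}(\PStab_n)$;
\item the $L^1$ comparison of Corollary~\ref{thm:clifford-product-stabilizer} becomes a high-probability bound via nonnegativity and Markov's inequality.
\end{enumerate}
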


\begin{proof}
Apply the large-$p$ assertion of
Corollary~\ref{cor:intro-t-count}\textup{(ii)} with $t_n=0$
and use \eqref{eq:t-count-zero-energy}.
\end{proof}

\begin{remark}
The constant $\sqrt{2\log6}$ reflects the six possible one-qubit
stabilizer states.  It does not conflict with the unrestricted
product-state value \cite{anschuetz2025bounds}, which grows with $p$, because arbitrary product
states may choose continuously many Bloch-sphere directions whereas
product stabilizer states have only six local choices.
\end{remark}

\section*{Acknowledgements}
OAG gratefully acknowledges support from the U.S. National Science Foundation under Award No. DMS-2602099 and from the Eric and Wendy Schmidt Center at the Broad Institute of MIT and Harvard. DG acknowledges support from ONR grant N000142512545. 

\section*{Use of AI}
For the reverse-magic result with clean ancillas in
Theorem~\ref{thm:revmag-clean-ground-gap}, GPT 5.6 Sol suggested using
Clifford invariance and the clean-output condition to bound stabilizer
nullity independently of the number of ancillas, after being prompted
with an incorrect proof in the discardable-ancilla setting. The authors
developed this observation into the full proof.

After we provided our proof for stabilizer states and the sharper
benchmark comparison, we asked GPT 5.6 Sol to suggest other classes to
investigate. It suggested extending the analysis to Clifford+$T$
circuits, studied in Section~\ref{sec:sharper-conditioning-benchmarks}.

We additionally used GPT 5.6 Sol to assist with proof checking.
OpenAI Codex was used for a pre-submission review of proofs, references,
and presentation. The authors take full responsibility for the paper's
content.

\begin{appendix}

\section{Profile entropy from restriction growth}
\label{app:profile-entropy-growth}

We record the standard entropy consequence of polynomial
restriction growth used in Sections~\ref{subsec:forward-profile-complexity}
and~\ref{sec:sharper-conditioning-benchmarks}.

\begin{lemma}\label{lem:growth-entropy}
Let $\Omega$ be a finite set, let $\mu$ be a probability measure on $\Omega$,
and let $\cF\subseteq[-1,1]^\Omega$. Suppose that there exist a positive integer $d$ and
$A\ge1$ such that, for every $m\ge d$ and every
$\omega_1,\ldots,\omega_m\in\Omega$,
\[
  \#\{
    (f(\omega_1),\ldots,f(\omega_m)):
    f\in\cF
  \}
  \le
  \left(\frac{Am}{d}\right)^d.
\]
Then, for every $0<\eps<1$,
\[
  \log N(\eps,\cF,L_2(\mu))
  \le
  C d\log\left(\frac{eA}{\eps}\right),
\]
where $C>0$ is a universal constant.
\end{lemma}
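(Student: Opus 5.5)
This is the standard passage from combinatorial growth to $L_2$ covering numbers, via random sampling and a packing argument in the style of Haussler or Mendelson--Vershynin. The plan has four steps.

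First, I reduce to a finite-valued class. Fix $0<\eps<1$ and set $\gamma:=\eps/4$. Let $\pi_\gamma$ be coordinatewise rounding to the grid $\gamma\mathbb Z\cap[-1,1]$. Then $\|f-\pi_\gamma f\|_{L_2(\mu)}\le\gamma$, so it suffices to cover $\pi_\gamma\cF$ at radius $\eps/2$. The restriction counts of $\pi_\gamma\cF$ are no larger than those of $\cF$, because rounding is a function of the restricted vector. This discretization is harmless here, and in the applications (Lemmas~\ref{lem:forward-profile-entropy} and~\ref{lem:t-count-profile-entropy}) the classes are already finite.

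Second, I take a maximal $\eps/2$-separated subset $\{f_1,\ldots,f_K\}\subseteq\cF$ in $L_2(\mu)$. Then $N(\eps,\cF,L_2(\mu))\le K$, so it suffices to bound $K$. Draw $\omega_1,\ldots,\omega_m$ i.i.d.\ from $\mu$. For a fixed pair $i\ne j$, the variable $h=(f_i-f_j)^2$ takes values in $[0,4]$ and has mean at least $\eps^2/4$. A one-sided Bernstein or Chernoff bound for nonnegative bounded variables gives
\[
  \P\Bigl(\tfrac1m\textstyle\sum_k h(\omega_k)=0\Bigr)\le\Bigl(1-\tfrac{\eps^2}{16}\Bigr)^m\le e^{-m\eps^2/16}.
\]
Requiring only that the two functions differ somewhere on the sample, rather than quantitative separation, is all that is needed. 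A union bound over the $\binom K2$ pairs shows that if $m\ge C\eps^{-2}\log K$, then with positive probability all $f_i$ have distinct restrictions to $(\omega_1,\ldots,\omega_m)$.

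Third, I apply the growth hypothesis with this $m$ (taking $m\ge d$). This gives
\[
  K\le(Am/d)^d .
\]
Choosing $m=\lceil C\eps^{-2}\log K\rceil\vee d$ yields the self-referential inequality
\[
  \log K\le d\log\bigl(CA\eps^{-2}\log K/d\bigr).
\]
Setting $x=\log K/d$, this reads $x\le\log(CA x/\eps^2)$. Solving gives
\[
  x\le C'\log(eA/\eps),
\]
which is the claimed bound up to the universal constant.

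The main obstacle is that distinct restrictions alone yield a factor $\eps^{-2}$ inside the logarithm. That is fine, since $\log(A/\eps^2)\le2\log(A/\eps)$. Still, I must check that absorbing the $\log\log$ term and the case $m<d$ (trivial when $\log K\lesssim d$) leaves a universal constant. If this bookkeeping became delicate, I would instead cite Haussler's packing bound, in the form \cite[Theorem~8.3.18]{vershynin2018high} or Mendelson--Vershynin's bound for real-valued classes after the $\gamma$-discretization. These give the same $Cd\log(eA/\eps)$ form directly.
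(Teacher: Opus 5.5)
Your proposal is correct and follows essentially the same route as the paper. Both take a maximal separated family, sample i.i.d.\ from $\mu$ with a union bound so that all pairs have distinct restrictions, apply the growth hypothesis at $m\asymp d+\eps^{-2}\log K$, and solve the resulting inequality $x\le\log\bigl(CA(1+x/\eps^2)\bigr)$ using $\log x\le x/2$. The one difference is your rounding step, which is unnecessary: the paper shows $\cF$ is finite by applying the growth hypothesis to a list exhausting the finite set $\Omega$, and then takes the separated family directly in $\cF$ (if you keep the rounding, your separated family should be taken in $\pi_\gamma\cF$ rather than in $\cF$).
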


\begin{proof}
This is the standard approximation argument for classes with polynomial
restriction growth; see, for example,
\cite[Chapter~II, Lemma~25]{poll}.

If $\mathcal F=\emptyset$, there is nothing to prove, so
assume $\mathcal F\ne\emptyset$.  For completeness, apply the restriction-growth hypothesis to
a finite list containing every point of $\Omega$, with repetitions if
needed to make its length at least $d$.  It follows that $\mathcal F$
contains only finitely many distinct profiles, so we may choose a maximal
$\eps$-separated family $\mathcal G\subseteq\mathcal F$ in $L_2(\mu)$,
and write $M=|\mathcal G|$. Since
$\mathcal F\subseteq[-1,1]^\Omega$, any distinct $f,g\in\mathcal G$ satisfy
\[
    \mu\{f\ne g\}
    \ge \frac{1}{4}\|f-g\|_{L_2(\mu)}^2
    > \frac{\eps^2}{4}.
\]
Take
$m:=d+\lceil8\eps^{-2}\log(\max\{M,2\})\rceil$.  A union bound shows
that $m$ independent points sampled from $\mu$ distinguish every pair
in $\mathcal G$ with positive probability. For such a sample, the restriction-growth assumption
gives $M\le (Am/d )^d$.

Writing $x=(\log M)/d$, the last inequality implies
$x\le\log(CA(1+x/\eps^2))$.  If $x\le2$ there is nothing to prove;
if $x>2$, using $\log x\le x/2$ and rearranging gives
$x\le C\log(eA/\eps)$.  Thus $\log M
    \le
    C d\log (eA/\eps)$. Since a maximal $\eps$-separated family is an $\eps$-net, the result follows.
\end{proof}

\section{Pauli and stabilizer
preliminaries}
\label{sec:pauli-preliminaries}

This appendix reviews the standard Pauli and stabilizer facts
used in Sections~\ref{sec:proofs-magic} and
\ref{sec:sharper-conditioning-benchmarks}.  General references for the
stabilizer formalism include \cite{gottesman1997stabilizer}; the binary
symplectic description used here is developed in
\cite{dehaene2003clifford}.  For background on characters of finite
Abelian groups and Pauli measurements, see
\cite[Secs.~4.2 and~7]{bermejo2012classical}.

\subsection{Pauli operators and binary labels}
\label{app:PauliOpsandLabels}

Let $\mathcal P_M
    :=
    \{I,X,Y,Z\}^{\otimes M}$ denote the set of phase-free Pauli operators on $M$ qubits.
Every $R\in\mathcal P_M$ is Hermitian, satisfies $R^2=I$,
and has eigenvalues in $\{\pm1\}$. We also write
$\mathcal P_M^\pm:=\{\pm R:R\in\mathcal P_M\}$ for the
signed Hermitian Paulis, as in Section~\ref{sec:revmag-clean}.

The set $\mathcal P_M$ is not closed under matrix multiplication:
for example, $XZ=-iY$.  To obtain a group, we include the possible
phase factors and define the full Pauli group
\[
    \widetilde{\mathcal P}_M
    :=
    \left\{
        \omega R:
        R\in\mathcal P_M,\;
        \omega\in\{1,-1,i,-i\}
    \right\},
\]
whose group operation is ordinary matrix multiplication.

A phase-free Pauli operator
$R=R_1\otimes\cdots\otimes R_M$ is represented by a binary label
\[
    [R]=(x,z)\in\mathbb F_2^M\times\mathbb F_2^M
    =
    \mathbb F_2^{2M}
\]
according to the single-qubit correspondence
\[
    I\longleftrightarrow(0,0),
    \qquad
    X\longleftrightarrow(1,0),
    \qquad
    Z\longleftrightarrow(0,1),
    \qquad
    Y\longleftrightarrow(1,1).
\]
Thus $x_j=1$ precisely when $R_j\in\{X,Y\}$, while $z_j=1$
precisely when $R_j\in\{Z,Y\}$.  These bits record the presence
of $X$- and $Z$-components, respectively, and $Y$ has both components
because $Y=iXZ$.

We extend the label notation to the full Pauli group by ignoring
the global phase:
\[
    [\omega R]:=[R],
    \qquad
    R\in\mathcal P_M,\quad
    \omega\in\{1,-1,i,-i\}.
\]
With this convention, Pauli multiplication corresponds to addition
of binary labels:
\[
    [QQ']=[Q]+[Q'],
    \qquad
    Q,Q'\in\widetilde{\mathcal P}_M,
\]
where addition is taken modulo two.  Identifying Pauli operators
that differ only by a global phase gives the \emph{projective Pauli
group}, which is therefore isomorphic to the additive group
$(\mathbb F_2^{2M},+)$.

For $u=(x,z),v=(x',z')\in\mathbb F_2^{2M}$, define the
symplectic inner product
\[
    \langle u,v\rangle_{\mathrm{sp}}
    :=
    x\cdot z'+z\cdot x'
    \pmod 2.
\]
The commutation relation between Pauli operators is
\[
    QQ'
    =
    (-1)^{
        \langle[Q],[Q']\rangle_{\mathrm{sp}}
    }
    Q'Q,
    \qquad
    Q,Q'\in\widetilde{\mathcal P}_M.
\]
Consequently,
\[
    \langle[Q],[Q']\rangle_{\mathrm{sp}}=0
    \quad\Longleftrightarrow\quad
    Q\text{ and }Q'\text{ commute},
\]
whereas symplectic inner product one corresponds to anticommutation.
These properties depend only on the labels, not on the global phases.

The phase-free Pauli operators form an orthogonal basis for the
space of operators on $M$ qubits:
\[
    2^{-M}\Tr(RR')
    =
    \mathbf 1_{\{R=R'\}},
    \qquad
    R,R'\in\mathcal P_M.
\]

For a Pauli operator $R=R_1\otimes\cdots\otimes R_M$, its
\emph{support} is the set of qubits $j$ for which $R_j\ne I$.
Multiplying by a global phase does not change this support.

A unitary $C$ on $M$ qubits is called a \emph{Clifford unitary}
if conjugation by $C$ maps the full Pauli group to itself:
\[
    C\widetilde{\mathcal P}_M C^\dagger
    =\widetilde{\mathcal P}_M.
\]
As in Section~\ref{subsec:magic-background}, $\Cliff_M$ denotes
the group of these unitaries modulo global phase. Whenever a
Clifford acts on a vector or operator, we choose a unitary
representative of its class.
In particular, if $R\in\mathcal P_M$, then
$CRC^\dagger=\pm R'$ for some $R'\in\mathcal P_M$.
The induced map on binary labels is linear and invertible,
and it preserves the symplectic inner product because
unitary conjugation preserves commutation and anticommutation.

A linear map $F$ between spaces equipped with symplectic
inner products is called \emph{symplectic} if
\[
    \langle F(u),F(v)\rangle_{\mathrm{sp}}
    =
    \langle u,v\rangle_{\mathrm{sp}}
\]
for every pair of vectors $u,v$ in its domain.
For Pauli labels, this means that the map preserves
commutation and anticommutation relations.

We also use the converse. If an invertible linear map
$F:\mathbb F_2^{2M}\to\mathbb F_2^{2M}$ preserves the
symplectic inner product, then there exists
$C\in\Cliff_M$ such that
\begin{equation}
\label{eq:app-clifford-symplectic-action}
    [CQC^\dagger]=F([Q]),
    \qquad Q\in\widetilde{\mathcal P}_M;
\end{equation}
see \cite[Sections~II--IV]{dehaene2003clifford}.
Thus a linear change of Pauli labels preserving their
commutation relations can be implemented by a Clifford
unitary. The labels determine the resulting Hermitian
Pauli operators only up to sign, which we handle
explicitly when imposing stabilizer constraints.

\subsection{Isotropic subspaces and commuting Pauli groups}
\label{app:IsotropicSubspaces}

A linear subspace $W\subseteq\mathbb F_2^{2M}$ is called
\emph{isotropic} if
\[
    \langle w,w'\rangle_{\mathrm{sp}}=0
    \qquad
    \text{for every }w,w'\in W.
\]
By the commutation relation, this means that Pauli operators
whose labels lie in $W$ commute pairwise, regardless of their
signs or global phases.

The symplectic orthogonal complement of $W$ is
\[
    W^\perp
    :=
    \left\{
        v\in\mathbb F_2^{2M}:
        \langle v,w\rangle_{\mathrm{sp}}=0
        \text{ for every }w\in W
    \right\}.
\]
For a label $v$, we abbreviate
$v^\perp:=(\operatorname{span}\{v\})^\perp$.
If $W$ is isotropic, then $W\subseteq W^\perp$.  Since the
symplectic form is nondegenerate,
\[
    \dim W+\dim W^\perp=2M,
\]
and therefore $\dim W\leq M$.

We next relate these subspaces to simultaneous eigenspaces.
Let $\mathcal G$ be a subgroup of $\widetilde{\mathcal P}_M$
such that its elements commute pairwise and $-I\notin\mathcal G$.
Such a subgroup is called a \emph{stabilizer group}.
Every element of $\mathcal G$ has the form $\pm R$ with
$R\in\mathcal P_M$, because an element of the form $\pm iR$
would square to $-I$.  Moreover, $\mathcal G$ cannot contain
both $R$ and $-R$, since their product is $-I$.
Thus each binary label corresponding to an element of $\mathcal G$
has exactly one signed representative in $\mathcal G$.
Equivalently, if $G,H\in\mathcal G$ satisfy $[G]=[H]$,
then $G=H$.

Define the \emph{label space} of $\mathcal G$ by
\[
    L_{\mathcal G}
    :=
    \{[G]:G\in\mathcal G\}
    \subseteq\mathbb F_2^{2M}.
\]
This is a vector subspace because $[I]=0$ and
$[GG']=[G]+[G']$.  It is isotropic because the elements of
$\mathcal G$ commute.

Write $k:=\dim L_{\mathcal G}$.
This means that there are $k$ linearly independent binary labels
such that every label in $L_{\mathcal G}$ is a unique sum,
modulo two, of these labels.
Choose $G_1,\ldots,G_k\in\mathcal G$ whose labels form such a basis.
Since each label has a unique representative in $\mathcal G$,
every element of $\mathcal G$ can be written uniquely as
\[
    G_1^{a_1}\cdots G_k^{a_k},
    \qquad
    (a_1,\ldots,a_k)\in\mathbb F_2^k.
\]
In particular, $|\mathcal G|=2^k$, with $k\le M$.
We say that $G_1,\ldots,G_k$ generate $\mathcal G$ and
write $\mathcal G=\langle G_1,\ldots,G_k\rangle$.
An empty list of generators gives the trivial group $\{I\}$.

Define the common $+1$ eigenspace of $\mathcal G$ by
\[
    \mathcal K_{\mathcal G}
    :=
    \left\{
        \ket{\psi}\in(\mathbb C^2)^{\otimes M}:
        G\ket{\psi}=\ket{\psi}
        \text{ for every }G\in\mathcal G
    \right\}.
\]
Since $G_1,\ldots,G_k$ generate $\mathcal G$, a vector belongs
to $\mathcal K_{\mathcal G}$ exactly when it lies in the $+1$
eigenspace of every generator.

Each $G_j$ has eigenvalues in $\{ \pm 1\}$. Consequently, $(I+G_j)/2$ acts as the identity on its $+1$
eigenspace and as zero on its $-1$ eigenspace, so it is the
orthogonal projector onto the former. These projectors commute, and applying one therefore preserves
the constraints imposed by the others. Their product
\[
    \Pi_{\mathcal G}
    :=
    \prod_{j=1}^k\frac{I+G_j}{2}
\]
is thus the orthogonal projector onto $\mathcal K_{\mathcal G}$.

Expanding the product amounts to choosing either $I$ or $G_j$
from each factor. Writing these choices as $a_j=0$ or $a_j=1$,
respectively, gives
\[
    \Pi_{\mathcal G}
    =
    2^{-k}
    \sum_{(a_1,\ldots,a_k)\in\{0,1\}^k}
    G_1^{a_1}\cdots G_k^{a_k}
    =
    2^{-k}\sum_{G\in\mathcal G}G.
\]
The last equality holds because every element of $\mathcal G$
has exactly one such representation, so each group element
appears once in the expansion.

Only the identity has nonzero trace among the elements of
$\mathcal G$, and $\Tr(I)=2^M$. Since the trace of an orthogonal
projector equals the dimension of its range, we conclude that
\[
    \dim\mathcal K_{\mathcal G}
    =
    \Tr(\Pi_{\mathcal G})
    =
    2^{-k}\Tr(I)
    =
    2^{M-k}.
\]

The normalized projector
\[
    \tau_{\mathcal G}:=\frac{\Pi_{\mathcal G}}{2^{M-k}}
    =2^{-M}\sum_{G\in\mathcal G}G
\]
is a mixture of pure stabilizer states. Tracing out qubits in the Pauli expansion of a pure stabilizer state retains exactly the stabilizers acting trivially on the discarded qubits, giving the same form on the remaining qubits.
Thus the reduced states in Definition~\ref{def:mixed-stabilizer}
are mixtures of pure stabilizer states.

\subsection{Stabilizer nullity and logical Pauli spaces}
\label{app:stabilizer-nullity}

In Subsection~\ref{app:IsotropicSubspaces}, we started with a
stabilizer group and considered the vectors satisfying all of
its constraints. Here we reverse the starting point: given a
pure state, we collect its Pauli constraints and ask how much
freedom those constraints leave. We then describe the Pauli
operations acting on the resulting space and explain how this
description behaves under gates and conditioning.

Let $|\eta\rangle$ be a normalized pure $M$-qubit state.
Its \emph{Pauli stabilizer group} is
\[
    \mathcal G_\eta
    :=
    \left\{
        G\in\mathcal P_M^\pm:
        G|\eta\rangle=|\eta\rangle
    \right\}.
\]
This is a stabilizer group in the sense of the preceding
subsection. It contains $I$, and if
$G,H\in\mathcal G_\eta$, then
\[
    GH|\eta\rangle=|\eta\rangle,
    \qquad
    G^{-1}|\eta\rangle=|\eta\rangle.
\]
It cannot contain $-I$ because $|\eta\rangle\ne0$.
Moreover, any two Pauli operators either commute or
anticommute, and anticommutation would give
\[
    |\eta\rangle
    =GH|\eta\rangle
    =-HG|\eta\rangle
    =-|\eta\rangle,
\]
which is impossible.

The \emph{Pauli stabilizer space} of $\eta$ is the binary
label space of this group:
\[
    \mathsf S(\eta)
    :=L_{\mathcal G_\eta}
    =\left\{[G]:G\in\mathcal G_\eta\right\}.
\]
Equivalently,
\[
    \mathsf S(\eta)
    =
    \left\{
        [R]:R\in\mathcal P_M,\quad
        R|\eta\rangle=\pm|\eta\rangle
    \right\}.
\]
To see the equivalence, if
$R|\eta\rangle=\varepsilon|\eta\rangle$ with
$\varepsilon\in\{\pm1\}$, then
$\varepsilon R\in\mathcal G_\eta$ and
$[\varepsilon R]=[R]$.
Thus $\mathcal G_\eta$ retains the signs of the constraints,
while $\mathsf S(\eta)$ records which phase-free Paulis have
$|\eta\rangle$ as an eigenvector.

Since $\mathsf S(\eta)$ is isotropic, its dimension is at most
$M$. The \emph{stabilizer nullity}, introduced in
\cite{beverland2020lower}, is
\[
    \nu_{\mathrm{st}}(\eta)
    :=M-\dim\mathsf S(\eta).
\]
Write
\[
    \mathcal K_\eta:=\mathcal K_{\mathcal G_\eta},
    \qquad
    \Pi_\eta:=\Pi_{\mathcal G_\eta}.
\]
The space $\mathcal K_\eta$ consists of all vectors satisfying
every Pauli constraint of $|\eta\rangle$; some of these vectors
may also satisfy additional constraints. In particular,
$|\eta\rangle\in\mathcal K_\eta$, but the constraints need not
determine $|\eta\rangle$ uniquely.

If $\nu_{\mathrm{st}}(\eta)=t$, then there are $M-t$
independent Pauli constraints. The eigenspace formula from
Subsection~\ref{app:IsotropicSubspaces} gives
\[
    \dim\mathcal K_\eta=2^t,
    \qquad
    \Pi_\eta
    =2^{-(M-t)}\sum_{G\in\mathcal G_\eta}G.
\]
Thus the constraints leave $t$ qubits' worth of freedom.
A pure state with nullity zero is called a
\emph{pure stabilizer state}: its Pauli constraints determine
a one-dimensional space, and hence determine the normalized
state up to a global phase.

The following decomposition makes the remaining freedom
explicit. A Clifford transformation can turn the $M-t$
independent constraints into the requirement that the last
$M-t$ qubits are in $|0\rangle$. The first $t$ qubits are then
unrestricted. This identifies the entire space
$\mathcal K_\eta$ with a $t$-qubit state space.

\begin{lemma}[Stabilizer-nullity decomposition
{\cite{beverland2020lower}}]
\label{lem:app-stabilizer-nullity-decomposition}
Let $|\eta\rangle$ be a pure $M$-qubit state with
$\nu_{\mathrm{st}}(\eta)=t$.
There exists $V\in\Cliff_M$ such that
\[
    V\mathcal G_\eta V^\dagger
    =\langle Z_{t+1},\ldots,Z_M\rangle.
\]
Consequently,
\[
\begin{aligned}
    V\mathcal K_\eta
    &=(\mathbb C^2)^{\otimes t}
      \otimes\operatorname{span}\{|0^{M-t}\rangle\},\\
    V\Pi_\eta V^\dagger
    &=I_{2^t}\otimes|0^{M-t}\rangle\langle0^{M-t}|.
\end{aligned}
\]
In particular, there is a pure $t$-qubit state $|\chi\rangle$
such that
\[
    V|\eta\rangle=|\chi\rangle\otimes|0^{M-t}\rangle,
    \qquad \mathsf S(\chi)=\{0\}.
\]
\end{lemma}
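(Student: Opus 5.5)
The plan is to build $V$ from a symplectic basis extension and the converse Clifford-realization fact \eqref{eq:app-clifford-symplectic-action}. Set $S:=\mathsf S(\eta)$, which is isotropic of dimension $M-t$. Choose $G_{t+1},\ldots,G_M\in\mathcal G_\eta$ whose labels form a basis of $S$. By the standard symplectic Gram--Schmidt argument over $\mathbb F_2$, using nondegeneracy of $\langle\cdot,\cdot\rangle_{\mathrm{sp}}$, we can extend $s_j:=[G_j]$ to a symplectic basis $\{e_1,\ldots,e_M,s_1,\ldots,s_M\}$ of $\mathbb F_2^{2M}$. This basis satisfies $\langle e_i,s_j\rangle_{\mathrm{sp}}=\delta_{ij}$, with all other pairings zero. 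The extension proceeds inductively: pick $e_M$ with $\langle e_M,s_M\rangle=1$ and $\langle e_M,s_j\rangle=0$ for $t<j<M$ (solvable since the $s_j$ are independent), then pass to the symplectic complement of $\operatorname{span}\{e_M,s_M\}$. Once $s_{t+1},\ldots,s_M$ are exhausted, complete the remaining $2t$-dimensional nondegenerate space arbitrarily.

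Next, let $F$ be the linear map sending this basis to the standard one, namely $s_j\mapsto[Z_j]$ and $e_j\mapsto[X_j]$. This $F$ preserves the symplectic form, so \eqref{eq:app-clifford-symplectic-action} gives $V_0\in\Cliff_M$ with $[V_0G_jV_0^\dagger]=[Z_j]$, that is, $V_0G_jV_0^\dagger=\epsilon_jZ_j$ with $\epsilon_j\in\{\pm1\}$. I would then fix the signs by composing with the Pauli $X_j$ (a Clifford) on each $j$ with $\epsilon_j=-1$; this maps $-Z_j$ to $Z_j$ and fixes the other $Z_k$. Call the result $V$. Conjugation is a group isomorphism, so $V\mathcal G_\eta V^\dagger$ is generated by $Z_{t+1},\ldots,Z_M$, using that $\mathcal G_\eta=\langle G_{t+1},\ldots,G_M\rangle$ by Appendix~\ref{app:IsotropicSubspaces}.

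The consequences are then formal. $V\mathcal K_\eta$ is the common $+1$ eigenspace of $Z_{t+1},\ldots,Z_M$, which is $(\mathbb C^2)^{\otimes t}\otimes|0^{M-t}\rangle$. Likewise $V\Pi_\eta V^\dagger=\prod_j(I+Z_j)/2=I\otimes|0^{M-t}\rangle\langle0^{M-t}|$. Since $|\eta\rangle\in\mathcal K_\eta$, we get $V|\eta\rangle=|\chi\rangle\otimes|0^{M-t}\rangle$ with $|\chi\rangle$ normalized.

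The step I expect to be delicate is showing $\mathsf S(\chi)=\{0\}$. Suppose $R\in\mathcal P_t$ and $R|\chi\rangle=\pm|\chi\rangle$. Then $\pm R\otimes I$ stabilizes $V|\eta\rangle$, so $V^\dagger(\pm R\otimes I)V\in\mathcal G_\eta$. Hence $\pm R\otimes I\in V\mathcal G_\eta V^\dagger=\langle Z_{t+1},\ldots,Z_M\rangle$. Every element of that group is supported on the last $M-t$ qubits, so $R\otimes I$ being in it forces $R=I$. This gives $\mathsf S(\chi)=\{0\}$.
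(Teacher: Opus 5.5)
Your proposal is correct and follows the paper's proof essentially step for step: you complete a basis of $\mathsf S(\eta)$ to a symplectic basis, realize the resulting symplectic map by a Clifford via \eqref{eq:app-clifford-symplectic-action}, correct the signs with $X$'s on the affected qubits, and then read off the eigenspace, the projector, and $\mathsf S(\chi)=\{0\}$ from the group identity. Your inductive symplectic completion just spells out a step the paper treats as standard, and your construction covers the case $t=M$ vacuously, whereas the paper handles it separately by taking $V=I$.
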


\begin{proof}
Choose generators $G_1,\ldots,G_{M-t}$ whose labels form a
basis of $\mathsf S(\eta)$. Since this subspace is
isotropic, we can complete these labels to a symplectic
basis. Mapping this basis to the standard symplectic
basis, with a suitable ordering, gives an invertible
linear map $F$ preserving the symplectic inner product
and satisfying
\[
    F([G_j])=[Z_{t+j}],
    \qquad j=1,\ldots,M-t.
\]
By \eqref{eq:app-clifford-symplectic-action}, there is
$V_0\in\Cliff_M$ implementing this map on Pauli labels.
Consequently,
\[
    V_0G_jV_0^\dagger=\varepsilon_jZ_{t+j},
    \qquad \varepsilon_j\in\{\pm1\}.
\]
Since $XZX=-Z$, the Clifford
\[
    V:=\left(\prod_{j:\,\varepsilon_j=-1}X_{t+j}\right)V_0
\]
corrects all signs, so that
$VG_jV^\dagger=Z_{t+j}$ for every $j$.
Because the $G_j$ generate $\mathcal G_\eta$, this proves
the claimed group identity.

The common $+1$ eigenspace of $Z_{t+1},\ldots,Z_M$
consists of arbitrary states on the first $t$ qubits
tensored with $|0^{M-t}\rangle$.
This gives the stated eigenspace and projector formulas.
Since $|\eta\rangle\in\mathcal K_\eta$, it also gives
the factorization of $V|\eta\rangle$.

Finally, suppose that a nonidentity Pauli $P$ satisfies
$P|\chi\rangle=\varepsilon|\chi\rangle$ for some
$\varepsilon\in\{\pm1\}$. Then
\[
    V^\dagger(\varepsilon P\otimes I)V
    \in\mathcal G_\eta.
\]
This contradicts the group identity, since
$\varepsilon P\otimes I$ acts nontrivially on the first
$t$ qubits, whereas every element of
$\langle Z_{t+1},\ldots,Z_M\rangle$ acts trivially there.
Thus $\mathsf S(\chi)=\{0\}$. If $t=M$, then $\mathsf S(\eta)=\{0\}$, and we may
simply take $V=I$ and $|\chi\rangle=|\eta\rangle$.
\end{proof}

The same proof applies to any stabilizer group
$\mathcal G\subseteq\mathcal P_M^\pm$. Writing
$S:=L_{\mathcal G}$ and $t:=M-\dim S$, it gives a Clifford
$V$ with
\[
    V\mathcal G V^\dagger
    =\langle Z_{t+1},\ldots,Z_M\rangle,
    \qquad
    V\mathcal K_{\mathcal G}
    =(\mathbb C^2)^{\otimes t}
      \otimes\operatorname{span}\{|0^{M-t}\rangle\}.
\]
With the signed group fixed, we also write
$\mathcal K_S:=\mathcal K_{\mathcal G}$, as in
Section~\ref{sec:revmag-clean}; $S$ alone does not specify
the signs of the constraints. The first $t$ qubits in these
coordinates are the \emph{logical qubits}, and $V$ is a
\emph{decoder}. The corresponding encoding is the isometry
\[
    |u\rangle\longmapsto
    V^\dagger\bigl(|u\rangle\otimes|0^{M-t}\rangle\bigr)
\]
onto $\mathcal K_S$. Thus every vector in $\mathcal K_S$
has a unique logical vector, although the original physical
qubits may be entangled. The condition
$\mathsf S(u)=\{0\}$ holds when $\mathcal G$ is the full
Pauli stabilizer group of the encoded state.
For $t=0$, the decomposition identifies pure stabilizer
states with Clifford images of $|0^M\rangle$.

A physical Pauli preserves $\mathcal K_S$ exactly when it
commutes with $\mathcal G$, or equivalently when its label
lies in $S^\perp$. If it anticommutes with a stabilizer,
it sends $\mathcal K_S$ into that stabilizer's $-1$
eigenspace. Moreover,
$(RG)|v\rangle=R|v\rangle$ for $G\in\mathcal G$ and
$|v\rangle\in\mathcal K_S$. We therefore define the
\emph{logical Pauli label space}
\[
    W_S:=S^\perp/S,
    \qquad
    \dim W_S=(M+t)-(M-t)=2t.
\]
Its elements are cosets $a+S=\{a+s:s\in S\}$,
$a\in S^\perp$. The symplectic form descends to $W_S$ by
\[
    \langle a+S,b+S\rangle_{\mathrm{sp}}
    :=\langle a,b\rangle_{\mathrm{sp}}.
\]
This is well-defined because $a,b\in S^\perp$, and
nondegenerate because
$S^\perp\cap(S^\perp)^\perp=S$.
For $\mathcal G=\mathcal G_\eta$, this is
$W_\eta:=\mathsf S(\eta)^\perp/\mathsf S(\eta)$.

\begin{lemma}[Pauli action on the stabilizer eigenspace]
\label{lem:app-stabilizer-pauli-action}
Let $\mathcal G$, $S$, $t$, and $V$ be as above, and let
$R\in\mathcal P_M^\pm$.
\begin{enumerate}[label=\textup{(\roman*)}]
    \item If $[R]\notin S^\perp$, then
    $\Pi_{\mathcal G}R\Pi_{\mathcal G}=0$.

    \item If $[R]\in S^\perp$, then $R$ preserves
    $\mathcal K_S$ and
    \[
        VRV^\dagger=\varepsilon_RP_R\otimes B_R,
        \qquad
        P_R\in\mathcal P_t,\quad
        \varepsilon_R\in\{\pm1\},\quad
        B_R\in\{I,Z\}^{\otimes(M-t)}.
    \]
    Its action after decoding is therefore
    \[
        VRV^\dagger
        \bigl(|u\rangle\otimes|0^{M-t}\rangle\bigr)
        =(\varepsilon_RP_R|u\rangle)
          \otimes|0^{M-t}\rangle.
    \]
    The map $[R]+S\mapsto[P_R]$ is a linear symplectic
    bijection from $W_S$ to $\mathbb F_2^{2t}$.
    Thus two preserving Paulis have the same action on
    $\mathcal K_S$ up to sign if and only if their labels
    belong to the same coset of $S$.
\end{enumerate}
\end{lemma}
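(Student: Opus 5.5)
The plan is to reduce everything to the standard frame provided by Lemma~\ref{lem:app-stabilizer-nullity-decomposition} (in its general-$\mathcal G$ form). There the decoder $V$ satisfies $V\mathcal G V^\dagger=\langle Z_{t+1},\ldots,Z_M\rangle$, $V\Pi_{\mathcal G}V^\dagger=I_{2^t}\otimes|0^{M-t}\rangle\langle0^{M-t}|$, and $V$ maps Pauli labels by a symplectic bijection $F$ (Appendix~\ref{app:PauliOpsandLabels}). Under $F$, $S$ goes to $S_0:=\operatorname{span}\{[Z_{t+1}],\ldots,[Z_M]\}$ and $S^\perp$ goes to $S_0^\perp$. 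So it suffices to prove both parts for $\mathcal G_0=\langle Z_{t+1},\ldots,Z_M\rangle$ and then conjugate back.

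For part (i), suppose $[R]\notin S^\perp$. Then some $G\in\mathcal G$ anticommutes with $R$. Since $G\Pi_{\mathcal G}=\Pi_{\mathcal G}G=\Pi_{\mathcal G}$, we get
\[
\Pi_{\mathcal G}R\Pi_{\mathcal G}=\Pi_{\mathcal G}GR\Pi_{\mathcal G}=-\Pi_{\mathcal G}RG\Pi_{\mathcal G}=-\Pi_{\mathcal G}R\Pi_{\mathcal G},
\]
so this operator vanishes. No change of frame is needed.

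For part (ii), write $R':=VRV^\dagger$. This is a signed Pauli, since $V$ is Clifford and $R$ is Hermitian, and $[R']\in S_0^\perp$ because $R'$ commutes with every $Z_{t+j}$. A Pauli commutes with $Z_{t+j}$ exactly when its factor on qubit $t+j$ lies in $\{I,Z\}$. Hence $R'=\varepsilon_RP_R\otimes B_R$ with $P_R\in\mathcal P_t$ and $B_R\in\{I,Z\}^{\otimes(M-t)}$. The sign is real because $R'$ is Hermitian and each tensor factor is Hermitian. Applying $R'$ to $|u\rangle\otimes|0^{M-t}\rangle$ and using $B_R|0^{M-t}\rangle=|0^{M-t}\rangle$ gives the stated decoded action. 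Undoing $V$ shows that $R$ preserves $\mathcal K_S$.

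For the bijection, define $\Phi([R]+S):=[P_R]$, that is, the projection of $F([R])$ onto the first-$t$-qubit coordinates. This map is linear on $S^\perp$, and it kills $S$ because $F(S)=S_0$ has trivial logical part. It is well defined on $W_S$, which follows from the same observation. It is surjective: any $P\in\mathcal P_t$ gives $R=V^\dagger(P\otimes I)V$, which is a signed Pauli commuting with $\mathcal G$. Since $\dim W_S=2t=\dim\mathbb F_2^{2t}$, surjectivity gives bijectivity.

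It remains to check that $\Phi$ preserves the form. The symplectic form is additive over qubits. The $B_R$ parts lie in $\{I,Z\}^{\otimes(M-t)}$, and any two such parts commute. So
\[
\langle F[R],F[R']\rangle_{\mathrm{sp}}=\langle[P_R],[P_{R'}]\rangle_{\mathrm{sp}},
\]
and $F$ itself preserves the form. The final claim then follows. Preserving Paulis act identically on $\mathcal K_S$ up to sign if and only if they have the same $P_R$ up to sign, because $P_R$ is determined by its action on all $|u\rangle$ and Paulis are a basis. By injectivity of $\Phi$, this happens if and only if their labels lie in the same coset of $S$.

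The main point of care is the well-definedness and injectivity of $\Phi$, which hinge on $F(S)=S_0$ exactly. This is where the group identity from Lemma~\ref{lem:app-stabilizer-nullity-decomposition} is used. I would verify it first, noting that labels ignore the signs fixed by $V$.
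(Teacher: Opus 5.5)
Your proposal is correct and follows essentially the same route as the paper: the anticommuting-stabilizer cancellation for (i), and for (ii) passing to the decoded frame, where $VRV^\dagger$ must carry $\{I,Z\}$ factors on the last $M-t$ qubits, then checking surjectivity via $V^\dagger(P\otimes I)V$ and form preservation via the pairwise commutation of those factors. The only minor variation is that you obtain injectivity of $[R]+S\mapsto[P_R]$ from surjectivity together with $\dim W_S=2t$, whereas the paper identifies the kernel of the label map on $S^\perp$ as $S$ directly.
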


\begin{proof}
For \textup{(i)}, choose $G\in\mathcal G$ anticommuting
with $R$. Since $G\Pi_{\mathcal G}=\Pi_{\mathcal G}$,
\[
    \Pi_{\mathcal G}R\Pi_{\mathcal G}
    =\Pi_{\mathcal G}GR\Pi_{\mathcal G}
    =-\Pi_{\mathcal G}RG\Pi_{\mathcal G}
    =-\Pi_{\mathcal G}R\Pi_{\mathcal G}.
\]
For \textup{(ii)}, the signed Pauli $VRV^\dagger$ commutes
with $Z_{t+1},\ldots,Z_M$, so its last $M-t$ factors are
$I$ or $Z$. These factors fix $|0^{M-t}\rangle$, giving
the stated logical action. Conjugation by $V$, followed by
restriction of binary labels to the first $t$ qubits, defines
a linear map $S^\perp\to\mathbb F_2^{2t}$ with kernel $S$.
It is surjective, since
$V^\dagger(P\otimes I)V$ realizes every $P\in\mathcal P_t$.
It preserves the symplectic form because the discarded
$I,Z$ factors commute pairwise. This proves the quotient
identification and the final assertion.
\end{proof}

In particular, with $\eta$ and $\chi$ as in
Lemma~\ref{lem:app-stabilizer-nullity-decomposition},
\[
    \langle\eta|R|\eta\rangle
    =\begin{cases}
        0,&[R]\notin\mathsf S(\eta)^\perp,\\[1mm]
        \varepsilon_R\langle\chi|P_R|\chi\rangle,
          &[R]\in\mathsf S(\eta)^\perp.
      \end{cases}
\]
The $4^t$ logical Pauli labels are generated by $2t$
symplectic basis vectors and their phase-free representatives
form a basis for operators on the $2^t$-dimensional logical
state space.

When representatives of the logical generators are supported
on a specified set, their coordinates can be realized by a
Clifford acting only on that set.

\begin{lemma}[Local realization of logical Pauli coordinates]
\label{lem:app-local-symplectic}
Let $t\ge0$ be an integer, and let
$\overline X_1,\overline Z_1,\ldots,
\overline X_t,\overline Z_t\in\mathcal P_M^\pm$
be $t$ pairs of signed Pauli operators supported on
$\Lambda\subseteq[M]$, satisfying
\[
\begin{aligned}
    \langle[\overline X_i],[\overline X_j]\rangle_{\mathrm{sp}}
    &=\langle[\overline Z_i],[\overline Z_j]\rangle_{\mathrm{sp}}=0,\\
    \langle[\overline X_i],[\overline Z_j]\rangle_{\mathrm{sp}}
    &=\delta_{ij},\qquad i,j\in[t].
\end{aligned}
\]
Then $t\le|\Lambda|$. For any distinct designated qubits
$\ell_1,\ldots,\ell_t\in\Lambda$, there is a Clifford $C$
supported on $\Lambda$ such that
\[
    C\overline X_jC^\dagger=X_{\ell_j},
    \qquad C\overline Z_jC^\dagger=Z_{\ell_j},
    \qquad j\in[t].
\]
\end{lemma}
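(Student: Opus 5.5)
We argue entirely at the level of binary labels and then lift to a Clifford. Write $L:=|\Lambda|$ and restrict all labels to $\mathbb F_2^{2L}$, the label space of the qubits in $\Lambda$. Since every $\overline X_j,\overline Z_j$ is supported on $\Lambda$, their restricted labels $x_j:=[\overline X_j]|_\Lambda$ and $z_j:=[\overline Z_j]|_\Lambda$ have the same symplectic products as the full labels, because qubits outside $\Lambda$ contribute zero.

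\textbf{Step 1: the bound $t\le L$.} The relations $\langle x_i,z_j\rangle_{\mathrm{sp}}=\delta_{ij}$ together with the vanishing $x$--$x$ and $z$--$z$ products force the $2t$ vectors $x_1,z_1,\dots,x_t,z_t$ to be linearly independent. Indeed, pairing a vanishing linear combination with $z_j$ kills the coefficient of $x_j$, and pairing it with $x_j$ kills the coefficient of $z_j$. Hence $2t\le 2L$.

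\textbf{Step 2: a symplectic map on $\Lambda$.} Let $U:=\operatorname{span}\{x_j,z_j\}$. This is a $2t$-dimensional subspace on which the form is nondegenerate, so $\mathbb F_2^{2L}=U\oplus U^\perp$, and $U^\perp$ is itself nondegenerate of dimension $2(L-t)$. Choose any symplectic basis of $U^\perp$; this is the standard Gram--Schmidt-type procedure over $\mathbb F_2$. Together with the pairs $(x_j,z_j)$ it gives a symplectic basis of $\mathbb F_2^{2L}$.

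Now define $F$ linearly by sending $x_j\mapsto[X_{\ell_j}]$ and $z_j\mapsto[Z_{\ell_j}]$, and sending the basis of $U^\perp$ to the $X,Z$ labels of the remaining $L-t$ qubits of $\Lambda\setminus\{\ell_1,\dots,\ell_t\}$, in matched pairs. This $F$ is an invertible linear map that preserves the symplectic form.

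\textbf{Step 3: lifting and fixing signs.} By \eqref{eq:app-clifford-symplectic-action} applied on the $L$ qubits of $\Lambda$, there is a Clifford $C_0$ on $\Lambda$ with $[C_0QC_0^\dagger]=F([Q])$. Tensor it with the identity on $[M]\setminus\Lambda$. Then
\[
C_0\overline X_jC_0^\dagger=\alpha_jX_{\ell_j},\qquad C_0\overline Z_jC_0^\dagger=\beta_jZ_{\ell_j},
\]
with signs $\alpha_j,\beta_j\in\{\pm1\}$. These are genuine signs rather than $\pm i$, because the operators are Hermitian and Clifford conjugation preserves Hermiticity.

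To correct the signs, conjugate by the single-qubit Pauli $P_j\in\{I,X,Y,Z\}$ on qubit $\ell_j$. Use $Z$ to flip $X$ alone, $X$ to flip $Z$ alone, and $Y$ to flip both. Set $C:=(\prod_jP_j)C_0$; it is still a Clifford supported on $\Lambda$.

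The main point needing care is Step 2: producing a symplectic complement basis and checking that $F$ is well defined and symplectic. This is standard linear algebra. The edge cases $t=0$ and $t=L$ are trivial specializations of the same argument.
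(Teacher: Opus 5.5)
Your proposal is correct and follows the paper's proof essentially step for step. Both arguments restrict the labels to $\Lambda$, use the pairings to get independence and nondegeneracy of $U$ (hence $t\le|\Lambda|$), complete with a symplectic basis of $U^\perp$, lift via the Clifford--symplectic correspondence to a $C_0$ on $\Lambda$, and fix the residual signs with a Pauli on each $\ell_j$. Your choice of $Z$, $X$, or $Y$ there acts by conjugation exactly as the paper's factor $X_{\ell_j}^{b_j}Z_{\ell_j}^{a_j}$ does.
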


\begin{proof}
Restrict the labels to $\mathbb F_2^{2|\Lambda|}$.
The displayed pairings show that these $2t$ vectors are
independent and span a nondegenerate symplectic subspace
$U$. Hence $t\le|\Lambda|$, and a symplectic basis of
$U^\perp$ completes them to one of the full label space.
The Clifford--symplectic correspondence gives a Clifford
$C_0$ supported on $\Lambda$ with
\[
    C_0\overline X_jC_0^\dagger=(-1)^{a_j}X_{\ell_j},
    \qquad
    C_0\overline Z_jC_0^\dagger=(-1)^{b_j}Z_{\ell_j},
    \qquad a_j,b_j\in\{0,1\}.
\]
Set $Q:=\prod_{j=1}^tX_{\ell_j}^{b_j}Z_{\ell_j}^{a_j}$
and $C:=QC_0$. Conjugation by $Q$ corrects both signs
independently for each $j$, giving the required identities.
\end{proof}

We next record the nullity properties used in the paper.
Clifford invariance and tensor additivity give
\cite[Section~2.1]{beverland2020lower}
\[
\begin{aligned}
    \nu_{\mathrm{st}}(C\eta)&=\nu_{\mathrm{st}}(\eta),
        \qquad C\in\Cliff_M,\\
    \nu_{\mathrm{st}}(\eta\otimes\xi)
        &=\nu_{\mathrm{st}}(\eta)+\nu_{\mathrm{st}}(\xi).
\end{aligned}
\]

Let $\operatorname{conv}(\PStab_1)$ denote the convex hull of the pure one-qubit stabilizer states.

\begin{lemma}[Nullity under Pauli rotations and conditioning]
\label{lem:app-nullity-conditioning}
Let $|\psi\rangle$ be a pure $M$-qubit state.
\begin{enumerate}[label=\textup{(\roman*)}]
    \item For every $P\in\mathcal P_M^\pm$ and
    $\theta\in\mathbb R$,
    \[
        \nu_{\mathrm{st}}(e^{-i\theta P}\psi)
        \le\nu_{\mathrm{st}}(\psi)+1.
    \]
    In particular, a $T$-gate increases nullity by at most one.

    \item Measure one qubit in a Pauli basis, as in the
    Pauli POVM defined in
    Section~\ref{subsubsec:pauli-povm}.
    Condition on any outcome of positive probability
    and discard the measured qubit. The remaining qubits
    are then in a pure state $|\psi'\rangle$ satisfying
    \[
        \nu_{\mathrm{st}}(\psi')
        \le\nu_{\mathrm{st}}(\psi).
    \]

    \item For the one-qubit reduced density matrices
    $\psi_i:=\Tr_{[M]\setminus\{i\}}|\psi\rangle\langle\psi|$,
    \[
        \#\{i\in[M]:\psi_i\notin\operatorname{conv}(\PStab_1)\}
        \le\nu_{\mathrm{st}}(\psi).
    \]
\end{enumerate}
\end{lemma}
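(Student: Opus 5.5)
The plan is to prove the three parts separately, working throughout with the Pauli stabilizer group $\mathcal G_\psi$ and the decomposition of Lemma~\ref{lem:app-stabilizer-nullity-decomposition}.

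\emph{Part (i).} Let $U=e^{-i\theta P}=\cos\theta\,I-i\sin\theta\,P$. Consider the subgroup
\[
\mathcal G'=\{G\in\mathcal G_\psi: GP=PG\}.
\]
It has index at most two in $\mathcal G_\psi$, since commutation with $P$ is a homomorphism to $\{\pm1\}$; hence $\dim L_{\mathcal G'}\ge \dim\mathsf S(\psi)-1$. Every $G\in\mathcal G'$ commutes with $U$, so $G\,U\ket\psi=U\,G\ket\psi=U\ket\psi$. Therefore $\mathcal G'\subseteq\mathcal G_{U\psi}$, which gives $\dim\mathsf S(U\psi)\ge\dim\mathsf S(\psi)-1$. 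For the $T$-gate, note that $T=e^{i\pi/8}e^{-i(\pi/8)Z}$ up to global phase, and global phases do not affect $\mathsf S$.

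\emph{Part (ii).} By Clifford invariance, first conjugate by a single-qubit Clifford on the measured qubit $j$ so that the measured basis becomes $Z_j$. Conditioning on outcome $s$ with positive probability gives the state
\[
\ket{\psi_s}\propto \tfrac{I+sZ_j}{2}\ket\psi=\ket{s}_j\otimes\ket{\psi'}.
\]
\begin{itemize}
\item If $G\in\mathcal G_\psi$ commutes with $Z_j$, then $G$ commutes with the projector, so $G\in\mathcal G_{\psi_s}$.
\item $sZ_j\in\mathcal G_{\psi_s}$.
\end{itemize}
Hence $\mathsf S(\psi_s)$ contains $\operatorname{span}\bigl(\{a\in\mathsf S(\psi):\langle a,[Z_j]\rangle_{\mathrm{sp}}=0\}\cup\{[Z_j]\}\bigr)$. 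If $[Z_j]\in\mathsf S(\psi)$, this contains $\mathsf S(\psi)$ itself. Otherwise the commuting subspace has codimension at most one, and adjoining $[Z_j]$, which lies outside it, restores the dimension. In both cases $\nu_{\mathrm{st}}(\psi_s)\le\nu_{\mathrm{st}}(\psi)$. Since $\ket{\psi_s}=\ket{s}\otimes\ket{\psi'}$, tensor additivity gives $\nu_{\mathrm{st}}(\psi')=\nu_{\mathrm{st}}(\psi_s)-\nu_{\mathrm{st}}(\ket s)=\nu_{\mathrm{st}}(\psi_s)$.

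\emph{Part (iii).} The idea is that a qubit $i$ whose single-qubit label space $\mathbb F_2^2$ interacts trivially with the logical space carries a stabilizer-like marginal. Let
\[
W_i=\{[R]:\operatorname{supp}R\subseteq\{i\}\}\cong\mathbb F_2^2
\]
and call $i$ good if $W_i\cap\mathsf S(\psi)^\perp\subseteq\mathsf S(\psi)$. For a good $i$, any single-site Pauli $a_i$ either lies outside $\mathsf S^\perp$, so $\langle a_i\rangle=0$ by Lemma~\ref{lem:app-stabilizer-pauli-action}(i), or lies in $\mathsf S$, so $\langle a_i\rangle=\pm1$. The marginal $\psi_i$ therefore has Bloch vector entries in $\{0,\pm1\}$ with norm at most one, so it is either $I/2$ or a Pauli eigenstate, and in both cases lies in $\operatorname{conv}(\PStab_1)$.

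It remains to count bad sites. For each bad $i$, choose $a^{(i)}\in W_i\cap\mathsf S^\perp$ with nonzero image in $W_S=\mathsf S^\perp/\mathsf S$. It suffices to show that the images of the bad sites span an isotropic subspace of $W_S$ of dimension equal to their number; since $\dim W_S=2t$ and isotropic subspaces of $W_S$ have dimension at most $t$, this gives at most $t$ bad sites.
\begin{itemize}
\item \emph{Isotropy.} Distinct sites have commuting single-site Paulis, and the symplectic form on $W_S$ is computed from representatives.
\item \emph{Independence.} Suppose a nontrivial sum $\sum_{i\in F}a^{(i)}$ lies in $\mathsf S$. Fix $j\in F$. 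Any $b\in W_j$ with $\langle b,a^{(j)}\rangle_{\mathrm{sp}}=1$ commutes with $a^{(i)}$ for $i\ne j$, so $\langle b,\sum_{i\in F}a^{(i)}\rangle_{\mathrm{sp}}=1$. This forces $b\notin\mathsf S^\perp$. Independence then has to be derived from this fact, and this is the step I expect to be the main obstacle.
\end{itemize}
If independence fails, a fallback is to bound the dimension of $\mathsf S\cap\bigoplus_{i\in F}W_i$ and compare it with $\dim(\mathsf S^\perp\cap\bigoplus_{i\in F}W_i)$ via the identity $\dim(A^\perp\cap B)=\dim B-\dim(A\cap B^\perp)$. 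Applied with $B=\bigoplus_{i\in F}W_i$, which is symplectic, this should yield $|F|\le t$.
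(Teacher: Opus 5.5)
Your proofs of (i) and (ii) are correct and essentially the paper's. The paper works with $(I+sP)|\psi\rangle$ directly instead of first rotating the measured axis to $Z_j$, which makes no difference.

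Part (iii) has a genuine gap, and it goes beyond the independence step you flagged: the claim that the number of bad sites is at most $t$ is false. Take $|\psi\rangle=\cos\phi\,|{+}{+}\rangle+\sin\phi\,|{-}{-}\rangle$ with generic $\phi$, say $\phi=\pi/8$. Then $\mathsf S(\psi)=\{0,[X_1X_2]\}$, so $t=1$. Yet both sites are bad in your sense, since $[X_1],[X_2]\in\mathsf S(\psi)^\perp\setminus\mathsf S(\psi)$. Their images in $W_S$ coincide because $[X_1]+[X_2]\in\mathsf S(\psi)$. So independence fails, and no dimension count can give $|F|\le t$ for your bad set; that includes your fallback via $B=\bigoplus_{i\in F}W_i$. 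Both marginals here equal $\cos^2\phi\,|+\rangle\langle+|+\sin^2\phi\,|-\rangle\langle-|$ and lie in $\operatorname{conv}(\PStab_1)$. Your good-site criterion only captures Bloch vectors with entries in $\{0,\pm1\}$, so it misses states like these.

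The missing observation is the one the paper uses. Suppose some $G\in\mathcal G_\psi$ acts as a nonidentity $a$ at site $i$. Then the other two single-site Paulis at $i$ anticommute with $G$ and have zero expectation. Hence $\psi_i=(I+\alpha a)/2$ with $|\alpha|\le1$, a mixture of the two $a$-eigenstates, whether or not $[a_i]\in\mathsf S(\psi)$. So every exceptional site lies in the set $F$ of sites on which all stabilizers act trivially. Then $\mathsf S(\psi)$ is an isotropic subspace supported on $F^c$, which gives $M-t=\dim\mathsf S(\psi)\le M-|F|$. In your language: $\bigoplus_{i\in F}W_i\subseteq\mathsf S(\psi)^\perp$ and meets $\mathsf S(\psi)$ only in $0$. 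It therefore embeds in $W_S$ as a $2|F|$-dimensional subspace, and $2|F|\le 2t$. This repairs your framework once "bad" is redefined as "every stabilizer acts trivially at $i$."
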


\begin{proof}
Let $S:=\mathsf S(\psi)$ and $k:=\dim S$.
For \textup{(i)}, every $G\in\mathcal G_\psi$ commuting
with $P$ also stabilizes $e^{-i\theta P}|\psi\rangle$,
since
\[
    Ge^{-i\theta P}|\psi\rangle
    =e^{-i\theta P}G|\psi\rangle
    =e^{-i\theta P}|\psi\rangle.
\]
The labels of these operators form $S\cap[P]^\perp$,
whose dimension is at least $k-1$, because commutation
with $P$ imposes at most one linear constraint.
Thus the output nullity is at most
$M-(k-1)=\nu_{\mathrm{st}}(\psi)+1$.
The $T$-gate claim follows from
$T_i=e^{i\pi/8}e^{-i\pi Z_i/8}$.

For \textup{(ii)}, the normalized state conditional on
the measurement outcome $s$, before discarding the
measured qubit, is
\[
    |\phi\rangle
    :=\frac{(I+sP)|\psi\rangle}
            {\|(I+sP)|\psi\rangle\|},
    \qquad P=a_i,\quad a\in\{X,Y,Z\},\quad s\in\{\pm1\}.
\]
If $[P]\in S$, then $|\psi\rangle$ is an eigenvector
of $P$, and the only positive-probability outcome
leaves the state unchanged.

If $[P]\notin S$, every $G\in\mathcal G_\psi$ commuting
with $P$ also stabilizes $|\phi\rangle$, since $G$
commutes with $I+sP$. The labels of these operators
form $S\cap[P]^\perp$, whose dimension is at least $k-1$.
Moreover, $P|\phi\rangle=s|\phi\rangle$, so $sP$ is
a stabilizer of $|\phi\rangle$. Its label $[P]$ lies
outside $S$ and therefore gives an additional
independent stabilizer.

In either case, $\dim\mathsf S(\phi)\ge k$, so
$\nu_{\mathrm{st}}(\phi)\le\nu_{\mathrm{st}}(\psi)$; see
~\cite[Proposition~2.3]{beverland2020lower}.
The conditional state is the tensor product of the
measured qubit's pure Pauli eigenstate and the pure
state $|\psi'\rangle$ on the remaining qubits.
The Pauli eigenstate has nullity zero, so tensor
additivity gives
$\nu_{\mathrm{st}}(\psi')=\nu_{\mathrm{st}}(\phi)$.

For \textup{(iii)}, let $F$ be the sites on which every
element of $\mathcal G_\psi$ acts trivially. If $i\notin F$,
choose $G\in\mathcal G_\psi$ with a nonidentity factor
$a\in\{X,Y,Z\}$ at $i$. For each $b \in \{X,Y,Z\} \setminus \{a\}$, the operator $b_i$ anticommutes with $G$, so
\[
    \langle\psi|b_i|\psi\rangle
    =\langle\psi|Gb_iG|\psi\rangle
    =-\langle\psi|b_i|\psi\rangle=0.
\]
Hence, for some $\alpha\in[-1,1]$,
\[
    \psi_i
    =\frac{I+\alpha a}{2}
    =\frac{1+\alpha}{2}\frac{I+a}{2}
     +\frac{1-\alpha}{2}\frac{I-a}{2}.
\]
Since $(I+a)/2$ and $(I-a)/2$ are pure one-qubit
stabilizer states, $\psi_i\in\operatorname{conv}(\PStab_1)$.
All exceptional sites therefore lie in $F$. Restricting
$S$ to $F^c$ identifies it with an isotropic subspace on
$M-|F|$ qubits, so $k\le M-|F|$. Consequently,
\[
    \#\{i:\psi_i\notin\operatorname{conv}(\PStab_1)\}
    \le|F|\le M-k=\nu_{\mathrm{st}}(\psi).
\]
This is the support-counting argument of
\cite[Lemma~9]{coble2023hamiltonians}, expressed in terms
of nullity.
\end{proof}

Part~\textup{(iii)} applies to the one-qubit marginals of
$|\psi\rangle$ itself, without first applying a Clifford
transformation. Even if every $\psi_i$ belongs to
$\operatorname{conv}(\PStab_1)$, the full state $|\psi\rangle$
need not be a stabilizer state. Together with Clifford invariance, these properties show
that a Clifford+$T$ circuit starting from a pure stabilizer
state and using at most $s$ $T$-gates has nullity at most $s$,
even after any sequence of positive-probability
single-qubit Pauli postselections. For every sequence of measurement outcomes with positive probability, at most $s$ of the remaining one-qubit marginals of the conditional state lie outside
$\operatorname{conv}(\PStab_1)$, independently of the number
of Clifford gates or pure stabilizer ancillas.

\subsection{\texorpdfstring{Exact workspace compression for Clifford+$T$
circuits}{Exact workspace compression for Clifford+T circuits}}
\label{app:t-count-workspace}

The fixed-projection representation in
Lemma~\ref{lem:t-count-fixed-projection} suffices for the
profile-entropy bound. The preparation itself also admits
a bounded-workspace representation with no increase in
$T$-count.

Let $\CliffT^{\mathrm{pure}}_{M,t}$ be the rank-one projectors
onto states prepared from $|0^M\rangle$ using at most $t$
$T$-gates and unrestricted Clifford operations. Then
\[
    \CliffT^{\mathrm{anc}}_{n,t}
    =\bigcup_{M\ge n}\left\{
        \Tr_{\{n+1,\ldots,M\}}\omega:
        \omega\in\CliffT^{\mathrm{pure}}_{M,t}
      \right\}.
\]
For $P\in\mathcal P_M^\pm$, write
$R(P):=e^{-i\pi P/8}$, so that
$T_i=e^{i\pi/8}R(Z_i)$.

\begin{lemma}[Pauli-rotation normal form and nullity cost]
\label{lem:t-count-nullity-cost}
A projector $|\Psi\rangle\langle\Psi|$ belongs to
$\CliffT^{\mathrm{pure}}_{M,t}$ if and only if, up to
global phase,
\[
    |\Psi\rangle=R(P_s)\cdots R(P_1)|\phi\rangle,
    \qquad 0\le s\le t,
\]
where $|\phi\rangle$ is a pure stabilizer state and
$P_j\in\mathcal P_M^\pm\setminus\{\pm I\}$.
The empty product is the identity. Every such state has
$\nu_{\mathrm{st}}(\Psi)\le t$.
\end{lemma}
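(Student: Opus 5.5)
The plan is to prove the two directions of the normal form separately, by pushing the $T$-gates through the Clifford blocks, and then to get the nullity bound from Lemma~\ref{lem:app-nullity-conditioning}\textup{(i)} together with Clifford invariance.

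\emph{Forward direction.} Take a projector in $\CliffT^{\mathrm{pure}}_{M,t}$, say
\[
    |\Psi\rangle=C_sT_{i_s}C_{s-1}\cdots T_{i_1}C_0|0^M\rangle,
    \qquad s\le t .
\]
We substitute $T_{i_j}=e^{i\pi/8}R(Z_{i_j})$ and discard the global phases. Next, set $D_j:=C_sC_{s-1}\cdots C_j$ for $j=1,\ldots,s$. Since $C R(P) C^\dagger = R(CPC^\dagger)$ for any Clifford $C$, and $CPC^\dagger\in\mathcal P_M^\pm$, we may commute every Clifford block to the right. This gives
\[
    |\Psi\rangle=R(P_s)\cdots R(P_1)\,|\phi\rangle,
    \qquad
    P_j:=D_jZ_{i_j}D_j^\dagger,
    \qquad
    |\phi\rangle:=C_s\cdots C_0|0^M\rangle .
\]
Here $|\phi\rangle$ is a pure stabilizer state. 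Each $P_j$ lies in $\mathcal P_M^\pm$ and differs from $\pm I$, because conjugation is injective and $Z_{i_j}\ne\pm I$. The ordering of the rotations has to be tracked carefully: after commuting, the rotation coming from the last $T$-gate is the leftmost factor. Checking that this bookkeeping produces exactly the displayed product is the only real point of care in this direction.

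\emph{Converse direction.} Suppose $|\Psi\rangle=R(P_s)\cdots R(P_1)|\phi\rangle$ with $|\phi\rangle=C_0|0^M\rangle$. For each $j$, choose a Clifford $W_j$ with $W_jZ_1W_j^\dagger=P_j$. Such a $W_j$ exists by the Clifford--symplectic correspondence \eqref{eq:app-clifford-symplectic-action}: the label $[P_j]$ is nonzero, so it can be completed to a symplectic basis, and a sign mismatch is corrected by conjugating with $X_1$, as in the proof of Lemma~\ref{lem:app-stabilizer-nullity-decomposition}. Then
\[
    R(P_j)=W_jR(Z_1)W_j^\dagger=e^{-i\pi/8}\,W_jT_1W_j^\dagger .
\]
Substituting this for every $j$ and merging adjacent Cliffords yields a circuit with exactly $s\le t$ $T$-gates, up to a global phase. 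The main obstacle here is the existence of $W_j$ with the correct sign, and the argument above handles it.

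\emph{Nullity bound.} We have $\nu_{\mathrm{st}}(\phi)=0$ because $|\phi\rangle$ is a stabilizer state. By Lemma~\ref{lem:app-nullity-conditioning}\textup{(i)}, each rotation $R(P_j)=e^{-i(\pi/8)P_j}$ raises the nullity by at most one. Global phases leave the Pauli stabilizer space unchanged. Hence $\nu_{\mathrm{st}}(\Psi)\le s\le t$.
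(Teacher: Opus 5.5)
Your proof is correct and follows the paper's own argument. The paper likewise pushes the Clifford blocks to the right via $CR(P)=R(CPC^\dagger)C$, realizes each $R(P_j)$ as $e^{-i\pi/8}W_jT_1W_j^\dagger$ for a Clifford $W_j$ with $W_jZ_1W_j^\dagger=P_j$, and iterates Lemma~\ref{lem:app-nullity-conditioning}\textup{(i)} from $\nu_{\mathrm{st}}(\phi)=0$. Your explicit formula $P_j=D_jZ_{i_j}D_j^\dagger$ and the sign correction for $W_j$ simply spell out details the paper leaves implicit.
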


\begin{proof}
Replace each $T_i$ by $R(Z_i)$ up to phase and move
Clifford factors to the right using
$CR(P)=R(CPC^\dagger)C$; see
\cite[Proposition~1]{gosset2013algorithm}.
Conversely, every signed nonidentity Hermitian Pauli is
$P=VZ_1V^\dagger$ for a Clifford $V$, so
$R(P)=e^{-i\pi/8}VT_1V^\dagger$ uses one $T$-gate.
The nullity bound follows by iterating
Lemma~\ref{lem:app-nullity-conditioning}\textup{(i)}
from $\nu_{\mathrm{st}}(\phi)=0$.
\end{proof}

For $t=0$, the following compression bound follows from
the bipartite stabilizer normal form of Fattal et al.\
\cite[Theorem~1]{fattal2004entanglement}.
We extend this bound to Clifford+$T$ circuits by retaining
the stabilizers supported on the discarded ancillas
that commute with every Pauli rotation.

\begin{lemma}[Exact compression of discardable workspace]
\label{lem:t-count-workspace-compression}
For integers $n\ge1$ and $t\ge0$, every state in
$\CliffT^{\mathrm{anc}}_{n,t}$ can be prepared using at most
$t$ $T$-gates on at most $2n+t$ total qubits. Equivalently,
\[
    \CliffT^{\mathrm{anc}}_{n,t}
    =\left\{
        \Tr_{\{n+1,\ldots,2n+t\}}\omega:
        \omega\in\CliffT^{\mathrm{pure}}_{2n+t,t}
      \right\}.
\]
\end{lemma}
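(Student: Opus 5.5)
Start from Lemma~\ref{lem:t-count-nullity-cost}. Any state in $\CliffT^{\mathrm{anc}}_{n,t}$ is $\Tr_{\{n+1,\ldots,M\}}|\Psi\rangle\langle\Psi|$ with
\[
    |\Psi\rangle=R(P_s)\cdots R(P_1)|\phi\rangle,
\]
where $|\phi\rangle\in\Stab_M$ and $s\le t$. Write $A=[n]$ for the retained qubits and $B=\{n+1,\ldots,M\}$ for the discarded ones, and assume $M>2n+t$, since otherwise we can simply pad. The aim is to locate a Clifford $W_B$ acting only on $B$ that disentangles all but at most $n+t$ qubits of $B$ into a fixed product state, while commuting appropriately with the rotations. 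Since unitaries on $B$ do not change the reduced state on $A$, we may then delete the disentangled qubits.

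\emph{Step 1: choose the $B$-constraints that survive the rotations.} Let $\mathcal G_\phi$ be the stabilizer group of $|\phi\rangle$. Let $\mathcal H\subseteq\mathcal G_\phi$ be the subgroup of elements that act trivially on $A$ and commute with every $P_j$. The elements of $\mathcal G_\phi$ supported on $B$ form a subgroup whose label space has dimension at least $|B|-n$; this is the Fattal et al.\ bipartite normal form, and it follows from the fact that $\mathsf S(\phi)$ has dimension $M$ while its projection to $A$ is at most $2n$-dimensional isotropic-plus-paired. Commuting with each $P_j$ imposes one linear condition, so $\dim L_{\mathcal H}\ge |B|-n-s$. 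Every $H\in\mathcal H$ commutes with each $R(P_j)$, so it stabilizes $|\Psi\rangle$.

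\emph{Step 2: decode $\mathcal H$ locally on $B$.} Apply Lemma~\ref{lem:app-stabilizer-nullity-decomposition} in its general form, for stabilizer groups, to $\mathcal H$ restricted to $B$. This produces a Clifford $W_B$ on $B$ with
\[
    W_B\mathcal H W_B^\dagger=\langle Z_{b}:b\in B'\rangle,
\]
where $B'\subseteq B$ and $|B'|=k\ge|B|-n-s$. It follows that $(I_A\otimes W_B)|\Psi\rangle=|\Psi'\rangle\otimes|0^{B'}\rangle$. Next, each conjugated rotation axis $W_BP_jW_B^\dagger$ commutes with every $Z_b$ for $b\in B'$, so its factors on $B'$ lie in $\{I,Z\}$. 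The same holds for the stabilizer generators of $(I\otimes W_B)|\phi\rangle$ after multiplying by elements of $\langle Z_b\rangle$. Acting on $|0^{B'}\rangle$, these $Z$ factors contribute only signs.

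\emph{Step 3: rebuild the circuit on the remaining qubits.} Since $\mathcal H$ stabilizes $|\phi\rangle$, we also have
\[
    (I\otimes W_B)|\phi\rangle=|\phi'\rangle\otimes|0^{B'}\rangle
\]
with $|\phi'\rangle$ a stabilizer state on the $M-k\le 2n+s$ qubits outside $B'$. Replacing each axis $P_j$ by its signed restriction $P_j'$ to the complement of $B'$ (the sign comes from the $Z$ factors on $|0\rangle$) gives
\[
    |\Psi'\rangle=R(P_s')\cdots R(P_1')|\phi'\rangle.
\]
If some $P_j'=\pm I$, the corresponding rotation is a global phase and can be dropped. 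Lemma~\ref{lem:t-count-nullity-cost} then shows that $|\Psi'\rangle$ lies in $\CliffT^{\mathrm{pure}}_{M-k,t}$. Tracing $B\setminus B'$ from $|\Psi'\rangle$ gives the same reduced state on $A$ as tracing $B$ from $|\Psi\rangle$. Padding with $|0\rangle$ qubits brings the width to exactly $2n+t$, which gives the inclusion $\subseteq$. The reverse inclusion is immediate from the definition of $\CliffT^{\mathrm{anc}}_{n,t}$.

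The main obstacle is the dimension count in Step~1, namely $\dim\{\text{$B$-supported stabilizers of }\phi\}\ge|B|-n$. My plan is to use the map $\mathsf S(\phi)\to\mathbb F_2^{2n}$ given by restriction to $A$. Its kernel is exactly the set of $B$-supported labels. The image has dimension at most $2n$, but that bound only yields $|B|-n$ after noting $\dim\mathsf S(\phi)=M=n+|B|$, so the kernel has dimension at least $n+|B|-2n=|B|-n$. This works cleanly. The place that still needs care is the sign bookkeeping in Step~2.
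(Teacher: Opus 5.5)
Your proposal is correct and follows the paper's proof essentially step for step. It uses the same subspace of $B$-supported stabilizer labels commuting with every $P_j$, the same count $\dim\ge M-2n-s$ from the $2n$ support constraints and $s$ commutation constraints, a decoder acting only on the discarded qubits via the stabilizer-group form of Lemma~\ref{lem:app-stabilizer-nullity-decomposition}, and the fact that $R(P_j'\otimes Z^{z_j})$ acts as $R(P_j')$ on vectors $|\xi\rangle\otimes|0^k\rangle$. One small correction: the $Z$ factors act as $+1$ on $|0\rangle$ and contribute no sign, so the sign of $P_j'$ is simply inherited from the signed Pauli $W_BP_jW_B^\dagger$.
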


\begin{proof}
Let $A=[n]$ index the retained qubits, and let $E$ index
the $r$ ancillary qubits discarded at the output.
By Lemma~\ref{lem:t-count-nullity-cost}, a preparation on
$M=n+r$ qubits has the form
\[
    \rho_A=\Tr_E|\Psi\rangle\langle\Psi|,
    \qquad
    |\Psi\rangle=R(P_s)\cdots R(P_1)|\phi\rangle,
    \qquad s\le t,
\]
with $\phi$ a pure stabilizer state. Consider the stabilizers of $\phi$ supported on $E$
that commute with each $P_j$. Their labels form the
subspace
\[
    L:=\{(x,z)\in\mathsf S(\phi):x_A=z_A=0\}
       \cap\bigcap_{j=1}^s[P_j]^\perp.
\]
Since $\dim\mathsf S(\phi)=M$, the $2n$ support constraints
and $s$ commutation constraints give
\begin{equation}
\label{eq:t-count-environment-stabilizers}
    d:=\dim L\ge M-2n-s=r-n-s.
\end{equation}
Choose signed stabilizers $G_1,\ldots,G_d$ with labels
forming a basis of $L$. Apply the stabilizer-group
version of Lemma~\ref{lem:app-stabilizer-nullity-decomposition}
within $E$ to obtain a Clifford $V=I_A\otimes V_E$
acting only on the qubits in $E$ and sending these
generators to $Z_{e_1},\ldots,Z_{e_d}$.
Writing $E_0:=\{e_1,\ldots,e_d\}$ and $E':=E\setminus E_0$,
we have
\[
    V|\phi\rangle=|\phi'\rangle_{AE'}\otimes|0^d\rangle_{E_0},
    \qquad
    VP_jV^\dagger=P'_j\otimes Z^{z_j},
\]
where $\phi'$ is a pure stabilizer state,
$P'_j$ is a signed Hermitian Pauli on $AE'$, and
$Z^{z_j}\in\{I,Z\}^{\otimes d}$.
The first assertion follows from the $+1$ constraints
and nullity additivity. The second follows because each
$P_j$ commutes with all $G_\ell$.

Since $Z^{z_j}|0^d\rangle=|0^d\rangle$, for every vector
$|\xi\rangle$ on $AE'$,
\[
    R(VP_jV^\dagger)(|\xi\rangle\otimes|0^d\rangle)
    =R(P'_j)|\xi\rangle\otimes|0^d\rangle.
\]
Applying this successively gives
\[
    V|\Psi\rangle=|\Psi'\rangle_{AE'}\otimes|0^d\rangle_{E_0},
    \qquad
    |\Psi'\rangle:=R(P'_s)\cdots R(P'_1)|\phi'\rangle.
\]
As $V$ acts only on the discarded qubits,
$\rho_A=\Tr_{E'}|\Psi'\rangle\langle\Psi'|$.
The remaining width satisfies
\[
    M-d\le2n+s\le2n+t.
\]
Since $\phi'$ is a pure stabilizer state, it can be
prepared using only Clifford gates. Rotations with
$P'_j=\pm I$ change only the global phase and can be
omitted. By Lemma~\ref{lem:t-count-nullity-cost}, each
remaining rotation can be implemented using one $T$-gate
and Clifford gates. This prepares $\rho_A$ on at most
$2n+t$ qubits using at most $s\le t$ $T$-gates.

Adding unused ancillary qubits initialized in $|0\rangle$
gives a preparation on exactly $2n+t$ qubits without
changing $\rho_A$ or the $T$-count. Conversely, every
$n$-qubit state obtained from such a preparation by
discarding the ancillary qubits belongs to
$\CliffT^{\mathrm{anc}}_{n,t}$ by definition.
\end{proof}
\end{appendix}

\bibliographystyle{alpha}
\bibliography{bibliography-05.2026}

@article{baiguera2026quantum,
  title={Quantum complexity in gravity, quantum field theory, and quantum information science},
  author={Baiguera, Stefano and Balasubramanian, Vijay and Caputa, Pawel and Chapman, Shira and Haferkamp, Jonas and Heller, Michal P and Halpern, Nicole Yunger},
  journal={Physics Reports},
  volume={1159},
  pages={1--77},
  year={2026},
  publisher={Elsevier}
}

@article{huang2020predicting,
  title={Predicting many properties of a quantum system from very few measurements},
  author={Huang, Hsin-Yuan and Kueng, Richard and Preskill, John},
  journal={Nature Physics},
  volume={16},
  number={10},
  pages={1050--1057},
  year={2020},
  publisher={Nature Publishing Group UK London}
}

@inproceedings{parham2026quantum,
  title={Quantum circuit lower bounds in the magic hierarchy},
  author={Parham, Natalie},
  booktitle={Proceedings of the 58th Annual ACM Symposium on Theory of Computing},
  pages={734--743},
  year={2026}
}

@article{al2026circuit,
  title={Circuit complexity lower bounds for quantum spin glasses},
  author={Al-Ghattas, Omar and Gamarnik, David},
  journal={arXiv preprint arXiv:2607.14384},
  year={2026}
}

@article{berta2022semidefinite,
  title={Semidefinite programming hierarchies for constrained bilinear optimization},
  author={Berta, Mario and Borderi, Francesco and Fawzi, Omar and Scholz, Volkher B},
  journal={Mathematical Programming},
  volume={194},
  number={1},
  pages={781--829},
  year={2022},
  publisher={Springer}
}

@article{bravyi2005universal,
  title={Universal quantum computation with ideal Clifford gates and noisy ancillas},
  author={Bravyi, Sergey and Kitaev, Alexei},
  journal={Physical Review A—Atomic, Molecular, and Optical Physics},
  volume={71},
  number={2},
  pages={022316},
  year={2005},
  publisher={APS}
}

@article{veitch2014resource,
  title={The resource theory of stabilizer quantum computation},
  author={Veitch, Victor and Hamed Mousavian, SA and Gottesman, Daniel and Emerson, Joseph},
  journal={New Journal of Physics},
  volume={16},
  number={1},
  pages={013009},
  year={2014},
  publisher={IOP Publishing}
}

@article{beverland2020lower,
  title={Lower bounds on the non-Clifford resources for quantum computations},
  author={Beverland, Michael and Campbell, Earl and Howard, Mark and Kliuchnikov, Vadym},
  journal={Quantum Science \& Technology},
  volume={5},
  number={3},
  pages={035009},
  year={2020},
  publisher={IOP Publishing}
}

@book{gottesman1997stabilizer,
  title={Stabilizer codes and quantum error correction},
  author={Gottesman, Daniel},
  year={1997},
  publisher={California Institute of Technology}
}

@inproceedings{brandao2013quantum,
  title={Quantum de Finetti theorems under local measurements with applications},
  author={Brandao, Fernando GSL and Harrow, Aram W},
  booktitle={Proceedings of the forty-fifth annual ACM symposium on Theory of computing},
  pages={861--870},
  year={2013}
}

@book{watrous2018theory,
  title={The theory of quantum information},
  author={Watrous, John},
  volume={1},
  year={2018},
  publisher={Cambridge university press Cambridge}
}

@article{dehaene2003clifford,
  title={Clifford group, stabilizer states, and linear and quadratic operations over GF (2)},
  author={Dehaene, Jeroen and De Moor, Bart},
  journal={Physical Review A},
  volume={68},
  number={4},
  pages={042318},
  year={2003},
  publisher={APS}
}

@article{chapman2022quantum,
  title={Quantum computational complexity from quantum information to black holes and back},
  author={Chapman, Shira and Policastro, Giuseppe},
  journal={The European Physical Journal C},
  volume={82},
  number={2},
  pages={128},
  year={2022},
  publisher={Springer}
}

@article{bermejo2012classical,
  title={Classical simulations of Abelian-group normalizer circuits with intermediate measurements},
  author={Bermejo-Vega, Juan and Nest, Maarten Van den},
  journal={arXiv preprint arXiv:1210.3637},
  year={2012}
}

@article{anschuetz2025strongly,
  title={Strongly interacting fermions are nontrivial yet nonglassy},
  author={Anschuetz, Eric R and Chen, Chi-Fang and Kiani, Bobak T and King, Robbie},
  journal={Physical Review Letters},
  volume={135},
  number={3},
  pages={030602},
  year={2025},
  publisher={APS}
}

@article{tropp2012user,
  title={User-friendly tail bounds for sums of random matrices},
  author={Tropp, Joel A},
  journal={Foundations of computational mathematics},
  volume={12},
  number={4},
  pages={389--434},
  year={2012},
  publisher={Springer}
}

@article{anschuetz2025bounds,
  title={Bounds on the Ground State Energy of Quantum p-Spin Hamiltonians},
  author={Anschuetz, Eric R and Gamarnik, David and Kiani, Bobak T},
  journal={Communications in Mathematical Physics},
  volume={406},
  number={10},
  pages={232},
  year={2025},
  publisher={Springer}
}

@article{anshu2022nlts,
  title={{NLTS Hamiltonians from good quantum codes}},
  author={Anshu, Anurag and Breuckmann, Nikolas and Nirkhe, Chinmay},
  journal={arXiv preprint arXiv:2206.13228},
  year={2022}
}

@article{anshu2020circuit,
  title={Circuit lower bounds for low-energy states of quantum code Hamiltonians},
  author={Anshu, Anurag and Nirkhe, Chinmay},
  journal={arXiv preprint arXiv:2011.02044},
  year={2020}
}

@book{nielsen2010quantum,
  title={Quantum computation and quantum information},
  author={Nielsen, Michael A and Chuang, Isaac L},
  year={2010},
  publisher={Cambridge university press}
}

@article{derrida1981random,
  title={Random-energy model: An exactly solvable model of disordered systems},
  author={Derrida, Bernard},
  journal={Physical Review B},
  volume={24},
  number={5},
  pages={2613},
  year={1981},
  publisher={APS}
}

@article{zhao2024learning,
  title={Learning quantum states and unitaries of bounded gate complexity},
  author={Zhao, Haimeng and Lewis, Laura and Kannan, Ishaan and Quek, Yihui and Huang, Hsin-Yuan and Caro, Matthias C},
  journal={PRX Quantum},
  volume={5},
  number={4},
  pages={040306},
  year={2024},
  publisher={APS}
}

@article{gu2025magic,
  title={Magic-induced computational separation in entanglement theory},
  author={Gu, Andi and Oliviero, Salvatore FE and Leone, Lorenzo},
  journal={PRX Quantum},
  volume={6},
  number={2},
  pages={020324},
  year={2025},
  publisher={APS}
}

@article{navascues2018bond,
  title={Bond dimension witnesses and the structure of homogeneous matrix product states},
  author={Navascues, Miguel and Vertesi, Tamas},
  journal={Quantum},
  volume={2},
  pages={50},
  year={2018},
  publisher={Verein zur F{\"o}rderung des Open Access Publizierens in den Quantenwissenschaften}
}

@article{cirac2021matrix,
  title={Matrix product states and projected entangled pair states: Concepts, symmetries, theorems},
  author={Cirac, J Ignacio and Perez-Garcia, David and Schuch, Norbert and Verstraete, Frank},
  journal={Reviews of Modern Physics},
  volume={93},
  number={4},
  pages={045003},
  year={2021},
  publisher={APS}
}

@article{fannes1992finitely,
  title={Finitely correlated states on quantum spin chains},
  author={Fannes, Mark and Nachtergaele, Bruno and Werner, Reinhard F},
  journal={Communications in mathematical physics},
  volume={144},
  number={3},
  pages={443--490},
  year={1992},
  publisher={Springer}
}

@article{anshu2023concentration,
  title={Concentration bounds for quantum states and limitations on the QAOA from polynomial approximations},
  author={Anshu, Anurag and Metger, Tony},
  journal={Quantum},
  volume={7},
  pages={999},
  year={2023},
  publisher={Verein zur F{\"o}rderung des Open Access Publizierens in den Quantenwissenschaften}
}

@article{erdHos2014phase,
  title={Phase transition in the density of states of quantum spin glasses},
  author={Erd{\H{o}}s, L{\'a}szl{\'o} and Schr{\"o}der, Dominik},
  journal={Mathematical Physics, Analysis and Geometry},
  volume={17},
  number={3},
  pages={441--464},
  year={2014},
  publisher={Springer}
}

@article{freedman2014quantum,
  title={Quantum systems on non-k-hyperfinite complexes: a generalization of classical statistical mechanics on expander graphs},
  author={Freedman, Michael H and Hastings, Matthew B},
  journal={Quantum Information \& Computation},
  volume={14},
  number={1-2},
  pages={144--180},
  year={2014},
  publisher={Rinton Press, Incorporated Paramus, NJ}
}

@article{gamarnik2023shattering,
  title={Shattering in the Ising Pure $ p $-Spin Model},
  author={Gamarnik, David and Jagannath, Aukosh and K{\i}z{\i}lda{\u{g}}, Eren C},
  journal={arXiv preprint arXiv:2307.07461},
  year={2023}
}

@Book{MezardMontanariBook,
   author =   "M. Mezard and A. Montanari",
   title =   "Information, Physics and Computation",
   publisher =   "Oxford graduate texts",
 year =   "2009"
}

@book{horn2012matrix,
  title={Matrix analysis},
  author={Horn, Roger A and Johnson, Charles R},
  year={2012},
  publisher={Cambridge university press}
}

@article{chen2025free,
  title={On the free energy of vector spin glasses with nonconvex interactions},
  author={Chen, Hong-Bin and Mourrat, Jean-Christophe},
  journal={Probability and Mathematical Physics},
  volume={6},
  number={1},
  pages={1--80},
  year={2025},
  publisher={Mathematical Sciences Publishers}
}

@article{gosset2013algorithm,
  title={An algorithm for the T-count},
  author={Gosset, David and Kliuchnikov, Vadym and Mosca, Michele and Russo, Vincent},
  journal={arXiv preprint arXiv:1308.4134},
  year={2013}
}

@article{bhattacharya2026parisi,
  title={Parisi Formula for the ground state energy of quantum p-Spin Hamiltonians},
  author={Bhattacharya, Sohom},
  journal={arXiv preprint arXiv:2609.20431},
  year={2026}
}

@book{shalev2014understanding,
  title={Understanding machine learning: From theory to algorithms},
  author={Shalev-Shwartz, Shai and Ben-David, Shai},
  year={2014},
  publisher={Cambridge university press}
}

@article{koenig2014efficiently,
  title={How to efficiently select an arbitrary Clifford group element},
  author={Koenig, Robert and Smolin, John A},
  journal={Journal of Mathematical Physics},
  volume={55},
  number={12},
  year={2014},
  publisher={AIP Publishing}
}

@Book{poll,
 author =   "D. Pollard",
   title =   "Convergence of Stochastic Processes",
publisher =   "Springer-Verlag",
 year =   "1984",
}

@Book{TalagrandBook,
   author =   "M. Talagrand",
   title =   "Mean Field Models for Spin Glasses: Volume {I}: Basic Examples ",
   publisher =   "Springer",
   edition = "",
 year =   "2010"
}

@book{vershynin2018high,
  title={High-dimensional probability: An introduction with applications in data science},
  author={Vershynin, Roman},
  volume={47},
  year={2018},
  publisher={Cambridge University Press}
}

@article{hastings2022perturbation,
  title={Perturbation theory and the sum of squares},
  author={Hastings, Matthew B},
  journal={arXiv preprint arXiv:2205.12325},
  year={2022}
}

@article{zlokapa2025average,
  title={Average-case quantum complexity from glassiness},
  author={Zlokapa, Alexander and Kiani, Bobak T and Anschuetz, Eric R},
  journal={arXiv preprint arXiv:2510.08497},
  year={2025}
}

@article{bravyi2006lieb,
  title={Lieb-Robinson bounds and the generation of correlations and topological quantum order},
  author={Bravyi, Sergey and Hastings, Matthew B and Verstraete, Frank},
  journal={Physical review letters},
  volume={97},
  number={5},
  pages={050401},
  year={2006},
  publisher={APS}
}

@inproceedings{eldar2017local,
  title={Local Hamiltonians whose ground states are hard to approximate},
  author={Eldar, Lior and Harrow, Aram W},
  booktitle={2017 IEEE 58th annual symposium on foundations of computer science (FOCS)},
  pages={427--438},
  year={2017},
  organization={IEEE}
}

@article{wei2026longrangenonstabilizernessquantumcodes,
  title={Long-range nonstabilizerness and quantum codes, phases, and complexity},
  author={Wei, Fuchuan and Liu, Zi-Wen},
  journal={arXiv preprint arXiv:2503.04566},
  year={2025}
}

@article{chen2024sparse,
  title={Sparse random Hamiltonians are quantumly easy},
  author={Chen, Chi-Fang and Dalzell, Alexander M and Berta, Mario and Brand{\~a}o, Fernando GSL and Tropp, Joel A},
  journal={Physical Review X},
  volume={14},
  number={1},
  pages={011014},
  year={2024},
  publisher={APS}
}

@article{schon2005sequential,
  title={Sequential generation of entangled multiqubit states},
  author={Sch{\"o}n, Christian and Solano, Enrique and Verstraete, Frank and Cirac, J Ignacio and Wolf, Michael M},
  journal={Physical review letters},
  volume={95},
  number={11},
  pages={110503},
  year={2005},
  publisher={APS}
}

@article{perez2006matrix,
  title={Matrix product state representations},
  author={Perez-Garcia, David and Verstraete, Frank and Wolf, Michael M and Cirac, J Ignacio},
  journal={arXiv preprint quant-ph/0608197},
  year={2006}
}

@article{vidal2008class,
  title={Class of quantum many-body states that can be efficiently simulated},
  author={Vidal, Guifr{\'e}},
  journal={Physical review letters},
  volume={101},
  number={11},
  pages={110501},
  year={2008},
  publisher={APS}
}

@article{baldwin2020quenched,
  title={Quenched vs Annealed: Glassiness from SK to SYK},
  author={Baldwin, CL and Swingle, B},
  journal={Physical Review X},
  volume={10},
  number={3},
  pages={031026},
  year={2020},
  publisher={APS}
}

@article{gottesman1998heisenberg,
  title={The Heisenberg representation of quantum computers},
  author={Gottesman, Daniel},
  journal={arXiv preprint quant-ph/9807006},
  year={1998}
}

@inproceedings{chen2025incompressibility,
  title={Incompressibility and spectral gaps of random circuits},
  author={Chen, Chi-Fang and Haah, Jeongwan and Haferkamp, Jonas and Liu, Yunchao and Metger, Tony and Tan, Xinyu},
  booktitle={2025 IEEE 66th Annual Symposium on Foundations of Computer Science (FOCS)},
  pages={1304--1312},
  year={2025},
  organization={IEEE}
}

@article{Brown2018second,
   title={Second law of quantum complexity},
   volume={97},
   ISSN={2470-0029},
   url={http://dx.doi.org/10.1103/PhysRevD.97.086015},
   DOI={10.1103/physrevd.97.086015},
   number={8},
   journal={Physical Review D},
   publisher={American Physical Society (APS)},
   author={Brown, Adam R. and Susskind, Leonard},
   year={2018},
   month=Apr }

@article{gamarnik2025shattering,
  author  = {Gamarnik, David and Jagannath, Aukosh and K{\i}z{\i}lda{\u g}, Eren C.},
  title   = {Shattering in the {Ising} $p$-spin glass model},
  journal = {Probability Theory and Related Fields},
  volume  = {193},
  pages   = {89--141},
  year    = {2025},
  doi     = {10.1007/s00440-025-01414-4},
  eprint  = {2307.07461},
  archivePrefix = {arXiv}
}

@article{bravyi2016improved,
  title={Improved classical simulation of quantum circuits dominated by Clifford gates},
  author={Bravyi, Sergey and Gosset, David},
  journal={Physical review letters},
  volume={116},
  number={25},
  pages={250501},
  year={2016},
  publisher={APS}
}

@article{leone2024learning,
  title={Learning t-doped stabilizer states},
  author={Leone, Lorenzo and Oliviero, Salvatore FE and Hamma, Alioscia},
  journal={Quantum},
  volume={8},
  pages={1361},
  year={2024},
  publisher={Verein zur F{\"o}rderung des Open Access Publizierens in den Quantenwissenschaften}
}

@article{coble2023local,
  title={Local hamiltonians with no low-energy stabilizer states},
  author={Coble, Nolan J and Coudron, Matthew and Nelson, Jon and Nezhadi, Seyed Sajjad},
  journal={arXiv preprint arXiv:2302.14755},
  year={2023}
}

@article{coble2023hamiltonians,
  title   = {Hamiltonians whose low-energy states require {$\Omega(n)$} {T} gates},
  author  = {Coble, Nolan J. and Coudron, Matthew and Nelson, Jon and Nezhadi, Seyed Sajjad},
  journal = {arXiv preprint arXiv:2310.01347},
  year    = {2023}
}

@article{bray1980replica,
  title={Replica theory of quantum spin glasses},
  author={Bray, AJ and Moore, MA},
  journal={Journal of Physics C: Solid State Physics},
  volume={13},
  number={24},
  pages={L655--L660},
  year={1980}
}

@article{anshu2021improved,
  title={Improved approximation algorithms for bounded-degree local hamiltonians},
  author={Anshu, Anurag and Gosset, David and Morenz Korol, Karen J and Soleimanifar, Mehdi},
  journal={Physical Review Letters},
  volume={127},
  number={25},
  pages={250502},
  year={2021},
  publisher={APS}
}

@article{fattal2004entanglement,
  title={Entanglement in the stabilizer formalism},
  author={Fattal, David and Cubitt, Toby S and Yamamoto, Yoshihisa and Bravyi, Sergey and Chuang, Isaac L},
  journal={arXiv preprint quant-ph/0406168},
  year={2004}
}

@article{rocchetto2017stabiliser,
  title={Stabiliser states are efficiently PAC-learnable},
  author={Rocchetto, Andrea},
  journal={arXiv preprint arXiv:1705.00345},
  year={2017}
}

@article{grewal2025efficient,
  title={Efficient learning of quantum states prepared with few non-Clifford gates},
  author={Grewal, Sabee and Iyer, Vishnu and Kretschmer, William and Liang, Daniel},
  journal={Quantum},
  volume={9},
  pages={1907},
  year={2025},
  publisher={Verein zur F{\"o}rderung des Open Access Publizierens in den Quantenwissenschaften}
}

@article{bauschke2012firmly,
  title={Firmly nonexpansive mappings and maximally monotone operators: correspondence and duality},
  author={Bauschke, Heinz H and Moffat, Sarah M and Wang, Xianfu},
  journal={Set-Valued and Variational Analysis},
  volume={20},
  number={1},
  pages={131--153},
  year={2012},
  publisher={Springer}
}

@article{arunachalam2022parameterized,
  title={The parameterized complexity of quantum verification},
  author={Arunachalam, Srinivasan and Bravyi, Sergey and Nirkhe, Chinmay and O'Gorman, Bryan},
  journal={arXiv preprint arXiv:2202.08119},
  year={2022}
}

@article{gosset2026quantum,
  title={Quantum state preparation with optimal T-count},
  author={Gosset, David and Kothari, Robin and Wu, Kewen},
  journal={Quantum},
  volume={10},
  pages={2168},
  year={2026},
  publisher={Verein zur F{\"o}rderung des Open Access Publizierens in den Quantenwissenschaften}
}

@article{schuch2010matrix,
  title={Matrix product state and mean-field solutions for one-dimensional systems can be found efficiently},
  author={Schuch, Norbert and Cirac, J Ignacio},
  journal={Physical Review A—Atomic, Molecular, and Optical Physics},
  volume={82},
  number={1},
  pages={012314},
  year={2010},
  publisher={APS}
}

@article{aharonov2010efficient,
  title={Efficient algorithm for approximating one-dimensional ground states},
  author={Aharonov, Dorit and Arad, Itai and Irani, Sandy},
  journal={Physical Review A—Atomic, Molecular, and Optical Physics},
  volume={82},
  number={1},
  pages={012315},
  year={2010},
  publisher={APS}
}

@article{knill1995approximation,
  title={Approximation by quantum circuits},
  author={Knill, Emanuel},
  journal={arXiv preprint quant-ph/9508006},
  year={1995}
}

@article{fomichev2024initial,
  title={Initial state preparation for quantum chemistry on quantum computers},
  author={Fomichev, Stepan and Hejazi, Kasra and Zini, Modjtaba Shokrian and Kiser, Matthew and Fraxanet, Joana and Casares, Pablo Antonio Moreno and Delgado, Alain and Huh, Joonsuk and Voigt, Arne-Christian and Mueller, Jonathan E and others},
  journal={PRX Quantum},
  volume={5},
  number={4},
  pages={040339},
  year={2024},
  publisher={APS}
}

@unpublished{guruswami2025essential,
  author = {Guruswami, Venkatesan and Rudra, Atri and Sudan, Madhu},
  title  = {Essential Coding Theory},
  note   = {Book draft, August 26, 2025},
  year   = {2025},
  url    = {https://cse.buffalo.edu/faculty/atri/courses/coding-theory/book/web-coding-book.pdf}
}

@inproceedings{brandao2013product,
  title={Product-state approximations to quantum ground states},
  author={Brandao, Fernando GSL and Harrow, Aram W},
  booktitle={Proceedings of the forty-fifth annual ACM symposium on Theory of computing},
  pages={871--880},
  year={2013}
}

\end{document}